\documentclass[journal, draftcls, one column]{IEEEtran}

\usepackage{ifpdf}
\usepackage{color}
\usepackage{subfigure}
\usepackage{subfloat}
\usepackage{url}
%
\usepackage{cite}

%
\ifCLASSINFOpdf
  \usepackage[pdftex]{graphicx}
\else
\fi
%
%
\usepackage{epstopdf}
\usepackage{epsfig}

%
\usepackage[cmex10]{amsmath}
\usepackage{amsthm}
\usepackage{array}

\usepackage[font=footnotesize]{subfig}
\usepackage{url}


\newtheorem{defn}{Definition}
\newtheorem{obs}{Observation}
\newtheorem{thm}{Theorem}
\newtheorem{lem}{Lemma}
\newtheorem{rmk}{Remark}
\newtheorem{cor}{Corollary}
\newtheorem{example}{Example}
\newtheorem{assum}{Assumption}
\begin{document}

\title{Quality Sensitive Price Competition in Secondary Market Spectrum Oligopoly- Multiple Locations}

\author{Arnob~Ghosh and
        Saswati~Sarkar
\thanks{The authors are with the Department
of Electrical and Systems Engineering, University Of Pennsylvania, Philadelphia,
PA, USA. Their E-mail ids are arnob@seas.upenn.edu and swati@seas.upenn.edu.} 
\thanks{Parts of this paper have been presented in CISS\rq{}14 \cite{ciss}.}}




\maketitle
\begin{abstract}
We investigate a spectrum oligopoly market where each primary seeks to sell secondary access to its channel at multiple locations. Transmission qualities of a channel evolve randomly. Each primary needs to select a price and a set of non-interfering locations (which is an independent set in the conflict graph of the region) at which to offer its channel without knowing the transmission qualities of the channels of its competitors. At each location each secondary selects a channel depending on the price and the quality of the channels. We formulate the above problem as a non-cooperative game. We consider two scenarios-i) when the region is small, ii) when the region is large. In the first setting, we focus on a class of conflict graphs, known as mean valid graphs which commonly arise when the region is small. We explicitly compute a symmetric Nash equilibrium (NE) that selects only a small number of independent sets with positive probability. The NE is threshold type in that primaries only choose independent set whose cardinality is greater than a certain threshold. The threshold on the cardinality increases with increase in quality of the channel on sale.  We show that the symmetric NE strategy profile is unique in a special class of conflict graphs (linear graph) which commonly arises in practice. In the second setting, we consider node symmetric conflict graphs which arises  when the number of locations is large (potentially, infinite). We explicitly compute a symmetric NE that randomizes equally among the maximum independent sets at a given channel state vector. In the NE a primary only selects the maximum independent set at a given channel state vector. We show that the two symmetric NEs computed in two settings exhibit important structural difference. We numerically evaluate the ratio of the expected payoff attained by primaries in the  game and the payoff attained by primaries when all the primaries collude.
\end{abstract}
\begin{keywords}Game Theory, Nash Equilibrium, Secondary Spectrum Access, Quality of Service, Conflict Graph, Random Graphs, Independent Sets, Automorphism, Isomorphism, Branching Process.\end{keywords}
\vspace{-0.3cm}
\section{Introduction}
\subsection{Motivation}
Secondary access of the spectrum where license holders (primaries) allow unlicensed users (secondaries) to use their channels can enhance the efficiency of the spectrum usage. However, secondary access will only proliferate when it is rendered profitable to the primaries. We investigate a spectrum oligopoly where  primaries lease their spectrum to secondaries in lieu of financial remuneration.  Each primary owns a channel throughout a large region consisting of several {\em locations}. The channel of a primary provides a transmission rate to a secondary depending on the state which evolves randomly and reflects the usage of the primary as well as the transmission rate due to fading. We consider the state of a channel is $0, 1\ldots,$ or $ n$ where higher state corresponds to higher transmission rate. A secondary receives a payoff from a channel depending on the transmission rate offered by the channel and the price quoted by the primary.  Secondaries buy those channels which give them the highest payoff, which leads to a {\em competition} among primaries.

 Price competition in  economics and wireless setting  ignore two important properties which distinguish spectrum
oligopoly from standard oligopolies:   First, a primary selects a price knowing only the state of its own channel; it is unaware of states of its competitors\rq{} channels. Thus, if a primary quotes a high price, it will earn a large profit if it sells its channel, but it may not be able to sell at all; on the other hand a low price will enhance the probability of a sale but may also fetch lower profits in the event of a sale. Second, the same spectrum band can be utilized simultaneously at geographically dispersed locations without interference; but the same band can not be utilized simultaneously at interfering locations. This special feature known as {\em spatial reuse} adds another dimension in the strategic interaction as now a primary has to cull a set of non-interfering locations, which is denoted as an {\em independent set} in the conflict graph representation of the region\cite{graph}; at which to offer its channel apart from selecting a price at every node of that set. Intuitively, a primary would like to make its channel available at an independent set of the maximum size (cardinality). However, if the competition at the largest independent set is intense, a primary may achieve higher payoff by setting high price at small independent sets (where the competition is not so intense). 

\subsection{Our Contributions}
We devise the problem as a game in which each primary\rq{}s strategy space consists of independent set selection strategy and the pricing strategy at each node of the independent set when the channel is available for sale. When the channel is in state $0$, the transmission rate is very low and thus, we consider the channel is not available for sale. 
We first show that there may exist multiple asymmetric NEs.  Asymmetric NEs are difficult to implement in the symmetric game that we consider  (Section~\ref{sec:solution_concept}). We, therefore, focus only on finding symmetric NEs subsequently.  We prove a {\em separation theorem} (Section~\ref{sec:separation}) which entails that the NE pricing strategy at each location can be uniquely computed if the independent set selection strategy is known.  By virtue of  our previous work \cite{isit,arnob_ton} which characterizes pricing strategies of primaries for different transmission rates when the region has only one location (i.e. no spatial reuse). We then focus only on the independent set selection strategy.  


  {\em Scenario 1}: We consider two possible scenarios (Section~\ref{sec:two settings}). First, we consider the setting when the region is small consisting of few locations (Section~\ref{sec:same_channel_state}).  Therefore,  the usage statistics and the propagation condition of a channel do not vary substantially over the region. Thus, we assume that the channel state  is identical at each location in this setting. In the initial stages of deployment of the secondary market, it is expected that the secondary market will be introduced in small regions consisting of a few locations. Hence, the price competition in this setting reduces to a price selection problem where the transmission quality of each primary remains the same throughout the region. 
  
  In this setting, we focus on a particular class of graphs, introduced as  {\em mean valid graph}\cite{gauravjsac} since most of the small graphs observed in practice are mean valid graphs (Section~\ref{sec:meanvalidgraph}). In a mean valid graph, nodes can be partitioned in $d$ disjoint maximal independent sets  namely $I_1,\ldots, I_d$\cite{gauravjsac}.  But the total number of independent sets in such a graph may be substantially large; generally, the number of independent sets grows exponentially with the number of nodes. We show that  there exists a symmetric NE strategy which selects independent sets only amongst $I_1,\ldots,I_d$ which characterize the mean valid graph (Section~\ref{sec:existencemultiplenodes}); we explicitly compute the strategy (Section~\ref{sec:structuremultiplenodes}).  Such a strategy profile can be stored using a $d$ dimensional vector. Thus, the space required to store strategy profile scales with $d$ rather than increasing exponentially with nodes.   Primaries also need to know only $I_1,\ldots, I_d$ rather than the entire graph in order to compute a symmetric NE. 
 
The characterization of the symmetric NE strategy profile reveals that a primary only selects an independent set whose cardinality is greater than or equal to a certain threshold (Section~\ref{sec:structuremultiplenodes}). This threshold turns out to be a non-decreasing function of channel quality (Section~\ref{sec:properties_threshold}). Thus, when the channel quality is high, a primary restricts itself only to independent sets of large cardinalities; when the channel quality is poor, the primary diversifies among independent sets of different sizes.   We show using an example that arises in practice that primaries only offer their poor quality channels at independent sets of lower cardinalities (Section~\ref{sec:properties_threshold}). Thus, a social planner may have to provide some incentives to primaries so as to ensure that users of those locations can get access to higher quality channels.

Next, we examine the uniqueness among symmetric NE strategy profiles in mean valid graphs (Section~\ref{sec:symmetricNEunique}). Nodes in such a graph can be partitioned into different collections of  maximal independent sets (Fig.~\ref{fig:different partitions}). A primary in general would not know the partition other primaries are selecting. Our result reveals that each such partition leads to a unique symmetric NE; yet primaries need not co-ordinate with each other regarding the partition one is selecting (Theorem~\ref{thm:policyuniqueness}). Hence the symmetric NE strategy profile is easy to implement.  Theorem~\ref{thm:policyuniqueness} also reveals that all these symmetric NEs lead to the same node selection probabilities. The NE pricing strategy at a node depends only on the probability with which it is selected. Thus, all these symmetric NEs are functionally unique.   Finally, we focus on a special class of mean valid graphs known as {\em linear graphs} (Figure~\ref{fig:linear})  which frequently arises in practice such as in the modeling of communication nodes over a highway or a row of shops. We prove that the symmetric NE strategy is unique (is not merely functionally unique) in linear graphs (Theorem~\ref{uniquelinear}). 

{\em Scenario 2}: We subsequently consider the  scenario when the secondary spectrum market is operated on a large region consisting of several locations.  In this setting the transmission quality of a channel may be different at different locations in the region. Thus, a primary needs to specify a strategy for each possible channel state across the network (Section~\ref{sec:diff_channel_states}). The number of channel states and thus, the strategy space increases exponentially with number of nodes. The conflict graph representation of the region depends on the channel state across each location since a primary must select an independent set of nodes only among those nodes where the channel is available for sale.  A primary is not aware of the conflict graph from which other primaries are selecting their independent sets let alone their channel states.  The characterization of a symmetric NE strategy profile in the above setting is thus, more challenging. We simplify the model by assuming that the channel is either available or not (i.e. $n=1$), but the availability can differ across the nodes. 

We focus on node symmetric or node transitive graphs (Section~\ref{sec:nodesymmetricgraphs}) \cite{nodetransitive} such as finite cyclic graph, infinite lattice graphs (e.g. infinite linear graph (infinite in both directions), infinite square graph, infinite grid graph, infinite triangular graphs)\cite{lattice} which arise in practice when the region becomes large. We allow some statistical correlations which arise naturally among the channel states at different locations (Section~\ref{sec:channel_statistic}). We show that there exists a symmetric NE strategy profile ($\mathrm{SP_{sym}}$ ) for those graphs (Theorem~\ref{thm:nodesymmetricne}). In the symmetric NE strategy profile, a primary randomizes uniformly among the maximum independent sets (the independent set of the highest cardinality). A primary thus only need to enumerate the maximum independent sets in order to determine $\mathrm{SP_{sym}}$. In contrast to the setting where the channel state remains the same through the network, in $\mathrm{SP_{sym}}$ the channel is offered at every node with equal probability. We also show that  $\mathrm{SP_{sym}}$ may not be an NE in a finite linear graph which {\em is not a node symmetric graph}. We show that the symmetric NE may not be unique for a linear graph unlike the setting where the channel state remains the same throughout the network (Lemma~\ref{thm:notsame}). 

In $\mathrm{SP_{sym}}$ each primary needs to enumerate the maximum independent sets. The number of independent sets grow exponentially with the nodes. However,  at a given channel state vector over the region, the conflict graph may consist of several components. A primary can find maximum independent sets and $SP_{sym}$  in each component in parallel. However, the number of maximum independent sets in a component grows exponentially with the number of nodes in the component.   We, thus, investigate the size of the expected component size both analytically and empirically (Section~\ref{sec:computation}).  Empirical result shows that the average size of components is often moderate and the upper bound computed analytically is often loose.  However, the component size can be substantially large when the channel availability probability is large. In order to control the component size we, thus, consider the setting where each primary decides to estimate the channel state at a node with a certain probability ($p$). A primary then sells its channel at nodes only amongst the nodes where it estimates the channel.  We show that $\mathrm{SP_{sym}}$ is a NE strategy in this setting as well.  However, if $p$ is small, then a primary can only sell its channel at few locations which will potentially reduce the payoff. A primary thus needs to select $p$ judiciously in order to attain a required trade-off between the computation cost and the expected payoff.

Finally, we numerically compare the expected profit obtained by the primaries using our NE strategy profile in both of the settings to  the maximum possible profit allowing for collusion among primaries (Section~\ref{sec:numerical}). The proofs do not follow from the standard game theory results. The proofs rely on the specific properties of the conflict graphs, and the game under consideration. {\em Thus, both the results and the proofs  are the central contributions of this paper.}

\subsection{Related Literature}
Price selection in oligopolies has been extensively investigated in economics as a non co-operative Bertrand Game \cite{mwg} and its modifications \cite{Osborne, Kreps}. Price competition among wireless service providers have also been explored to a great extent \cite{Ileri, Mailespectrumsharing, Mailepricecompslotted, Xing, Niyatospeccrn, Niyatomultipleseller,Zhou,kavurmacioglu,yitan,duan,zhang,jia,yang,sengupta,kim}. But all these papers did not consider the uncertainty of competition and the spatial reuse property of the spectrum oligopoly.

We now distinguish our contributions compared to \cite{gauravjsac} which is the closest to our work. First, \cite{gauravjsac} considered that the channel state remains the same throughout the region and the state of the channel can be either $0$ (not available for sale) or $1$ (available); this assumption does not capture the different transmission qualities offered by the available channels. When we consider that the channel state remains the same throughout the network we consider that the available channel can be in one of the $n$ states depending on the transmission qualities. Thus, in our setting a primary now needs to employ different pricing strategies and different independent set selection strategies for different channel states while in the former case a single pricing and independent set selection strategy would suffice as the price need not be quoted for an unavailable commodity. Second, we also consider the setting where the channel state need not be the same unlike in \cite{gauravjsac}.  In our setting a primary does not know the  conflict graph of other primaries from which they will select their independent sets.   Thus, the collection of independent sets from which a primary selects an independent set may be different for different primaries at a given time slot since the channel state vector may be different for different primaries. Whereas in \cite{gauravjsac} the channel is either available at all locations or unavailable at any location. Thus in \cite{gauravjsac}, a primary knows the conflict graph from which other primaries will select their independent sets when their channels are available.  Thus, the characterization of an NE becomes significantly challenging in our setting compare to \cite{gauravjsac}. The result we obtain also significantly differs from \cite{gauravjsac}. For example, in \cite{gauravjsac} a primary can select an independent set of lower cardinalities, however, in our setting, a primary only selects the maximum independent set. Additionally, the symmetric NE is unique in a finite linear graph in \cite{gauravjsac}, whereas there are infinitely many symmetric NEs in our setting. 





\vspace{-0.3cm}

\section{System Model} \label{sec:model}
Each primary owns a channel over a region. 
Unless otherwise stated, we consider that there are $l$ number of primaries and $m$ number of secondaries at each location throughout this paper. We, however, generalize our result for random {\em apriori} unknown $m$  in Section~\ref{sec:random_demand}. Different channels constitute disjoint frequency bands. Each primary only allows at most secondary to transmit at a given location.
\subsection{Transmission Rate and Channel State}\label{sec:transmissionrate}
  The channel of a primary provides a certain transmission rate at a location to a secondary who is granted access. Transmission rate  (i.e. Shannon Capacity) at a location depends on--  1) the number of subscribers of a primary that are using the channel at that location\footnote{Shannon Capacity \cite{cover} for user $i$ at a channel is equal to $\log\left(1+\dfrac{p_{i}h_i}{\sum_{j\neq i}p_jh_j+\sigma^2}\right)$ where $p_k$ is the transmitted power of user $k$, $\sigma^2$ is the power of white noise, $h_k$ is the channel gain between transmitter and receiver which depends on the propagation condition. If a secondary is using the channel then $p_i, h_i$ of the numerator are the attributes associated with the secondary while $p_j, h_j j\neq i$ are those of the subscribers of the primaries. In general, the power $p_j$ for subscriber of primaries is constant for subscriber $j$ of primary, but the number of subscribers vary randomly over time. The power $p_i$ with which a secondary will transmit may be a constant or may decrease with the number of subscribers of primaries in order to limit the interference caused to each subscriber. The above factors contributes to the random fluctuation in the capacity of a channel offered to a secondary.} and 2) the propagation condition of the radio signal\cite{arnob_ton}. The transmission rate at a location evolves randomly over time owing to the randomness of the usage of subscribers of primaries and the propagation condition\footnote{Referring to footnote 1, $h_k$ and $\sigma^2$ evolve randomly owing to the random scattering of the particles in the atmosphere; this phenomenon is also known as {\em fading}\cite{proakis}.}.  We discretize the transmission rate into a number of states $0, 1, \ldots, n $. State $i$ provides a lower transmission rate to a secondary than state $j$ if $i < j$ and state\footnote{Generally a minimum transmission rate is required to send data. State $0$ indicates that the transmission rate is below that threshold due to either the excessive usage of subscribers of primaries or the transmission condition.} $0$ arises when the secondary can not use the channel making the channel unavailable for sale.

 Let $J$ denote the channel state vector which indicates the channel state at each node. For example, when the number of nodes are 3, then $J=(1,1,0)$ is a channel state vector which indicates that the channel is in state $1$, $1$, and $0$ at nodes $1, 2,$ and $3$ respectively. We assume that the channels are statistically identical, specifically the probability that the channel state vector of a primary is $J$ is $q_J$.  We also assume that the probability of the event where the channel state is $0$ at every location is non-zero i.e.
 \begin{align}\label{prob1}
 q_J>0 \quad \text{when } J=\{0,0,\ldots,0\}
 \end{align}
\subsection{Penalty functions}\label{sec:penaltyfunction}
Secondaries are passive entities. At a given location they select channels considering the price and the transmission rate offered by the channel.  We assume that the preference of secondaries can be modeled by a penalty function. If a primary selects a price $p$ at channel state $i$ at a given location, then the channel incurs a penalty $g_i(p)$ for all secondaries at that location. As the name suggests, a secondary prefers a channel with a lower penalty. Since lower prices should induce lower penalty, thus, we assume that each $g_i(\cdot)$ is strictly increasing; therefore, $g_i(\cdot)$ is invertible. For a given price, a channel of higher transmission rate must induce lower penalty, thus, $g_i(p)<g_j(p)$ if $i>j$.  No secondary will buy any channel whose penalty exceeds $v$.  Secondaries have the same penalty function and the same upper bound for penalty value ($v$), thus, secondaries are statistically identical\cite{isit, arnob_ton}.

We denote $f_i(\cdot)$ as the inverse of $g_i(\cdot)$. Thus, $f_i(x)$ denotes the price when the penalty is $x$ at channel state $i$. We assume that $g_i(\cdot)$ is continuous, thus $f_i(\cdot)$ is continuous and strictly increasing.  Also, $f_i(x)<f_j(x)$ for each $x$ and $i<j$. 

We focus on  penalty functions  of the form $g_i(p)=h_1(p)-h_2(i)$, where $h_1(\cdot)$ and $h_2(\cdot)$ are strictly increasing in their arguments. Note that $-g_i(p)$ may be considered as the utility that a secondary gets at channel state $i$ and price $p$. Since utility functions are generally assumed to concave, thus, we consider $h_1(\cdot)$ is convex.  
We show in \cite{arnob_ton} that when $h_1(\cdot)$ is convex, then penalty functions $g_i(p)=h_1(p)-h_2(i)$ satisfy the following property:\\
\begin{assum}\label{assum1}
\begin{align}\label{con1}
    \dfrac{f_i(y)-c}{f_j(y)-c}<\dfrac{f_i(x)-c}{f_j(x)-c}  \ \text{ for all } x>y > g_i(c), i<j.
    \end{align} 
    \end{assum}
    Moreover, we also show in\cite{isit, arnob_ton} that when $g_i(p)=h_1(p)/h_2(i)$, then, the inequality in (\ref{con1}) is satisfied for some certain convex functions $h_1(\cdot)$ like $h_1(p)=p^r (r\geq 1),\exp(p)$.  In addition, there is also a large set of functions that satisfy (\ref{con1}), such as: $g_i(p) = \zeta\left(p -h_2(i)\right), g_i(p) = \zeta\left(p/h_2(i)\right)$ where
 $\zeta(\cdot)$ is continuous and strictly increasing. Moreover, Assumption~\ref{assum1} is satisfied by penalty functions
 $g_i(\cdot)$ whose  inverses are of the form $f_i(x)=h(x)+h_2(i), f_i(x) = h(x)*h_2(i)$, where $h(\cdot)$ is {\em any} strictly increasing function. In this setting,  we consider penalty functions which satisfy Assumption 1.

In the special class, when $n=1$ i.e. the channel is either available or not, then the available channels offer the same transmission rates. Hence, we do not need the penalty functions to capture the preference order of secondaries for available channels having different transmission rates. Thus, the penalty functions are redundant when $n=1$. But to be consistent with the notations, we still use the penalty function $g_1(\cdot)$ and the inverse penalty function $f_1(\cdot)$ when $n=1$. We do not need Assumption 1 when $n=1$ and  we only assume that  penalty function $g_1(\cdot)$ is strictly increasing.
 
  \subsection{Conflict Graph}\label{sec:conflict_graph}
Each primary owns a channel over a broad region consisting of several {\em locations}. Typically, secondary users can not transmit simultaneously using the same channel at adjacent locations  due to interference. In order to sell its channel a primary needs to find a set of locations which do not interfere with each other. Wireless networks have been traditionally modeled as {\em conflict graphs} (Figures ~\ref{fig:linear},~\ref{fig:grid_region},~\ref{fig:grid}) in most of the existing literature including in several seminal papers \cite{gupta,shroff, srikant}. Let $G=(V,E)$ be the overall conflict graph of the region where $V$ is the set of nodes and $E$ is the set of edges; an edge exists between two nodes iff transmission at the corresponding locations interfere. In a conflict graph, the set of nodes in which no edge exists between any pair of nodes is called an {\em independent set} (Fig.~\ref{fig:linear},~\ref{fig:grid}). Thus, secondaries at all nodes in an independent set, can transmit simultaneously using the same channel without any interference.

Note that when the channel of a primary is at state $0$ at a node, then the primary can not sell its channel at that node. Thus, a primary ought to offer its channel at a set of non interfering locations among the locations where the channel is available for sale (i.e. the state of the channel is not $0$).  Let $G_{J}=(V_{J},E_{J})$ be the conflict graph representation of the channel state vector $J$: $V_{J}$ is the set of nodes (locations) where the channel is available for sale at channel state vector $J$ of a primary and $E_{J}$ is the set of edges in $G$ between the nodes of $V_J$.   $G_{J}$ is obtained by removing nodes and the edges corresponding to those nodes from $G$ where the channel is not available i.e. the channel is at state $0$.  Thus, $G_{J}$ is a subgraph of $G$. Figure~\ref{fig:reducedconflict} represents a conflict graph $G$ of a region and the conflict graph $G_{J}$ when the channel state vector is $J$. A primary needs to select an independent set from $G_{J}$ when the channel state vector is $J$.
\begin{figure}
\begin{subfigure}[]
{
\includegraphics[width=70mm,height=40mm]{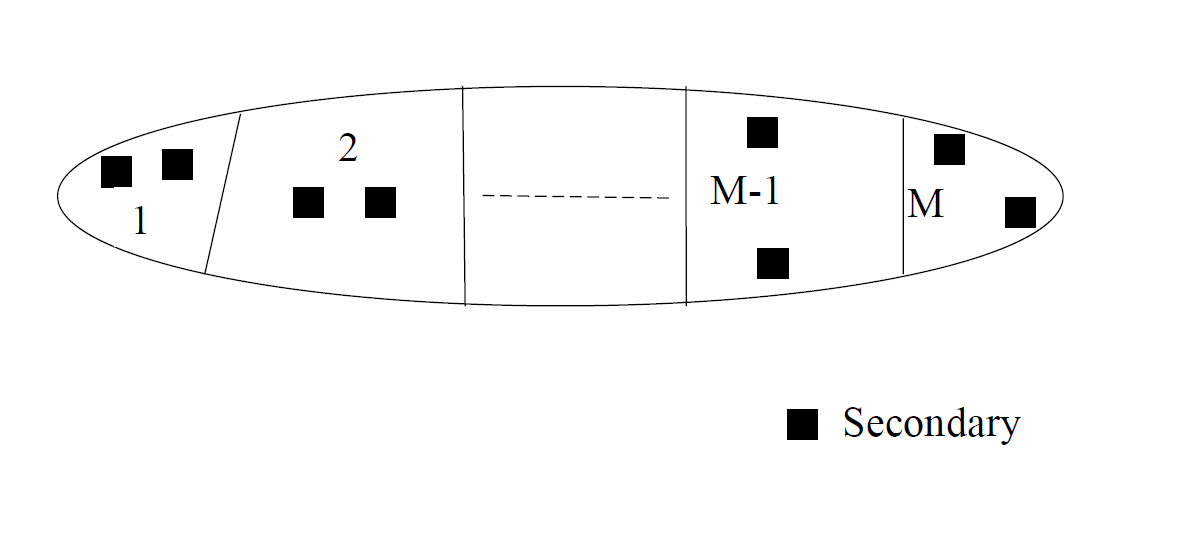}}
\end{subfigure}
\label{fig:linear_region}
\begin{subfigure}[]
{\includegraphics[width=90mm,height=30mm]{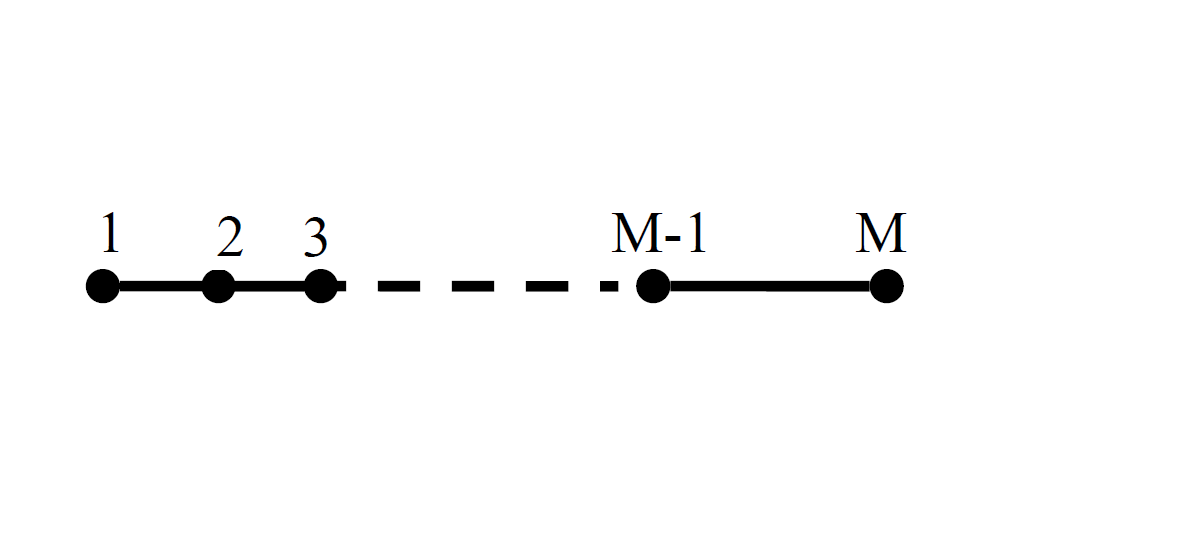}
\vspace{-0.7cm}}
\end{subfigure}

\caption{\small Figure in (a) shows a wireless network with $M$ number of locations. There are $m=2$ secondaries at each location. Signals at locations $1$ and $2$ and $2$ and $3$ interfere with each other, but signals at locations $1$ and $3$ do not interfere. Linear Graph in figure (b) models the conflict graph of the network in (a). Note that there is an edge between nodes $1$ and $2$, but not between nodes $1$ and $3$. $I_1=\{1,3,5,\ldots,M_o\}$ and $I_2=\{2,4,\ldots,M_e\}$ constitute independent sets, where $M_o$ ($M_e$, respectively) is the greatest odd (even, respectively) less than or equal to $M$. There are other independent sets too e.g. \{1,4,6\}. Also \{1,2,4\} is not an independent set since there is an edge between nodes 1 and 2.}
\label{fig:linear}
\vspace*{-0.5cm}
\end{figure}
\begin{figure}
 \includegraphics[width=90mm,height=30mm]{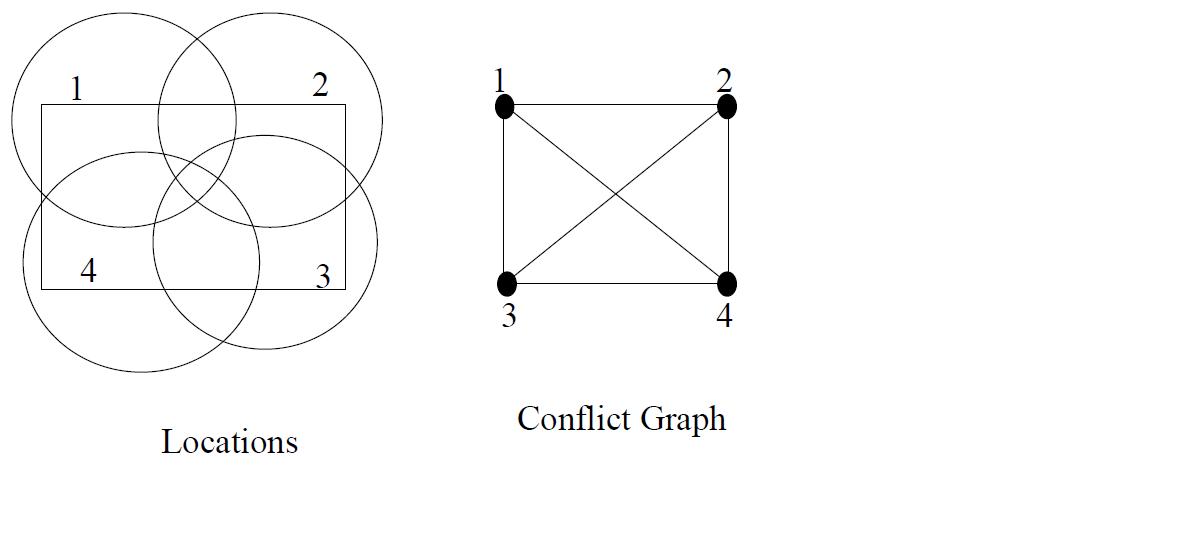}
 \vspace{-0.3cm}
 \caption{\small The rectangle represents a shop in a shopping complex or a department in a university campus. Circles $1,2,3,4$ are the ranges of Wireless access points. Each circle corresponds to a node in the conflict graph.  Since ranges of Wireless access points intersect with each other, thus there exists an edge between every pair of nodes.}
 \label{fig:grid_region}
 \vspace{-0.5cm}
 \end{figure}
 
 \begin{figure}
 \includegraphics[width=150mm,height=50mm]{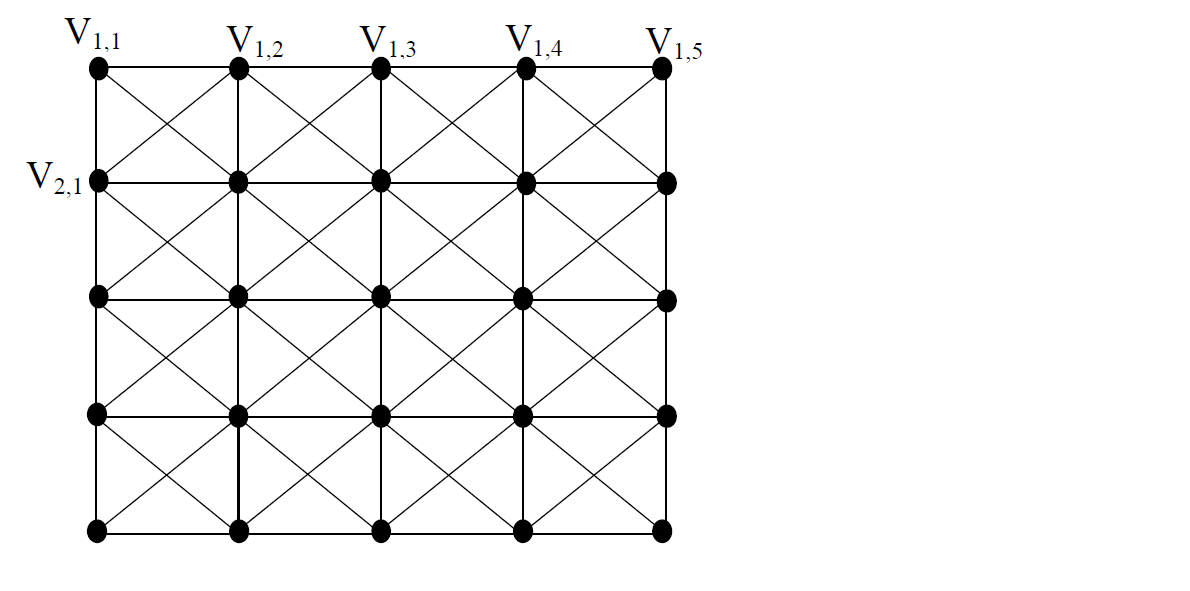}
 \caption{\small The above graph is the conflict graph representation of a larger region consisting of several networks depicted in Fig.~\ref{fig:grid_region}. It is a grid conflict graph with $k$ rows and columns (here $k=5$). Nodes correspond to the Wireless access points. $\{V_{1,1},V_{1,3}\}$ is an independent set and users at these two nodes can transmit simultaneously. But $\{V_{1,1},V_{1,2}\}$ or $\{V_{1,1},V_{2,1}\}$ are not independent sets.}
 \label{fig:grid}
 \vspace*{-0.5cm}
 \end{figure}   
 \begin{figure}
\includegraphics[width=120mm,height=50mm]{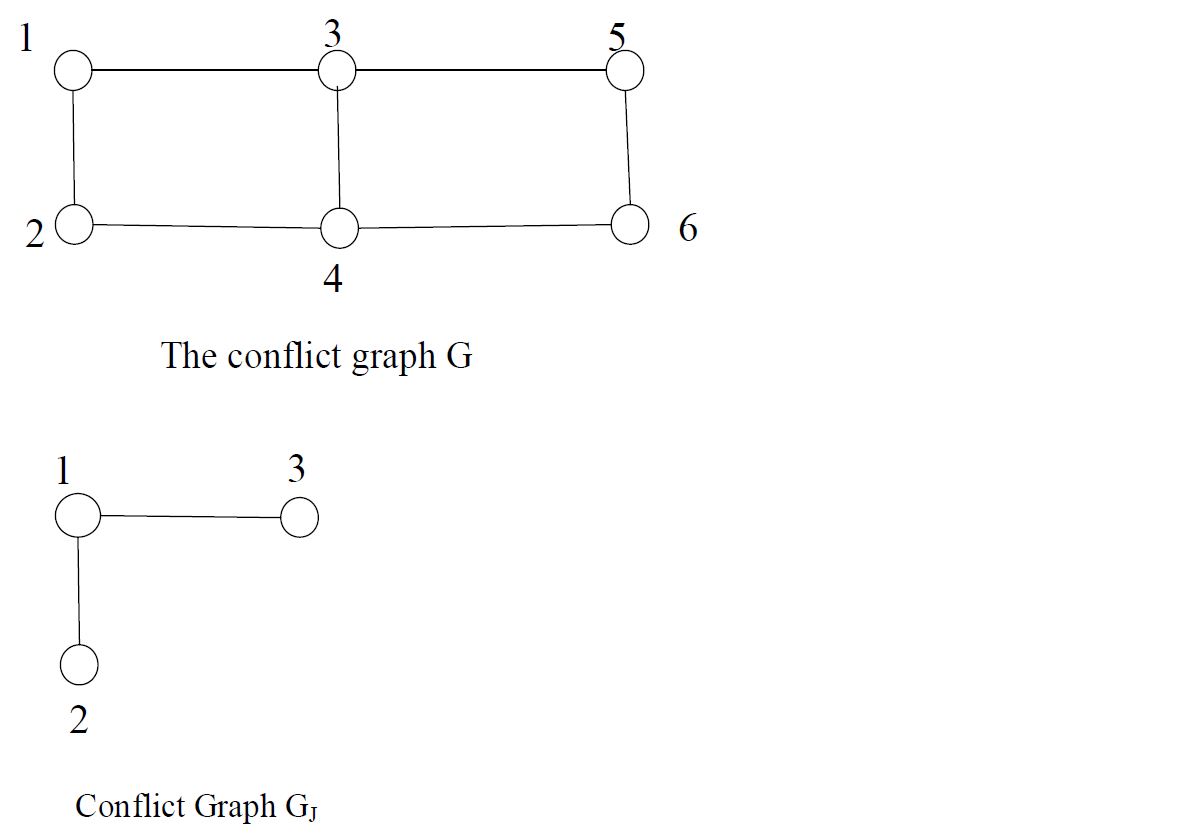}
\caption{\small The conflict graph for the overall region is $G$ which corresponds to the situation where the channel is available at all nodes in the region, $G_J$ is the conflict graph when the channel state vector is $J=(j_1,j_2,j_3,0,0,0)$ where $j_i\geq 1, i=1,2,3$. Since the channel states are $0$ at nodes $4, 5, $ and $6$, thus, $G_{J}$ is obtained by removing those nodes and the edges corresponding to those nodes.}
\label{fig:reducedconflict}\vspace{-0.5cm}
\end{figure}

\subsection{Strategy and Payoff of Each Primary}\label{sec:strategy}
Let $\mathcal{P}$ denote the set of all possible channel state vectors except when the channel state is $0$ across all the locations. Note that $|\mathcal{P}|=(n+1)^{|V|}-1$.

For each channel state vector $J\in \mathcal{P}$  a primary selects\footnote{A primary does not need to select a strategy when the channel state is $0$ at all locations.}:
a) an independent set of the conflict graph $G_J$  where it will sell its channel;
b) a price at every node of that independent set. 
 A primary arrives at its decision with the knowledge of its own channel state vector $q_J$ but without knowing the {\em channel state vector of other primaries}.    A primary however knows $l, m, n, G, g_1,\ldots, g_n, f_1,\ldots, f_n, $ and $ q_J, J\in \mathcal{P}$. Secondaries strictly prefer a channel which induces lower penalty compared to the higher penalty one  as discussed in Section~\ref{sec:penaltyfunction}.  Since there is a one-to-one correspondence between the price and the penalty at a given channel state, thus, for the ease of analysis we consider that primaries select penalties instead of prices. The ties among channels with identical penalties are broken randomly and symmetrically
   among the primaries. We formulate the decision problem of primaries as a non-cooperative game with primaries as players.
\begin{defn}\label{defn:strategy}
A strategy of a primary $i$ $\psi_{i,J}$ provides the probability mass function (p.m.f) for selection among the independent sets (I.S.s) and the penalty distribution it uses at each node, when its channel state vector is $J$ . $S_i=(\psi_{i,1},....,\psi_{i,|\mathcal{P}|})$ denotes the strategy of primary $i$, and $(S_1,...,S_l)$ denotes the strategy profile of all primaries (players).
$S_{-i}$ denotes the strategy profile of primaries other than $i.$\\
\end{defn}

Each primary incurs a transition cost $c$ at each location where it is able to sell its channel. If primary $i$ selects a penalty $x$ at node $s$ when the channel state is $j$, then its  payoff at node $s$ is\footnote{Note that if $Y_s$ is the number of channels offered for sale at a node $s$, for which the penalties are upper bounded by $v$, then
those with $\min(Y_s, m)$ lowest penalties are sold  since secondaries select channels in the increasing order of penalties.}  \\
\begin{equation*}
\begin{cases} f_j(x)-c  & \text{if the primary sells its channel}\\
0 & \text{otherwise.}
\end{cases}
\end{equation*}
  The payoff of a primary over an independent set is the sum of payoff that it gets at each node of that independent set. Thus, if a primary is unable to sell at any node of an independent set, then its payoff is $0$ over that independent set.
\begin{defn}\label{defn:expectedpayoff}
$u_{i,J}(\psi_{i,J},S_{-i})$ is the expected payoff when primary $i$\rq{}s channel state vector is $J$ and selects strategy $\psi_{i,J}(\cdot)$ and other primaries use strategy $S_{-i}$.
\end{defn}

\subsection{Solution Concept}\label{sec:solution_concept}
We seek to obtain a Nash Equilibrium (NE) strategy profile which we define below using $u_{i,J}$ (Definition~\ref{defn:expectedpayoff}), $\psi_{i,J}$ and $S_{-i}$ (Definition~\ref{defn:strategy}):  
\begin{defn}\label{defn:ne}\cite{mwg}
A \emph{Nash  equilibrium}  $(S_1, \ldots, S_l)$ is a strategy profile such that no primary can improve its expected profit by unilaterally deviating from its strategy.  So, with $S_i=(\psi_{i,1},....,\psi_{i,|\mathcal{P}|})$, $(S_1, \ldots, S_l)$, is  a  Nash equilibrium (NE) if for each primary $i$ and channel state vector $J$
\begin{align}
u_{i,J}(\psi_{i,J},S_{-i})\geq u_{i,J}(\tilde{\psi}_{i,J},S_{-i}) \ \forall \ \tilde{\psi}_{i, J}.
\end{align}
An NE $(S_1, \ldots, S_l)$  is a \emph{symmetric NE} if $S_i = S_j$ for all $i, j.$
\end{defn}
If $S_i\neq S_k$ for some $i,k\in \{1,\ldots,l\}$ in an NE strategy profile, then the strategy profile is an asymmetric NE.

In a symmetric game, as the one we consider, it is difficult to implement an asymmetric NE. For example, if there are two players and $(S_1,S_2)$ is an asymmetric NE i.e.  $S_1\neq S_2$, then $(S_2,S_1)$ is also an NE due to the symmetry of the game. The realization of such an NE is only possible when one player knows whether the other  is using $S_1$ or $S_2$. But,  apriori coordination among players is infeasible as the game is non co-operative.  


Note that if $m\geq l$, then primaries select the highest penalty $v$ at each node and will select one of the maximum independent sets of $G_{J}$ at channel state vector $J$ with probability $1$. This is because, when $m\geq l$, then, the channel of a primary will always be sold at a location. Hence, a primary will be always be able to sell its channel at the highest possible penalty. Henceforth, we will consider that $m<l$.

\subsection{Two Different Settings}\label{sec:two settings}
We consider two different settings: i) First, we consider that the region is small and consists of a few (but, multiple) locations (Section~\ref{sec:same_channel_state}).  Initially,  it is expected that the secondary market will be introduced in a small region consisting of few locations. In a small region, the usage statistic and the propagation condition of a channel will be similar at each location, thus, in an analytical abstraction we consider that the transmission rate offered by a channel is the same at each location.  In this setting, the interference relations amongst the locations may not be symmetric in general which we accommodate in our model. Since we only consider that the channel state is the same across the nodes, thus, $q_J=0$ for all those channel state vectors where the channel state is not identical at each location.

ii) Second, we consider the region consists of large number of locations (Section~\ref{sec:diff_channel_states}).  This is likely to happen in later stages of deployment of the secondary market. Since the geographical region is large, the transmission rate offered by a channel at different locations may be different. Thus, we consider that the channel state of a primary can be different at different locations. Given the large region, there will be an inherent symmetry in the interference relations among the locations which we characterize and exploit. 



\vspace{-0.2cm}


\section{Initial Results, Multiple NEs, and A Separation Result}
\subsection{Results Of One-shot Single Location Game}\label{sloneshot}
Now, we briefly summarize the main results of the game when it is limited to only one location, which we have studied in \cite{isit, arnob_ton}. 
 Since there is only node, thus, the channel state vector reduces to a scalar and  we denote $q_j$ when the channel is in state $j\in \{0,\ldots,n\}$ at that node. Note that there is no spatial reuse constraint in this setting, thus a primary\rq{}s decision is only to select a penalty. 

We start with following definitions. Let $w(x)$ be the probability of at least m successes out of $l-1$ independent Bernoulli trials, each of which occurs with probability $x.$ Thus,
 \begin{eqnarray}
w(x)&=&\sum_{i=m}^{l-1}\dbinom{l-1}{i}x^i(1-x)^{l-i-1}.\label{d4}
\end{eqnarray}
Note that $w(\cdot)$ is continuous and strictly increasing in $[0,1]$, so its inverse exists. 

Now, let
 for $1 \leq j \leq n$,
\begin{eqnarray}
L_0& =& U_1=v,\nonumber\\
 p_j-c& =& (f_j(U_j)-c)(1-w(\sum_{k=j}^{n}q_k))
\label{n51}\\
\mbox{and } L_j&=&g_j(\dfrac{p_j-c}{1-w(\sum_{k=j+1}^{n}q_k)}+c), U_j=L_{j-1}\label{n52}
\end{eqnarray}
Since $U_1=v$, thus we obtain $p_{j}, L_{j}$ (which in turn gives $U_{j+1}$) recursively starting from $j=1$ using \eqref{n51} and \eqref{n52}. Note that $v>L_1>\ldots>L_n$ and $f_j(L_j)>c$ \cite{isit}. We have shown
\begin{lem} \cite{isit,arnob_ton}\label{lm:computation}
A NE strategy profile  $\left(\phi_1(\cdot), \ldots, \phi_n(\cdot)\right)$ must
comprise of: \begin{align}\label{c5}
\phi_j(x)= & 
 0 ,  \text{if} \ x<L_j\nonumber\\
& \dfrac{1}{q_j}(w^{-1}(\dfrac{f_j(x)-p_j}{f_j(x)-c})-\sum_{k=j+1}^{n}q_k), \text{if} \ L_{j-1}\geq x\geq L_j\nonumber\\
& 1,  \text{if} \ x>L_{j-1}.
\end{align}
\end{lem}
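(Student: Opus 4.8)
The plan is to show that in \emph{any} symmetric NE, where every primary uses a common profile $(\phi_1,\dots,\phi_n)$, each $\phi_j$ must coincide with \eqref{c5}, by combining the indifference property of mixed equilibria with the single-crossing structure in Assumption~\ref{assum1}. Fix such an NE and condition on a tagged primary being in state $j\ge 1$. Since each of the other $l-1$ primaries independently draws a state and then a penalty, the probability that a given competitor makes an offer below $x$ equals $\pi(x):=\sum_{k=1}^{n}q_k\phi_k(x)$, and these events are independent and identically distributed across competitors; hence the number of competitors undercutting the tagged primary at penalty $x$ is a $\mathrm{Binomial}(l-1,\pi(x))$ random variable, and --- once atoms have been excluded so that ties are a.s.\ irrelevant --- the tagged primary sells iff at most $m-1$ competitors undercut it, an event of probability $1-w(\pi(x))$. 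Thus a state-$j$ primary quoting $x\le v$ gets expected payoff $\big(f_j(x)-c\big)\big(1-w(\pi(x))\big)$, while $x>v$ yields $0$; denoting by $p_j-c$ the supremum of this payoff, the standard mixed-strategy indifference argument gives $\big(f_j(x)-c\big)\big(1-w(\pi(x))\big)=p_j-c$ for every $x$ in the support of $\phi_j$, with $f_j(x)>c$ there.

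The central step is to show that the supports are \emph{stacked by quality}: if $x$ lies in the support of $\phi_j$ and $y$ in the support of $\phi_i$ with $i<j$, then $y\ge x$. Assume instead $y<x$; put $A=1-w(\pi(x))>0$ and $B=1-w(\pi(y))>0$ (positivity holds because $\pi(\cdot)\le\sum_{k\ge 1}q_k=1-q_0<1$ by \eqref{prob1} and $w(\cdot)<1$ on $[0,1)$). Optimality of $x$ for state $j$ gives $(f_j(x)-c)A\ge (f_j(y)-c)B$, and optimality of $y$ for state $i$ gives $(f_i(y)-c)B\ge (f_i(x)-c)A$; multiplying the two and cancelling $AB$ yields $\frac{f_i(y)-c}{f_j(y)-c}\ge\frac{f_i(x)-c}{f_j(x)-c}$, contradicting Assumption~\ref{assum1} (valid since $x>y>g_i(c)$, the last inequality because a state-$i$ primary quotes only penalties with positive margin). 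The same circle of ideas disposes of the accompanying regularity facts: no $\phi_j$ has an atom (an atom at $x_0$ makes $1-w(\pi(\cdot))$ jump down at $x_0$, so quoting just below $x_0$ strictly dominates $x_0$); each support is a single interval and consecutive supports abut with no gap (on any hole inside a support, or in any gap between two supports, $\pi(\cdot)$ is locally constant, so the payoff $(f_j(\cdot)-c)(1-w(\mathrm{const}))$ is strictly increasing there, contradicting its constancy on the support). Hence the support of $\phi_j$ is an interval $[L_j,L_{j-1}]$ with $v=L_0>L_1>\dots>L_n$.

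Given this ordering the explicit formula is pure bookkeeping. For $x\in[L_j,L_{j-1}]$: a competitor in any state $k>j$ offers below $x$ (its support lies at or below $L_j$ and has no atom there), a competitor in any state $k<j$ offers above $x$, a state-$0$ competitor makes no offer, and a state-$j$ competitor offers below $x$ with probability $q_j\phi_j(x)$; therefore $\pi(x)=\sum_{k=j+1}^{n}q_k+q_j\phi_j(x)$. Substituting into $(f_j(x)-c)(1-w(\pi(x)))=p_j-c$ and solving for $\phi_j(x)$ gives the middle line of \eqref{c5}; the values $0$ for $x<L_j$ and $1$ for $x>L_{j-1}$ are immediate since those penalties are outside the support. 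Evaluating the indifference equation at the endpoints pins down the constants and recovers \eqref{n51}--\eqref{n52}: at $x=L_{j-1}=U_j$ the state-$(j-1)$ mass is absent, so $\pi(U_j)=\sum_{k=j}^{n}q_k$ and $p_j-c=(f_j(U_j)-c)(1-w(\sum_{k=j}^nq_k))$, with $U_1=v$ because if the top of $\phi_1$'s support were below $v$ a state-$1$ primary would strictly gain by raising its penalty to $v$; at $x=L_j$ we have $\phi_j(L_j)=0$, i.e.\ $\pi(L_j)=\sum_{k=j+1}^nq_k$, giving $L_j=g_j\!\big(\tfrac{p_j-c}{1-w(\sum_{k=j+1}^nq_k)}+c\big)$. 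The same substitution shows $\phi_j$ is nondecreasing with $\phi_j(L_j)=0$ and $\phi_j(L_{j-1})=1$, so it is a genuine distribution.

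The main obstacle is the stacking step and its attendant regularity claims (no atoms, connected supports, no gaps): this is the only place Assumption~\ref{assum1} enters, and the boundary cases --- penalties equal to $g_i(c)$, supports that could a priori touch at a single point, and the top endpoint $v$ --- need to be handled with a little care. Once the monotone ordering of supports and the absence of atoms are in place, the rest (computing $\pi$ on each interval, inverting $w$, and reading the recursion off the endpoint conditions) is routine.
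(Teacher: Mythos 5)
Your proposal is correct and follows essentially the same route as the proof in the cited prior work (whose argument also appears verbatim in the source: the no-atom/continuity step, the support-stacking step obtained by multiplying the two optimality inequalities and invoking Assumption~\ref{assum1}, the no-gap step, and the endpoint bookkeeping that recovers \eqref{n51}--\eqref{n52}). The only informality is in the atom-exclusion step, where the strict gain from undercutting an atom should be quantified via the tie-breaking probability (which is bounded away from zero precisely because $m<l$), but this is exactly how the original proof handles it and does not affect correctness.
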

At channel state $j$ a primary selects a penalty using $\phi_j(\cdot)$. Note that $\phi_j(\cdot)$ not only depends on $q_j, j>1$ but also depends on $q_i, i\leq j$.  The support of $\phi_j(\cdot)$ is the closed interval $[L_{j},L_{j-1}]$ $j\in \{1,\ldots,n\}$. $L_{j-1}$ or $U_j$ is the upper endpoint of the support of $\phi_j(\cdot)$. $\phi_j(\cdot)$ is strictly increasing from $L_j$ to $U_j$ and there is no \lq\lq{}gap\rq\rq{} between the support sets of $\phi_j(\cdot), j=1,\ldots,n$ \cite{isit}. Fig.~\ref{fig:dist} illustrates $L_j$s and $U_j$s in an example scenario. Since a secondary always prefers a channel of the lower quality thus, Lemma~\ref{lm:computation} entails that primaries select prices to render the channel of the highest quality as more preferable to the secondaries at a location. 
\begin{figure*}
\begin{minipage}{.32\linewidth}
\begin{center}
\includegraphics[width=65mm, height=40mm]{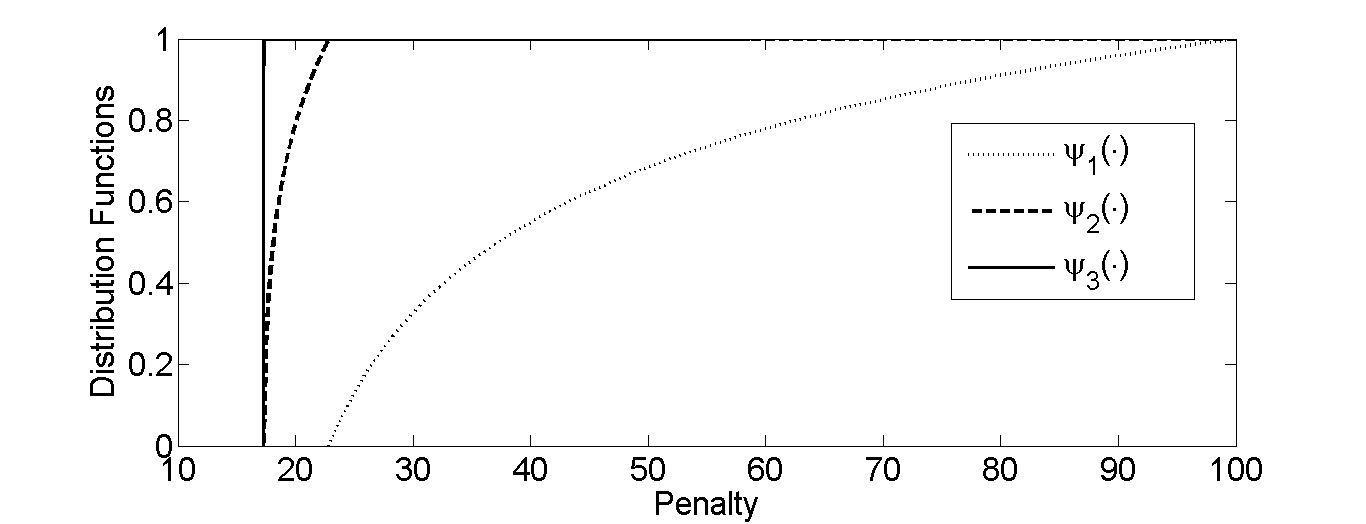}
\vspace*{-.1cm}

\end{center}
\end{minipage}\hfill
\begin{minipage}{.32\linewidth}
\begin{center}
\includegraphics[width=60mm, height=40mm]{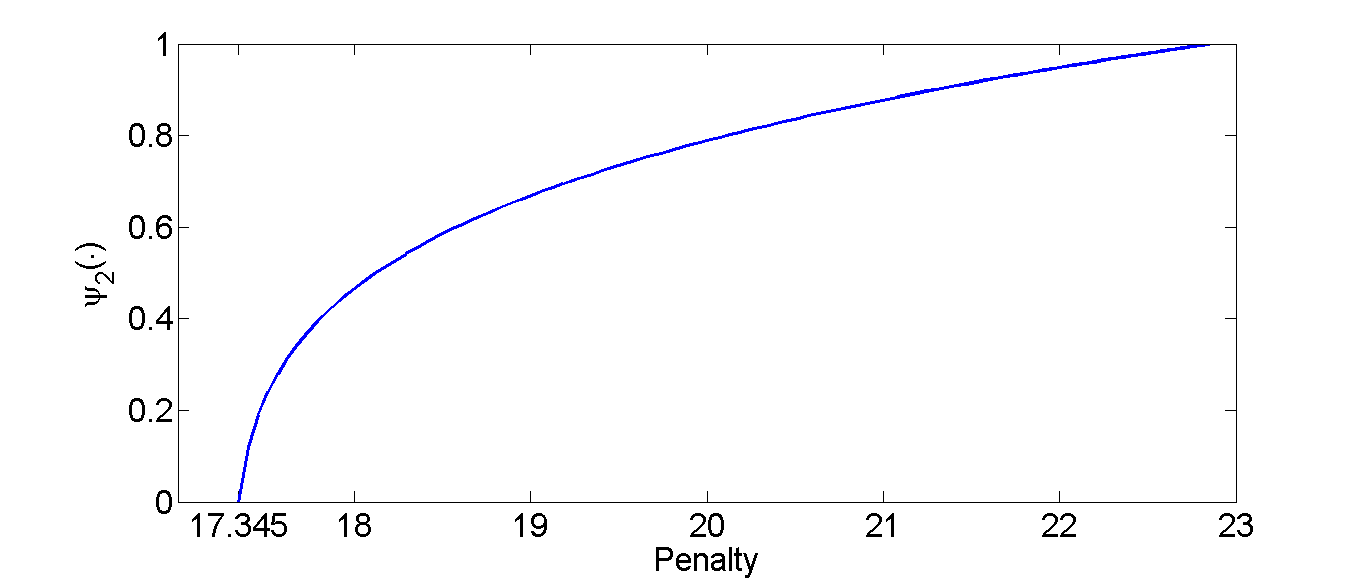}
\vspace*{-.1cm}
\end{center}
\end{minipage}\hfill
\begin{minipage}{.32\linewidth}
\begin{center}
\includegraphics[width=60mm, height=40mm]{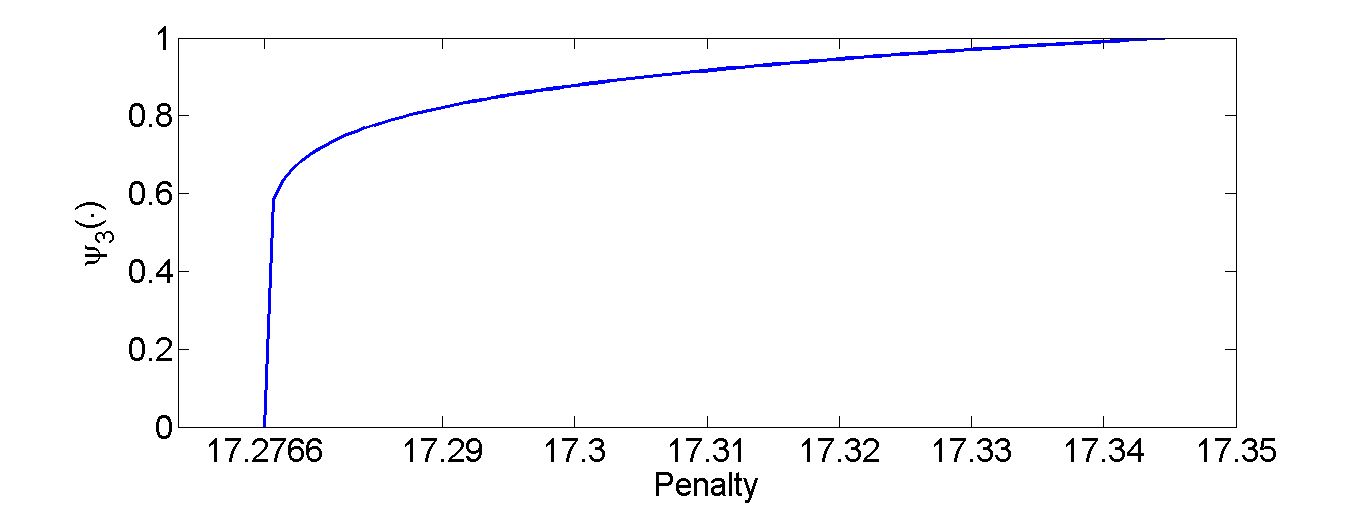}
\end{center}
\end{minipage}
\caption{\small Figure in the left hand side shows the d.f. $\psi_i(\cdot), i=1,\ldots,3$ as a function of penalty for an example setting: $v=100, c=1, l=21, m=10, n=3, q_1=q_2=q_3=0.2$ and $g_i(x)=x-i^3$. Note that support sets of $\psi_i(\cdot)$s are disjoint with $L_3=17.2766$, $U_3=17.345=L_2$, $U_2=22.864=L_1$, and $U_1=100=v$. Figures in the center and the right hand side show d.f. $\psi_2(\cdot)$ and $\psi_3(\cdot)$ respectively, using different scales compared to the left hand figure.}
\label{fig:dist}
\vspace*{-.6cm}
\end{figure*}
\begin{thm}\cite{isit,arnob_ton}\label{singlelocation}
The strategy profile, in which each  primary randomizes over the penalties in the range $[L_j,L_{j-1}]$ using the continuous distribution function $\phi_j(\cdot)$ (Lemma~\ref{lm:computation}) when the channel state is $j$, is the unique NE strategy profile. The expected payoff that a primary attains at every penalty within the interval $[L_j, L_{j-1}]$ is  $p_j-c$ at channel state $j$.
\end{thm}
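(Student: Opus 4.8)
The plan is to prove the theorem in two moves. Since Lemma~\ref{lm:computation} already asserts that \emph{every} NE strategy profile must be the symmetric profile built from $\phi_1(\cdot),\ldots,\phi_n(\cdot)$ in \eqref{c5}, it suffices to (i) verify that this profile is in fact an NE --- which also settles existence, not automatic here because the payoffs are discontinuous and the strategy spaces infinite --- and (ii) read the equilibrium payoff off the verification. Uniqueness then follows by combining the Lemma with~(i).

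For the verification I would fix one primary, suppose all $l-1$ others play the candidate profile, and first compute the probability $\rho(x)$ that a given opponent quotes a penalty strictly below $x$. Using that the supports of $\phi_n(\cdot),\ldots,\phi_1(\cdot)$ are the consecutive intervals $[L_n,L_{n-1}],\ldots,[L_1,L_0]$, that each $\phi_k(\cdot)$ is continuous (hence atom-free), and that an opponent in state $0$ offers nothing, one gets $\rho(x)=\sum_{k>j}q_k+q_j\phi_j(x)$ for $x\in[L_j,L_{j-1}]$, a continuous nondecreasing function of $x$ overall. Because the active opponents draw from continuous distributions with essentially disjoint supports, exact ties occur with probability zero, so a state-$j$ primary quoting $x\le v$ sells exactly when at most $m-1$ of the $l-1$ opponents quote below $x$, i.e.\ with probability $1-w(\rho(x))$; its expected payoff is therefore $\pi_j(x)=(f_j(x)-c)\bigl(1-w(\rho(x))\bigr)$ for $g_j(c)\le x\le v$, nonpositive for $x<g_j(c)$, and $0$ for $x>v$. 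Substituting the closed form~\eqref{c5} of $\phi_j(\cdot)$ into $\rho(x)$ collapses $w(\rho(x))$ to $(f_j(x)-p_j)/(f_j(x)-c)$, whence $\pi_j(x)\equiv p_j-c$ on all of $[L_j,L_{j-1}]$ --- this is the indifference condition a mixed equilibrium requires and is exactly the payoff claim of the theorem.

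It then remains to show no deviation beats $p_j-c$. For $x>v$ the payoff is $0<p_j-c$, using $p_j-c>0$ (from $f_j(L_j)>c$ and~\eqref{n51}). For $x$ inside the support $[L_k,L_{k-1}]$ of some $\phi_k(\cdot)$ with $k\ne j$, the same substitution gives $\pi_j(x)=(p_k-c)\dfrac{f_j(x)-c}{f_k(x)-c}$; checking continuity of $\pi_j(\cdot)$ at the breakpoints via~\eqref{n51}--\eqref{n52} pins $\pi_j(L_{j-1})=\pi_j(L_j)=p_j-c$, and then Assumption~\ref{assum1} makes $x\mapsto(f_j(x)-c)/(f_k(x)-c)$ monotone in the direction that forces $\pi_j(x)<p_j-c$ away from $[L_j,L_{j-1}]$: increasing on the supports lying below $L_j$ and decreasing on those lying above $L_{j-1}$. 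For $x<L_n$ one has $\rho(x)=0$, so $\pi_j(x)=f_j(x)-c<f_j(L_n)-c=\pi_j(L_n)<p_j-c$ by monotonicity of $f_j(\cdot)$. Hence the candidate profile is an NE, and with Lemma~\ref{lm:computation} it is the unique one. I expect the deviation analysis to be the main obstacle --- organizing the case split over the intervals $[L_k,L_{k-1}]$, confirming via~\eqref{n51}--\eqref{n52} that $\pi_j(\cdot)$ is continuous across the thresholds, and invoking Assumption~\ref{assum1} with the indices in the right roles so the monotonicity points the correct way on each side of the support of $\phi_j(\cdot)$. By contrast, the substitution producing the constant payoff $p_j-c$ is routine once $\rho(x)$ has been correctly identified.
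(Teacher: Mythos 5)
Your proposal is correct and follows essentially the same route as the paper's (cited) proof: take the necessity statement of Lemma~\ref{lm:computation} as given, verify that the candidate profile yields the constant payoff $p_j-c$ on $[L_j,L_{j-1}]$, and rule out deviations interval by interval using Assumption~\ref{assum1}. The only difference is cosmetic — you glue the intervals via continuity of $\pi_j$ at the breakpoints and monotonicity of $(f_j(x)-c)/(f_k(x)-c)$ on each $[L_k,L_{k-1}]$, whereas the paper telescopes the identity $p_t-c=(p_s-c)\prod_{i=s}^{t-1}\frac{f_{i+1}(L_i)-c}{f_i(L_i)-c}$ and applies Assumption~\ref{assum1} at each threshold, which amounts to the same chain of inequalities (note only the harmless slip that $\pi_n(L_n)=p_n-c$ with equality, not strict inequality).
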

\subsection{Multiple Asymmetric NEs}\label{sec:multipleNEs}
We first show that there can be multiple NEs in this game unlike in the single location game. Consider the linear conflict graph (Fig.~\ref{fig:linear}) with 2 nodes, 2 primaries and 1 secondary. 

We show multiple asymmetric NEs for two different settings which we have discussed in Section~\ref{sec:two settings}. First, we consider the setting where the channel state is the same across the network. Thus, a primary needs to select a strategy when the channel state is not $0$ across the network. Note that if primaries selects different nodes, then each primary can attain a maximum profit of $(f_i(v)-c)$ at the channel state $i$ which corresponds to selecting penalty $v$. Thus, both the following strategy profiles are asymmetric NE: 1) primary 1 (2, respectively) selects $V_1$ ($V_2$, respectively) w.p. $1$ and selects penalty $v$ irrespective of the channel state; 2) primary 1 (2, respectively) selects $V_2$ ($V_1$, respectively) w.p. $1$ and selects penalty $v$ w.p. $1$ irrespective of the channel state across the network. The realization of one of the above NEs is possible only when a primary knows other\rq{}s strategy apriori; this is ruled out  due to non-cooperation.  Thus, asymmetric NE can not be realized in this game.

Now, we will provide multiple asymmetric NE strategies for the above linear conflict graph when the channel state can be different at different locations using the NE penalty strategy for single location as presented in Section~\ref{sloneshot}. We need to specify strategy at each possible channel state vector.   We consider $n=1$ i.e. at any given node the channel is either available (state $1$) or not (state $0$). We also consider that the channel state of a primary is $1$ at a given location w.p. $q_1$ independent of the channel state at other location.  The following strategy profiles are NE strategy profiles:
i) When the channel state vector is $(0,1)$ ($(1,0)$ respv.) then a primary selects node $2$ ($1$ respv.) w.p. $1$ and selects the single location penalty strategy stated in Theorem~\ref{singlelocation} with $q_1q_0$ in place of $q_1$\footnote{$q_1q_0$ is the probability that the channel state vector is either $(0,1)$ or $(1,0)$.}. When the channel state vector is $(1,1)$ then primary $1$ (primary $2$ respv.) selects node $1$ (node $2$ respv.) w.p. $1$ and selects penalty $v$ w.p. $1$. 

ii) When the channel state vector is either $(0,1)$ or $(1,0)$ then the strategy profile is the same as before. When channel state vector is $(1,1)$ then primary $1$ (primary $2$ respv.) selects node $2$ (node $1$ respv.) w.p. $1$ and selects penalty $v$ w.p. $1$. 

Note that NE strategy profiles cited above are asymmetric. The game is a symmetric one since  primaries have the same action sets, payoff functions and their channels are statistically identical.  In a symmetric game, we have already discussed in Section~\ref{sec:strategy} that implementing an asymmetric NE is difficult.
  We therefore focus on finding a symmetric NE and investigate whether it is unique. Clearly, for any symmetric NE, we can represent the strategy of any primary as $S = (\psi_{1}(.), \psi_2(.),.....,\psi_{|\mathcal{P}|}(.)) $ where we drop the index corresponding to the primary.

\subsection{A Separation Result}\label{sec:separation} 



We now observe that the NE penalty selection at a node in an independent set can be uniquely computed using the single location NE penalty selection strategy stated in Section~\ref{sloneshot} once the independent set selection strategy is known. 
\begin{lem}\label{separation-s1}
Suppose, under a symmetric NE, each primary offers its channel which is at state $j$ at node $a$ for sale at node $a$ w.p. $\alpha_{a,j}$. Then, the unique NE penalty distribution of each primary is the d.f. $\phi_j(\cdot)$ as described in Lemma~\ref{lm:computation} with $\alpha_{a,j}$ in place of $q_j$ at node $a$.
\end{lem}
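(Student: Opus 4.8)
The plan is to reduce the multi-location pricing problem, conditioned on a fixed independent-set selection strategy, to the single-location game of Section~\ref{sloneshot} node by node. First I would fix a node $a$ and a channel state $j\ge 1$ at that node, and examine the sub-decision faced by a primary whose channel is at state $j$ at $a$ and who, under the symmetric NE, has already committed (via its independent-set p.m.f.) to offering the channel at $a$. The key observation is that the payoff collected at node $a$ is additively separable from the payoffs at all other nodes of the chosen independent set: by the definition of the per-node payoff in Section~\ref{sec:strategy}, the primary earns $f_j(x)-c$ at $a$ if it undersells the $m$-th lowest competing penalty there and $0$ otherwise, regardless of what happens at the other nodes. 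Hence, for the purpose of choosing the penalty \emph{at node $a$}, the primary is playing exactly a one-shot single-location game against its competitors, where the only thing that matters from the competitors' strategies is the distribution of how many of them offer their channel at $a$ and at what penalty.

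Next I would identify the effective "availability probability" at node $a$. Under the symmetric NE, a competing primary offers its channel at node $a$ precisely when (i) its channel state at $a$ is some $j'\ge 1$ — an event of probability governed by $q_J$ — and (ii) its independent-set draw includes $a$; by hypothesis the combined probability of this, when its state at $a$ is $j$, is $\alpha_{a,j}$, and the relevant aggregate is $\sum_{k\ge j}$ of the per-state offering probabilities, playing the role that $\sum_{k=j}^n q_k$ played in \eqref{n51}--\eqref{n52}. So the marginal game at node $a$ is structurally identical to the single-location game with $q_j$ replaced by $\alpha_{a,j}$ (and the tail sums replaced correspondingly). By Theorem~\ref{singlelocation}, that single-location game has a \emph{unique} NE penalty profile, namely $\phi_j(\cdot)$ of Lemma~\ref{lm:computation} with $\alpha_{a,j}$ substituted for $q_j$. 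Since in a symmetric NE every primary must be best-responding at node $a$ conditioned on offering there, and the per-node best response is forced to be this unique distribution, the claimed form follows; I would also note the consistency/fixed-point aspect — $\alpha_{a,j}$ is itself determined by the equilibrium independent-set strategy, but the lemma only asserts the \emph{conditional} uniqueness given $\alpha_{a,j}$, so no circularity arises.

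A couple of technical points need care. One must argue that the primary cannot do better by correlating its penalty at $a$ with its penalties at other nodes or with which independent set it drew: additive separability of the payoff plus the fact that the competitors' offers at $a$ are (conditionally) independent of what this primary does elsewhere kills any such benefit, so the penalty at $a$ can be optimized pointwise for each independent set containing $a$, and optimality forces the \emph{same} distribution $\phi_j$ regardless of which independent set it is — this is what lets us speak of a single penalty d.f. per node per state. One must also check the edge conditions: if $\alpha_{a,j}=0$ the node is never offered and the statement is vacuous there; and the monotonicity/positivity facts ($v>L_1>\dots>L_n$, $f_j(L_j)>c$) carry over verbatim since they depend only on $w(\cdot)$ and the penalty functions, not on the value of the availability parameter.

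\textbf{Main obstacle.} I expect the crux to be making rigorous the claim that a primary gains nothing from jointly randomizing its penalty at $a$ together with its independent-set choice or its penalties elsewhere — i.e. that the best response genuinely decomposes across nodes. Once that decomposition is established, the rest is a direct invocation of the single-location uniqueness result (Theorem~\ref{singlelocation}) with the relabeled parameter, which is essentially bookkeeping.
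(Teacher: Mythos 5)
Your proposal is correct and matches the paper's (implicit) reasoning: the paper states Lemma~\ref{separation-s1} without a written proof, treating it as an immediate consequence of the additive separability of the per-node payoffs together with the single-location uniqueness result (Theorem~\ref{singlelocation} and Lemma~\ref{lm:computation}) applied at each node with $\alpha_{a,j}$ in place of $q_j$, which is exactly the decomposition you carry out. The only nitpick is that uniqueness pins down the \emph{marginal} penalty d.f.\ at node $a$ rather than forcing identical conditional distributions across the independent sets containing $a$, but this does not affect the conclusion.
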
 

We next obtain the expression for $\alpha_{a,j}$. We first introduce some notations:
Let $\mathcal{I}_J$ be the set of independent sets of the graph $G_J$. Let $\mathcal{P}_{a,j}$ be the set of channel state vectors where the channel state is $j$ at node $a$. 
\begin{defn}\label{defn:beta}
Let $\beta_J(I)$ be the probability with which the independent set $I\in \mathcal{I}_J$ is selected by a primary, under a symmetric NE strategy when the channel state vector is $J$. 
\end{defn} 
Note that though $\beta_J(I)$ depends on the symmetric NE strategy, we do not make it explicit in the notation in order to keep the notational simplicity. Thus,
\begin{align}\label{defn:nodeprob_general}
\alpha_{a,j}= \sum_{I\in \mathcal{I}_J: a\in I}\sum_{J: J\in \mathcal{P}_{a,j}} q_J\beta_J(I)
\end{align}

 Since the penalty selection strategy of a primary is unique given the independent set selection strategy $\{\beta_J(I)\}$  (by Lemma~\ref{separation-s1}), henceforth, we only focus on independent set  selection probability which provides the node selection probability as defined in (\ref{defn:nodeprob_general}). 


\vspace{-0.3cm}
\section{Same Channel State Across the region}\label{sec:same_channel_state}
We first consider the setting where the channel state is the same across the network. Recall from Section~\ref{sec:two settings} that this setting occurs when the region is of moderate size. We first introduce some notations specific to this setting (Section~\ref{sec:modification}).  We focus on symmetric NEs on a specific class of conflict graphs known as mean valid graph since conflict graphs of most of the commonly observed  wireless networks of moderate sizes belong to this category (Section~\ref{sec:meanvalidgraph}). We subsequently focus on a policy which provides a storage and computation efficient NE strategy (if it exists) (Section~\ref{sec:policy}). We identify certain key properties that any NE strategy profile of the above policy  (should it exist) must satisfy (Section~\ref{sec:structuremultiplenodes}). Then, we show that the identified structure is a unique and there exists a strategy profile  which satisfies the identified structure (Theorem~\ref{dist:existence}). We show that the strategy profile which satisfies the identified structure is an NE (Theorem~\ref{nemulti}). Finally, we investigate the uniqueness and implementation issues of the symmetric NE profile (Section~\ref{sec:symmetricNEunique}). 

\subsection{Modifications of Notations}\label{sec:modification}
Since the channel state is the same across the region, we denote the channel state vector $J$ as the scalar $j$ in this setting when the channel state is $j$ at each location. For example, if the channel state is $3$ everywhere, we denote the channel state at the network as $3$.  $q_J=0$ for all $J$ where the channel state is not identical at each location and we denote the probability that the channel state is $j$ over the region as $q_j$ with slight abuse of notation.  Note that in this setting, when the channel state is $j\geq 1$, then the channel is available at each node, hence, a primary always selects an independent set from the conflict graph $G$ when the channel is available. 

We replace $\beta_{J}(I)$ in Definition~\ref{defn:beta} with $\beta_j(I)$ which denotes the probability with which a primary selects independent set $I$ under a symmetric NE strategy. Note that $\mathcal{P}_{a,j}$ is now simply $j$. $\alpha_{a,j}$ is thus,  
\begin{align}\label{charac}
\alpha_{a,j}= \sum_{I: a\in I} q_j\beta_j(I)
\end{align}
Also note from (\ref{prob1}) that the channel state is $0$ over the network with some non zero probability i.e.
\begin{align}\label{prob}
\sum_{j=1}^{n}q_j<1
\end{align}
The cardinality of the strategy space $\mathcal{P}$ in this setting is $n$. The  NE strategy profile is thus represented as $(\psi_{1}(\cdot),\ldots,\psi_n(\cdot))$ in this setting.   Note that though the state of a channel is the same across the nodes, the propagation condition and the usage level of different channels can be different, thus, a primary is still not aware of the states of the channel of other primaries. 
 \subsection{Mean Valid Graphs}\label{sec:meanvalidgraph}
 In practice most of the finite size wireless networks are of the following types:
\begin{itemize}
\item Wireless network of roadside shops.
\item Wireless network of buildings.
\item Cellular networks with hexagonal or square cells.
\end{itemize}
Conflict graphs of all the above wireless networks belong to a category, introduced as {\em mean valid graphs}\cite{gauravjsac}. 

 
 \begin{defn}\label{dmvg}\cite{gauravjsac}
A graph $G=(V,E)$ is said to be a mean valid graph if and only if 
\begin{enumerate}
\item Its vertex set can be partitioned into $d$ disjoint maximal\footnote{ An independent set $I$ is said to be maximal if for each $a\notin I, a\in V$, $I\cup \{a\}$ is not an independent set \cite{graph}.} I.S. for some integer $d\geq 2: V=I_1\cup I_2\cup\ldots\cup I_d$\footnote{For example,  linear conflict graph (Fig.~\ref{fig:linear}) is mean valid graph with $d=2$, with $I_1$ being the set of odd numbered nodes and $I_2$ being the set of even numbered nodes. In Fig.~\ref{fig:grid} $d=4$, with $I_1=\{V_{1,1},V_{1,3},\ldots,V_{1,k_o},V_{3,1},V_{3,3},\ldots,V_{3,k_o},\ldots\}, I_2=\{V_{1,2},V_{1,4},\ldots, V_{1,k_e},V_{3,2},V_{3,4},\ldots,V_{3,k_e},\ldots\}, I_3=\{V_{2,1},V_{2,3}, \ldots,V_{2,k_o}, V_{4,1}, V_{4,3},\ldots, V_{4,k_o},\ldots \}, I_4=\{V_{2,2}, V_{2,4}, \ldots, V_{2,k_e}, V_{4,2},V_{4,4},\ldots, V_{4,k_e},\ldots\}$, where $k_o$ (respectively, $k_e$) denote the greatest odd (respectively, even) integer less than or equal to $k$.} where $I_s, s\in\{1,\ldots,d\}$, is a maximal independent set and $I_s\cap I_r=\emptyset, s\neq r$.  Let, $|I_s|=M_s$,
\begin{equation}\label{eq:orderedcardinality}
M_1\geq M_2\geq \ldots\geq M_d.
\end{equation}
and $I_s=\{a_{s,k}:k=1,\ldots, M_s\}.$ 
\item Suppose $I\in \mathcal{I}$contains $m_s(I)$ nodes from $I_s,s=1,\ldots,d$, then ,
\begin{equation}\label{mvg}
\sum_{s=1}^{d}\dfrac{m_s(I)}{M_s}\leq 1\quad \forall I\in \mathcal{I}.
\end{equation}
\end{enumerate}
\end{defn}
$I_1,\ldots, I_d$ are said to characterize the mean valid graph. The following graphs are mean valid graphs\cite{gauravjsac}.
\begin{itemize}
\item Linear Graph constitutes a conflict graph for locations along a highway or a row of shops (Fig.~\ref{fig:linear}). It is a mean valid graph with $d=2$. 
\item Grid Graph constitutes a conflict graph for a building (Fig.~\ref{fig:grid})  or cellular network with square cells. It is a mean valid graph with $d=4$. Three dimensional grid graph is also a mean valid graph with $d=8$.
\item Conflict graph of a cellular network with hexagonal cells is also a mean valid graph with $d=3$, if it has an even number of rows and all rows have the same number of nodes which should be a multiple of 3.
\end{itemize} 
Henceforth, we focus on mean valid graphs in this setting. 

\subsection{ A storage \& Computation efficient policy}\label{sec:policy}
 As in any graph, in mean valid graphs, the number of independent sets grows exponentially with the number of nodes. We have to compute probability distribution over all independent sets in order to find an independent set selection strategy. Thus, computation and storage requirements grow exponentially as the number of nodes increases. However, mean valid graphs are characterized by maximal independent sets $I_1,\ldots,I_d$ which partition the set of nodes. 
So, if there exists an NE strategy profile which only selects independent sets amongst $I_1,\ldots, I_d$, then we only need to store $d$ independent sets and the corresponding probability distribution. Thus, the storage and computation requirement only scales with $d$ and does not increase exponentially with the number of nodes. We therefore examine if there exists an NE strategy profile under which 
\begin{itemize}
\item Each primary  selects \textbf{only} independent sets  in $\{I_1,\ldots, I_d\}$. Specifically, at channel state $j$, independent set $I_k, k\in \{1,\ldots,d\}$ is selected with probability $t_{k,j}$.
\end{itemize}
Under the policy, thus, 
\begin{align}\label{eq:beta}
\beta_j(I_k)=t_{k,j}\quad \forall k\in \{1,\ldots d\} \quad \text {such that }
\sum_{k=1}^{d}\beta_j(I_k)=1.
\end{align}
Thus,  from (\ref{charac}) and (\ref{eq:beta})
for any two nodes $s,r\in I_k ,k\in \{1,\ldots,d\}, j\in\{1,\ldots,n\}$:
\begin{align}\label{dist1}
\alpha_{s,j}=\alpha_{r,j}=q_jt_{k,j}\quad \sum_{k=1}^{d}t_{k,j}=1.
\end{align}
In the next section, we show that there exists a unique symmetric NE strategy which satisfies (\ref{dist1}).
\vspace{-0.2cm}

\subsection{Characterization of Symmetric NE}\label{sec:structuremultiplenodes}
We first, characterize the properties that any symmetric NE strategy profile of the form (\ref{dist1}) must satisfy. 

By virtue of Theorem~\ref{singlelocation} and Lemma~\ref{separation-s1}, we know the penalty selection strategy for each state at a given node for a given NE independent set selection strategy. The support sets of penalty distributions are contiguous (Section~\ref{sloneshot}). However, the end-points of the support sets are not necessarily the same across the location. Surprisingly, we show that the upper endpoints of the penalty selection strategy at a particular channel state $i$, $i=1,\ldots,n$ are identical across different locations regardless of the choice of independent sets (Lemma~\ref{upend}). We show that there exists a threshold such that only those independent sets, whose cardinalities are equal to or greater than that threshold, are selected with positive probabilities (Lemma~\ref{bestr}).   Drawing from the above lemmas we characterize the structure that any NE strategy profile of the form (\ref{dist1}) (if it exists) has to satisfy (Theorem~\ref{thm:structure}). The proofs of the results have been provided at the end of this subsection. 

We start with some notations which we use throughout. 
\begin{defn}
Let,
\begin{align}\label{defnW}
W(x)=1-w(x).
\end{align}
\end{defn}
Since $w(\cdot)$ is continuous and strictly increasing (by (\ref{d4})),thus, $W(\cdot)$ is a continuous and  a strictly decreasing function with $W(0)=1$.
\begin{defn}\label{gammadefn}
Let $\gamma_{s,j}$ denote the probability that a channel of state $j$ or higher is offered at a node of $I_s$. Thus,
\begin{align}\label{recurg}
\gamma_{s,j}=\sum_{k=j}^{n}t_{s,k}q_k=\sum_{k=j}^{n}\alpha_{a,k}.
\end{align}
\end{defn}
From (\ref{recurg}), we obtain a recursive method to calculate $\gamma_{s,j}$.
\begin{align}\label{eq:recursivegamma}
\gamma_{s,j-1}=\sum_{k=j-1}^{n}t_{s,k}q_k=t_{s,j-1}q_{j-1}+\gamma_{s,j}.
\end{align}

In the class of policies of the form (\ref{dist1}), $\alpha_{a,j}$ is equal to $q_jt_{s,j}$ for every node $a$ in independent set $I_s, s\in \{1,\ldots,d\}$.  Thus, by Lemma~\ref{separation-s1} the penalty selection strategy at any node of $I_s$ is given by Lemma~\ref{lm:computation} with $q_jt_{s,j}$ in place of $q_j$.  Thus, by (\ref{n51}), (\ref{n52}), and Theorem~\ref{singlelocation}, expected payoff obtained by a primary at every node of $I_s$ at channel state $j$ is--
\vspace{-0.4cm}
\begin{align}\label{n152a}
p_{s,j}-c& =(f_j(U_{s,j})-c)(1-w(\sum_{i=j}^{n}t_{s,i}q_i))\nonumber\\
& =(f_j(U_{s,j})-c)W(\gamma_{s,j})
\end{align}
where 
\begin{eqnarray}
U_{s,j}& =&g_j(\dfrac{p_{s,j}-c}{W(\gamma_{s,j})}+c) \quad U_{s,1}=v, U_{s,j}=L_{s,j-1}\label{n152up}\\
L_{s,j}& =& g_j(\dfrac{p_{s,j}-c}{W(\gamma_{s,j+1})}+c)\quad L_{s,0}=v.\label{n152inv}
\end{eqnarray}
\begin{rmk}
Starting from $U_{s,1}=v$, we can find $p_{s,1}$ using (\ref{n152a}) which we use to find  $L_{s,1}$ (from (\ref{n152inv})). Since $L_{s,1}=U_{s,2}$, thus utilizing $U_{s,2}$   we obtain $p_{s,2}$ (from (\ref{n152a})) which in turn gives $L_{s,2}$ (from (\ref{n152inv})). Thus, recursively we obtain $U_{s,j}, p_{s,j}, L_{s,j}$ for all $s\in \{1,\ldots,d\}$ and $j\in \{1,\ldots,n\}$. Hence, we can easily compute a penalty selection strategy at each node of $I_s$ for a given $t_{s,j}$.
\end{rmk}
\begin{rmk}
Note from Lemmas~\ref{lm:computation} and ~\ref{separation-s1} that  each primary selects penalty only from the interval $[L_{s,j},U_{s,j}]$ at channel state $j$ at every node of $I_s$ when $t_{s,j}>0$.
\end{rmk} 

 Since $p_{s,j}-c$ is the expected payoff that a primary gets at any node in $I_s$ at channel state $j$ when primaries select $I_s$ with probability $t_{s,j}>0$, thus, the expected payoff to a primary at channel state $j$ over independent set $I_s$ when $t_{s,j}>0$ is 
\begin{equation}\label{payoffis}
M_s(p_{s,j}-c)=M_s(f_j(U_{s,j})-c)W(\gamma_{s,j})\quad(\text{from} (\ref{n152a})).
\end{equation}
Now, we introduce some notations that we use throughout.
\begin{defn}\label{maxexpayoff}
Let $P_j(I_k)$ denote the maximum expected payoff that a primary can get at independent set $I_k$ at channel state $j$ when other primaries select a symmetric  NE strategy profile which is of the form (\ref{dist1}) . Let $P_j^{*}$ be the maximum among $P_j(I_r)$ $r\in \{1,\ldots,d\}$ i.e. 
\begin{align}
P_j^{*}=\max_{r\in \{1,\ldots,d\}}P_j(I_r).\nonumber
\end{align}
Let $B_j$ denote the set of indices out of $I_1,\ldots, I_d$ which are selected with positive probability under a symmetric NE strategy profile at channel state $j$.
\end{defn}
At channel state $j$ an independent set  is selected with positive probability in an NE strategy profile only if the expected payoff at that independent set is\footnote{Consider that in an NE strategy profile $I_s$ is selected w.p. $t_{s,j}>0$, but expected payoff is strictly less than $P^{*}_j$ which it obtains at $I_r$ (say). Let in the NE strategy profile $I_r$ is selected w.p. $t_{r,j}$. Note that the expected payoff of a primary at an independent set only depends on the strategy of other primaries. Thus, a primary can unilaterally deviate by selecting $I_r$ w.p. $t_{s,j}+t_{r,j}$ and $I_s$ w.p. $0$; but under the new strategy profile its expected payoff is strictly higher. Hence, the original strategy profile can not be an NE.} $P^{*}_j$ ; hence when the channel state is $j$, then
\begin{eqnarray}\label{breq1}
M_s(f_j(U_{s,j})-c)W(\gamma_{s,j})=P^{*}_j\quad \text{if } s\in B_j (\text{from} (\ref{payoffis})).
\end{eqnarray}
Now, we are ready to state the results.
\begin{lem}\label{upend}
 If $t_{s,j}>0, t_{r,j}>0$, then $U_{s,j}=U_{r,j}$.
\end{lem}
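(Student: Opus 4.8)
The plan is to show that the upper endpoint $U_{s,j}$ of the penalty support at a node of $I_s$ (when $t_{s,j}>0$) does not actually depend on $s$, by arguing backwards on the channel state index $j$ from $j=1$. The key structural fact to exploit is the recursive chain $U_{s,1}=v$ and $U_{s,j}=L_{s,j-1}$, together with formulas \eqref{n152a}--\eqref{n152inv}, which express $U_{s,j}$ (equivalently $p_{s,j}$, $L_{s,j}$) purely in terms of $v$, the functions $f_j,g_j,W$, and the quantities $\gamma_{s,j}=\sum_{k\ge j}t_{s,k}q_k$. So the whole question reduces to showing that the relevant $\gamma_{s,j}$ coincide across those $s$ with $t_{s,j}>0$, or more precisely that the recursion forces $U_{s,j}$ to be the same even if the $\gamma$'s differ.

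First I would handle the base case $j=1$: by \eqref{n152up}, $U_{s,1}=v$ for every $s$, so trivially $U_{s,1}=U_{r,1}$ whenever $t_{s,1}>0$ and $t_{r,1}>0$. Next, for the inductive step, suppose the claim holds for state $j-1$, and take $s,r$ with $t_{s,j}>0$, $t_{r,j}>0$. The subtlety is that $U_{s,j}=L_{s,j-1}$, and $L_{s,j-1}$ is defined via \eqref{n152inv} using $p_{s,j-1}$ and $W(\gamma_{s,j})$; if $t_{s,j-1}=0$ the quantity $p_{s,j-1}$ still makes sense through the recursion but I must be careful that the chain of definitions is well posed. I would argue that because $\gamma_{s,j}$ only collects the masses $t_{s,k}q_k$ for $k\ge j$, and because at each state the equal-payoff condition \eqref{breq1} ties $M_s(f_j(U_{s,j})-c)W(\gamma_{s,j})$ to the common value $P^*_j$, one can invert: fixing $U_{s,j}$, the value $W(\gamma_{s,j})=P^*_j/\big(M_s(f_j(U_{s,j})-c)\big)$, and then $p_{s,j}$, $L_{s,j}$ are determined. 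Feeding this into the recursion from $j=1$ downward shows that $U_{s,j}$ is the unique solution of a fixed relation that does not reference $s$ beyond $M_s$; and the monotonicity of $f_j$, $W$, and $g_j$ guarantees uniqueness of that solution, forcing $U_{s,j}=U_{r,j}$.

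The cleanest route, which I would actually pursue, is to prove the stronger auxiliary statement that the common upper endpoint at state $j$, call it $\bar U_j$, satisfies a single recursion $\bar U_1=v$, and $\bar U_j$ is obtained from $\bar U_{j-1}$ by the same formulas \eqref{n152a}--\eqref{n152inv} with a common $\bar\gamma_j$ — and then show every $s$ with $t_{s,j}>0$ must have $U_{s,j}=\bar U_j$. Concretely: by the induction hypothesis all such $s$ share $U_{s,j}=L_{s,j-1}$, but $L_{s,j-1}$ depends on $s$ only through $\gamma_{s,j}$ (given the shared $U_{s,j-1}=\bar U_{j-1}$ and shared $p$-value at $j-1$ forced by \eqref{breq1} at state $j-1$). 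Using \eqref{eq:recursivegamma}, $\gamma_{s,j-1}=t_{s,j-1}q_{j-1}+\gamma_{s,j}$, and the state-$(j-1)$ payoff equality, one pins down $\gamma_{s,j}$, hence $L_{s,j-1}=U_{s,j}$, independent of $s$.

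The main obstacle I anticipate is the bookkeeping around independent sets that are \emph{not} selected at some intermediate state but are selected again later (i.e. $t_{s,j-1}=0$ but $t_{s,j}>0$): one must check that the recursive definitions of $p_{s,j-1}$, $L_{s,j-1}$ remain the right objects to plug in, and that the equal-payoff argument at state $j-1$ can still be invoked (or circumvented) for such $s$. I expect this is resolved by observing that $L_{s,j-1}$ and $U_{s,j}$ are defined by the recursion regardless of whether the corresponding state is played, so the algebra goes through; but articulating this carefully, rather than the core computation, is where the real care is needed.
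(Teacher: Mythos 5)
There is a genuine gap: your forward induction is circular at its core step. To show $U_{s,j}=L_{s,j-1}$ equals $U_{r,j}=L_{r,j-1}$ via \eqref{n152inv}, you need $\dfrac{W(\gamma_{s,j-1})}{W(\gamma_{s,j})}=\dfrac{W(\gamma_{r,j-1})}{W(\gamma_{r,j})}$. The equal-payoff condition \eqref{breq1} at state $j-1$ gives $M_sW(\gamma_{s,j-1})=M_rW(\gamma_{r,j-1})$, so what you actually need is $M_sW(\gamma_{s,j})=M_rW(\gamma_{r,j})$. But by \eqref{breq1} at state $j$, that identity is equivalent (dividing the two payoff equalities) to $f_j(U_{s,j})=f_j(U_{r,j})$, i.e.\ to the very conclusion $U_{s,j}=U_{r,j}$ you are trying to establish. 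In the paper this identity appears as \eqref{n2} and is a \emph{consequence} of Lemma~\ref{upend}, not an input to it. Relatedly, your proposal never invokes Assumption~\ref{assum1}, yet the strict inequality \eqref{con1} is what makes the lemma true; a purely algebraic/recursive argument that ignores it cannot close.

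The secondary issue you flag yourself is also not dismissible: if $t_{s,j}>0$ but $t_{s,j-1}=0$, then $s\notin B_{j-1}$, so neither the induction hypothesis nor \eqref{breq1} at state $j-1$ applies to $s$, and nothing "pins down" $\gamma_{s,j}$. You cannot rule this case out here, because the monotonicity $t_{s,k}>0\Rightarrow t_{s,j}>0$ for $j<k$ (Lemma~\ref{prop}/Corollary~\ref{obs1}) is proved later and its proof depends on the present lemma. The paper's actual argument is a deviation/exchange argument by contradiction: assuming $U_{s,j}>U_{r,j}$, it locates the largest $k<j$ with $t_{r,k}>0$ so that $L_{r,k}=U_{r,j}$ (Observation~\ref{identity2}), compares the state-$k$ and state-$j$ payoffs at the two penalties $U_{s,j}$ and $U_{r,j}$, and uses the strict ratio inequality \eqref{con1} to exhibit a payoff strictly above $P^*_j$, a contradiction. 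You would need to replace your induction with an argument of this type.
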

The above lemma shows that upper end points of penalty selection strategy is the same across the nodes of  the independent sets which are chosen with positive probability.\footnote{Note that we have not shown any relation between $L_{s,j}$ and $L_{r,j}$. Thus, even though $U_{s,j}=U_{r,j}$, it is possible that $L_{s,j}\neq L_{r,j}$. But if $t_{s,j+1}>0, t_{r,j+1}>0$, then from Lemma~\ref{upend} we obtain $U_{s,j+1}=U_{r,j+1}$; since $L_{s,j}=U_{s,j+1}, L_{r,j}=U_{r,j+1}$, thus we have $L_{s,j}=L_{r,j}$. Hence, lower endpoint of penalty selection strategy at every node of independent sets $I_s, I_r$ is also the same if both $I_s, I_r$ are selected with positive probabilities for both the states $j$ and $j+1$.}
\begin{rmk}
From lemma ~\ref{upend} we can write $U_{s,j}$ as $U_j$ $\forall s\in B_j$. So, for any $s,r\in B_j$, we must have from (\ref{breq1}) 
\begin{align}\label{n2}
M_s(f_j(U_{j})-c)W(\gamma_{s,j})& =M_r(f_j(U_{j})-c)W(\gamma_{r,j})=P^{*}_j.\nonumber\\
M_sW(\gamma_{s,j})& =M_rW(\gamma_{r,j}).
\end{align} 
\end{rmk}
Next lemma characterizes the best response set $B_j$.
\begin{lem}\label{bestr}
There exists an integer $d_j\in\{1,\ldots,d\}$, such that $I_1,\ldots, I_{d_j}$ are selected with positive probability and $I_{d_j+1},\ldots,I_d$ are selected with zero probability at channel state $j$.
\end{lem}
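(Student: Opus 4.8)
The plan is to show that the quantity $M_s W(\gamma_{s,j})$, which by the remark after Lemma~\ref{upend} must equal the common value $P^*_j/(f_j(U_j)-c)$ for every $s\in B_j$, forces the indices in $B_j$ to be a prefix $\{1,\dots,d_j\}$ of the ordering $M_1\ge M_2\ge\cdots\ge M_d$. First I would fix $j$ and record the two facts I will lean on: (i) for all $s\in B_j$, $M_s W(\gamma_{s,j})=c_j$ for a constant $c_j$ depending only on $j$ (equation~(\ref{n2})); and (ii) $M_s(f_j(U_j)-c)W(\gamma_{s,j})=P^*_j$ is the \emph{maximum} expected payoff over all $I_r$, so for any $r\notin B_j$ (hence $t_{r,j}=0$) the payoff a primary \emph{would} get by deviating to $I_r$ is at most $P^*_j$, i.e. $M_r(f_j(\tilde U_{r,j})-c)W(\tilde\gamma_{r,j})\le P^*_j$, where $\tilde\gamma_{r,j},\tilde U_{r,j}$ are the corresponding quantities when $t_{r,j}=0$.

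The key monotonicity step is this: I want to argue that $s\mapsto$ ``attractiveness of $I_s$'' is non-increasing in the cardinality-ordering, so that if a small independent set $I_r$ is used then every larger one $I_s$ (with $s<r$, $M_s\ge M_r$) is also used. Concretely, suppose $r\in B_j$ but some $s<r$ has $s\notin B_j$; I would derive a contradiction by comparing payoffs. Since $s\notin B_j$ we have $t_{s,j}=0$, so $\gamma_{s,j}=\sum_{k=j+1}^n t_{s,k}q_k$ only gets contributions from states above $j$; the idea is that omitting $I_s$ at state $j$ makes the competition on $I_s$'s nodes \emph{weaker} (smaller $\gamma_{s,j}$, hence larger $W(\gamma_{s,j})$ since $W$ is strictly decreasing), while $M_s\ge M_r$, so the deviation payoff on $I_s$ would be at least $M_s(f_j(U_j)-c)W(\gamma_{s,j})\ge M_r(f_j(U_j)-c)W(\gamma_{r,j})=P^*_j$; combined with the fact that using $I_s$ at state $j$ can only reduce its own $W$-value continuously toward the value it would have in equilibrium, an intermediate-value/continuity argument produces a positive probability $t_{s,j}$ at which $I_s$ attains exactly $P^*_j$, contradicting $s\notin B_j$ (which required its best attainable payoff to be $<P^*_j$, or rather: if it can attain $P^*_j$ it should be in the support). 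I would need to be careful to phrase this as: the maximum over feasible $t_{s,j}\in[0,1]$ of $I_s$'s payoff is a continuous function that at $t_{s,j}=0$ is $\ge P^*_j$, so $P_j(I_s)\ge P^*_j$, hence $s$ can (and in a symmetric NE where mass must go to maximizers, does) belong to $B_j$ — or more cleanly, show directly that $s\notin B_j$ with $M_s\ge M_r$, $r\in B_j$ is impossible because a primary would strictly prefer $I_s$.

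The main obstacle I anticipate is handling the coupling between states through $\gamma_{s,j}=t_{s,j}q_j+\gamma_{s,j+1}$: the $\gamma_{s,j}$ that appears is not a free parameter but is tied to the full vector $(t_{s,j},\dots,t_{s,n})$, and the quantities $U_j$, $P^*_j$ are themselves determined self-consistently. I would manage this by working downward in $j$ is not available here (the lemma is stated per-state), so instead I would treat $\gamma_{s,j+1}$ as given/fixed when analyzing state $j$ and argue purely about the state-$j$ allocation, using that $W$ is strictly decreasing and $f_j(U_j)-c>0$ (from the single-location analysis, $f_j(L_j)>c$) so that all the payoff expressions are strictly monotone in the relevant variable. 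A secondary subtlety is ruling out ties being ``skipped'' — i.e. ensuring $B_j$ is exactly $\{1,\dots,d_j\}$ and not, say, $\{1,3\}$ with $2$ omitted; this again follows from the same comparison since $M_2\ge M_3$ forces $I_2$ to be at least as attractive as $I_3$, so $3\in B_j\Rightarrow 2\in B_j$. Once the ``prefix'' structure is established, $d_j:=\max B_j$ is the desired integer and the statement follows.
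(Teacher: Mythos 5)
Your overall plan --- show that the ``attractiveness'' $M_sW(\gamma_{s,j})$ is non-increasing in $s$ so that $B_j$ must be a prefix of $\{1,\dots,d\}$ --- is the right one and matches the paper's strategy. But the central inequality you rely on, $M_sW(\gamma_{s,j})\ge M_rW(\gamma_{r,j})$ for $s<r$, does not follow from the two facts you cite ($M_s\ge M_r$ and $t_{s,j}=0$). When $t_{s,j}=0$ you get $\gamma_{s,j}=\gamma_{s,j+1}$, but $\gamma_{s,j+1}=\sum_{k>j}t_{s,k}q_k$ is determined by how heavily $I_s$ is used at the \emph{higher} states, and nothing at this stage prevents $\gamma_{s,j+1}>\gamma_{r,j}$ (consider a candidate profile with $t_{s,n}=1$ and $t_{r,j}$ small). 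So ``weaker competition on $I_s$'s nodes'' is a comparison against the wrong baseline, and your proposed way around the cross-state coupling --- treating $\gamma_{s,j+1}$ as fixed and arguing purely about the state-$j$ allocation --- is precisely what cannot work. The paper proves the needed monotonicity (its Lemma~\ref{lm:strictpayoff}) by a crossover argument over states $k\in\{j,\dots,n\}$: it is anchored at $\gamma_{\cdot,n+1}=0$, where $M_rW(0)\ge M_sW(0)$ holds trivially, locates the first state $k$ at which the inequality flips, deduces $t_{r,k}>0$ hence $r\in B_k$, and contradicts the equilibrium condition (\ref{n34}) at state $k$. That cross-state step is the essential missing idea.

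Second, even granting the monotonicity, your assertion that the deviation payoff at $I_s$ ``would be at least $M_s(f_j(U_j)-c)W(\gamma_{s,j})$'' needs an argument: the deviator's payoff depends on the penalty it selects and on the opponents' penalty distributions at the nodes of $I_s$ for the states at which $I_s$ \emph{is} used. The paper exhibits a concrete penalty achieving this bound, namely $L_{s,k}$ with $k$ the largest state below $j$ for which $s\in B_k$ (or $v$ if there is none), for which the win probability is exactly $W(\gamma_{s,k+1})=W(\gamma_{s,j})$ by Observation~\ref{identity1}, and it must then prove $L_{s,k}\ge U_j$ (its Lemma~\ref{lowerendpoint}, which invokes Assumption~1). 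Your proposal skips this entirely. Finally, the intermediate-value argument in which you ``vary $t_{s,j}\in[0,1]$'' conflates the deviating primary's choice variable (a penalty, chosen against a fixed opponents' profile) with the opponents' mixing probability, which is held fixed in the definition of $P_j(I_k)$; no continuity argument is needed once one shows a single profitable deviation exists.
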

Thus, from (\ref{eq:orderedcardinality}), only those independent sets whose cardinalities are greater than or equal to $M_{d_j}$ are selected with positive probabilities at channels state $j$ . 
We show in Lemma~\ref{prop} that this  above threshold $M_{d_j}$ is a non-decreasing function in channel state $j$. 

 In an NE strategy only those independent sets are selected with positive probabilities which give an expected payoff of $P^{*}_j$, thus we can evaluate the expected payoff under the NE strategy using Lemma~\ref{bestr}. Since we know from Lemma~\ref{bestr} that NE strategy profile only selects those independent sets whose indices are less than or equal to $d_j$, thus, under NE strategy expected payoff of a primary at channel state $j$  is given by
\begin{align}\label{c61}
P^{*}_{j}=M_s(f_j(U_j)-c)W(\gamma_{s,j}) \quad s\leq d_j.
\end{align}
 
We will also show that $P^{*}_j\geq M_r(f_j(U_j)-c)W(\gamma_{r,j})$ for $r>d_j$ to prove Lemma~\ref{bestr}  (Lemma~\ref{lm:strictpayoff} in Section~\ref{sec:proof_lemma_bestr}). Drawing from the above it readily follows that
\begin{thm}\label{thm:structure}
The structure of a symmetric NE strategy profile which satisfies (\ref{dist1}) (if it exists), is of the following form $\forall a\in I_s$ for $j\in \{1,\ldots,n\}$
\begin{align}\label{dist}
\alpha_{a,j}=q_jt_{s,j},\sum_{s=1}^{d}t_{s,j}=1, t_{s,j}>0, s\leq d_j, t_{s,j}=0, s>d_j
\end{align}
such that 
\begin{align}\label{ne2}
& M_1W(\gamma_{1,j})=\ldots=M_{d_j}W(\gamma_{d_j,j})\geq M_{d_{j}+1}W(\gamma_{d_j+1,j})
 \geq M_{d_{j}+2}W(\gamma_{d_j+2,j})\geq \ldots\geq M_dW(\gamma_{d,j}).
\end{align}
\end{thm}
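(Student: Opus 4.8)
The plan is to assemble Theorem~\ref{thm:structure} from Lemmas~\ref{upend}, \ref{bestr} and~\ref{lm:strictpayoff} together with the payoff identities (\ref{breq1}), (\ref{n2}) and (\ref{c61}). A small preliminary observation that I would record first is that $f_j(U_j)-c>0$ for every selected state and set: from (\ref{n152up}) one has $f_j(U_{s,j})-c=(p_{s,j}-c)/W(\gamma_{s,j})$, and recursively from $U_{s,1}=v$, $p_{s,1}-c=(f_1(v)-c)W(\gamma_{s,1})$ together with $W(\gamma_{s,j})>0$ (because $\gamma_{s,j}\le\sum_k q_k<1$ by (\ref{prob}) and $W$ is strictly decreasing with $W(0)=1$), this gives positivity exactly as in the single-location analysis. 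This lets the common factor $f_j(U_j)-c$ be cancelled wherever it appears.

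Fix a channel state $j$ and a symmetric NE of the restricted form (\ref{dist1}), so that $\alpha_{a,j}=q_jt_{s,j}$ for every $a\in I_s$ and $\sum_{s=1}^{d}t_{s,j}=1$. Lemma~\ref{bestr} immediately supplies the threshold index $d_j$ with $t_{s,j}>0$ for $s\le d_j$ and $t_{s,j}=0$ for $s>d_j$; together with the normalization from (\ref{dist1}) this is precisely the combinatorial content of (\ref{dist}). For the equality block of (\ref{ne2}) I would invoke Lemma~\ref{separation-s1} and Theorem~\ref{singlelocation}, which give that the payoff of a primary over $I_s$ at state $j$ equals $M_s(f_j(U_{s,j})-c)W(\gamma_{s,j})$; since in any NE every independent set carrying positive probability must return the common maximum $P^{*}_j$ (the unilateral-deviation argument preceding (\ref{breq1})), (\ref{breq1}) holds for all $s\in B_j=\{1,\dots,d_j\}$. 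Now apply Lemma~\ref{upend} to collapse $U_{s,j}$ to a single value $U_j$ on $B_j$ and cancel $f_j(U_j)-c$: this turns (\ref{breq1}) into (\ref{n2}), i.e. $M_1W(\gamma_{1,j})=\dots=M_{d_j}W(\gamma_{d_j,j})$, and (\ref{c61}) records the common value as $P^{*}_j/(f_j(U_j)-c)$.

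It remains to establish the decreasing tail $M_{d_j}W(\gamma_{d_j,j})\ge M_{d_j+1}W(\gamma_{d_j+1,j})\ge\dots\ge M_dW(\gamma_{d,j})$. The first inequality drops out of Lemma~\ref{lm:strictpayoff}: from $P^{*}_j\ge M_r(f_j(U_j)-c)W(\gamma_{r,j})$ for $r>d_j$, cancelling $f_j(U_j)-c$ and using (\ref{c61}) gives $M_{d_j}W(\gamma_{d_j,j})\ge M_rW(\gamma_{r,j})$ for every $r>d_j$. For the finer ordering among the \emph{unselected} sets I would argue that $\gamma_{r,j}=0$ whenever $r>d_j$: a set $I_r$ that receives probability $0$ at state $j$ also receives probability $0$ at every higher state $k>j$, i.e. the thresholds are monotone, $d_n\le d_{n-1}\le\dots\le d_1$ (this is the content of Lemma~\ref{prop}, which also falls out of the proof of Lemma~\ref{bestr}); hence by (\ref{eq:recursivegamma}), $\gamma_{r,j}=\sum_{k\ge j}t_{r,k}q_k=0$, so $W(\gamma_{r,j})=W(0)=1$ and the tail of (\ref{ne2}) collapses to $M_{d_j+1}\ge M_{d_j+2}\ge\dots\ge M_d$, which is (\ref{eq:orderedcardinality}). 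Combining the combinatorial structure, the equality block and the decreasing tail yields (\ref{dist})--(\ref{ne2}).

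The main obstacle is this last step — ruling out that a small independent set that is unused at a low channel state becomes used at a higher one, equivalently the monotonicity $d_n\le\dots\le d_1$. Everything else is bookkeeping on (\ref{breq1}), (\ref{n2}), (\ref{c61}) once Lemmas~\ref{upend}, \ref{bestr} and~\ref{lm:strictpayoff} are in hand; the monotonicity is where the interplay between the recursion (\ref{eq:recursivegamma}) for $\gamma_{s,j}$, the cardinality ordering $M_1\ge\dots\ge M_d$, and the strict monotonicity of $W$ and of the $f_j$'s genuinely has to be used, and I would establish it by a downward induction on $j$, comparing the candidate payoffs $M_sW(\gamma_{s,j})$ across $s$ at state $j$ given the structure already known at states $j+1,\dots,n$.
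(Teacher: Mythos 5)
Your proposal is correct and assembles Theorem~\ref{thm:structure} from exactly the ingredients the paper uses — Lemma~\ref{bestr} for the threshold structure (\ref{dist}), Lemma~\ref{upend} together with (\ref{breq1}) and (\ref{n2}) for the equality block, and Lemma~\ref{lm:strictpayoff} for the ordering — which is precisely how the paper treats it ("drawing from the above it readily follows"). The one caveat is your treatment of the tail: the chain $M_{d_j+1}W(\gamma_{d_j+1,j})\geq\ldots\geq M_dW(\gamma_{d,j})$ already follows verbatim from Lemma~\ref{lm:strictpayoff} applied to consecutive indices, so the detour through threshold monotonicity and $\gamma_{r,j}=0$ is unnecessary and mildly dangerous — Lemma~\ref{prop} is proved in the paper \emph{from} (\ref{ne2}), so you must (as you gesture at) re-derive the monotonicity from Lemma~\ref{lm:strictpayoff} and (\ref{n34}) rather than cite it, and note also that the inequality $P^{*}_j\geq M_r(f_j(U_j)-c)W(\gamma_{r,j})$ for $r\notin B_j$ is Lemma~\ref{lmmax}, not Lemma~\ref{lm:strictpayoff}.
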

Note that the number of equations  increases linearly with the number of states $n$.

Theorem~\ref{thm:structure} provides an  iterative way to compute $t_{s,j}$ for all $s,j$. Noting that $\gamma_{s,n}=t_{s,n}q_n$, (\ref{ne2})  has only one variable $t_{s,n}$ at $j=n$ for $s\in \{1,\ldots,d\}$. Thus, we first compute $t_{s,n}$ for all $s$ using (\ref{dist}) and (\ref{ne2}) for $j=n$. From (\ref{eq:recursivegamma}), $\gamma_{s,n-1}$ depends on $\gamma_{s,n}$ and $t_{s,j-1}$. Since we have already computed $t_{s,n}$ or $\gamma_{s,n}$, thus we solve for $t_{s,n-1}$ from (\ref{dist}) and (\ref{ne2}).  Thus, recursively we obtain $t_{s,j}$ for all $s$ and $j$. {\em A primary only needs to know $I_1,\ldots, I_d$ to compute the independent set selection strategy and does not need to know the information regarding the network (e.g. edges)}. 
\begin{example}\label{example:dist_calculate}
We consider a grid graph with $d=4$ and $k=5$ (Fig.~\ref{fig:grid}) . Here,  $M_1=9, M_2=M_3=6, M_4=4$. We consider $l=20, m=6, n=3, q_1=q_2=q_3=0.2$. We first calculate $t_{s,3}$ for all $s$. We obtain $M_1W(\gamma_{1,3})=7.5324$, $M_2W(\gamma_{2,3})=6, M_3W(\gamma_{3,3})=6$, $M_4W(\gamma_{4,3})=4$. Thus, $d_3=1$ and the solution of (\ref{dist}) and (\ref{ne2}) is: $t_3=(1,0,0,0)$, where $t_j=(t_{1,j},\ldots,t_{d,j})$ for $j=1,\ldots,3$. Next, we compute $t_{s,2}$ following the recursive algorithm we stated. We obtain $d_2=3$ and $ t_2=(0.2532,0.3734,0.3734,0)$.  Finally, we calculate $t_{s,1}$. We obtain $d_1=3$ and $t_1=(0.071,0.4645,0.4645,0)$ Fig.~\ref{fig:example_ind_prob} shows plots of $t_{s,j}$ for all $s$ and $j$.
\end{example}
\begin{figure}
\includegraphics[width=90mm,height=40mm]{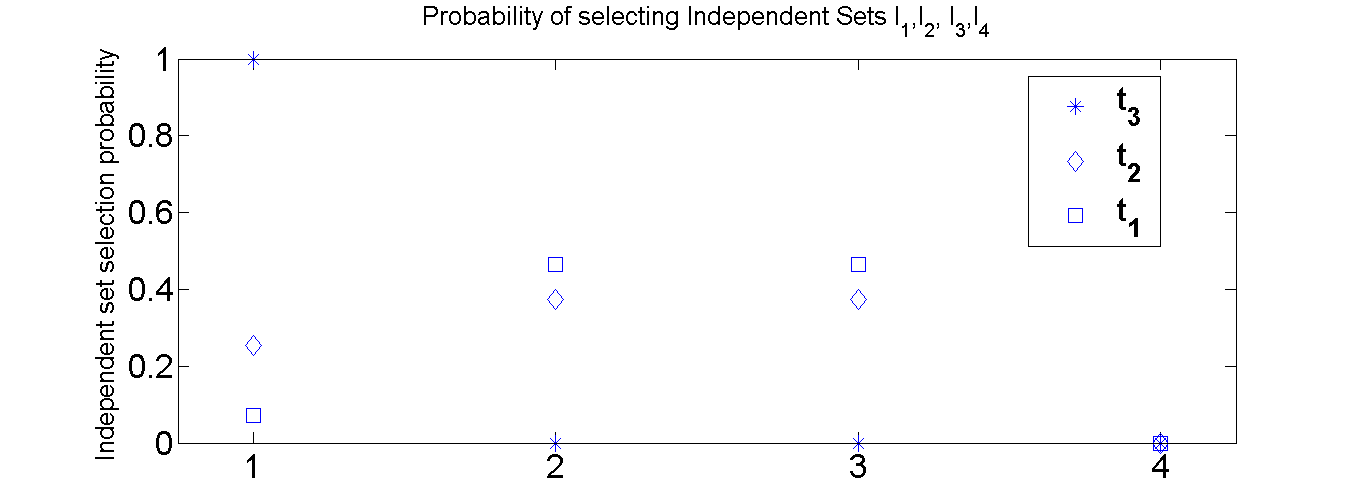}
\caption{\small This figure shows  $t_j=(t_{1,j},\ldots,t_{d,j})$ at channel state $j=1,2,3$ for Example~\ref{example:dist_calculate}.}
\label{fig:example_ind_prob}
\vspace{-0.5cm}
\end{figure}

\subsubsection{Proof of Lemma~\ref{upend}}We first deduce some results  which we use throughout. 

Since $\gamma_{s,j}\leq \sum_{i=1}^{n}q_i<1$, thus $p_{s,j}-c>0$. Hence,
\begin{align}\label{pay0}
f_j(U_{s,j})>c, f_j(L_{s,j})>c.
\end{align} 
Now, we provide the expression for expected payoff that a primary attains at $L_{s,i}$ $i=1,\ldots,n$ at any node in $I_s$ at channel state $j$. Note that players with channel state higher than $i$ select a penalty lower than or equal to $L_{s,i}$ with probability $1$ and players with channel state lower than or equal to $i$ select a penalty lower than or equal to $L_{s,i}$ with probability $0$ at every node of $I_s$. Thus, the expected payoff to a primary when it selects penalty $L_{s,i}$ at channel state $j\in \{1,\ldots,n\}$ at any node of $I_s$ is
 \begin{align}\label{expayofflowerendpointstate}
 (f_j(L_{s,i})-c)W(\sum_{k=i+1}^{n}q_{k}t_{s,k})=(f_j(L_{s,i})-c)W(\gamma_{s,i+1}).
\end{align}

Now, we state and prove Observations~\ref{identity1} and \ref{identity2} which we use throughout.
\begin{obs}\label{identity1}
$\gamma_{s,k}=\gamma_{s,k_1}+\sum_{i=k}^{k_1-1}t_{s,i}q_i$ for $s\in \{1,\ldots,d\}$, $n\geq k_1>k$.
\end{obs}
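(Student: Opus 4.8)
\textbf{Proof proposal for Observation~\ref{identity1}.}

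The claim is purely a telescoping identity built from the recursion in~\eqref{eq:recursivegamma}, so the plan is to unfold that recursion $k_1 - k$ times. First I would recall the one-step identity $\gamma_{s,i} = t_{s,i}q_i + \gamma_{s,i+1}$ for $1 \le i \le n$, which is exactly~\eqref{eq:recursivegamma} with $j$ replaced by $i+1$ (and is valid by the definition of $\gamma_{s,j}$ in~\eqref{recurg}, with the convention $\gamma_{s,n+1}=0$). Applying this at $i = k$ gives $\gamma_{s,k} = t_{s,k}q_k + \gamma_{s,k+1}$; applying it again at $i = k+1$ to the term $\gamma_{s,k+1}$ gives $\gamma_{s,k} = t_{s,k}q_k + t_{s,k+1}q_{k+1} + \gamma_{s,k+2}$, and so on.

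The clean way to write this is induction on $k_1$ for fixed $k$ (or equivalently on the gap $k_1 - k$). For the base case $k_1 = k+1$ the statement reads $\gamma_{s,k} = \gamma_{s,k+1} + t_{s,k}q_k$, which is precisely~\eqref{eq:recursivegamma}. For the inductive step, assume $\gamma_{s,k} = \gamma_{s,k_1} + \sum_{i=k}^{k_1-1} t_{s,i}q_i$ holds for some $k_1$ with $k < k_1 < n$; then substituting $\gamma_{s,k_1} = \gamma_{s,k_1+1} + t_{s,k_1}q_{k_1}$ (again~\eqref{eq:recursivegamma}) yields $\gamma_{s,k} = \gamma_{s,k_1+1} + \sum_{i=k}^{k_1} t_{s,i}q_i$, which is the statement for $k_1 + 1$. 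This closes the induction and covers all $k_1$ with $n \ge k_1 > k$.

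There is essentially no obstacle here — the only thing to be careful about is the indexing of the recursion (which shift of $j$ in~\eqref{eq:recursivegamma} to invoke) and the range of validity ($k_1 \le n$, so that every $\gamma_{s,i}$ appearing has $i \le n$ and the recursion applies). I would state it as a two-line induction and move on; the identity is used only as bookkeeping in the subsequent payoff comparisons.
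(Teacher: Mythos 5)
Your proof is correct. The paper's own argument is even more direct: it simply splits the defining sum in~(\ref{recurg}) as $\gamma_{s,k}=\sum_{i=k}^{k_1-1}t_{s,i}q_i+\sum_{i=k_1}^{n}t_{s,i}q_i=\sum_{i=k}^{k_1-1}t_{s,i}q_i+\gamma_{s,k_1}$, with no induction needed; your telescoping of the one-step recursion~(\ref{eq:recursivegamma}) arrives at the same identity and is equally valid, just slightly longer.
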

The observation readily follows from (\ref{recurg}). Since from (\ref{recurg})
\begin{align}
\gamma_{s,k}=\sum_{i=k}^{k_1-1}t_{s,i}q_i+\sum_{i=k_1}^{n}t_{s,i}q_i
=\sum_{i=k}^{k_1-1}t_{s,i}q_i+\gamma_{s,k_1}.\nonumber
\end{align}
\begin{obs}\label{identity2}
$U_{s,j}=L_{s,j}$ for $j\in\{1,\ldots,n\}$ if and only if (iff) $t_{s,j}=0$. $U_{s,j}=L_{s,k}$ iff $t_{s,i}=0$ $\forall k<i<j$.  Hence, $U_{s,j}=v$ iff $t_{s,k}=0$ $\forall k<j$.
\end{obs}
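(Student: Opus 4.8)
\textbf{Proof proposal for Observation~\ref{identity2}.}

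The plan is to unwind the recursive definitions \eqref{n152up} and \eqref{n152inv} of $U_{s,j}$ and $L_{s,j}$, using the fact that the penalty distribution at a node of $I_s$ is exactly $\phi_j(\cdot)$ with $q_j t_{s,j}$ in place of $q_j$ (Lemmas~\ref{separation-s1} and \ref{lm:computation}). First I would establish the single-state claim: $U_{s,j}=L_{s,j}$ iff $t_{s,j}=0$. From \eqref{n152a}, $p_{s,j}-c=(f_j(U_{s,j})-c)W(\gamma_{s,j})$, and from \eqref{n152up}, \eqref{n152inv} we have $U_{s,j}=g_j\!\left(\frac{p_{s,j}-c}{W(\gamma_{s,j})}+c\right)$ and $L_{s,j}=g_j\!\left(\frac{p_{s,j}-c}{W(\gamma_{s,j+1})}+c\right)$. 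Since $g_j$ is strictly increasing hence injective, $U_{s,j}=L_{s,j}$ holds iff $\frac{p_{s,j}-c}{W(\gamma_{s,j})}=\frac{p_{s,j}-c}{W(\gamma_{s,j+1})}$. By \eqref{pay0} we have $p_{s,j}-c>0$, so this is equivalent to $W(\gamma_{s,j})=W(\gamma_{s,j+1})$; because $W(\cdot)$ is strictly decreasing (hence injective), this is equivalent to $\gamma_{s,j}=\gamma_{s,j+1}$, which by the recursion \eqref{eq:recursivegamma}, $\gamma_{s,j}=t_{s,j}q_j+\gamma_{s,j+1}$, holds iff $t_{s,j}q_j=0$, i.e.\ iff $t_{s,j}=0$ (as $q_j>0$ whenever state $j$ occurs; the case $q_j=0$ is degenerate and can be treated separately since then $I_s$ is never selected at state $j$ anyway).

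Next I would prove the second claim, $U_{s,j}=L_{s,k}$ (for $k<j$) iff $t_{s,i}=0$ for all $k<i<j$. The key identity is that the support sets of the penalty distributions for consecutive states are contiguous with no gap (Section~\ref{sloneshot}): by \eqref{n152up} we always have $U_{s,j}=L_{s,j-1}$. Iterating, $U_{s,j}=L_{s,j-1}$, and by the first part $L_{s,j-1}=U_{s,j-1}$ iff $t_{s,j-1}=0$, in which case $U_{s,j}=U_{s,j-1}=L_{s,j-2}$, and so on. Formally, a downward induction on $j-k$: $U_{s,j}=L_{s,j-1}$ always; and $L_{s,j-1}=L_{s,k}$ iff $U_{s,j-1}=L_{s,j-1}=\cdots$ collapses, which by the single-state claim applied repeatedly requires exactly $t_{s,i}=0$ for each $i$ with $k<i<j$. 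Conversely, if all those $t_{s,i}=0$, then each intermediate $U_{s,i}=L_{s,i}$ and the chain $U_{s,j}=L_{s,j-1}=U_{s,j-1}=L_{s,j-2}=\cdots=L_{s,k}$ telescopes to give $U_{s,j}=L_{s,k}$. Finally, the statement $U_{s,j}=v$ iff $t_{s,k}=0$ for all $k<j$ is the special case obtained by taking $k=0$, since $L_{s,0}=v$ by the convention in \eqref{n152inv}.

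I do not anticipate a serious obstacle here; the argument is essentially bookkeeping on the recursions, and the only subtlety is keeping track of which index the equalities $U_{s,j}=L_{s,j-1}$ versus $U_{s,j}=L_{s,k}$ refer to, and making sure the injectivity of $g_j$ and strict monotonicity of $W$ are invoked at the right places together with the strict positivity $p_{s,j}-c>0$ from \eqref{pay0}. The cleanest exposition is to first nail the single-state iff (the base case), then get everything else by telescoping the no-gap identity $U_{s,j}=L_{s,j-1}$.
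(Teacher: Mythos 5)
Your proposal is correct and follows essentially the same route as the paper: deduce $\gamma_{s,j}=\gamma_{s,j+1}$ from $U_{s,j}=L_{s,j}$ via \eqref{n152up}--\eqref{n152inv} (using injectivity of $g_j$, strict monotonicity of $W$, and $p_{s,j}-c>0$), convert to $t_{s,j}=0$ via the recursion for $\gamma_{s,j}$, and then telescope the identity $U_{s,j}=L_{s,j-1}$ for the multi-state and $U_{s,j}=v$ claims. Your explicit handling of the degenerate case $q_j=0$ is a small extra care the paper omits, but it does not change the argument.
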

\begin{proof}
$L_{s,j}=U_{s,j}$ implies from (\ref{n152up}) and (\ref{n152inv}) that $\gamma_{s,j+1}=\gamma_{s,j}$; thus by Observation~\ref{identity1} we have $t_{s,j}=0$. On the other hand if $t_{s,j}=0$ then by Observation~\ref{identity1} $\gamma_{s,j}=\gamma_{s,j+1}$. Thus,   it follows that  $U_{s,j}=L_{s,j}$ iff $t_s{,j}=0$. 

Since $L_{s,k}=U_{s,k+1}$, hence $U_{s,j}=L_{s,k}$ iff $t_{s,i}=0$ $\forall k<i<j$. Hence, $U_{s,j}=L_{s,1}=U_{s,1}$ iff $t_{s,i}=0$ $\forall i<j$. On the other hand by (\ref{n152up}) $U_{s,1}=v$.  Thus, the result follows.
\end{proof}

Now we are ready to show Lemma~\ref{upend}.\\
\textit{Proof of Lemma~\ref{upend}}: 


Since both $s,r\in B_j$, hence from (\ref{breq1})
\begin{align}\label{n30}
M_s(f_j(U_{s,j})-c)W(\gamma_{s,j})=M_r(f_j(U_{r,j})-c)W(\gamma_{r,j})=P^{*}_j.\nonumber\\
\end{align}
Suppose, the statement is false, i.e. $U_{s,j}\neq U_{r,j}$ when $s,r\in B_j$. Without loss of generality, we can assume that $U_{s,j}>U_{r,j}$. So, $U_{r,j}<v$. Thus, by Observation~\ref{identity2}, there exists $k$ such that $t_{r,k}>0$, $k<j$ and $L_{r,k}=U_{r,j}$. Thus, from (\ref{breq1})
\begin{align}\label{n27a}
& P^{*}_k=M_r(f_k(U_{r,k})-c)W(\gamma_{r,k})\nonumber\\ & =M_r(f_k(L_{r,k})-c)W(\gamma_{r,k+1}) (\text{from }(\ref{n152up})\& (\ref{n152inv})).
\end{align}
If a primary selects penalty $U_{s,j} (=L_{s,j-1})$ at a node of $I_s$ when its channel state is $k$, then from (\ref{expayofflowerendpointstate}) its expected payoff would be
\begin{align}
(f_k(U_{s,j})-c)W(\gamma_{s,j})=\dfrac{(f_k(U_{s,j})-c)}{(f_j(U_{s,j})-c)}\dfrac{P^{*}_j}{M_s} \quad (\text{from } (\ref{n30})).\nonumber
\end{align}
Thus a primary obtains an expected payoff of at least 
\begin{align}
M_s(f_k(U_{s,j})-c)W(\gamma_{s,j})=P^{*}_j\dfrac{(f_k(U_{s,j})-c)}{(f_j(U_{s,j})-c)}.\nonumber
\end{align}
at independent set $I_s$ at channel state $k$. By definition of $P^{*}_k$,
\begin{align}\label{n31a}
P^{*}_j\dfrac{(f_k(U_{s,j})-c)}{(f_j(U_{s,j})-c)}\leq P^{*}_k.
\end{align}
Since $U_{r,j}=L_{r,k}$ and thus $f_k(U_{r,j})>c$ (by (\ref{pay0})). Thus expected payoff at $I_r$ at channel state $j$ is at least $M_r(f_j(U_{r,j})-c)W(\gamma_{r,k+1})$ which is
\begin{align}
&=\dfrac{(f_j(U_{r,j})-c)}{f_k(U_{r,j})-c}P^{*}_k\quad (\text{from }(\ref{n27a}))\nonumber\\
& \geq P^{*}_j\dfrac{(f_j(U_{r,j})-c)(f_k(U_{s,j})-c)}{(f_k(U_{r,j})-c)(f_j(U_{s,j})-c)}\quad (\text{from }(\ref{n31a}))\nonumber\\
& >P^{*}_j\quad (\text{from }(\ref{con1}), j>k, U_{s,j}>U_{r,j})\nonumber
\end{align}
which is not possible by Definition~\ref{maxexpayoff}.\qed

\subsubsection{Proof of Lemma~\ref{bestr}}\label{sec:proof_lemma_bestr}
We state and prove Lemmas~\ref{lowerendpoint}, \ref{lmmax}, and \ref{lm:strictpayoff} which we use to prove Lemma~\ref{bestr}.
\begin{lem}\label{lowerendpoint}
$L_{s,k}\geq U_j$ if $s\in B_k,s\notin B_j, k<j, j\geq 2$.
\end{lem}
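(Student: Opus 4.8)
The plan is to argue by contradiction: suppose $L_{s,k} < U_j$ under the hypotheses $s \in B_k$, $s \notin B_j$, $k < j$, $j \geq 2$. The intuition is that $s \notin B_j$ means $I_s$ is \emph{not} a best response at state $j$, so offering the channel of state $j$ at a node of $I_s$ must fetch strictly less than $P^{*}_j$; I want to turn the assumed inequality $L_{s,k} < U_j$ into a profitable deviation at state $k$ for a primary, contradicting the fact that $I_s$ \emph{is} a best response at state $k$ (so its payoff there is exactly $P^{*}_k$, the maximum). The key objects are the contiguity of the penalty supports established in Section~\ref{sloneshot} and Observation~\ref{identity2}: since $s \in B_k$ we have $t_{s,k} > 0$, so the support of the state-$k$ penalty distribution at a node of $I_s$ is exactly $[L_{s,k}, U_{s,k}]$ with $U_{s,k} \geq U_k$; and since $s \notin B_j$ with $k < j$, Observation~\ref{identity2} gives that $U_{s,j} = L_{s,j'}$ for some $j' < j$ with $t_{s,j'}>0$, so $U_{s,j}$ sits at or above $L_{s,k}$ in the penalty order (indeed $U_{s,j} \ge U_{s,k+1} \ge L_{s,k}$ when $j \ge k+1$, using that $t_{s,i}=0$ for the intervening states that are not in $B_j$).

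First I would pin down where $U_j$ lies relative to the state-$k$ support of $I_s$. Recall $U_j = U_{r,j}$ for any $r \in B_j$ by Lemma~\ref{upend} and the Remark following it. Since $s \notin B_j$, I'd compare the payoff a primary gets by offering a state-$k$ channel at a node of $I_s$ at the penalty $U_j$ with $P^{*}_k$. Using \eqref{expayofflowerendpointstate}-type bookkeeping, the expected payoff from quoting penalty $U_j$ at a node of $I_s$ at state $k$ is $(f_k(U_j)-c)W(\gamma_{s,j})$ (the winning probability is governed by the mass of competitors quoting below $U_j$, which, since $U_j$ is the common upper endpoint of state-$j$ supports among best responses, is $W(\gamma_{s,j})$ up to the contiguity argument). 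Multiplying by $M_s$ and invoking the definition of $P^{*}_k$ gives $M_s(f_k(U_j)-c)W(\gamma_{s,j}) \le P^{*}_k$. On the other hand, because $s \notin B_j$, the analogous quantity at state $j$ satisfies $M_s(f_j(U_j)-c)W(\gamma_{s,j}) < P^{*}_j$ (strict, since $I_s$ is not a best response and $U_j$ is the relevant upper endpoint).

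Now I would combine these two inequalities using Assumption~\ref{assum1}. Writing $\rho = \dfrac{f_k(U_j)-c}{f_j(U_j)-c}$, and using that some $r \in B_k$ achieves $P^{*}_k = M_r(f_k(U_j)-c)W(\gamma_{r,k})$ only if $L_{r,k}\le U_j \le U_{r,k}$, I'd get a chain relating $P^{*}_k$, $P^{*}_j$, and the ratios $(f_k(\cdot)-c)/(f_j(\cdot)-c)$ evaluated at $U_j$ versus at the true endpoints, and the monotonicity in \eqref{con1} (for $k<j$, the ratio $\frac{f_k(x)-c}{f_j(x)-c}$ is strictly decreasing in $x$) forces a contradiction precisely when $L_{s,k} < U_j$ — because then the primary could quote slightly above $L_{s,k}$ but below $U_j$ at state $k$ and strictly beat $P^{*}_k$, since lowering the penalty from $U_j$ down toward $L_{s,k}$ increases the ratio $\frac{f_k-c}{f_j-c}$ while keeping the win probability at least $W(\gamma_{s,j})$. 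The main obstacle I anticipate is the careful accounting of \emph{which} competitors quote below a given penalty at a node of $I_s$ — i.e., correctly identifying the winning probability $W(\gamma_{s,\cdot})$ for off-support penalties and making sure the contiguity-of-supports claim from Section~\ref{sloneshot} is applied to the right states — rather than the final application of Assumption~\ref{assum1}, which is a one-line inequality once the payoff expressions are set up. I would handle that by first proving the auxiliary facts (Lemmas~\ref{lmmax} and \ref{lm:strictpayoff}, referenced in the text) that control $U_j$ versus the support endpoints, and only then assembling the contradiction.
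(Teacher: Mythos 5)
Your overall shape---a contradiction from $U_j>L_{s,k}$ driven by Assumption~\ref{assum1} and a payoff comparison across states $k$ and $j$---matches the paper's intent, but as written the argument would not close. The central missing idea is that the profitable deviation must leave $I_s$. Quoting a penalty ``slightly above $L_{s,k}$'' at a node of $I_s$ at state $k$ lies inside the equilibrium support $[L_{s,k},U_{s,k}]$ of the state-$k$ strategy there (since $s\in B_k$ gives $t_{s,k}>0$), and by Theorem~\ref{singlelocation} every such penalty yields exactly $p_{s,k}-c=P^{*}_k/M_s$ per node, so it cannot beat $P^{*}_k$. The deviation that actually works (and that the paper intends, mirroring the proof of Lemma~\ref{upend}) is, at state $k$, to a \emph{different} independent set $I_r$ with $r\in B_j$, quoting $U_j=U_{r,j}$: there one has the exact anchor $M_r(f_j(U_j)-c)W(\gamma_{r,j})=P^{*}_j$, so the state-$k$ payoff is $P^{*}_j\,\frac{f_k(U_j)-c}{f_j(U_j)-c}\le P^{*}_k$. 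The other half of the sandwich comes from quoting $L_{s,k}$ at $I_s$ at state $j$, whose win probability is $W(\gamma_{s,k+1})$ by (\ref{expayofflowerendpointstate})---not $W(\gamma_{s,j})$, which differs whenever $t_{s,i}>0$ for some $k<i<j$, a case not excluded by $s\notin B_j$---giving $P^{*}_k\,\frac{f_j(L_{s,k})-c}{f_k(L_{s,k})-c}\le P^{*}_j$. Combining the two yields $\frac{f_k(U_j)-c}{f_j(U_j)-c}\le\frac{f_k(L_{s,k})-c}{f_j(L_{s,k})-c}$, and (\ref{con1}) then forces $U_j\le L_{s,k}$. Your two inequalities, by contrast, are both upper bounds and cannot be combined into a contradiction without the equality at $I_r$.

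Three further problems. You state the monotonicity of Assumption~\ref{assum1} backwards: (\ref{con1}) says $x\mapsto\frac{f_k(x)-c}{f_j(x)-c}$ is strictly \emph{increasing} for $k<j$, so lowering the penalty \emph{decreases} that ratio; with your reversed direction the final step would go the wrong way. Your plan to prove Lemma~\ref{lmmax} first and then use it is circular in the paper's logical order, since the proof of Lemma~\ref{lmmax} invokes Lemma~\ref{lowerendpoint} to place $L_{r,k}$ above $U_j$. Finally, the strict inequality $M_s(f_j(U_j)-c)W(\gamma_{s,j})<P^{*}_j$ you assert for $s\notin B_j$ is unjustified ($t_{s,j}=0$ does not preclude the supremum payoff at $I_s$ from equaling $P^{*}_j$), and it is unnecessary once the argument is routed through $I_r$ as above.
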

\begin{rmk}
Note that if $s\in B_k\cap B_j$ and $k<j$, then from (\ref{n152up}), $L_{s,k}\geq U_j$. But, it is not apriori clear the relationship between $L_{s,k}$ and $U_j$ when $s\in B_k$ but $s\notin B_j$ for $k<j$. The above lemma provides the answer.
\end{rmk}
Since $s\in B_k$ , thus expected payoff obtained at $I_s$ at channel state $k$ is $P^{*}_k$ (by (\ref{breq1})). If $U_j>L_{s,k}$ for some $k<j$ and $s\notin B_j$, then it can be shown that by selecting independent set $I_r$ (where $r\in B_j$) a primary can attain a strictly higher payoff compared to $P^{*}_k$ at channel state $k$ which is not possible by Definition~\ref{maxexpayoff}. The argument will be similar to the proof of Lemma~\ref{upend}. Thus, we omit it.

It is not  clear that  $P^{*}_j\geq M_r(f_j(U_j)-c)W(\gamma_{r,j})$ if $r\notin B_j$. Since $r\notin B_j$, thus, a primary  will not employ any penalty selection strategy at any node of $I_r$ when the channel state is $j$. Thus, at any given node in $I_r$, the expected payoff at $U_j$ is still unknown.  The following lemma provides the answer.
\begin{lem}\label{lmmax}
If $r\notin B_j$, then $P^{*}_j\geq M_r(f_j(U_j)-c)W(\gamma_{r,j})$.
\end{lem}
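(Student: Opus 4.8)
The plan is to establish Lemma~\ref{lmmax} by a hypothetical-deviation argument analogous to the proof of Lemma~\ref{upend}, leveraging the structural facts already in hand: $U_{s,j}=U_j$ for all $s\in B_j$ (Lemma~\ref{upend} and the ensuing remark), and the relationship between lower endpoints at lower states and upper endpoints at higher states provided by Lemma~\ref{lowerendpoint}. Fix $r\notin B_j$. The key is that even though no primary plays a penalty on $I_r$ at state $j$ under the NE, the expected payoff a primary \emph{would} obtain by unilaterally offering its channel at state $j$ on every node of $I_r$ at penalty $U_j$ is a well-defined quantity, and it equals $M_r(f_j(U_j)-c)W(\gamma_{r,j})$, since $\gamma_{r,j}=\sum_{k\ge j}t_{r,k}q_k$ is exactly the probability that a competitor offers a channel of state $\ge j$ on a node of $I_r$ and hence undercuts penalty $U_j$. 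So the claim $P_j^{*}\ge M_r(f_j(U_j)-c)W(\gamma_{r,j})$ is literally the statement that this deviation is not profitable, i.e. is bounded by the equilibrium payoff $P_j^{*}$.

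First I would dispose of the trivial sub-case: if $r$ is such that $t_{r,k}=0$ for all $k<j$ as well, then by Observation~\ref{identity2} $U_{r,j}=v=U_{s,1}$, and more care is needed; but generically $r\notin B_j$ with $r\in B_k$ for some largest $k<j$. In that case, by Lemma~\ref{lowerendpoint} we have $L_{r,k}\ge U_j$. Now I would compare two expected payoffs at channel state $k$ on the independent set $I_r$: the equilibrium payoff $P_k^{*}=M_r(f_k(L_{r,k})-c)W(\gamma_{r,k+1})$ (using \eqref{n152up}, \eqref{n152inv}, and $U_{r,k}=L_{r,k-1}$, $L_{r,k}=U_{r,k+1}$), versus the payoff from offering at penalty $U_j\le L_{r,k}$ at state $k$ on $I_r$, which by \eqref{expayofflowerendpointstate}-type reasoning is $M_r(f_k(U_j)-c)W(\gamma_{r,j})$ — here $\gamma_{r,j}$ appears because a penalty of $U_j$ is undercut precisely by competitor offers of state $\ge j$ on that node. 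Since $U_j\le L_{r,k}$ and $f_k$ is increasing, and since this deviation payoff cannot exceed $P_k^{*}$, we get $M_r(f_k(U_j)-c)W(\gamma_{r,j})\le P_k^{*}=M_r(f_k(L_{r,k})-c)W(\gamma_{r,k+1})$.

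Next I would transfer this inequality from state $k$ to state $j$ using Assumption~\ref{assum1} (inequality \eqref{con1}), exactly as in the proof of Lemma~\ref{upend}. Pick any $s\in B_j$, so the equilibrium payoff at state $j$ satisfies $P_j^{*}=M_s(f_j(U_j)-c)W(\gamma_{s,j})$ with $U_{s,j}=U_j$. The quantity I want to bound, $M_r(f_j(U_j)-c)W(\gamma_{r,j})$, differs from the state-$k$ deviation quantity $M_r(f_k(U_j)-c)W(\gamma_{r,j})$ only in the factor $f_j(U_j)-c$ versus $f_k(U_j)-c$; writing their ratio and invoking the single-crossing-type inequality \eqref{con1} with arguments evaluated at $U_j$ (and $j>k$) lets me relate $M_r(f_j(U_j)-c)W(\gamma_{r,j})$ to $P_j^{*}$ via the chain already established at state $k$ and the definition of $P_k^{*}\le$ the maximum over independent sets at state $k$. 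Assembling the ratios, the target inequality $M_r(f_j(U_j)-c)W(\gamma_{r,j})\le P_j^{*}$ drops out.

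The main obstacle I anticipate is the boundary/degenerate case where $r$ is never selected at any state $\le j$ (so there is no $k<j$ with $r\in B_k$ to anchor the argument), and more generally making sure the "would-be" penalty $U_j$ actually lies in a regime where the payoff formula \eqref{expayofflowerendpointstate} applies at node of $I_r$ — i.e. that $U_j$ is above the relevant lower endpoint on $I_r$ so that the undercutting probability is genuinely $w(\gamma_{r,j})$ and not something larger. Handling this cleanly likely requires observing that in the degenerate case $U_j \le v = U_{r,j}$ trivially and $\gamma_{r,j}$ still records exactly the mass of competitor offers at state $\ge j$, so the same formula and the same application of \eqref{con1} go through with $k$ replaced by the convention $U_{r,1}=v$; I would state this carefully but expect it to be routine once the main chain is in place.
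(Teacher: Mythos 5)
Your scaffolding is the right one and matches the paper's: anchor at $k=\max\{i<j:\,r\in B_i\}$ (with $k=0$ and $L_{r,0}=v$ in the degenerate case), use Observation~\ref{identity1} to get $\gamma_{r,k+1}=\gamma_{r,j}$, and use Lemma~\ref{lowerendpoint} (resp.\ Observation~\ref{identity2}) to get $L_{r,k}\geq U_j$. The gap is in your final ``transfer'' step. From your state-$k$ comparison you obtain $M_r(f_k(U_j)-c)W(\gamma_{r,j})\leq P^{*}_k$, and you then want to multiply through by $\frac{f_j(U_j)-c}{f_k(U_j)-c}$ and conclude $M_r(f_j(U_j)-c)W(\gamma_{r,j})\leq P^{*}_j$. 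That requires $P^{*}_k\,\frac{f_j(U_j)-c}{f_k(U_j)-c}\leq P^{*}_j$, which is exactly the \emph{reverse} of what is available: a state-$k$ primary deviating to penalty $U_{s,j}=U_j$ on an independent set $I_s$ with $s\in B_j$ yields $P^{*}_j\,\frac{f_k(U_j)-c}{f_j(U_j)-c}\leq P^{*}_k$ (this is (\ref{n31a}) in the proof of Lemma~\ref{upend}), i.e.\ $P^{*}_j\leq P^{*}_k\,\frac{f_j(U_j)-c}{f_k(U_j)-c}$, with strict inequality in general. Assumption~\ref{assum1} only tells you that the ratio $\frac{f_j(\cdot)-c}{f_k(\cdot)-c}$ is decreasing for $j>k$, and since $U_j\leq L_{r,k}$ the ratio evaluated at $U_j$ is the \emph{larger} one, so ``assembling the ratios'' does not close the chain. (A secondary imprecision: the winning probability at penalty $U_j$ on a node of $I_r$ is only bounded \emph{below} by $W(\gamma_{r,j})$, not equal to it, since competitors of state $\geq j$ at $I_r$ may quote penalties in $(U_j,L_{r,k}]$; this happens to be the favorable direction, but it should not be asserted as an equality.)

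The repair is to drop the detour through state $k$ entirely, which is what the paper does: a state-$j$ primary deviating to penalty $L_{r,k}$ on $I_r$ earns exactly $(f_j(L_{r,k})-c)W(\gamma_{r,k+1})=(f_j(L_{r,k})-c)W(\gamma_{r,j})$ per node by (\ref{expayofflowerendpointstate}), and since $L_{r,k}\geq U_j$, $f_j$ is increasing and $f_j(U_j)>c$, this is at least $(f_j(U_j)-c)W(\gamma_{r,j})$; bounding the total deviation payoff by $P^{*}_j$ (Definition~\ref{maxexpayoff}) gives the lemma. The only place (\ref{con1}) enters is inside the proof of Lemma~\ref{lowerendpoint}; it is not needed in the final assembly.
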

\begin{proof}
Since $r\notin B_j$, hence we must have $t_{r,j}=0$.  Suppose the statement is false, then for some $r\notin B_j$, we must have 
\begin{align}\label{n24}
P^{*}_j& < M_r(f_j(U_j)-c)W(\gamma_{r,j}).
\end{align}
Now we show that a primary will attain an expected payoff which is strictly higher than $P^{*}_j$ at channel state $j$ at independent set $I_r$.\\
Let, $k=\max\{i\in \{1,\ldots,j-1\}:r\in B_i\}$, if $r\notin B_i,\forall i<j$, then set $k=0$. By definition of $k$, $t_{r,i}=0$ $\forall k<i<j$. Thus by Observation~\ref{identity1}, $\gamma_{r,k+1}=\gamma_{r,j}$. Thus, from (\ref{expayofflowerendpointstate}) the expected payoff at $L_{r,k}$ (if $k=0$, then $L_{r,0}=v$) at channel state $j$ is 
\begin{align}\label{upperpay}
(f_j(L_{r,k})-c)W(\gamma_{r,k+1})=(f_j(L_{r,k})-c)W(\gamma_{r,j}).
\end{align}
 Now from Lemma~\ref{lowerendpoint} $L_{r,k}\geq U_j$ when $k>0$. If $k=0$, then $L_{r,k}=v$ by Observation~\ref{identity2}. Thus, $L_{r,k}\geq U_j$ $\forall k$. Hence, from (\ref{upperpay}) total expected payoff at $I_r$ is at least
\begin{align}
M_r(f_j(L_{r,k})-c)W(\gamma_{r,j})& \geq M_r(f_j(U_j)-c)W(\gamma_{r,j})\nonumber\\
& >P^{*}_j \quad (\text{from }(\ref{n24})
\end{align}
which contradicts $P_j^{*}$ from Definition~\ref{maxexpayoff}.
\end{proof}
Thus, if $s, s_1\in B_j$ and $s_2\notin B_j$, then we have 
\begin{align}\label{n34}
 P^{*}_j& =M_{s}(f_j(U_j)-c)W(\gamma_{s,j})=M_{s_1}(f_j(U_j)-c)W(\gamma_{s_1,j})
 \geq M_{s_2}(f_j(U_j)-c)W(\gamma_{s_2,j})\quad (\text{from Lemma~\ref{lmmax}})\nonumber\\
M_sW(\gamma_{s,j})& =M_{s_1}W(\gamma_{s_1,j})\geq M_{s_2}W(\gamma_{s_2,j}).
\end{align}
We now state and prove Lemma~\ref{lm:strictpayoff}.
\begin{lem}\label{lm:strictpayoff}
$M_rW(\gamma_{r,j})\geq M_sW(\gamma_{s,j})$ if $r<s$ for all $j\in \{1,\ldots,n\}$.
\end{lem}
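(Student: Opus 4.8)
The plan is to establish the inequality by a short case split on the sign of $\gamma_{r,j}-\gamma_{s,j}$, using only facts already in hand: the ordering $M_1\ge\cdots\ge M_d$ of \eqref{eq:orderedcardinality}, the strict monotonicity and nonnegativity of $W(\cdot)$, the recursion for $\gamma$ (Observation~\ref{identity1}), and the chain of (in)equalities \eqref{n34} that was derived from Lemma~\ref{upend} and Lemma~\ref{lmmax}. I do not expect to need an induction on $j$; a single ``upward shift'' of the state index will do the job.

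Fix $j\in\{1,\dots,n\}$ and $r<s$, so $M_r\ge M_s$. If $\gamma_{r,j}\le\gamma_{s,j}$, then $W(\gamma_{r,j})\ge W(\gamma_{s,j})\ge 0$ since $W$ is decreasing, and multiplying the two nonnegative inequalities gives $M_rW(\gamma_{r,j})\ge M_sW(\gamma_{s,j})$ at once; so all the content is in the case $\gamma_{r,j}>\gamma_{s,j}$. There $\gamma_{r,j}=\sum_{i=j}^n t_{r,i}q_i>0$, so $\{i\in\{j,\dots,n\}:t_{r,i}>0\}\ne\emptyset$; let $j'$ be its least element. By Observation~\ref{identity1}, since $t_{r,i}=0$ for $j\le i<j'$, we get $\gamma_{r,j}=\gamma_{r,j'}$, whereas $\gamma_{s,j}=\gamma_{s,j'}+\sum_{i=j}^{j'-1}t_{s,i}q_i\ge\gamma_{s,j'}$ and hence $W(\gamma_{s,j'})\ge W(\gamma_{s,j})$. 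Because $t_{r,j'}>0$ we have $r\in B_{j'}$, so \eqref{n34} applied at channel state $j'$ (with $r$ in the role of an index of $B_{j'}$, and $s$ either in $B_{j'}$ or not) yields $M_rW(\gamma_{r,j'})\ge M_sW(\gamma_{s,j'})$. Chaining, $M_rW(\gamma_{r,j})=M_rW(\gamma_{r,j'})\ge M_sW(\gamma_{s,j'})\ge M_sW(\gamma_{s,j})$, as required.

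The one place that needs care is the shift to state $j'$. I must confirm \eqref{n34} is genuinely available there: $B_{j'}\ne\emptyset$ because $\sum_k t_{k,j'}=1$ under the policy \eqref{dist1}, and the common upper endpoint $U_{j'}$ satisfies $f_{j'}(U_{j'})>c$ by \eqref{pay0}, so the factor $(f_{j'}(U_{j'})-c)$ can be cancelled exactly as in the derivation of \eqref{n34}. One should also note the degenerate instance $j'=j$ (which occurs precisely when $t_{r,j}>0$): there $\gamma_{r,j}=\gamma_{r,j'}$ and $\gamma_{s,j}=\gamma_{s,j'}$ trivially, and the argument collapses to a direct application of \eqref{n34} at state $j$. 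Beyond verifying that $j'$ is well defined and lies in $\{j,\dots,n\}$, the remaining steps are routine monotonicity bookkeeping, so I do not anticipate a genuine obstacle.
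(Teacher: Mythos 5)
Your argument is correct, and it reaches the conclusion by a genuinely different route from the paper's. The paper argues by contradiction: assuming $M_rW(\gamma_{r,j})<M_sW(\gamma_{s,j})$, it descends from state $n+1$ (where $\gamma_{r,n+1}=\gamma_{s,n+1}=0$ and $M_r\geq M_s$ make the inequality hold) to locate a crossover state $k\in\{j,\ldots,n\}$ at which the inequality holds for $k+1$ but fails for $k$; the resulting chain forces $W(\gamma_{r,k})<W(\gamma_{r,k+1})$, hence $t_{r,k}>0$ and $r\in B_k$, which contradicts (\ref{n34}). You instead give a direct proof: the case $\gamma_{r,j}\leq\gamma_{s,j}$ is immediate from $M_r\geq M_s$ and the monotonicity of $W(\cdot)$, and otherwise you shift upward to the \emph{least} state $j'\geq j$ with $t_{r,j'}>0$, where $\gamma_{r,j}=\gamma_{r,j'}$ exactly (this is the step that requires taking the least such index) while $\gamma_{s,j}\geq\gamma_{s,j'}$, so that (\ref{n34}) at state $j'$ chains back down to state $j$. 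Both proofs hinge on the same crux — finding a state at which $r$ is active so that (\ref{n34}) applies to it — but your version replaces the extremal crossover-and-contradiction bookkeeping with a one-step monotone comparison, which is slightly cleaner; the paper's version makes no case split and generalizes more mechanically to statements proved "from $n$ downward." Your side remarks (nonemptiness of $B_{j'}$, cancellation of $f_{j'}(U_{j'})-c>0$ via (\ref{pay0}), and the degenerate instance $j'=j$) close the only places where a gap could have appeared.
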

\begin{proof}
Suppose the statement is false i.e. $M_rW(\gamma_{r,j})<M_sW(\gamma_{s,j})$ for some $r<s$ and $j\in \{1,\ldots,n\}$. Since $M_r\geq M_s$ (by (\ref{eq:orderedcardinality})), thus there must exist a $k\in \{j,\ldots,n\}$ such that $M_rW(\gamma_{r,k+1})\geq M_sW(\gamma_{s,k+1})$ but $M_rW(\gamma_{r,k})<M_sW(\gamma_{s,k})$ with $\gamma_{r,n+1}=\gamma_{s,n+1}=0$. 

Since $\gamma_{s_1,k}\geq \gamma_{s_1,k+1}$ (by Observation~\ref{identity1}) $\forall s_1\in \{1,\ldots,d\}$ and $W(\cdot)$ is strictly decreasing function, thus, we have
\begin{align}\label{crossover}
M_rW(\gamma_{r,k})& <M_sW(\gamma_{s,k})\nonumber\\
& \leq M_sW(\gamma_{s,k+1})
\leq M_rW(\gamma_{r,k+1}).
\end{align}
Since $W(\cdot)$ is strictly decreasing function and $\gamma_{r,k}=t_{r,k}q_k+\gamma_{r,k+1}$ (from Observation~\ref{identity1}), thus $t_{r,k}>0$ from (\ref{crossover}); which implies that $r\in B_k$. But this contradicts (\ref{n34}). Hence, the result follows.
\end{proof}
Now, we are ready to show Lemma~\ref{bestr}.\\
\textit{proof of Lemma~\ref{bestr}}:
Suppose that $r<s$, but $r\notin B_k, s\in B_k$ for some $k\in \{1,\ldots,n\}$.  Note from Observation~\ref{identity1} that $\gamma_{s,k}>\gamma_{s,k+1}$ since $t_{s,k}>0$. Since $W(\cdot)$ is strictly decreasing thus $W(\gamma_{s,k})<W(\gamma_{s,k+1})$. On the other hand, since $r\notin B_k$, thus $t_{r,k}=0$. Thus, from Observation~\ref{identity1} $\gamma_{r,k+1}=\gamma_{r,k}$ and therefore, we obtain $W(\gamma_{r,k+1})=W(\gamma_{r,k})$. Thus we obtain from Lemma~\ref{lm:strictpayoff}--
\begin{align}
M_rW(\gamma_{r,k})=M_rW(\gamma_{r,k+1})\geq M_sW(\gamma_{s,k+1})>M_sW(\gamma_{s,k}).\nonumber
\end{align}
But $s\in B_k, r\notin B_k$, thus the above inequality contradicts (\ref{n34}).\qed

\vspace{-0.3cm}
\subsection{Existence}\label{sec:existencemultiplenodes}
Theorem~\ref{thm:structure} characterizes the structure of independent set selection strategy which is of the form (\ref{dist1}). We have not yet shown whether there exists such a distribution and whether such a distribution is unique. We resolve both the issues in the following theorem which we have proved in Appendix~\ref{sec:proof_dist_existence}:
\begin{thm}\label{dist:existence}
There exists a unique probability distribution $t_{j}=(t_{1,j},\ldots,t_{d,j}), j=1,\ldots,n$ which satisfies (\ref{dist}) \& (\ref{ne2}).
\end{thm}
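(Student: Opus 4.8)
The plan is to establish existence and uniqueness of the distribution $t_j = (t_{1,j},\ldots,t_{d,j})$ by induction on $j$ running downward from $j=n$ to $j=1$, exploiting the recursive structure already noted after Theorem~\ref{thm:structure}. At the base case $j=n$ we have $\gamma_{s,n} = t_{s,n}q_n$, so the unknowns $t_{s,n}$ enter \eqref{ne2} directly; for the inductive step, once $t_{s,k}$ (equivalently $\gamma_{s,k}$) is known for all $k > j$, the quantity $\gamma_{s,j} = t_{s,j}q_j + \gamma_{s,j+1}$ is an increasing affine function of the single unknown $t_{s,j}$ with a known offset $\gamma_{s,j+1}$. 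Hence it suffices to prove the following one-shot claim: given fixed nonnegative offsets $\gamma_{s,j+1}$ with $\gamma_{1,j+1}\le\cdots\le\gamma_{d,j+1}$ (monotonicity inherited from Lemma~\ref{lm:strictpayoff} applied at state $j+1$, together with $M_1\ge\cdots\ge M_d$), there is a unique vector $(t_{1,j},\ldots,t_{d,j})$ with $t_{s,j}\ge 0$, $\sum_s t_{s,j}=1$, $t_{s,j}>0$ for $s\le d_j$ and $t_{s,j}=0$ for $s>d_j$, satisfying the chain of equalities and inequalities \eqref{ne2}.

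To prove the one-shot claim I would introduce the common value $P = M_s W(\gamma_{s,j})$ that all active sets must share, and treat $P$ as a scalar parameter. For each $s$, the equation $M_s W(t_{s,j}q_j + \gamma_{s,j+1}) = P$ has, since $W$ is continuous and strictly decreasing with $W(0)=1$, a unique solution $t_{s,j} = \tau_s(P)$ whenever $P \le M_s W(\gamma_{s,j+1})$ (so that the required $t_{s,j}\ge 0$), and $\tau_s(P)$ is continuous and strictly decreasing in $P$ on that range; when $P > M_s W(\gamma_{s,j+1})$ we set $\tau_s(P)=0$, i.e. $s$ is inactive. Define $\Phi(P) = \sum_{s=1}^d \tau_s(P)$. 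Then $\Phi$ is continuous and non-increasing in $P$, and strictly decreasing on the range where at least one $\tau_s(P)>0$. As $P\downarrow 0$ each $\tau_s(P)\to (1-\gamma_{s,j+1})/q_j$ which is strictly positive (using $\sum_k q_k<1$ from \eqref{prob}, hence $\gamma_{s,j+1}<1$), so $\Phi(0^+) \ge (d - \sum_s\gamma_{s,j+1})/q_j$, and one checks this exceeds $1$; as $P$ increases past $M_1 W(\gamma_{1,j+1})$ every $\tau_s$ is zero so $\Phi=0<1$. By the intermediate value theorem there is a unique $P^{*}_j$ with $\Phi(P^{*}_j)=1$, and the corresponding $t_{s,j}=\tau_s(P^{*}_j)$ is the unique feasible solution; monotonicity $M_1 W(\gamma_{1,j+1})\ge\cdots$ guarantees that the active indices form an initial segment $\{1,\ldots,d_j\}$ and that the inequalities in \eqref{ne2} for the inactive sets hold, since for $s>d_j$ we have $M_s W(\gamma_{s,j}) = M_s W(\gamma_{s,j+1}) \le M_{d_j}W(\gamma_{d_j,j+1}) \le P^{*}_j$ by the ordering and the fact that $\tau_{d_j}(P^{*}_j)>0$ forces $M_{d_j}W(\gamma_{d_j,j+1})\ge P^{*}_j$; actually a short argument pinning down the direction of these inequalities relative to $P^*_j$ is needed, and should be done carefully.

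Finally I would close the induction: having produced $t_{s,j}$ and hence $\gamma_{s,j} = t_{s,j}q_j+\gamma_{s,j+1}$, one must verify the monotonicity hypothesis $\gamma_{1,j}\le\cdots\le\gamma_{d,j}$ needed to feed the next step; this is exactly the content of Lemma~\ref{lm:strictpayoff} (which says $M_r W(\gamma_{r,j})\ge M_s W(\gamma_{s,j})$ for $r<s$, equivalently, after dividing, $\gamma_{r,j}\le\gamma_{s,j}$ when combined with $M_r\ge M_s$), so it comes for free. Uniqueness of the whole collection $\{t_j\}_{j=1}^n$ then follows because at each downward step the offsets $\gamma_{s,j+1}$ are already uniquely determined and the one-shot claim gives a unique $t_j$.

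I expect the main obstacle to be the bookkeeping in the one-shot claim: carefully establishing that the root $P^{*}_j$ of $\Phi(P)=1$ exists in the right interval (requiring the strict inequality $\Phi(0^+)>1$, which relies on $\sum_k q_k<1$), and that the activity pattern it induces is exactly a prefix $\{1,\ldots,d_j\}$ with the remaining inequalities in \eqref{ne2} oriented correctly. Everything else — continuity and monotonicity of $W$, the affine dependence of $\gamma_{s,j}$ on $t_{s,j}$, and propagation of the ordering hypothesis via Lemma~\ref{lm:strictpayoff} — is routine once the parametrization by the common payoff value $P$ is set up.
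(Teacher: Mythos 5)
Your proposal is correct and follows essentially the same route as the paper: the paper also parametrizes the candidate solution by the common value $x=M_sW(\gamma_{s,j})$ shared by the active sets, inverts $W$ set-by-set to obtain $t_{s,j}(x)$, sums these to form a continuous, strictly decreasing function of $x$, and applies the intermediate value theorem together with the ordering of the thresholds $M_sW(\gamma_{s,j+1})$ to get a unique root, proceeding by downward induction from $j=n$. One small correction: the ordering you actually need (and which the inductive hypothesis (\ref{ne2}) at state $j+1$ supplies) is that $M_sW(\gamma_{s,j+1})$ is non-increasing in $s$; your parenthetical claim that this is equivalent to $\gamma_{1,j+1}\le\cdots\le\gamma_{d,j+1}$ is false (indeed $\gamma_{s,j+1}=0$ for $s>d_{j+1}$ by (\ref{eq:gammazero})), but nothing essential in your argument depends on it.
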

We now show that independent set selection strategy profile described in (\ref{dist}) and (\ref{ne2}) is an NE.
\begin{thm}\label{nemulti}
At channel state $j\in \{1,\ldots,n\}$, consider the following strategy profile. The unique independent set selection strategy profile is given by  (\ref{dist}) and (\ref{ne2}) and at every node of $I_s, s\in \{1,\ldots,d\}$, penalty selection strategy is $\psi_j(\cdot)$ with $q_jt_{s,j}$ in place of $q_j$ as described in Lemma ~\ref{lm:computation}. Such a strategy profile constitutes an NE in the class of mean valid graphs.
\end{thm}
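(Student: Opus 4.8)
The plan is to verify the Nash equilibrium conditions directly, exploiting the structural characterization already established in Theorem~\ref{thm:structure} and the separation result. Fix a primary and suppose all other primaries use the strategy profile in the statement: independent set selection governed by $(t_{1,j},\ldots,t_{d,j})$ from Theorem~\ref{dist:existence}, and at each node of $I_s$ the penalty distribution $\psi_j(\cdot)$ with $q_j t_{s,j}$ in place of $q_j$. I must show that no unilateral deviation by the fixed primary, for any channel state $j$, strictly increases its expected payoff. Because the payoff of a primary over an independent set is additive across nodes and the payoff it receives at a node depends only on the other primaries' behavior at that node (which in turn depends only on the induced node-selection probabilities $\alpha_{a,j} = q_j t_{s,j}$ for $a\in I_s$), the deviation analysis decomposes: the best the deviating primary can do at an arbitrary independent set $I$ is to play, at each node $a\in I$, the single-location best response against $\alpha_{a,j}$.

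The key steps, in order. First I would invoke Lemma~\ref{separation-s1} together with Theorem~\ref{singlelocation}: against the node-selection probability $\alpha_{a,j}=q_jt_{s,j}$ at a node $a\in I_s$, the maximum expected payoff obtainable at that single node at channel state $j$ is exactly $p_{s,j}-c = (f_j(U_{s,j})-c)W(\gamma_{s,j})$ — and crucially this is the payoff the deviating primary gets at \emph{every} penalty in $[L_{s,j},U_{s,j}]$, with strictly less outside. Hence, summing over the nodes of an arbitrary independent set $I$, the best response payoff over $I$ at state $j$ is at most $\sum_{a\in I}(p_{s(a),j}-c) = \sum_{s=1}^d m_s(I)\,(f_j(U_{s,j})-c)W(\gamma_{s,j})$, where $m_s(I)=|I\cap I_s|$ and $s(a)$ is the block containing $a$. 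Second, using Lemma~\ref{upend} (so $U_{s,j}=U_j$ for $s\in B_j$) and the payoff-equalization identity \eqref{c61}, for the blocks $I_s$ with $s\le d_j$ we have $(f_j(U_j)-c)W(\gamma_{s,j}) = P^*_j/M_s$; for $s>d_j$, by Lemma~\ref{lmmax} and \eqref{n34}, $M_s(f_j(U_j)-c)W(\gamma_{s,j})\le P^*_j$, equivalently $(f_j(U_j)-c)W(\gamma_{s,j})\le P^*_j/M_s$. (I should note $U_{s,j}=U_j$ needs care for $s>d_j$; I would argue $U_{s,j}\le U_j$ there, since a primary confined to $I_s$ alone would use penalties no higher than $v=U_j$, and more precisely $U_{s,j}=L_{s,k}$ for the largest $k<j$ with $t_{s,k}>0$, which by Lemma~\ref{lowerendpoint} is $\ge U_j$ — so in fact $f_j(U_{s,j})-c \le f_j(U_j) - c$ is the wrong direction; I would instead bound the single-node payoff at any node of $I_s$ directly via \eqref{expayofflowerendpointstate} and Lemma~\ref{lowerendpoint}, giving at most $(f_j(U_j)-c)W(\gamma_{r,j})$ per the argument already used inside the proof of Lemma~\ref{lmmax}.)

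Third, I would combine these bounds. For any independent set $I$, the deviation payoff is at most
\begin{align}
\sum_{s=1}^d m_s(I)(f_j(U_j)-c)W(\gamma_{s,j}) \le \sum_{s=1}^d m_s(I)\,\frac{P^*_j}{M_s} = P^*_j\sum_{s=1}^d \frac{m_s(I)}{M_s} \le P^*_j,
\end{align}
where the last inequality is precisely the mean-valid-graph property \eqref{mvg}. On the other hand, the prescribed strategy, which randomizes only over $I_1,\ldots,I_{d_j}$ each yielding exactly $P^*_j$ by \eqref{c61}, earns exactly $P^*_j$. Hence no deviation — in independent-set choice or in pricing or in both — does better, so the profile is a Nash equilibrium. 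I would also remark that the penalty part of any profitable-looking deviation is already covered: once an independent set $I$ is fixed, Theorem~\ref{singlelocation} says the per-node cap is met only by the distribution $\phi_j(\cdot)$ (for nodes in $B_j$-blocks) and nothing beats it, so there is no need to separately handle "keep the prescribed independent set but change prices."

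The main obstacle I anticipate is the bookkeeping for blocks $I_s$ with $s>d_j$ (those selected with zero probability): there is no penalty strategy \emph{prescribed} on them, so I cannot quote a per-node payoff directly from \eqref{n152a}. The clean route is the one used inside Lemma~\ref{lmmax}'s proof — take the largest $k<j$ with $t_{s,k}>0$ (or $k=0$, $L_{s,0}=v$), note $\gamma_{s,k+1}=\gamma_{s,j}$ by Observation~\ref{identity1}, so any penalty $x\le L_{s,k}$ at a node of $I_s$ yields single-node payoff $(f_j(x)-c)W(\gamma_{s,j})$, maximized (over the admissible range, recalling no secondary buys above $v$ and optimal penalties are bounded by $L_{s,k}$) at $x=L_{s,k}\ge U_j$ by Lemma~\ref{lowerendpoint}; this gives per-node payoff at most $(f_j(L_{s,k})-c)W(\gamma_{s,j})$, and the definition of $P^*_j$ forces $M_s(f_j(L_{s,k})-c)W(\gamma_{s,j})\le P^*_j$ — wait, more carefully, it is $M_s$ times the per-node quantity that is compared; combining with \eqref{n34} closes the loop and yields $m_s(I)(\text{per-node payoff})\le m_s(I)P^*_j/M_s$ as needed. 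Aside from this case analysis, every other step is a direct citation of an already-proved lemma plus the defining inequality \eqref{mvg} of mean valid graphs, so the argument is short once that wrinkle is handled.
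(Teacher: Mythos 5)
Your skeleton --- per-node decomposition via the separation result, per-node payoff caps, then the mean-valid inequality $\sum_s m_s(I)/M_s\le 1$ to close --- is exactly the paper's Cases i and iv, and your treatment of the blocks $I_s$ with $s\le d_j$ and the final summation over an arbitrary independent set are fine. The gap is in the blocks $I_s$ with $s>d_j$, which is where the paper's proof does all of its real work. Two problems. First, circularity: Lemmas~\ref{upend}, \ref{lowerendpoint}, \ref{lmmax} and \eqref{n34} are all proved \emph{under the hypothesis} that the profile is a symmetric NE (their proofs terminate by contradicting the maximality of $P^*_j$ in Definition~\ref{maxexpayoff}); you cannot invoke them to prove that the profile \emph{is} an NE. The paper is explicit about this and re-derives the needed facts (Lemmas~\ref{obs3a} and \ref{obs3}) directly from the defining equations \eqref{dist} and \eqref{ne2} without the NE hypothesis. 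In the existence proof $P^*_j$ is just the number defined in \eqref{c61}; it carries no a priori ``maximum'' property, so your step ``the definition of $P^*_j$ forces $M_s(f_j(L_{s,k})-c)W(\gamma_{s,j})\le P^*_j$'' assumes precisely what must be shown.

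Second, and more substantively, your bounds do not close the inequality even if the circularity is repaired. For $s>d_j$ one has $\gamma_{s,j}=0$, so the per-node cap at a node of $I_s$ at state $j$ is $f_j(U_{s,j})-c=f_j(L_{s,k})-c$ with $L_{s,k}\ge U_j$ (Lemma~\ref{obs3}); this \emph{exceeds} $(f_j(U_j)-c)W(\gamma_{s,j})$, so combining $M_sW(\gamma_{s,j})\le M_1W(\gamma_{1,j})$ with $f_j(L_{s,k})\ge f_j(U_j)$ pushes the product in the wrong direction and does not yield $M_s(f_j(L_{s,k})-c)\le P^*_j$. The paper's Case ii closes this by a different route: any deviation penalty $x\ge L_{s,k}$ lies in some $[L_{s,i},L_{s,i-1}]$ with $i\le k$ and $t_{s,i}>0$, so $x$ earns exactly $P^*_i/M_s$ at state $i$; the state-$j$ payoff at $x$ is therefore $\frac{P^*_i}{M_s}\cdot\frac{f_j(x)-c}{f_i(x)-c}$, and this ratio is controlled through the $I_1$ comparison $\frac{f_j(U_{i+1})-c}{f_i(U_{i+1})-c}\le P^*_j/P^*_i$ together with Assumption~\ref{assum1} and $x\ge L_{s,i}\ge U_{i+1}$ (eqs.\ \eqref{s4}--\eqref{c99}); the case $s>d_1$ is handled separately via \eqref{s52}--\eqref{c100}. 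This chain through the lower channel states and $I_1$ --- the only place Assumption~\ref{assum1} enters the existence proof --- is the missing content of your ``wrinkle,'' and without it the argument does not go through.
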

Thus, there exists a symmetric NE which selects an independent set among $I_1,\ldots, I_d$. Such a selection strategy is storage and computationally efficient as explained in the first paragraph of Section~\ref{sec:policy}. By virtue of Theorem~\ref{thm:structure} we also know how to compute the probabilities of these independent sets by solving $n$ equations.
\subsubsection{Outline of Proof of Theorem~\ref{nemulti}}
We first show that  a primary at channel state $j$ attains an expected payoff of $P_j^{*}$ at each independent set $I_s$, $s\leq d_j$. Subsequently, we show that at any  independent set $I_s,  s>d_j$, the maximum attainable payoff of a primary at channel state $j$ is less than $P_j^{*}$ when other primaries select strategies according to (\ref{dist}) and (\ref{ne2}). Finally, we show that if a primary selects an independent set which does not belong to the partition, then, its maximum expected payoff is also less than $P_j^{*}.$ Thus, it shows that a primary attains maximum expected payoff only at independent sets $I_s, s\leq d_j$, hence, a primary  does not have any incentive to deviate unilaterally from the strategy profile which proves the theorem. The detail of the proof is given in Appendix~\ref{sec:proof_nemulti}.
\subsection{Properties of Threshold}\label{sec:properties_threshold}
Recall from Lemma~\ref{bestr} and Theorem~\ref{thm:structure} that a primary only selects those independent sets  which have cardinalities greater than or equal to $M_{d_j}$ with positive probabilities at channel state $j$.  In this section, we discuss some important properties of $d_j, j=1,\ldots,n$. 
\begin{lem}\label{prop}
Threshold is a non-decreasing function of transmission rate i.e. $d_j\geq d_{j+1}$
\end{lem}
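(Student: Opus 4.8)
The plan is to argue by contradiction, exploiting only the monotonicity facts already established. Suppose $d_j<d_{j+1}$ for some $j\in\{1,\ldots,n-1\}$, and fix an index $s_0$ with $d_j<s_0\le d_{j+1}$; note $s_0\ge 2$ since $d_j\ge 1$ by Lemma~\ref{bestr}. By Lemma~\ref{bestr}, $s_0\notin B_j$, so $t_{s_0,j}=0$, whereas $s_0\in B_{j+1}$, so $t_{s_0,j+1}>0$. The crucial observation is that switching $I_{s_0}$ off at state $j$ \emph{freezes} the quantity $M_{s_0}W(\gamma_{s_0,\cdot})$ between states $j+1$ and $j$: since $t_{s_0,j}=0$, Observation~\ref{identity1} gives $\gamma_{s_0,j}=\gamma_{s_0,j+1}$, hence $M_{s_0}W(\gamma_{s_0,j})=M_{s_0}W(\gamma_{s_0,j+1})$.

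I would then chain three relations. First, since $s_0\le d_{j+1}$, the equality part of (\ref{ne2}) at state $j+1$ gives $M_{s_0}W(\gamma_{s_0,j+1})=M_1W(\gamma_{1,j+1})$, so by the previous paragraph $M_{s_0}W(\gamma_{s_0,j})=M_1W(\gamma_{1,j+1})$. Second, Lemma~\ref{lm:strictpayoff} applied with $1<s_0$ gives $M_1W(\gamma_{1,j})\ge M_{s_0}W(\gamma_{s_0,j})$. Third, $\gamma_{1,j}=t_{1,j}q_j+\gamma_{1,j+1}\ge\gamma_{1,j+1}$ (Observation~\ref{identity1}) together with the strict decrease of $W$ gives $M_1W(\gamma_{1,j})\le M_1W(\gamma_{1,j+1})$. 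Combining the three, $M_1W(\gamma_{1,j+1})=M_{s_0}W(\gamma_{s_0,j})\le M_1W(\gamma_{1,j})\le M_1W(\gamma_{1,j+1})$, so every inequality is an equality; in particular $W(\gamma_{1,j})=W(\gamma_{1,j+1})$, hence $\gamma_{1,j}=\gamma_{1,j+1}$ by strict monotonicity of $W$, i.e.\ $t_{1,j}q_j=0$. Since $q_j>0$, this forces $t_{1,j}=0$, contradicting $1\in B_j$ (which holds because $d_j\ge 1$), i.e.\ contradicting $t_{1,j}>0$. Hence $d_j\ge d_{j+1}$.

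The only delicate point is isolating the right index $s_0$ and recognizing the asymmetry it exposes: $M_{s_0}W(\gamma_{s_0,\cdot})$ is frozen across states $j+1$ and $j$, while the common payoff level $M_1W(\gamma_{1,\cdot})$ can only weakly \emph{increase} as $j$ decreases (because $\gamma_{1,\cdot}$ weakly decreases and $W$ is decreasing). Once that is set up, the contradiction is pure bookkeeping with Observation~\ref{identity1}, Lemma~\ref{lm:strictpayoff}, and the strict monotonicity of $W$. Implicitly the argument uses that Theorem~\ref{dist:existence} makes $B_j$, and hence $d_j$, well-defined, and that $q_j>0$ (a channel state of zero probability being vacuous).
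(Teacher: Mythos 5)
Your proof is correct and rests on the same mechanism as the paper's own argument: assuming $d_j<d_{j+1}$, the quantity $M_sW(\gamma_{s,\cdot})$ is \emph{frozen} between states $j+1$ and $j$ for an index that is switched off at state $j$ (via Observation~\ref{identity1}), while it strictly drops for an index that stays on, and this is incompatible with the ordering in (\ref{ne2}). The only difference is bookkeeping: the paper works directly with the pair $(d_j,d_{j+1})$, obtaining $M_{d_j}W(\gamma_{d_j,j})<M_{d_j}W(\gamma_{d_j,j+1})=M_{d_{j+1}}W(\gamma_{d_{j+1},j+1})=M_{d_{j+1}}W(\gamma_{d_{j+1},j})$ and contradicting (\ref{ne2}) at state $j$, whereas you route the comparison through index $1$ (invoking Lemma~\ref{lm:strictpayoff}, which could equally be read off from the ordering in (\ref{ne2}) itself) and land on the contradiction $t_{1,j}=0$; both versions use the same ingredients and the same implicit assumption $q_j>0$.
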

From (\ref{eq:orderedcardinality}) and Lemma ~\ref{prop} we have $M_{d_j}\leq M_{d_{j+1}}$.   From Example~\ref{example:dist_calculate}, we obtain $d_3<d_2=d_1$ which validates the above lemma.    In Example~\ref{example:dist_calculate} only $I_1$ is selected when the channel state is the highest i.e. $3$. Thus, a primary never selects $I_2, I_3$ and $I_4$ when its channel has the highest transmission rate.

{\em This tells that in practice, secondary users in some locations can never get access to a channel of higher quality}. In Example~\ref{example:dist_calculate}, users in the locations belonging to independent sets $I_2, I_3$ and $I_4$ will never get access to the highest quality channel. To avoid such socially unacceptable situation a social planner may have to provide some incentives to primaries so that they offer their high quality channels in independent sets of lower cardinalities. Designing such an incentive constitutes an important problem for future research. 



Since $t_{s,j}>0$ for $s\leq d_j$ the following result is immediate from Lemma~\ref{prop}.
\begin{cor}\label{obs1}
$t_{s,k}>0$ implies that $t_{s,j}>0$ where $j<k$; $t_{1,j}>0$ $\forall j\in\{1,\ldots,n\}$.
\end{cor}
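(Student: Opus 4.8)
\textbf{Proof proposal for Corollary~\ref{obs1}.}
The plan is to read off both claims directly from the threshold characterization of the symmetric NE established in Lemma~\ref{bestr} and Theorem~\ref{thm:structure}, combined with the monotonicity of the thresholds in Lemma~\ref{prop}. Recall that by Theorem~\ref{thm:structure} the symmetric NE strategy profile satisfies $t_{s,j}>0$ if and only if $s\le d_j$, where $d_j\in\{1,\ldots,d\}$ is the threshold index at channel state $j$ supplied by Lemma~\ref{bestr}. So everything reduces to comparing indices against the $d_j$'s.

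For the first assertion, suppose $t_{s,k}>0$ for some $s$ and some $k$. Then $s\le d_k$. Fix any $j<k$. Applying Lemma~\ref{prop} repeatedly along the chain $j,j+1,\ldots,k$ gives $d_j\ge d_{j+1}\ge\cdots\ge d_k\ge s$, hence $s\le d_j$, and therefore $t_{s,j}>0$. This is exactly the monotonicity-in-state statement: once an independent set $I_s$ is used with positive probability at some channel quality $k$, it is also used with positive probability at every lower quality $j<k$.

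For the second assertion, note that $d_j\in\{1,\ldots,d\}$ for every $j\in\{1,\ldots,n\}$ by Lemma~\ref{bestr}, so in particular $1\le d_j$ for all $j$; by the characterization this means $t_{1,j}>0$ for all $j\in\{1,\ldots,n\}$. (Intuitively, $I_1$ has the largest cardinality among $I_1,\ldots,I_d$ by \eqref{eq:orderedcardinality}, so it always clears the cardinality threshold and is always selected with positive probability.)

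I do not anticipate a genuine obstacle here: the statement is a one-line consequence of material already proved. The substantive work lies upstream — in establishing the "threshold-in-cardinality" structure of the NE (Lemma~\ref{bestr}, Theorem~\ref{thm:structure}) and the "threshold non-decreasing in channel state" property (Lemma~\ref{prop}); once those are in hand, Corollary~\ref{obs1} is immediate.
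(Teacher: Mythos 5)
Your proof is correct and matches the paper's own argument: the paper derives Corollary~\ref{obs1} as an immediate consequence of the threshold form $t_{s,j}>0 \iff s\le d_j$ in (\ref{dist}) together with the monotonicity $d_j\ge d_{j+1}$ from Lemma~\ref{prop}, exactly as you do. The only cosmetic difference is that the paper presents it as a one-line remark rather than chaining the inequalities explicitly.
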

Thus, independent set $I_1$ is always selected with positive probability at every channel state (Fig.~\ref{fig:example_ind_prob}). Corollary~\ref{obs1} implies that if a given primary offers its channel at an independent set $I_s, s\in\{1,\ldots,d\}$ with positive probability when the channel provides higher transmission rate, then the primary also offers its channel at  $I_s$ with positive probability when its channel  provides lower transmission rate. But note that the converse is not always true.
\subsubsection{Proof of Lemma~\ref{prop}}\label{sec:proof_bestr}

Suppose, the statement is false, i.e. $d_j<d_{j+1}$ for some $j$. 

From (\ref{ne2}) we obtain for state $j+1$
\begin{align}\label{n74}
M_1W(\gamma_{1,j+1})=M_{d_j}W(\gamma_{d_j,j+1})=M_{d_{j+1}}W(\gamma_{d_{j+1},j+1}).
\end{align}
Since $t_{d_j,j}>0$ thus $\gamma_{d_j,j}>\gamma_{d_{j},j+1}$ by Observation~\ref{identity1}. Since $W(\cdot)$ is strictly decreasing, thus we have
\begin{align}\label{n74in}
W(\gamma_{d_j,j})<W(\gamma_{d_{j},j+1}).
 \end{align} 
Since $d_j<d_{j+1}$, thus $t_{d_{j+1},j}=0$. Thus, from Observation~\ref{identity1}, $\gamma_{d_{j+1},j+1}=\gamma_{d_{j+1},j}$. 
 Thus from (\ref{n74}) and (\ref{n74in}), we obtain
\begin{align}\label{n72}
M_{d_j}W(\gamma_{d_j,j})& <M_{d_{j+1}}W(\gamma_{d_{j+1},j+1})\nonumber\\& =M_{d_{j+1}}W(\gamma_{d_{j+1},j}).
\end{align} 
Since $d_j<d_{j+1}$ thus (\ref{n72}) contradicts (\ref{ne2}).\qed
\vspace{-0.3cm}
\subsection{Uniqueness of Symmetric NE \& Implementation Issues}\label{sec:symmetricNEunique}
Till now we have shown that when primaries select among maximal independent sets characterizing the mean valid graphs
, then there exists a unique symmetric NE (Theorems~\ref{dist:existence} and~\ref{nemulti}). Figure~\ref{fig:different partitions} reveals that partition of nodes amongst maximal independent sets need not be unique. We have shown that each such partition leads to a unique symmetric NE (Theorems~\ref{dist:existence} and \ref{nemulti}).  Thus, symmetric NE is not unique. 

A primary would not know the partitions other primaries are selecting since the co-ordination among the primaries is infeasible in a non co-operative game. Theorem~\ref{thm:policyuniqueness}  entails that co-ordination among the players is not required when the independent set selection strategy is of the form (\ref{dist}) and (\ref{ne2}).  We obtain an even stronger result in a special case: we show that there is a unique symmetric NE in a linear conflict graph (Theorem~\ref{uniquelinear}). 
\begin{figure}
\includegraphics[width=90mm,height=40mm]{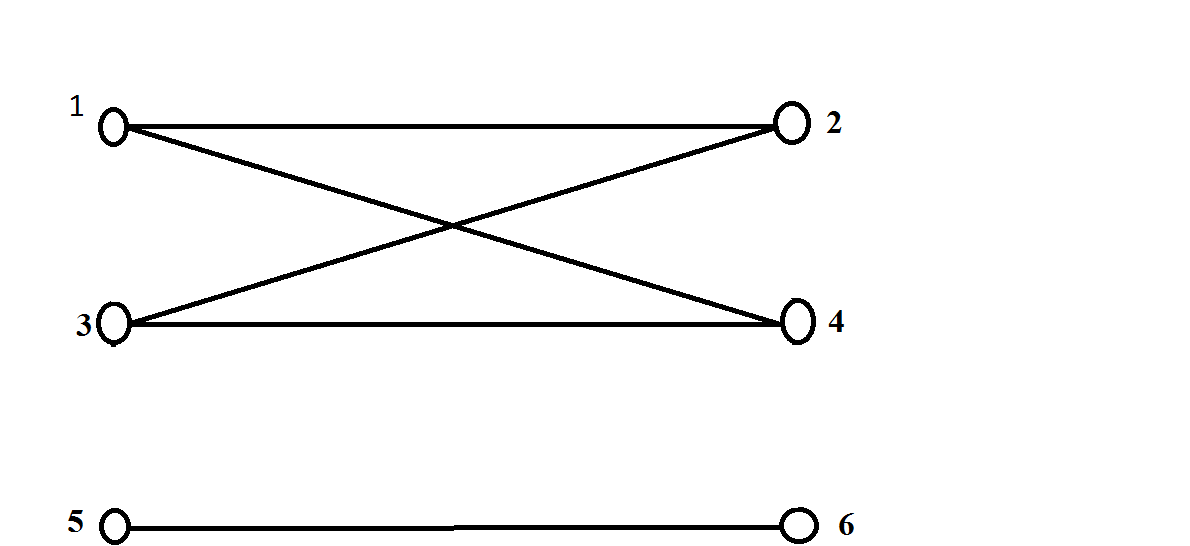}
\caption{\small The above mean valid graph has two different sets of partitions: 1) $I_1=\{1,3,5\}, I_2=\{2,4,6\}$ and 2) $\bar{I}_1=\{1,3,6\},\bar{I}_2=\{2,4,5\}$. If $\alpha_{a,j}$ ($\bar{\alpha}_{a,j}$, respectively) is the node selection probability at node $a$ under NE strategy profile where primaries select $I_1,I_2$ ($\bar{I}_1,\bar{I}_2$, respectively), then according to Theorem~\ref{thm:policyuniqueness}, we obtain $\alpha_{a,j}=\bar{\alpha}_{a,j}$ for all channel states $j$. There exists independent sets which are different from $I_1, I_2$ and $\bar{I}_1,\bar{I}_2$ e.g. $\{1,3\}$, $\{2,4\}$.}
\label{fig:different partitions}
\vspace{-0.5cm}
\end{figure}
 
\begin{thm}\label{thm:policyuniqueness}
Consider that nodes in a mean valid graph can be partitioned into two different sets of maximal independent sets: i) $I_1,\ldots, I_d$, and ii) $\bar{I}_1,\ldots, \bar{I}_{\bar{d}}$. Suppose at channel state $j=1,\ldots,n$,  $0\leq n_j\leq l$ number of primaries select independent sets among $I_1,\ldots, I_d$ and $l-n_j$ number of primaries  select independent sets among $\bar{I}_1,\ldots, \bar{I}_{\bar{d}}$ according to (\ref{dist}) and (\ref{ne2}).  Then the strategy profile constitutes an NE.  

Additionally, let $\alpha_{a,j}$ ($\bar{\alpha}_{a,j}$ respv.) be the probability with which primary $i$ offers its channel at node $a$ at channel state $j$ when it selects independent sets among $I_1,\ldots, I_d$ ($\bar{I}_1,\ldots, \bar{I}_{\bar{d}}$ reps.) such that (\ref{dist}) and (\ref{ne2}) are satisfied, then $\alpha_{a,j}=\bar{\alpha}_{a,j}$.  
\end{thm}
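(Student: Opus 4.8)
The plan is to exploit the strong separation and structural results already in hand. By Theorem~\ref{dist:existence} there is a unique probability vector $t_j = (t_{1,j},\ldots,t_{d,j})$ satisfying (\ref{dist})--(\ref{ne2}) for the partition $I_1,\ldots,I_d$, and likewise a unique $\bar t_j$ for $\bar I_1,\ldots,\bar I_{\bar d}$; these determine the node selection probabilities $\alpha_{a,j}$ and $\bar\alpha_{a,j}$ via (\ref{charac}). The first task is to compute these node selection probabilities explicitly. For the partition $I_1,\ldots,I_d$, every node $a\in I_s$ has $\alpha_{a,j}=q_j t_{s,j}$. Since $I_s$ is a \emph{maximal} independent set with $|I_s|=M_s$, and by Theorem~\ref{thm:structure} the NE satisfies $M_s W(\gamma_{s,j}) = P_j^{*}/(f_j(U_j)-c)$ for $s\le d_j$ and $t_{s,j}=0$ for $s>d_j$, one sees that $\gamma_{s,j}$ — and hence the whole sequence $t_{s,j}$ — is determined solely by $M_s$ and the global data $l,m,n,q_1,\ldots,q_n$. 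Concretely, define $\rho_{M,j}$ by the recursion obtained from (\ref{ne2}) as a function of the cardinality $M$ only; then $\alpha_{a,j}$ depends on $a$ only through $M_s$ where $a\in I_s$. I would make this precise by showing: for a node $a$ lying in a maximal independent set of size $M$ in \emph{any} valid partition, $\alpha_{a,j}$ equals a universal quantity $\rho_{M,j}$ (possibly truncated to $0$ when $M$ falls below the threshold $M_{d_j}$, which itself depends only on the multiset of cardinalities).

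The crux is then a combinatorial observation about mean valid graphs: although a given node $a$ may lie in maximal independent sets of different sizes across different partitions, in fact \emph{every} maximal independent set in a mean valid graph containing $a$ that appears in a partition has the same cardinality as dictated by the mean-validity constraint (\ref{mvg}) together with $\sum_s M_s = |V|$ and maximality. More carefully, I would argue that the multiset $\{M_1,\ldots,M_d\}$ is an invariant of the mean valid graph (it need not be, so this is the step to scrutinize): the constraint $\sum_s m_s(I)/M_s \le 1$ with equality for each $I_s$ forces a rigid relationship. If the multiset of cardinalities is the same for both partitions, then $\bar d = d$, the thresholds coincide, the unique solutions of (\ref{dist})--(\ref{ne2}) coincide componentwise (matching sizes to sizes), and since any node $a$ sits in an independent set of the same size in both partitions, $\alpha_{a,j}=\rho_{M,j}=\bar\alpha_{a,j}$.

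For the NE claim with $n_j$ primaries using one partition and $l-n_j$ using the other: because $\alpha_{a,j}=\bar\alpha_{a,j}$ at every node, the \emph{aggregate} environment faced by any single primary — the distribution of the number of competing offers at each node — is identical to the one in the homogeneous setting of Theorem~\ref{nemulti}. Hence by Lemma~\ref{separation-s1} the induced penalty distributions $\phi_j(\cdot)$ are the same at every node regardless of which partition a primary uses, and the expected payoff of a deviating primary at \emph{any} independent set (whether in $I_1,\ldots,I_d$, in $\bar I_1,\ldots,\bar I_{\bar d}$, or outside both) is bounded by $P_j^{*}$ exactly as in the proof of Theorem~\ref{nemulti} (Appendix~\ref{sec:proof_nemulti}); the bound there used only the node selection probabilities, not the partition structure. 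So no primary can profit by deviating, giving the mixed-partition NE.

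I expect the main obstacle to be the combinatorial step: proving that the multiset of maximal-independent-set cardinalities $\{M_1,\ldots,M_d\}$ (and in particular the size of the maximal independent set containing a fixed node $a$) is the same across all partitions of a mean valid graph into maximal independent sets. If this invariance fails in general, the theorem as stated would need the two partitions to be related — but the figure (Fig.~\ref{fig:different partitions}) has $|I_1|=|I_2|=|\bar I_1|=|\bar I_2|=3$, consistent with the invariance, so I would first try to prove it from (\ref{eq:orderedcardinality})--(\ref{mvg}) and, failing a general proof, restrict attention to the regime the theorem implicitly lives in. Everything downstream of that invariance is a direct application of Theorems~\ref{dist:existence}, \ref{thm:structure}, \ref{nemulti} and Lemma~\ref{separation-s1}.
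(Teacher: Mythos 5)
Your overall strategy is the same as the paper's: reduce everything to showing that the node selection probabilities are partition-independent, observe that the unique solution of (\ref{dist})--(\ref{ne2}) depends only on the multiset of cardinalities $M_1,\ldots,M_d$, and then note that the no-deviation argument of Theorem~\ref{nemulti} uses only the quantities $\gamma_{s,j}$ (equivalently, the node selection probabilities), so it goes through verbatim in the mixed-partition setting. Those parts of your proposal are correct and match the paper's proof closely.

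The genuine gap is the step you yourself flag as the "crux" and then leave open. You need two facts: (a) the sorted cardinality sequences of the two partitions coincide, i.e.\ $M_s=\bar M_s$ for all $s$ (hence $d=\bar d$), and (b) a fixed node $a$ lies in maximal independent sets of the \emph{same} cardinality in both partitions, i.e.\ if $|I_j|\neq|\bar I_k|$ then $I_j\cap\bar I_k=\emptyset$. You correctly identify that the theorem stands or falls on these, but you offer no proof and even entertain the possibility that the invariance fails and the theorem would need extra hypotheses. It does not: both facts are consequences of the mean-validity inequality (\ref{mvg}) and are proved in the paper as Lemmas~\ref{thm:equalcardinality} and \ref{thm:disjointunequal}. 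The arguments are counting arguments of the following flavor: for (a), take the smallest index $j$ with $M_j<\bar M_j$; if the sets $\bar I_1,\ldots,\bar I_{j-1}$ avoid $\bigcup_{s\ge j}I_s$ then $\bar I_j$ must sit entirely inside $\bigcup_{s\ge j}I_s$ and $\sum_s m_s(\bar I_j)/M_s\ge \bar M_j/M_j>1$, contradicting (\ref{mvg}); otherwise summing (\ref{mvg}) over $\bar I_1,\ldots,\bar I_{j-1}$ and using $M_i>M_j$ for $i<j$ yields a sum strictly exceeding $j-1$, again a contradiction. Fact (b) follows by a similar minimal-index argument. Without these two lemmas your proof is not complete, since everything downstream (matching sizes to sizes in the unique solution, equality of $\alpha_{a,j}$ and $\bar\alpha_{a,j}$, and the mixed NE) presupposes them.
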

  The first part of the above theorem implies that regardless of the partition other primaries select, a primary can attain its NE strategy profile by selecting independent sets using one of the partition. Hence, a primary needs not co-ordinate with other primaries in order to decide which partition it will choose. Thus, the strategy profile of the form (\ref{dist}) and (\ref{ne2}) is easy to implement.

The second part of the theorem  implies that regardless of the partition primary $i$ selects, the node selection probability will be identical. Thus,  the independent set selection strategies are {\em functionally unique}. Note that when different primaries select independent set selection strategies using different partitions, then the strategy profile is not symmetric, however, the node selection probabilities will be identical.

In Theorem~\ref{thm:policyuniqueness} we show that when primaries select independent sets which belong to a partition, then the symmetric NE will lead to the same node selection probability.  
 But there are independent sets which do not belong to a partition characterizing the mean valid graph (Fig.~\ref{fig:different partitions}). We have not ruled out a symmetric NE which selects an independent set which is outside of a partition characterizing the mean valid graph. We rule this out in the special class of linear conflict graphs. Linear conflict graphs frequently arise in practice: e.g. in the modeling of wireless  access point across a highway or along a row of shops.  

We show\footnote{In a linear conflict graph, the number of independent sets grows exponentially with $M$. Since $I_1$, $I_2$ are not the only independent sets (Fig.~\ref{fig:linear}), thus, it is not apriori clear whether every NE strategy profile only selects independent sets among $I_1,I_2$ with positive probability.} in Appendix~\ref{sec:appendix_uniquelinear}-- 

\begin{thm}\label{uniquelinear}
There exists a unique (not merely functionally unique) symmetric NE strategy profile in a linear conflict graph. In the symmetric NE each primary selects only independent sets $I_1$ and $I_2$, where $I_1$ ($I_2$, respectively) consists of odd (even, respectively) numbered nodes (Fig.~\ref{fig:linear}). 
\end{thm}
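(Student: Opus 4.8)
The plan is to reduce the statement to uniqueness of the independent-set selection distribution, and then to show that in a linear (path) conflict graph every symmetric NE must randomize over $I_1$ and $I_2$ only. By the separation result (Lemma~\ref{separation-s1}), once the probabilities $\alpha_{a,j}$ with which a primary offers a state-$j$ channel at node $a$ are fixed, the penalty distribution at $a$ is uniquely the d.f.\ $\phi_j(\cdot)$ of Lemma~\ref{lm:computation} with $\alpha_{a,j}$ in place of $q_j$; so it suffices to pin down the family $\{\beta_j(\cdot)\}$ of independent-set p.m.f.'s (note that \emph{not merely functionally unique} forces us to determine the $\beta_j$'s, not just the node probabilities). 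Theorems~\ref{dist:existence} and~\ref{nemulti} already supply one symmetric NE, supported on $\{I_1,I_2\}$ and satisfying (\ref{dist})--(\ref{ne2}), and it is the unique NE of the form (\ref{dist1}); moreover a path has a \emph{unique} partition into two maximal independent sets (in a partition $V=I\cup J$, if $a\notin I$ then both neighbours of $a$ lie in $I$, so every gap of $I$ has length exactly $2$, forcing $I\in\{I_1,I_2\}$), so there is no ``different partition'' ambiguity of the kind in Fig.~\ref{fig:different partitions}. Hence everything reduces to the claim: the support of any symmetric NE of the path is contained in $\{I_1,I_2\}$.

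\emph{First step.} Every independent set $I$ in the support of a symmetric NE (at any state $j$) is a \emph{maximal} independent set of the path. Indeed $\beta_j(I)>0$ means $I$ is a best response at state $j$, with expected payoff $\sum_{a\in I} g_{a,j}$ where $g_{a,j}$ is the optimal single-node payoff a primary can earn at $a$ in state $j$ against the NE; $g_{a,j}>0$ because the relevant competition $\gamma_{a,j}\le\sum_k q_k<1$ gives $W(\gamma_{a,j})>0$ and $f_j>c$ on the support, as in (\ref{pay0}). Were $I\cup\{a\}$ independent for some $a\notin I$, selecting $I\cup\{a\}$ would strictly increase the payoff, contradicting optimality of $I$.

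\emph{Second step (the crux).} I would use the combinatorial fact that a maximal independent set $I$ of $1\!-\!2\!-\!\cdots\!-\!M$ either equals $I_1$ or $I_2$, or else has a \emph{length-$3$ gap}: some $a$ with $a,a+3\in I$ and $a+1,a+2\notin I$ (consecutive elements of a maximal independent set of a path are at distance $2$ or $3$; if all gaps are $2$, maximality at the two ends forces $I\in\{I_1,I_2\}$). Assume for contradiction a support set $I$ has a length-$3$ gap at $a$. The sets $I\triangle\{a+2,a+3\}$ and $I\triangle\{a,a+1\}$ are independent, with total payoff $\le P^{*}_j$ while $I$ has total payoff $P^{*}_j$; hence $g_{a+2,j}\le g_{a+3,j}$ and $g_{a+1,j}\le g_{a,j}$, i.e.\ the interior of the gap sees at least as much competition as its endpoints. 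Combining these with the averaging identity $\sum_b g_{b,j}\,x_b=P^{*}_j$ (obtained by weighting $\mathrm{weight}(I)=\sum_{b\in I}g_{b,j}=P^{*}_j$ by $\beta_j(I)$, where $x_b=\alpha_{b,j}/q_j$ is the coverage probability of node $b$) and with $\mathrm{weight}(I_1)\le P^{*}_j$, $\mathrm{weight}(I_2)\le P^{*}_j$, I would argue on a minimal counterexample (smallest $M$, or fewest length-$3$ gaps): a node interior to a gap can raise its coverage $x_b$ only through other support sets that themselves contain length-$3$ gaps near $b$, and iterating this yields an independent set whose payoff exceeds $P^{*}_j$ --- the desired contradiction. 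Once the support is known to lie in $\{I_1,I_2\}$, both must actually appear (else the uncovered parity class offers a monopoly deviation at penalty $v$), so the structure (\ref{dist})--(\ref{ne2}) applies and Theorem~\ref{dist:existence} determines $t_{1,j},t_{2,j}$ uniquely; with Lemma~\ref{separation-s1} this gives a unique symmetric NE.

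The main obstacle is the second step --- passing from ``support sets are maximal'' to ``support sets are $I_1$ or $I_2$.'' Unlike the one-state case of \cite{gauravjsac}, here the competition term $\gamma_{a,j}=\sum_{k\ge j}\alpha_{a,k}$ couples the states, so monotonicity of $g_{a,j}$ in the competition at $a$ must be used with care and the whole argument is best run recursively from $j=n$ downwards (as in the computation following Theorem~\ref{thm:structure}); moreover the gap analysis needs a genuine induction (on $M$, or on the number of length-$3$ gaps), since a single shift inequality is not by itself decisive.
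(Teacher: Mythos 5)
Your reduction via Lemma~\ref{separation-s1}, your first step (support sets must be maximal, exactly Observation~\ref{maximal}), and your swap inequalities across a gap (these are Observation~\ref{necond} in the paper) all match the paper's proof. But the decisive step --- ruling out every maximal independent set with a length-$3$ gap --- is not actually carried out. Your proposed route (the averaging identity $\sum_b g_{b,j}x_b=P^*_j$ plus a ``minimal counterexample / iterate the gaps'' argument) is only asserted, and you yourself flag the cross-state coupling through $\gamma_{a,j}=\sum_{k\ge j}\alpha_{a,k}$ as an unresolved obstacle. That obstacle is precisely where the real work lies: a payoff inequality $u_{s,j,max}>u_{r,j,max}$ at one state does not by itself tell you anything about which independent sets are selected at that state, so the swap inequalities alone cannot be iterated into a contradiction.

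The paper closes this gap with a specific chain you are missing. First, Lemmas~\ref{high} and~\ref{high2} (which rely on Assumption~\ref{assum1}) show that any payoff inequality $u_{s,j,max}>u_{r,j,max}$ forces the existence of some state $i$ at which $u_{s,i,max}>u_{r,i,max}$ \emph{and} $\alpha_{s,i}<\alpha_{r,i}$; the reversed node probability then guarantees a support set containing $r$ but not $s$, which converts the payoff inequality into combinatorial information. Second, Lemma~\ref{lownp} feeds that support set into the swap inequalities to shift the payoff inequality two nodes to the left, recursing down to $u_{1,i,max}>u_{3,i,max}$; this contradicts the boundary fact that no maximal independent set of the path contains node $3$ but not node $1$, i.e.\ $\alpha_{1,i}\ge\alpha_{3,i}$ always. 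This yields equal maximal payoffs, and then (Lemma~\ref{equalnodedist}) equal selection probabilities, at all nodes of the same parity class. The final contradiction is then not a payoff bound at all but a probability-mass count: any symmetric NE must have $\bar\alpha_{1,j}+\bar\alpha_{2,j}=1$ (else shifting mass to the better parity class is a profitable deviation), while any maximal independent set other than $I_1,I_2$ omits two adjacent nodes $a,a+1$, forcing $\alpha_{a,j}+\alpha_{a+1,j}<1$. Without an analogue of Lemma~\ref{high2} and the boundary/propagation argument, your second step does not go through as written.
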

\subsubsection{Proof of Theorem~\ref{thm:policyuniqueness}}
First, we provide an outline of the proof.

  Suppose both the partitions 1) $\bar{I}_1,\ldots,\bar{I}_{\bar{d}}$ and 2) $I_1,\ldots, I_d$ characterize a mean valid graph $G$ (i.e. they satisfy conditions 1 and 2 of Definition~\ref{dmvg}). Let $|\bar{I}_s|=\bar{M}_s$ for $s\in \{1,\ldots,\bar{d}\}$ with 
\begin{align}
\bar{M}_1\geq \bar{M}_2\geq \ldots\geq \bar{M}_{\bar{d}}.\nonumber
\end{align}
We show in Appendix~\ref{sec:proof_policyuniqueness}
\begin{lem}\label{thm:equalcardinality}
$M_s=\bar{M}_s$, thus $d=\bar{d}$.
\end{lem}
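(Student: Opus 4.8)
The plan is to show that both partitions induce the same multiset of cardinalities $\{M_1,\dots,M_d\}$ by a double-counting / linear-programming duality argument exploiting the defining inequality \eqref{mvg} of mean valid graphs. First I would observe that each $I_s$ is itself an independent set, so applying \eqref{mvg} with the \emph{other} partition $\{\bar I_1,\dots,\bar I_{\bar d}\}$ gives, for every $s$,
\begin{align}\label{eq:plancross}
\sum_{r=1}^{\bar d}\frac{|I_s\cap \bar I_r|}{\bar M_r}\le 1,
\end{align}
and symmetrically $\sum_{s=1}^{d}|\bar I_r\cap I_s|/M_s\le 1$ for every $r$. Since $\{\bar I_r\}$ partitions $V$ we have $\sum_r |I_s\cap\bar I_r| = M_s$, and since $\{I_s\}$ partitions $V$ we have $\sum_s |I_s\cap\bar I_r| = \bar M_r$. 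The strategy is to sum \eqref{eq:plancross} over $s$ after multiplying by a suitable weight, and compare with the symmetric family, to force all the inequalities to be equalities and pin down the cardinalities.

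Concretely, I would multiply \eqref{eq:plancross} by $M_s$ and sum over $s$, obtaining $\sum_{s,r} M_s |I_s\cap\bar I_r|/\bar M_r \le \sum_s M_s$. Doing the same with the roles swapped gives $\sum_{s,r}\bar M_r|I_s\cap\bar I_r|/M_s\le \sum_r \bar M_r$. Because $\{I_s\}$ and $\{\bar I_r\}$ are both partitions of $V$, both $\sum_s M_s$ and $\sum_r \bar M_r$ equal $|V|$, so the two right-hand sides coincide. Now I would add the two summed inequalities and apply the elementary bound $\frac{M_s}{\bar M_r}+\frac{\bar M_r}{M_s}\ge 2$ term by term (weighted by $|I_s\cap\bar I_r|\ge 0$): this yields $2|V| = 2\sum_{s,r}|I_s\cap\bar I_r| \le \sum_{s,r}\left(\tfrac{M_s}{\bar M_r}+\tfrac{\bar M_r}{M_s}\right)|I_s\cap\bar I_r| \le \sum_s M_s + \sum_r\bar M_r = 2|V|$. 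Hence equality holds throughout, which forces $M_s=\bar M_r$ whenever $I_s\cap\bar I_r\neq\emptyset$, and forces \eqref{eq:plancross} to be an equality for every $s$.

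From this rigidity I would finish as follows. Build the bipartite "overlap" graph on vertex classes $\{I_1,\dots,I_d\}$ and $\{\bar I_1,\dots,\bar I_{\bar d}\}$ with an edge whenever $I_s\cap\bar I_r\neq\emptyset$; on every such edge $M_s=\bar M_r$. Within a connected component of this overlap graph all the $M_s$ and $\bar M_r$ are therefore equal to a common value, and the equality case of \eqref{eq:plancross} together with $\sum_r|I_s\cap\bar I_r|=M_s$ shows each $I_s$ meets exactly as many $\bar I_r$'s of that common size as forces a cardinality-preserving matching; a counting argument inside each component (total size on the left equals total size on the right) then gives that the component has equally many $I$-blocks and $\bar I$-blocks, all of the same size. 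Summing over components yields $d=\bar d$ and that the sorted cardinality vectors coincide, i.e. $M_s=\bar M_s$ for all $s$ after the ordering \eqref{eq:orderedcardinality}. The main obstacle I anticipate is precisely this last combinatorial step: turning "equal cardinalities along overlap edges" into "the two sorted lists are identical'' requires care when several blocks share the same size, so I would lean on the equality case of \eqref{mvg} (each $I_s$ exactly saturates the constraint against the other partition) rather than on the overlap-graph picture alone, since that saturation is the quantitative fact that prevents a large block from being "split'' across many small blocks of the other partition.
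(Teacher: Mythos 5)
Your proof is correct, and it takes a genuinely different route from the paper's. The paper argues index by index on the sorted cardinalities: it first rules out $M_1\neq\bar M_1$ by applying \eqref{mvg} to the larger of the two top blocks, and then handles the smallest index $j$ with $M_j\neq\bar M_j$ through a two-case counting argument over the blocks $\bar I_1,\dots,\bar I_{j-1}$; the disjointness of unequal-size blocks is then proved separately as Lemma~\ref{thm:disjointunequal}. You instead run a single global double count: applying \eqref{mvg} to each $I_s$ against the partition $\{\bar I_r\}$ and vice versa, weighting by $M_s$ (resp.\ $\bar M_r$), summing, and using $\tfrac{M_s}{\bar M_r}+\tfrac{\bar M_r}{M_s}\ge 2$ to squeeze $2|V|\le 2|V|$, so that equality holds termwise and $M_s=\bar M_r$ whenever $I_s\cap\bar I_r\neq\emptyset$. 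That rigidity statement is exactly (the contrapositive of) Lemma~\ref{thm:disjointunequal}, so your argument delivers both lemmas at once; grouping blocks by common size (the union of all $I$-blocks of size $c$ equals the union of all $\bar I$-blocks of size $c$, and dividing by $c$ counts the blocks) then yields identical sorted cardinality lists and $d=\bar d$. Your approach buys symmetry in the two partitions, no WLOG orderings, no case analysis, and a unified treatment of the two lemmas; the paper's buys a purely combinatorial, inequality-free-of-AM-GM induction. The final combinatorial step you worried about is not actually delicate: the size-class union argument closes it without needing the saturation of the cross-partition constraints, though noting that saturation is harmless.
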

Thus, $|I_s|=|\bar{I}_s|$, $s\in \{1,\ldots,d\}$ . Since the solution of (\ref{dist}) and (\ref{ne2}) only depend on the cardinalities of $I_s$, hence if a primary selects the partition $\bar{I}_1,\ldots, \bar{I}_d$ then a primary selects independent sets by solving  (\ref{dist}) and (\ref{ne2}). Since the solution of (\ref{dist}) and (\ref{ne2}) is unique by Theorem~\ref{dist:existence}, hence, if $|I_s|=|\bar{I}_k|$, they are selected with identical probability. Thus, if $a\in I_s$, and $a\in \bar{I}_k$ such that $|I_s|=|\bar{I}_k|$ then, the node selection probability at node $a$ at any channel state will be identical. However, if $a\in \bar{I}_k$ and $|\bar{I}_k|\neq |I_s|$, then the node selection probability may be different. We eliminate the above possibility in the following which we also show in Appendix~\ref{sec:proof_policyuniqueness}. 
\begin{lem}\label{thm:disjointunequal}
If $|I_j|\neq |\bar{I}_k|$, then $I_j\cap \bar{I}_k=\Phi$. 
\end{lem}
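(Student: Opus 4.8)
The plan is to argue by contradiction: suppose there is a node $a \in I_j \cap \bar I_k$ with $|I_j| = M_j \ne \bar M_k = |\bar I_k|$. Without loss of generality assume $M_j > \bar M_k$ (the reverse case is symmetric, swapping the roles of the two partitions). The strategy is to produce an independent set $I$ of $G$ that violates the mean‑valid inequality~\eqref{mvg} for one of the two partitions. The natural candidate is $I = \bar I_k$ itself (or a suitable subset of it): since $\bar I_k$ is one of the blocks of the second partition, it is a genuine independent set of $G$, so~\eqref{mvg} must hold for it when it is decomposed according to the \emph{first} partition $I_1,\dots,I_d$. Writing $m_s = m_s(\bar I_k) = |\bar I_k \cap I_s|$, we have $\sum_{s=1}^d m_s = \bar M_k$ and $\sum_{s=1}^d m_s / M_s \le 1$. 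Because $a \in \bar I_k \cap I_j$ we know $m_j \ge 1$.

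First I would record the ``balanced'' structure forced by the two partitions simultaneously covering $V$. Counting incidences of the first partition against the second gives $\sum_{k=1}^{\bar d} m_s(\bar I_k) = M_s$ for each fixed $s$, and likewise $\sum_{s=1}^d |I_s \cap \bar I_k| = \bar M_k$ for each fixed $k$; summing either way recovers $|V| = \sum M_s = \sum \bar M_k$, which combined with Lemma~\ref{thm:equalcardinality} gives the multiset equality $\{M_1,\dots,M_d\} = \{\bar M_1,\dots,\bar M_d\}$. I would then exploit the fact that applying~\eqref{mvg} to each block $\bar I_k$, summed over $k$, must be tight: $\sum_k \sum_s m_s(\bar I_k)/M_s = \sum_s M_s/M_s = d$, and each of the $d$ summands (over $k$) is $\le 1$, so in fact $\sum_s m_s(\bar I_k)/M_s = 1$ for \emph{every} $k$. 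Symmetrically $\sum_k m_s(\bar I_k)/\bar M_k = 1$ for every $s$. So each block of one partition is, in the sense of~\eqref{mvg}, a ``tight'' independent set against the other partition.

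Next I would use tightness to force the intersection pattern. Fix $\bar I_k$ and its profile $(m_1,\dots,m_d)$ with $\sum m_s = \bar M_k$ and $\sum m_s/M_s = 1$. If some $m_j > 0$ with $M_j > \bar M_k$, I want to derive a contradiction with another tight constraint — the cleanest route is to also look at the block $I_j$ of the first partition and its profile against the second partition, which is likewise tight: $\sum_{k'} |I_j \cap \bar I_{k'}|/\bar M_{k'} = 1$ with $\sum_{k'} |I_j \cap \bar I_{k'}| = M_j$. Since $a$ lies in both $I_j$ and $\bar I_k$, the term $k' = k$ contributes at least $1/\bar M_k$; combining the tightness identities with the strict inequality $\bar M_k < M_j$ and the ordering~\eqref{eq:orderedcardinality} (so $M_j$ is strictly larger than every $\bar M_{k'}$ that could share a node with $I_j$ on that side, or a careful averaging over all blocks) will push one of the convex combinations strictly above $1$, contradicting~\eqref{mvg}. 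Concretely, $M_j = \sum_{k'} |I_j \cap \bar I_{k'}| \le \sum_{k'} |I_j\cap \bar I_{k'}| \cdot (\bar M_{k'}/\bar M_k^{\min})$ type estimates, combined with the tightness $\sum_{k'} |I_j\cap \bar I_{k'}|/\bar M_{k'} = 1$, cannot be reconciled when $M_j$ exceeds the cardinality of the block actually containing $a$.

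The main obstacle I anticipate is making the last step fully rigorous: turning ``two simultaneous tight convex‑combination constraints with a shared node'' into a genuine contradiction requires choosing the right auxiliary independent set and the right weighting, and it may be that neither $\bar I_k$ nor $I_j$ alone suffices — one may need the independent set $(\bar I_k \setminus I_j) \cup (\text{part of } I_j)$ or to invoke maximality of the blocks (every block is a \emph{maximal} independent set, so no node outside it is non‑adjacent to all of it). I would keep maximality in reserve: if $a \in I_j \cap \bar I_k$ but the cardinalities differ, maximality of $\bar I_k$ says $a$'s neighbors block its removal‑and‑replacement, and maximality of $I_j$ says the same on the other side; played against the tight~\eqref{mvg} identities this should pin down $m_s(\bar I_k)$ to be $M_s$‑proportional, i.e. $\bar I_k$ meets each $I_s$ in the ``same fraction,'' which is impossible once $a$ forces a nonzero meet with the oversized block $I_j$. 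That is the step I expect to need the most care.
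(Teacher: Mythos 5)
Your opening moves are sound and in fact take a genuinely different route from the paper: you apply the mean-valid inequality \eqref{mvg} to each block $\bar I_k$, viewed as an independent set, against the partition $I_1,\ldots,I_d$, sum over $k$ to get $\sum_{k}\sum_{s} m_s(\bar I_k)/M_s=\sum_s M_s/M_s=d$, and since $\bar d=d$ (Lemma~\ref{thm:equalcardinality}) and each summand is at most $1$, conclude that every block is tight: $\sum_s m_s(\bar I_k)/M_s=1$ with $\sum_s m_s(\bar I_k)=\bar M_k$. That observation is correct and is not how the paper argues at all --- the paper's proof of this lemma never invokes \eqref{mvg}; it runs a minimal-index counting argument showing that the union of all blocks of cardinality exceeding the offending one must coincide in the two partitions, so that a smaller block cannot reach into it.

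The gap is in your closing step, and it is genuine: the two tight constraints attached to the blocks containing the shared node $a$ do not, by themselves, yield a contradiction. Concretely, $\sum_{k'}|I_j\cap\bar I_{k'}|/\bar M_{k'}=1$ together with $\sum_{k'}|I_j\cap\bar I_{k'}|=M_j$ is perfectly consistent with $I_j$ meeting both a strictly smaller and a strictly larger block of the other partition (for instance $M_j=5$ meeting blocks of sizes $3$ and $6$ in $1$ and $4$ nodes satisfies both identities), so the step ``push one convex combination strictly above $1$'' has no local proof, and maximality of the blocks does not rescue it. You need a global argument over all blocks simultaneously. One clean way to finish along your own lines: by Cauchy--Schwarz, $M_k^2=\bigl(\sum_s m_s(\bar I_k)\bigr)^2\le\bigl(\sum_s m_s(\bar I_k)M_s\bigr)\bigl(\sum_s m_s(\bar I_k)/M_s\bigr)=\sum_s m_s(\bar I_k)M_s$; summing over $k$ gives $\sum_k \bar M_k^2\le\sum_s M_s^2$, which holds with equality by Lemma~\ref{thm:equalcardinality}, so equality holds in Cauchy--Schwarz for every $k$, forcing $M_s$ to be constant (hence equal to $\bar M_k$) over all $s$ with $m_s(\bar I_k)>0$ --- which is exactly the lemma. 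Alternatively, start from the maximum cardinality class, where $\bar M_{k'}\le M_1$ does hold for every $k'$ so the one-block tightness argument closes, and peel off classes by induction; that variant is closer in spirit to the paper's proof. As written, your proposal stops short of either.
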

We have explained the relationship between $I_1,\ldots, I_d$ and $\bar{I}_1,\ldots, \bar{I}_d$ in Fig. \ref{fig:equalpartitions}. The proof of Theorem~\ref{thm:policyuniqueness} readily follows from the fact  that the node selection probability is identical irrespective of the partitions selected by primaries. The detailed proof is given below:

\textit{Proof of Theorem~\ref{thm:policyuniqueness}}: 
First, we show the following: if $\alpha_{a,j}$ ($\bar{\alpha}_{a,j}$ resp.) is the node selection probability when a primary selects among independent sets among  $I_1,\ldots, I_d$ ($\bar{I}_1,\ldots, \bar{I}_{\bar{d}}$ resp.) such that (\ref{dist}) and (\ref{ne2}) are satisfied, then $\alpha_{a,j}=\bar{\alpha}_{a,j}$.  It will essentially prove the second part of the theorem.

Fix a node $a$. Let $a\in I_s$ and $a\in \bar{I}_k$.  By theorem~\ref{dist:existence} there exists a unique solution $t_{j}=(t_{1,j},\ldots,t_{d,j})$ of (\ref{dist}) and (\ref{ne2}). Since $a\in I_s$, thus, 
\begin{align}\label{eq:node}
\alpha_{a,j}=q_jt_{s,j}.
\end{align}

\begin{figure}
\includegraphics[width=150mm,height=60mm]{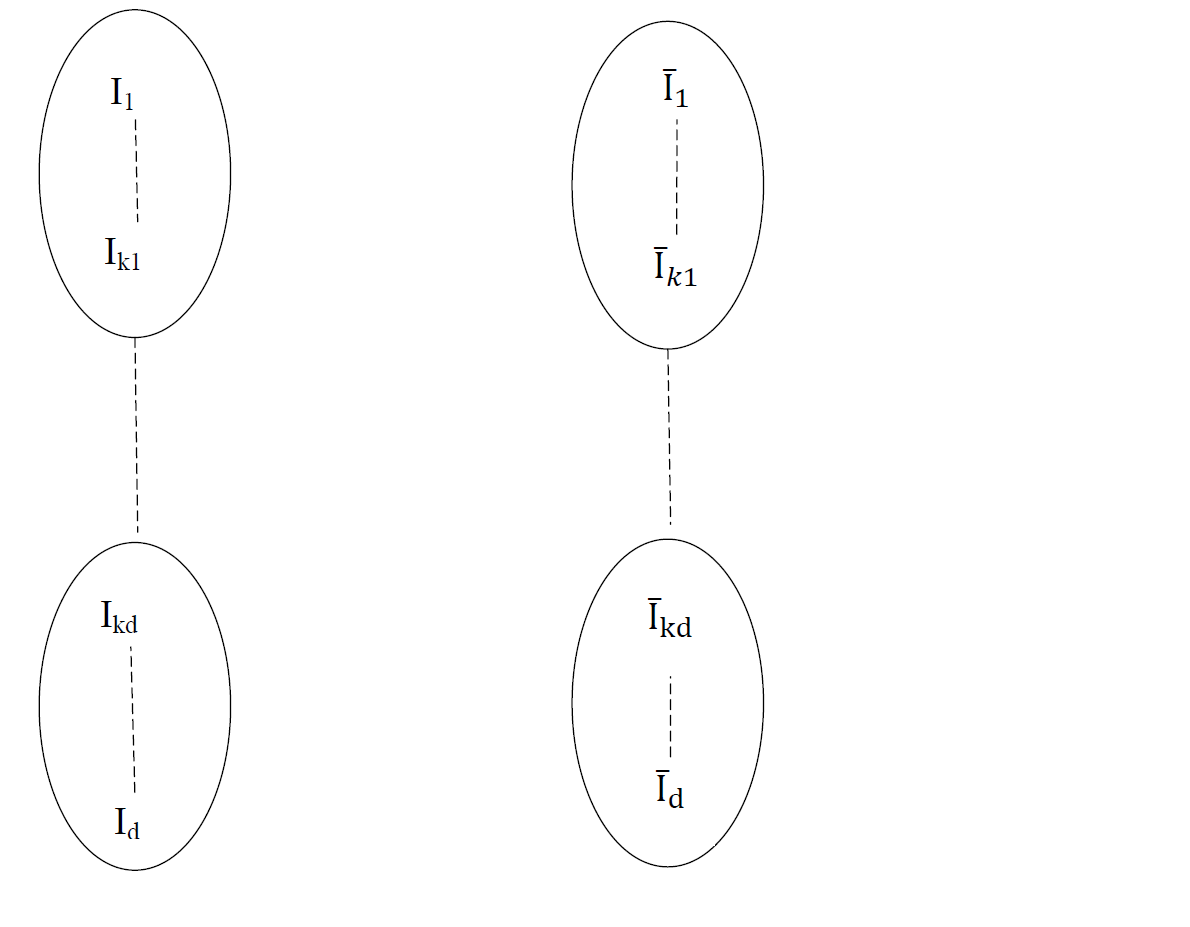}
\caption{\small Independent sets of same cardinality are grouped together. Thus, $|I_1|=\ldots=|I_{k1}|$ . $I_1\cup\ldots \cup I_{k1}=\bar{I}_1\cup \ldots\cup \bar{I}_{k1}$. If node $a$ belongs to $I_1$, then it must belong to $\bar{I}_1\cup \ldots\cup \bar{I}_{k1}$, but it can not belong to in $\bar{I}_s$, $s>k1$. 
}
\label{fig:equalpartitions}
\vspace{-0.3cm}
\end{figure}
Since $\bar{d}=d$ and $|\bar{I}_s|=|I_s|$ for all $s\in \{1,\ldots,d\}$ by Lemma~\ref{thm:equalcardinality}, thus,  (\ref{dist}) and (\ref{ne2}) are identical irrespective of whether a primary selects independent sets among $I_1,\ldots, I_d$ or $\bar{I}_1,\ldots, \bar{I}_{d}$.   Since there exists unique solution of (\ref{ne2}) and (\ref{dist}) (by Theorem~\ref{dist:existence}) , thus $t_{j}$ is the only solution of (\ref{dist}) and (\ref{ne2}). Hence,  probability with which the independent set $\bar{I}_{i}$ is selected at channel state $j$ is $t_{i,j}$. Since node $a\in \bar{I}_k$, thus, from (\ref{charac})
\begin{align}\label{eq:nodebar}
\bar{\alpha}_{a,j}=q_jt_{k,j}.
\end{align} 

So, it is clear that if $s=k$, then $\alpha_{a,j}$ and $\bar{\alpha}_{a,j}$ are identical (by (\ref{eq:node}) and (\ref{eq:nodebar})). Thus, we are only left to show when $s\neq k$ then  (\ref{eq:node}) and (\ref{eq:nodebar}) are equal which we show in the following.
 
By Lemma~\ref{thm:disjointunequal} and \ref{thm:equalcardinality}, we must have $|I_k|=|\bar{I}_k|=|I_s|$ since $a\in I_s$ and $a\in \bar{I}_k$. Since the solution of (\ref{dist}) and (\ref{ne2}) is the unique (by Theorem~\ref{dist:existence}), thus,
\begin{align}
 t_{k,j}=t_{s,j}\nonumber.
 \end{align} Thus, $\alpha_{a,j}$ and $\bar{\alpha}_{a,j}$  are also identical (by (\ref{eq:node}) and (\ref{eq:nodebar})) when $s\neq k$. Hence, we show that $\alpha_{a,j}=\bar{\alpha}_{a,j}$.

Now, we show that if a primary selects independent sets among $I_1,\ldots, I_d$ irrespective of partition the other primaries select such that (\ref{dist}) and (\ref{ne2}) are satisfied, then the strategy profile is an NE. This will conclude the proof since by symmetry, it will follow that if a primary selects independent sets among $\bar{I}_1,\ldots, \bar{I}_d$ irrespective of the partition other primaries select then the strategy profile is an NE.

We have so far showed that every node in $I_s$ is selected with identical probability by each primary irrespective of the partition  selected by them when the independent set selection strategy is of the form (\ref{dist}) and (\ref{ne2}). Thus, at every node $a\in I_s$, each primary offers its channel at node $a$ when the channel state is $j$ or higher w.p. $\sum_{k=j}^{n}\alpha_{a,k}=\sum_{k=j}^{n}q_jt_{s,j}$ which is equal to $\gamma_{s,j}$  (recall from (\ref{recurg})) irrespective of the partition selected by the primaries.    In proving that a primary does not have any incentive to deviate unilaterally from the strategy profile which is of the form (\ref{dist}) and (\ref{ne2}) (Theorem~\ref{nemulti}), we only use the properties of $\gamma_{s,j}$. Hence, if a primary selects independent sets among $I_1,\ldots, I_d$ according to (\ref{dist}) and (\ref{ne2}) irrespective of the partitions selected by other primaries, then it is an NE. Hence, the result follows.    \qed




\section{Different channel states at different locations}\label{sec:diff_channel_states}
  At later stages of deployment, the secondary market will operate at a region consisting of a large number of locations. The channel states will be different at different locations in this large region which we consider in this section.   We first present specific assumptions that we have made in this scenario (Section~\ref{sec:specific_assumptions}). For example, nodes of commonly observed large conflict graphs exhibit an inherent symmetry in the interference relations, we therefore consider a class of conflict graphs, known as {\em node symmetric graphs} in the literature \cite{nodetransitive}.   We subsequently obtain a symmetric NE strategy profile $SP_{sym}$  in a node symmetric graph (Section~\ref{sec:spsym}). We show some important structural properties of $SP_{sym}$ which are significantly different from the symmetric NE strategy profile obtained in the scenario where the channel state is identical across the network (Theorem~\ref{thm:nodesymmetricne}, Lemmas~\ref{lm:payoff_same}, and \ref{thm:notsame}). We show that $SP_{sym}$ may not be a NE when the conflict graph is not a node symmetric (Lemma~\ref{thm:notane}). Finally, we analytically and empirically evaluate the computational issues of computing the strategy $SP_{sym}$ and how a primary can attain a desired trade-off between the expected payoff and the computational cost by selective estimation of channel states at randomly selected subset of nodes  (Section~\ref{sec:computation}). 
  


\subsection{Specific Assumptions}\label{sec:specific_assumptions}
We revert to the notations  introduced in Sections II and III. Specifically, we do not need simplifications of the notations used for the first setting which have been introduced in Section~\ref{sec:modification}.

  \subsubsection {$n=1$}  In the previous setting (Section~\ref{sec:same_channel_state}) we consider that the channel state is the same across the locations, thus,  a primary always selects an independent set from the conflict graph $G$ whenever the channel is available (i.e. the channel is not in state $0$). Thus, a primary knows that its competitors always select independent sets from $G$  (a primary does not select any independent set when the channel state is $0$).  In the current setting, the conflict graph representation of the region depends on the channel state vectors. Since the conflict graph representation can be different for different channel state vectors ($G_J$ may not be equal to $G_{K}$ when $J\neq K$), thus, a primary does not know  the conflict graphs from which its competitors are selecting their independent sets. Thus, the collection of independent sets from which a primary selects its independent set may be different for different primaries.  Additionally, the strategy space $\mathcal{P}$ ($|\mathcal{P}|=(n+1)^{|V|}-1$) increase exponentially with the number of nodes.   Thus, obtaining an NE in this setting is challenging. In order to simplify the setting, we consider  
\begin{assum}
$n=1$ i.e. the channel is either available (i.e. at state $1$) or not available (i.e. at state $0$) at each node, but still the channel state can be different at different nodes. 
\end{assum}
Note that even though $n=1$, the cardinality of strategy space $\mathcal{P}$  is $2^{V}-1$ which is still exponential in the number of nodes and the conflict graph representation will be different for different channel state vectors. 

\begin{defn}\label{defn:alpha_a}
Since $n=1$, we drop the index $j$ from $\alpha_{a,j}$ and $\mathcal{P}_{a,j}$ in (\ref{defn:nodeprob_general}) corresponding to the channel state at a given location. We denote $\alpha_{a}$ as the probability with which an available channel at node $a$ is offered  under a symmetric strategy profile and $\mathcal{P}_a$ as the set of channel state vectors where the channel state is $1$ at node $a$.
\end{defn} 
Note that from Theorem~\ref{singlelocation} and (\ref{n51}), the upper endpoint of the penalty selection strategy is $v$ at all nodes. The maximum expected payoff of a primary at  node $a$  under a symmetric NE strategy is
\begin{align}\label{eq:ex_paya1s}
p_{a}-c=(f_1(v)-c)(1-w(\alpha_a))
\end{align}

\subsubsection{Node Symmetric Graphs}\label{sec:nodesymmetricgraphs}
 We consider large size wireless networks. As an analytical abstraction, we mainly consider infinite size conflict graphs.  In large conflict graphs,   there is an inherent symmetry in the interference relations among the nodes in the network. We, therefore consider node symmetric graphs, which in the literature is also known as {\em node transitive graphs} \cite{nodetransitive}. 

First, we provide a formal definition of node symmetric graph. Towards that end, we first define an {\em automorphism} in a conflict graph $G$. We denote $V(G)$ as the set of nodes of $G$.
\begin{defn}\label{defn:automorphism}
An automorphism is a bijective mapping $F: V(G)\rightarrow V(G)$ such that nodes $F(a)$ and $F(b)$ are adjacent\footnote{In an undirected graph, two nodes are adjacent iff there is an edge between them.} if and only if nodes $a$, $b$ are adjacent in $G$.
\end{defn}
In an automorphism, the nodes are renumbered such that it maintains the adjacency between the nodes. For example, consider a linear graph consisting of $4$ nodes. Fig.~\ref{fig:automorphism} shows an automorphism on this graph. Now we are ready to define the node symmetric graph.
\begin{figure}
\includegraphics[width=120mm,height=20mm]{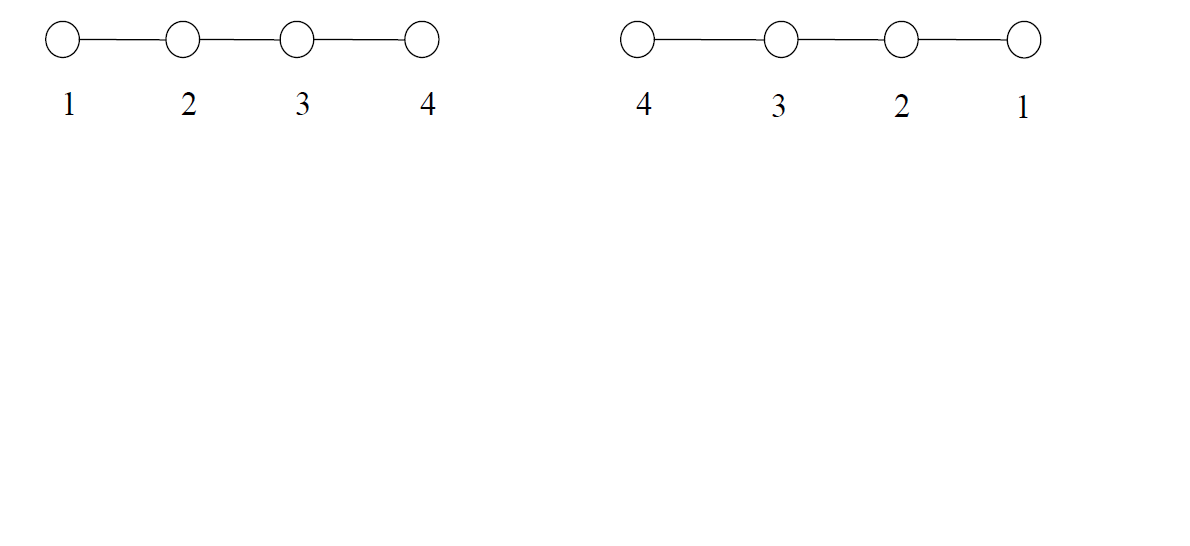}
\vspace{-0.8cm}
\caption{\small Left hand figure shows a linear graph with $4$ nodes. Right hand figure an automorphism where $F(1)=4, F(2)=3, F(3)=2, F(4)=1$. However, there is no automorphism between nodes $2$ and $1$. If there is an automorphism such that $F(2)=1$, then by the property of automorphism nodes $F(2)$ and $F(3)$ should be adjacent and nodes $F(2)$ and $F(1)$ also should be adjacent, but since $F(2)=1$, thus, either $F(3)$ or $F(1)$ will not be adjacent to node $F(2)$ since node $1$ only has one degree in $G$. }
\label{fig:automorphism}
\vspace{-0.3cm}
\end{figure}
\begin{defn}\label{defn:node_symmetric}\cite{nodetransitive}
In a node symmetric graph, for every pair of vertices $a$ and $b$ of $G$, there is some automorphism $F:V(G)\rightarrow V(G)$  such that $F(a)=b$.
\end{defn}
For a graph to be node symmetric every node should be mapped to every other node through an automorphism. Informally,   in a node symmetric graph the graphs looks the same from each node. 

  For example cyclic graph is a node symmetric graph.  But linear graph with $4$ nodes is not a node symmetric graph since there is no automorphism between nodes $1$ and $2$ (Fig.~\ref{fig:automorphism}).

Now, we provide some examples of infinite node symmetric graphs which resemble the conflict graphs of large wireless networks.
 \begin{itemize}
 \item Infinite linear graph with no end points (Fig.~\ref{fig:linear_infinity}):   This is an abstraction of the conflict graph of the network of a large number of wireless access points arranged in a linear fashion. 

\item Infinite square graphs (Fig.~\ref{fig:random_square}):   This is an abstraction of the conflict graph of wireless networks in a large region with square cells.

\item Infinite grid graphs (Fig.~\ref{fig:random_grid}): This is an abstraction of the conflict graph of a large shopping mall. 

\item Infinite triangular graphs (Fig.~\ref{fig:random_hexacell}): This  is an abstraction of  the conflict graph representation of large number of  hexagonal cells \cite{matula}.
 \end{itemize}
 There are also several commonly observed node symmetric conflict graphs which are finite. For example, cyclic graph of any size is a node symmetric graph \footnote{Note that cyclic graph is not a mean valid graph if $|V|>3$ and $|V|$ is odd, thus, node symmetric graphs may not be mean valid graphs.}. Cyclic conflict graph represents a collection of wireless access points arranged in a circular fashion, possibly circumambulating a city or ring size road.   Figure~\ref{fig:kregular} also shows a finite node symmetric graph and the corresponding wireless network. The complete graphs \footnote{In a complete graph a node has edge with every other node.} are also node symmetric graphs. We find a symmetric NE in a node symmetric graph irrespective of whether it is finite or infinite (Theorem~\ref{thm:nodesymmetricne}). 
 
   Note from Section~\ref{sec:meanvalidgraph} that the commonly observed conflict graphs of small networks are mean valid graphs which we analyze in the previous setting.  {\em These graphs may not be node symmetric graphs}.

\begin{figure}
\begin{subfigure}[]
{
\includegraphics[width=90mm,height=30mm]{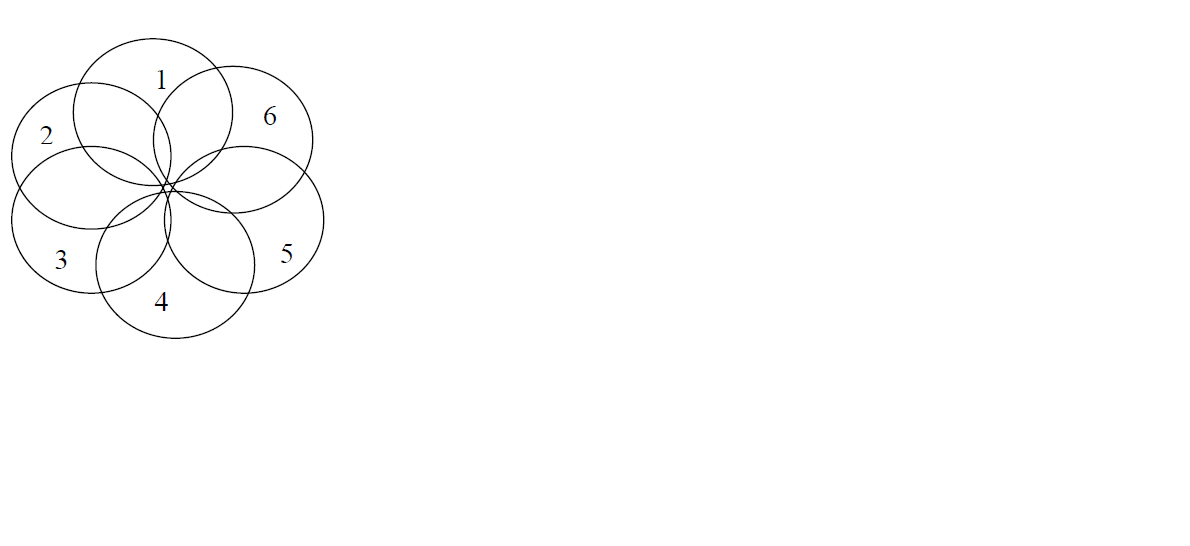}}
\end{subfigure}
\label{fig:linear_region}
\begin{subfigure}[]
{\includegraphics[width=90mm,height=30mm]{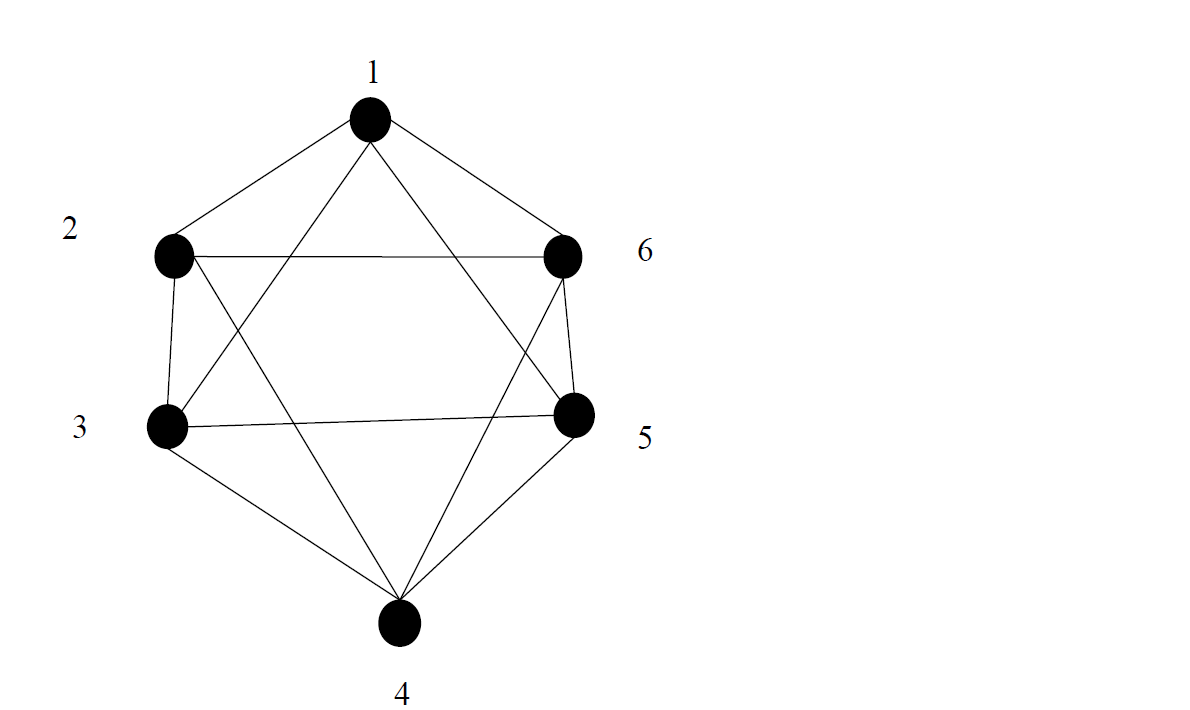}
\vspace{-0.7cm}}
\end{subfigure}
\caption{\small Circles in Figure (a) shows the ranges of the wireless access points located at the center of the circle. Figure (b) shows the corresponding conflict graph with each circle is represented as a node. Each circle intersects with $1$ hop and $2$ neighbors, thus, in the conflict graph each node has edges with $1$ hop and $2$ hop neighbors. The conflict graph is a node symmetric graph. }
\label{fig:kregular}
\vspace*{-0.2cm}
\end{figure}
\begin{figure}
\includegraphics[width=120mm,height=20mm]{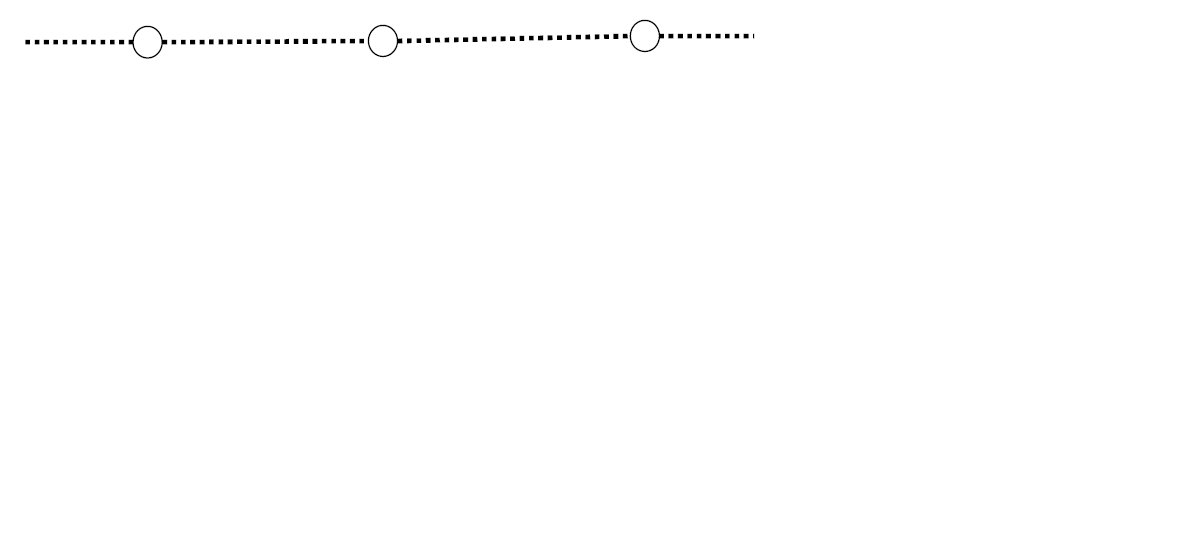}
\vspace*{-1cm}
\caption{\small Infinite linear graph with no end-points: each node has degree $2$.}
\label{fig:linear_infinity}
\vspace{-0.3cm}
\end{figure}
\begin{figure*}
\begin{minipage}{.32\linewidth}
\begin{center}
\includegraphics[width=50mm,height=30mm]{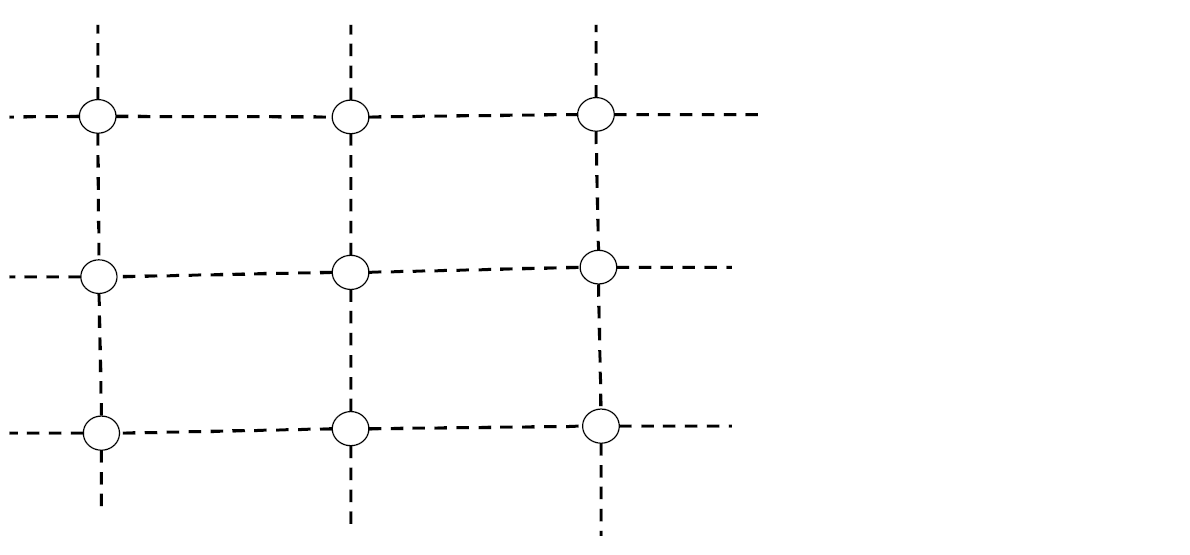}
\vspace*{-.1cm}
\caption{\small Infinite square graph: each node has degree $4$.}
\label{fig:random_square}
\end{center}
\end{minipage}\hfill
\begin{minipage}{.32\linewidth}
\begin{center}
\includegraphics[width=50mm,height=30mm]{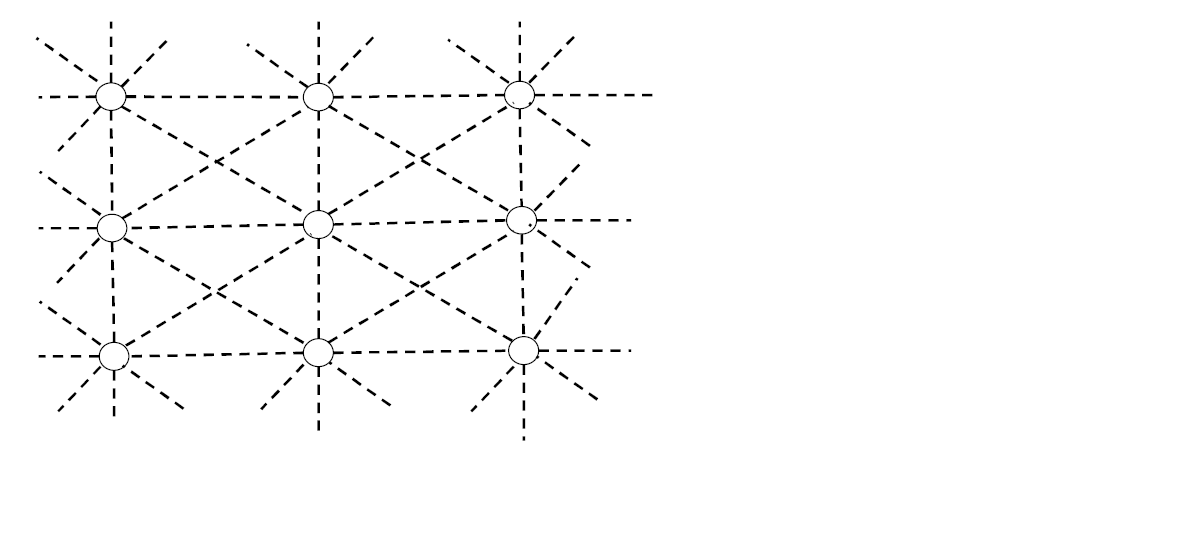}
\vspace*{-.1cm}
\caption{\small Infinite Grid graph: each node has degree $8$. }
\label{fig:random_grid}
\end{center}
\end{minipage}\hfill
\begin{minipage}{.32\linewidth}
\begin{center}
\includegraphics[width=50mm,height=30mm]{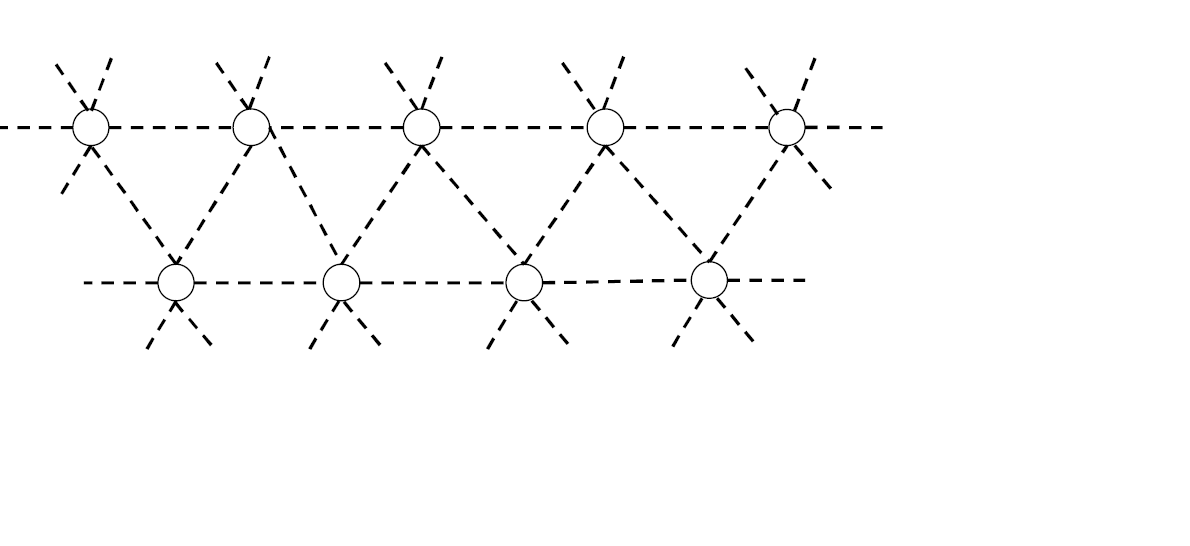}
\vspace*{-.1cm}
\caption{\small Infinite Triangular Graph: each node has degree $6$.}
\label{fig:random_hexacell}
\end{center}
\end{minipage}
\vspace*{-.6cm}
\end{figure*}
\subsubsection{Probability Distribution of Channel State Vectors}\label{sec:channel_statistic}
In the previous setting, we consider an extreme case where the channel state is identical across each location. In a large network, the channel states will be different. However, the  channel states are often spatially proximal.  Since the graph is large,  like the interference relationship we expect that the statistical correlation pattern would also exhibit some symmetry.  We consider one such symmetric relationship  among the channel states across the network which arise naturally.

First, we define an isomorphism between two graphs: 
\begin{defn}\label{defn:isomorphism}
Two graphs $G$ and $H$ are isomorphic to each other if there is a bijective mapping $F:V(G)\rightarrow V(H)$ such that any two vertices $F(a), F(b)$ are adjacent in $H$ if and only if $a, b$ are adjacent in $G$.
\end{defn}
Informally, if two graphs look alike subject to renumbering of nodes, then they are isomorphic to each other. Note that automorphism is a special case of isomorphism which occurs when $H=G$ (Definition~\ref{defn:automorphism}). 

We assume that 
\begin{assum}\label{assum:prob}
$q_J$ and $q_K$ are identical whenever the $G_{J}$ and $G_{K}$ are isomorphic to each other.
\end{assum}
Intuitively, since $G_J$ and $G_K$ are alike subject to the renumbering of nodes, we therefore expect $q_J=q_K$.  We show in Section~\ref{sec:same_prob} that the above assumption implies that 
\begin{lem}\label{lm:identicalprob}
The probability that a channel of a primary is in state $1$ at a given location is the same across the network.
\end{lem}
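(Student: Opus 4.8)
The plan is to combine the node symmetry of the conflict graph $G$ with Assumption~\ref{assum:prob}. Recall that in this setting $n=1$, so every channel state vector $J$ is a $\{0,1\}$-valued assignment on $V$ and, by Definition~\ref{defn:alpha_a}, $\mathcal{P}_a$ is the set of such vectors with $J_a=1$. Fix two arbitrary locations $a$ and $b$; the probability that the channel is available (state $1$) at node $a$ equals $r_a := \sum_{J\in\mathcal{P}_a} q_J$, so it suffices to show $r_a = r_b$. Since $G$ is node symmetric (Definition~\ref{defn:node_symmetric}), there is an automorphism $F:V(G)\to V(G)$ with $F(a)=b$.

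First I would lift $F$ to an action on channel state vectors: for $J\in\mathcal{P}$ let $J^{F}$ be the vector with $J^{F}_{F(v)} = J_v$ for every $v\in V$ (equivalently $J^{F} = J\circ F^{-1}$), which is well defined because $F$ is a bijection. The map $J\mapsto J^{F}$ is then a bijection of the set of all $\{0,1\}$-valued vectors onto itself, it fixes the all-zero vector and hence maps $\mathcal{P}$ onto $\mathcal{P}$, and $J_a=1$ if and only if $J^{F}_b=1$; therefore $J\mapsto J^{F}$ restricts to a bijection of $\mathcal{P}_a$ onto $\mathcal{P}_b$.

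Next I would verify that $G_J$ and $G_{J^{F}}$ are isomorphic for every $J$. The node set $V_{J^{F}}$ consists of the nodes $w$ with $J^{F}_w = 1$, i.e. with $F^{-1}(w)\in V_J$, so $V_{J^{F}} = F(V_J)$; and because $F$ is an automorphism of $G$, two nodes $u,v\in V_J$ are adjacent in $G$ (equivalently in the induced subgraph $G_J$) if and only if $F(u),F(v)$ are adjacent in $G$ (equivalently in $G_{J^{F}}$). Thus $F|_{V_J}$ is a graph isomorphism from $G_J$ onto $G_{J^{F}}$, so Assumption~\ref{assum:prob} yields $q_J = q_{J^{F}}$. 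Putting the pieces together, $r_a = \sum_{J\in\mathcal{P}_a} q_J = \sum_{J\in\mathcal{P}_a} q_{J^{F}} = \sum_{K\in\mathcal{P}_b} q_K = r_b$, where the last equality reindexes the sum along the bijection $J\mapsto J^{F}$ from $\mathcal{P}_a$ onto $\mathcal{P}_b$. Since $a$ and $b$ were arbitrary, the probability of state $1$ is the same at every node.

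The only delicate point is the middle step: one must check that the reduced conflict graph of the permuted vector $J^{F}$ is genuinely \emph{isomorphic} to that of $J$, so that Assumption~\ref{assum:prob} can be invoked. This is exactly where the existence of a graph automorphism carrying $a$ to $b$ is used, and mere degree-regularity of $G$ would not be enough; everything else is routine bookkeeping with the induced permutation of coordinates.
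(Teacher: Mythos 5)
Your proof is correct and follows essentially the same route as the paper: the paper's Observation~\ref{obs:sameisomorphicpairs} constructs exactly the correspondence $J\mapsto J^{F}$ induced by an automorphism with $F(a)=b$, verifies that $G_J$ and $G_{J^F}$ are isomorphic, and then sums $q_J$ over $\mathcal{P}_a$ and $\mathcal{P}_b$ using Assumption~\ref{assum:prob}. Your version is just a slightly more explicit packaging of the bijection (via $J^{F}=J\circ F^{-1}$) than the paper's distinctness-plus-equal-cardinality argument.
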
 However, the converse of the above result not true in general.

We now provide some examples of joint probability distributions which arise in practice and satisfy Assumption~\ref{assum:prob}. 

{\em Independent and identically distributed channel states}:
 The state of the channel is $i=1$ w.p. $q$ at a given location independent of the channel states at other locations. At a given channel state vector $J$, if the channel is available at $n_j$ number of nodes, then $q_J=q^{n_j}(1-q)^{|V|-n_j}$. When $G_{J}$ and $G_{K}$ are isomorphic, then both contain the same number of nodes, thus, the number of locations where the channel state is $1$ ($0$, respv.)  are the same  in channel state vectors $J$ and $K$. Hence, the probability distributions $q_J$ and $q_K$ are identical whenever $G_{J}$ and $G_{K}$ are isomorphic. 

{\em Correlated Channel states}: We now show that Assumption~\ref{assum:prob} can accommodate statistical correlations across the channel states at different nodes. We provide an example in a small node symmetric graph. Consider a linear graph with $2$ nodes such that  $q_{(1,0)}=q_{(0,1)}$. Since $G_{(0,1)}$ and $G_{(1,0)}$ are the only possible isomorphic graphs in this case, thus, the above joint probability distribution satisfies Assumption~\ref{assum:prob}.  Now, if 
$q_{(1,1)}> q_{(1,0)}=q_{(0,1)}$ and $q_{(0,0)}>q_{(1,0)}=q_{(0,1)}$, then, the channel states are not independent\footnote{Suppose that the channel is in state $1$ at node $i$ w.p. $q_i$ independent of the channel state at other location, then, $q_{1,0}=q_{0,1}$ implies that $q_1=q_2$. Now, $q_1(1-q_1)$ can not be less than both $q_1^2 (=q_{1,1})$ \& $(1-q_1)^2(=q_{0,0})$, hence,  independent channel states can not satisfy the above joint distribution.}. Thus, Assumption~\ref{assum:prob} allows correlation among the channel states across the locations. Also note that the above probability distributions commonly arise in practice.   This is because when the channel is in state $1$ ($0$ respv.) at one location, then there is a higher  probability that the channel is in state $1$ ($0$ respv.)  compared to state $0$ ($1$ respv.) at other location. 

The joint probability distributions of random variables associated with spatial locations and exhibiting correlations are often represented as Markov Random Field.  We provide a formal definition of  Markov random field  in Appendix~\ref{sec:mrf} and show  that the Markov random field modeling of channel states where the channel states in neighboring locations are correlated,  satisfy Assumption~\ref{assum:prob} under some additional assumptions which naturally arise (Lemma~\ref{lm:mrf_sameprob} in Appendix~\ref{sec:mrf}). 

\subsubsection{Proof of Lemma~\ref{lm:identicalprob}}\label{sec:same_prob}
We first show Observation~\ref{obs:sameisomorphicpairs}. Subsequently, we show Lemma~\ref{lm:identicalprob}.

\begin{obs}\label{obs:sameisomorphicpairs}
Consider any pair of nodes $a$ and $b$. For distinct channel state vectors $J,J_1\in \mathcal{P}_a$ (Definition~\ref{defn:alpha_a}), there exists distinct channel state vectors $K, K_1\in \mathcal{P}_b$ such that $G_{K}$ and $G_{K_1}$ are isomorphic to $G_{J}$ and $G_{J_1}$ respectively, such that the in the isomorphic function $F(a)=b$ (Definition~\ref{defn:isomorphism}). 
\end{obs}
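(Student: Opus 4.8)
The plan is to exploit node symmetry directly. Since $G$ is node symmetric, there is an automorphism $\sigma:V(G)\to V(G)$ with $\sigma(a)=b$. I would use $\sigma$ to transport each channel state vector in $\mathcal{P}_a$ to one in $\mathcal{P}_b$. Concretely, given a channel state vector $J$ (a function $V(G)\to\{0,1\}$), define $K$ by $K(\sigma(v))=J(v)$ for all $v\in V(G)$, i.e.\ $K=J\circ\sigma^{-1}$. Then the set of nodes where $K$ equals $1$ is exactly the $\sigma$-image of the set where $J$ equals $1$; in particular, since $J(a)=1$ we get $K(b)=K(\sigma(a))=J(a)=1$, so $K\in\mathcal{P}_b$. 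Likewise define $K_1=J_1\circ\sigma^{-1}$, and $K_1\in\mathcal{P}_b$. Distinctness of $K$ and $K_1$ follows because $\sigma$ is a bijection and $J\neq J_1$.

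Next I would verify the isomorphism claim. Recall $G_J$ is the induced subgraph of $G$ on $V_J=\{v:J(v)=1\}$, and similarly $G_K$ on $V_K=\sigma(V_J)$. I claim the restriction $F:=\sigma|_{V_J}:V_J\to V_K$ is a graph isomorphism from $G_J$ onto $G_K$. It is a bijection onto $V_K$ by construction. For adjacency: two nodes $u,v\in V_J$ are adjacent in $G_J$ iff they are adjacent in $G$ (induced subgraph), iff $\sigma(u),\sigma(v)$ are adjacent in $G$ (since $\sigma$ is an automorphism), iff $\sigma(u),\sigma(v)$ are adjacent in $G_K$ (again induced subgraph, and $\sigma(u),\sigma(v)\in V_K$). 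Hence $F$ is an isomorphism $G_J\cong G_K$, and moreover $F(a)=\sigma(a)=b$, which is the extra requirement in the statement. The identical argument applied to $J_1$ gives an isomorphism $G_{J_1}\cong G_{K_1}$ with $a\mapsto b$.

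One subtlety I would be careful about: the statement asks for isomorphisms $G_K\cong G_J$ and $G_{K_1}\cong G_{J_1}$ "such that in the isomorphic function $F(a)=b$." A single $\sigma$ handles both vectors simultaneously, so the "$F(a)=b$" condition is met by the same underlying automorphism in both cases; this consistency is automatic here and is exactly what will be needed downstream when summing the $q_J$'s (via Assumption~\ref{assum:prob}) to relate $\alpha_a$ and $\alpha_b$ in the proof of Lemma~\ref{lm:identicalprob}. I do not expect a genuine obstacle in this observation; the only thing to watch is bookkeeping with domains of the restricted maps and making sure $\sigma^{-1}$ (also an automorphism) is used consistently so that $b\in V_K$ whenever $a\in V_J$. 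The mild care needed is purely notational, converting "renumbering of nodes" into the explicit conjugation $J\mapsto J\circ\sigma^{-1}$.
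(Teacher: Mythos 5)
Your proposal is correct and follows essentially the same route as the paper: both take the automorphism $\sigma$ with $\sigma(a)=b$ guaranteed by node symmetry, push the channel state vector forward so that the available set of $K$ is $\sigma(V_J)$, check adjacency preservation to get the isomorphism $G_J\cong G_K$ with $a\mapsto b$, and use bijectivity of $\sigma$ for distinctness of $K$ and $K_1$. The explicit formula $K=J\circ\sigma^{-1}$ is just a cleaner notation for what the paper describes verbally.
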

\begin{proof}
We first show that for a channel state vector $J\in \mathcal{P}_a$ there exists a channel state vector $K\in \mathcal{P}_b$ such that $G_{K}$ is isomorphic to $G_{J}$ and in the isomorphic function $F(a)=b$. Since the graph is node symmetric, thus, there exists an automorphism $F(\cdot)$ (Definition~\ref{defn:automorphism}) such that $F(a)=b$.   Now consider the channel state vector $K$ where the channel is available only at nodes $F(a_1)$ if $a_1\in V(G_J)$. In the conflict graph representation of $G_{K}$, the set of edges are the edges incident on $F(a_1)$ where $a_1\in G_J$. Since $F(\cdot)$ is itself is an automorphism on $G$, thus  $F(a_1)$ and $F(a_2)$ are adjacent in $G_K$ if and only if $a_1$ and $a_2$ are adjacent in $G_J$. Hence, $F(\cdot)$ is an isomorphic mapping from $V(G_J)$ to $V(G_K)$ such that $F(a)=b$.   

Note that since $J$ is arbitrary, thus, if $J_1\in \mathcal{P}_a$, then, following the above procedure we obtain an isomorphic graph $G_{K_1}$ such that $F(a)=b$ in $G_{K_1}$. Since $F(\cdot)$ is automorphism and thus, $F(\cdot)$ is bijective. Thus, if  $J_1\neq J$, then, using the function $F(\cdot)$ we obtain a  channel state vector $K_1$ which is different from $K$. Also note that  $b\in V(G_{K_1})$ since $F(a)=b$ and $a\in V(G_{J_1})$. Hence, the result follows.
\end{proof}
Now, we show Lemma~\ref{lm:identicalprob}.
\begin{proof}
Consider any two nodes $a$ and $b$. Recall the definition of $\mathcal{P}_a$ (Definition \ref{defn:alpha_a}). First, note that $|\mathcal{P}_a|=|\mathcal{P}_b|=2^{|V|-1}$ since the channel state must be $1$ at node $a$ (node $b$, respv.) for every channel state vector in $\mathcal{P}_a$ ($\mathcal{P}_b$ respv.). Now, the probability that the channel state is $1$ at node $a$ is
\begin{align}
\beta_a=\sum_{J:J\in \mathcal{P}_a}q_J\nonumber
\end{align}
and the probability that the channel state is $1$ at node $b$ is
\begin{align}
\beta_b=\sum_{K:K\in \mathcal{P}_b}q_K\nonumber
\end{align}
Note that by Observation~\ref{obs:sameisomorphicpairs}, for distinct channel state vectors $J, J_1\in \mathcal{P}_a$ there exist distinct channel state vectors $K,K_1\in \mathcal{P}_b$ such that $G_{K}$, $G_{K_1}$ are isomorphic to $G_{J}$ and $G_{J_1}$ respectively. Also note that cardinalities of $\mathcal{P}_a$ and $\mathcal{P}_b$ are the same. Since $q_J=q_K$ whenever $G_J$ and $G_K$ are isomorphic to each other, thus, we obtain
\begin{align}
\beta_a=\beta_b
\end{align}
Hence, the result follows.
\end{proof}

\subsection{Symmetric NE strategy Profile}\label{sec:spsym}
We, first, obtain a symmetric NE strategy profile  (Theorem~\ref{thm:nodesymmetricne}). We then show that the NE strategy has an important structural difference compared to the NE strategy in the previous setting (Section~\ref{sec:same_channel_state}) where the channel state is the same across the network (Lemmas~\ref{lm:payoff_same}, \ref{thm:notsame}).

We first start with introducing a notation. Let $I_{max,J}$ be the set of maximum independent sets (i.e. independent sets of highest cardinalities) of  the graph $G_{J}$ .

{\em Strategy Profile ($\mathrm{SP}_{sym}$)}:  A primary selects each of the independent set within the set $I_{max,J}$ with probability $\dfrac{1}{|I_{max,J}|}$ and select other independent sets with probability $0$ at channel state vector $J$.

\begin{thm}\label{thm:nodesymmetricne}
The Strategy profile $\mathrm{SP}_{sym}$ is an NE strategy profile.
\end{thm}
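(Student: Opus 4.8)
The plan is to verify the one-shot deviation principle directly: fix a primary, say primary $1$, assume all other $l-1$ primaries use $\mathrm{SP}_{sym}$, and show that primary $1$ cannot strictly improve its expected payoff at any channel state vector $J$ by deviating from $\mathrm{SP}_{sym}$. By the separation result (Lemma~\ref{separation-s1}), once the independent-set selection probabilities $\{\beta_J(I)\}$ of the competitors are fixed, the best the deviating primary can do at each node $a$ it offers is to use the single-location NE penalty strategy $\phi_1(\cdot)$ with $\alpha_a$ in place of $q_1$, earning exactly $p_a - c = (f_1(v)-c)(1-w(\alpha_a))$ at that node (see \eqref{eq:ex_paya1s}); its payoff over a chosen independent set $I$ is then $\sum_{a\in I}(f_1(v)-c)(1-w(\alpha_a))$. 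So the whole problem reduces to a purely combinatorial claim: under $\mathrm{SP}_{sym}$, the quantity $\sum_{a\in I}\bigl(1-w(\alpha_a)\bigr)$ is maximized over all independent sets $I\in\mathcal{I}_J$ precisely by the maximum independent sets, and they all give the same value.

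The key steps, in order, are: (i) compute $\alpha_a$ under $\mathrm{SP}_{sym}$ and observe, via Lemma~\ref{lm:identicalprob} together with the node-symmetry/isomorphism argument already used in Observation~\ref{obs:sameisomorphicpairs}, that $\alpha_a$ is the same value, call it $\alpha$, for every node $a$ — the point being that in a node-symmetric graph every node sits in the same number of maximum independent sets of $G_J$ "on average" over the channel-state distribution, because an automorphism carrying $a$ to $b$ maps $(J, I_{max,J}\ni a)$ configurations bijectively onto $(K, I_{max,K}\ni b)$ configurations with $q_J=q_K$. Hence the per-node payoff $(f_1(v)-c)(1-w(\alpha))$ is a common constant across nodes. (ii) Given a constant per-node payoff, the payoff over an independent set $I$ is just $|I|\cdot(f_1(v)-c)(1-w(\alpha))$, which is strictly increasing in $|I|$ since $f_1(v)>c$ (from $f_1(L_1)>c$ in Section~\ref{sloneshot}) and $1-w(\alpha)>0$. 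Therefore the deviating primary's payoff is maximized exactly on $I_{max,J}$, and every maximum independent set yields the identical payoff. (iii) Conclude that $\mathrm{SP}_{sym}$, which randomizes uniformly over $I_{max,J}$ and uses the prescribed penalty distribution, is a best response to itself, i.e. an NE.

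The main obstacle is step (i): rigorously establishing that $\alpha_a$ is independent of $a$. One must be careful that $\alpha_a = \sum_{J\in\mathcal{P}_a} q_J \cdot \frac{|\{I\in I_{max,J}: a\in I\}|}{|I_{max,J}|}$, and show this equals $\alpha_b$ for every $b$. The clean way is to reuse the automorphism $F$ with $F(a)=b$ from Definition~\ref{defn:node_symmetric}: $F$ induces, for each $J\in\mathcal{P}_a$, a channel state vector $K\in\mathcal{P}_b$ with $G_K$ isomorphic to $G_J$ (Observation~\ref{obs:sameisomorphicpairs}), hence $q_J=q_K$ by Assumption~\ref{assum:prob}; moreover $F$ restricted to $V(G_J)$ is a graph isomorphism onto $V(G_K)$, so it carries $I_{max,J}$ bijectively onto $I_{max,K}$ and carries $\{I\in I_{max,J}:a\in I\}$ bijectively onto $\{I'\in I_{max,K}:b\in I'\}$, giving $|I_{max,J}|=|I_{max,K}|$ and equal incidence counts term by term. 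Summing over $J\in\mathcal{P}_a$ (which $F$ maps bijectively onto $\mathcal{P}_b$) yields $\alpha_a=\alpha_b$. A secondary technical point, worth a sentence, is that the deviating primary could in principle pick an independent set of $G$ that uses nodes where \emph{its own} channel is unavailable — but it cannot sell there, so such nodes contribute $0$ and restricting attention to $\mathcal{I}_J$ loses nothing; this makes the reduction to the combinatorial claim legitimate.
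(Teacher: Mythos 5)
Your proposal is correct and follows essentially the same route as the paper's proof: establish via the automorphism/isomorphism argument (Observation~\ref{obs:sameisomorphicpairs} plus Assumption~\ref{assum:prob}) that $\alpha_a$ is constant across nodes, deduce a common per-node maximum payoff $(f_1(v)-c)(1-w(\alpha_a))$, and conclude that the total payoff is maximized exactly on the maximum independent sets, over which $\mathrm{SP}_{sym}$ randomizes. The extra remark about nodes where the deviator's own channel is unavailable is harmless but unnecessary, since the strategy space already restricts selection to independent sets of $G_J$.
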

A primary only needs to find the maximum independent sets in order to find the NE strategy profile $SP_{sym}$.  In contrast to $SP_{sym}$, a primary may select an independent set which is not a maximum independent set  in the scenario where the channel state is identical across the locations (Theorems~\ref{thm:structure} and ~\ref{nemulti}).  Note that in $SP_{sym}$ a primary puts equal weight on each of the maximum independent sets in $G_{J}$. Hence, a primary {\em needs not communicate} with other primaries to obtain its strategy. Hence, $SP_{sym}$ is easy to implement.

\begin{lem}\label{lm:payoff_same}
Expected payoff at every node is the same under $SP_{sym}$.
\end{lem}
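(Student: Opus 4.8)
The plan is to leverage the node-symmetry of the conflict graph together with the isomorphism-invariance of the channel-state distribution (Assumption~\ref{assum:prob}) to show that, under $\mathrm{SP}_{sym}$, the node-selection probability $\alpha_a$ (Definition~\ref{defn:alpha_a}) is the same at every node $a$. Once that is established, the expected payoff at node $a$ is $p_a - c = (f_1(v)-c)(1-w(\alpha_a))$ by \eqref{eq:ex_paya1s}, and since this depends on $a$ only through $\alpha_a$, equality of the $\alpha_a$'s immediately gives equality of the expected payoffs. So the whole lemma reduces to proving $\alpha_a = \alpha_b$ for any two nodes $a, b$.

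**First I would** fix two nodes $a$ and $b$ and, using node-symmetry, pick an automorphism $F$ of $G$ with $F(a) = b$. The key claim is that $F$ sets up a bijection between the channel-state vectors in $\mathcal{P}_a$ and those in $\mathcal{P}_b$ that preserves all the relevant quantities. Given $J \in \mathcal{P}_a$, define $K = F(J)$ to be the channel-state vector that is available exactly at the nodes $F(v)$ for $v \in V(G_J)$; as in the proof of Observation~\ref{obs:sameisomorphicpairs}, $F$ restricts to an isomorphism $G_J \to G_K$ with $F(a) = b$, so $b \in V(G_K)$, i.e. $K \in \mathcal{P}_b$, and this map is a bijection $\mathcal{P}_a \to \mathcal{P}_b$. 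By Assumption~\ref{assum:prob}, $q_J = q_K$. Moreover, since $F$ is a graph isomorphism from $G_J$ to $G_K$, it carries the maximum independent sets of $G_J$ bijectively onto those of $G_K$ — in particular $|I_{max,J}| = |I_{max,K}|$, and $a$ lies in a maximum independent set $I$ of $G_J$ if and only if $b = F(a)$ lies in the maximum independent set $F(I)$ of $G_K$. Hence the number of maximum independent sets of $G_J$ containing $a$ equals the number of maximum independent sets of $G_K$ containing $b$.

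**Then I would** assemble these facts into the computation of $\alpha_a$. Under $\mathrm{SP}_{sym}$, the probability that a primary offers its channel at node $a$ when its state vector is $J$ is $\beta_J(a) := \#\{I \in I_{max,J} : a \in I\} \,/\, |I_{max,J}|$, so by \eqref{defn:nodeprob_general} (with the index $j$ suppressed since $n=1$),
\begin{align*}
\alpha_a = \sum_{J \in \mathcal{P}_a} q_J\, \beta_J(a) = \sum_{J \in \mathcal{P}_a} q_J\, \frac{\#\{I \in I_{max,J} : a \in I\}}{|I_{max,J}|}.
\end{align*}
Applying the bijection $J \mapsto K = F(J)$ and the three invariances just noted ($q_J = q_K$, $|I_{max,J}| = |I_{max,K}|$, and $\#\{I \in I_{max,J}: a\in I\} = \#\{I \in I_{max,K}: b\in I\}$), every summand is carried to the corresponding summand of the analogous sum for $b$, giving $\alpha_a = \sum_{K \in \mathcal{P}_b} q_K\, \beta_K(b) = \alpha_b$. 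Since $a, b$ were arbitrary, $\alpha_a$ is constant over $V$, and the conclusion follows from \eqref{eq:ex_paya1s}.

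**The main obstacle** is the bookkeeping around how an isomorphism $G_J \to G_K$ interacts with maximum independent sets and with the node being tracked: one must be careful that $F$ maps $I_{max,J}$ onto $I_{max,K}$ (true because isomorphisms preserve independence and cardinality, hence maximum cardinality), and that the "contains $a$" condition transfers cleanly to "contains $b$" (true because $F(a) = b$). A secondary subtlety is verifying that $J \mapsto F(J)$ is genuinely a bijection $\mathcal{P}_a \to \mathcal{P}_b$ rather than merely a map into it — this is where injectivity of $F$ and the matching cardinalities $|\mathcal{P}_a| = |\mathcal{P}_b| = 2^{|V|-1}$ (for finite $G$; in the infinite case one argues directly that $F^{-1}$ provides the inverse) are used, exactly as in Observation~\ref{obs:sameisomorphicpairs}. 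Everything else is routine substitution into \eqref{defn:nodeprob_general} and \eqref{eq:ex_paya1s}.
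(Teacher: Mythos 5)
Your proposal is correct and follows essentially the same route as the paper: the paper's own proof defers to parts (i)–(ii) of the proof of Theorem~\ref{thm:nodesymmetricne}, which establish $\alpha_a=\alpha_b$ via exactly the isomorphism bijection of Observation~\ref{obs:sameisomorphicpairs} (preserving $q_J$, $|I_{max,J}|$, and the count of maximum independent sets containing the tracked node) and then conclude via \eqref{eq:ex_paya1s}. Your write-up just makes the bijection and the three invariances explicit rather than citing the observation.
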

 Intuitively, since the graph is node symmetric, each node belongs to the same number of maximum independent sets, thus a channel is offered with the same probability at every node under $SP_{sym}$; thus, the expected payoff is the same at every node. {\em Since each node is selected with the same probability, hence there is an equity in secondary access of the channel amongst different nodes as opposed to that scenario where the channel state is identical across the network (Example 1)}. 

We show that  unlike in the scenario where the channel state is the same across the network (Theorem~\ref{uniquelinear}), the symmetric NE may not be unique in linear conflict graph in this setting.  
 \begin{lem}\label{thm:notsame}
There may exist infinitely many symmetric NEs in the linear conflict graph.
\end{lem}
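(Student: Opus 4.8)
The plan is to exhibit, for a suitable finite linear conflict graph, an entire one-parameter family of symmetric NE strategy profiles, thereby establishing non-uniqueness. I would work with the smallest instance that already carries the phenomenon: the linear graph on $3$ nodes, with $n=1$ and channel states i.i.d., channel available at each node with probability $q$. Here $G_J$ takes only a few shapes as $J$ ranges over $\mathcal{P}$: when all three nodes are available, $G_J$ is the path $1-2-3$ whose unique maximum independent set is $\{1,3\}$; when exactly two non-adjacent nodes are available ($J=(1,0,1)$) the graph is two isolated nodes; and the remaining cases are single edges or single nodes with trivial independent-set structure. By Lemma~\ref{separation-s1} and Theorem~\ref{singlelocation}, once the independent-set selection probabilities $\beta_J(\cdot)$ are fixed, the penalty strategy at each node is determined by the induced $\alpha_a$ via $\phi_1(\cdot)$; so the freedom, if any, lies entirely in how a primary distributes mass over independent sets at each $J$.

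The key observation is that at the channel state vector $J=(1,1,1)$ the path $1-2-3$ has $\{1,3\}$ as its unique \emph{maximum} independent set, but it also contains the \emph{maximal} independent set $\{2\}$; more relevantly, the middle node $2$ can be "traded" against the pair $\{1,3\}$ only at a payoff cost, whereas nodes $1$ and $3$ are symmetric. I would instead use the genuine degree of freedom: in $G_{(1,1,1)}$ consider mixing between the independent set $\{1,3\}$ and the independent set $\{1\}$ (or $\{3\}$) — no, that strictly lowers payoff. The real slack comes from node $2$: consider a symmetric profile in which, at $J=(1,1,1)$, a primary plays $\{1,3\}$ with probability $1-2\theta$ and each of $\{1\}$-augmented-by... — here I must be careful. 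The cleanest route is the one the paper itself hints at in Lemma~\ref{thm:notane}: the endpoints of a finite linear graph are structurally different from interior nodes, so the symmetry that forces uniqueness in Scenario~1 (Theorem~\ref{uniquelinear}, which crucially used the mean-valid partition and the \emph{same} conflict graph $G$ across all available states) breaks. I would therefore: (i) parametrize symmetric profiles by a scalar $\theta$ governing the split, at the states where $G_J$ is the full path, between the maximum independent set and a second independent set that happens to give the \emph{same} per-node induced probability $\alpha_a$ at the competition-relevant nodes; (ii) compute the resulting $\alpha_a$ for each node as an affine function of $\theta$; (iii) write the indifference/no-deviation conditions of Definition~\ref{defn:ne} using the payoff formula $(f_1(v)-c)(1-w(\alpha_a))$ together with $\phi_1(\cdot)$ from Lemma~\ref{lm:computation}; and (iv) show these conditions hold for all $\theta$ in a nondegenerate interval, not just one value. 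Because each such $\theta$ yields a distinct $\beta_J(\cdot)$, and because distinct $\beta_J(\cdot)$ here produce distinct strategy profiles (even when some induced $\alpha_a$ coincide, the profiles differ as measures over independent sets), this gives infinitely many symmetric NEs.

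I expect the main obstacle to be step (iv): verifying that the no-unilateral-deviation inequalities are preserved across a whole interval of $\theta$ rather than holding at an isolated point. Concretely, one must check that for every $\theta$ in the interval, no primary can profit by shifting mass onto some independent set that is currently played with probability zero — in particular onto an independent set that is not maximum. This requires bounding the best alternative payoff $P(I)$ for every independent set $I$ in every $G_J$ and showing it never exceeds the equilibrium payoff $p_a-c$ summed appropriately; the argument will parallel the deviation analysis in the proof of Theorem~\ref{nemulti} and in Theorem~\ref{thm:nodesymmetricne}, but now tracked as a function of $\theta$. A secondary, more bookkeeping-level obstacle is correctly enumerating the independent sets of each $G_J$ for the chosen small graph and confirming that the constructed family genuinely consists of \emph{symmetric} profiles (all primaries using the same $\theta$), so that the asymmetric-NE discussion of Section~\ref{sec:multipleNEs} is not what is being rediscovered. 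Once the $3$-node (or if needed, slightly larger finite path) witness is pinned down, the statement "there may exist infinitely many symmetric NEs in the linear conflict graph" follows by exhibiting this continuum.
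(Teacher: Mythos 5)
Your high-level strategy is the same as the paper's: exhibit a one-parameter family of symmetric profiles, each supported on maximum independent sets of the components of $G_J$, tuned so that the induced node-selection probabilities $\alpha_a$ coincide at every node; then equality of the per-node payoffs $(f_1(v)-c)(1-w(\alpha_a))$ makes every maximum independent set of every component a best response, so each member of the family is an NE, and distinct parameter values give distinct profiles. The paper carries this out on the \emph{four}-node path with $q=0.5$, leaving free the splits $r$ (at $|M_1|=2$) and $r_1$ (at $|M_1|=4$, between $\{1,3\}$ and $\{2,4\}$) subject to one linear constraint $2r+r_1=0.75$, which yields an uncountable family.

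The gap is in your choice of witness. On the three-node path the construction collapses to a single point rather than an interval. At $J=(1,1,1)$ the unique maximum independent set is $\{1,3\}$; since every node carries strictly positive expected payoff, any mass moved to $\{2\}$, $\{1\}$, or $\{3\}$ strictly loses a node's worth of payoff (absent the knife-edge coincidence $u_2=u_1+u_3$, which is itself one equation in one unknown and so gives no continuum either). Hence the only free parameters are the splits $r$ at $J=(1,1,0)$ and $s$ at $J=(0,1,1)$. Requiring $\alpha_1=\alpha_3$ forces $s=1-r$, and requiring $\alpha_1=\alpha_2$ then forces $r=\frac{1-2q}{3(1-q)}$: two equations, two unknowns, a unique solution. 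So step (iv) of your plan would fail — the admissible set of $\theta$ is a singleton, and no continuum of symmetric NEs is exhibited. Your parenthetical fallback ("slightly larger finite path") is in fact mandatory: you need a path long enough that the number of free split parameters (coming from even-length components that occur in several distinct channel-state vectors, and from the full path having more than one maximum independent set) exceeds the number of node-balance equations $\alpha_1=\cdots=\alpha_M$. The four-node path is the smallest instance where this happens, and pinning down the explicit linear relation among the free parameters (the analogue of $2r+r_1=0.75$) is the actual content of the proof.
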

The proof of the above lemma is algebraic and we relegate it to Appendix~\ref{sec:appendix_lineargraph}.


We also show in Appendix~\ref{sec:appendix_lineargraph} that symmetry in interference relations among the nodes is required for $SP_{sym}$ to be an NE.
\begin{lem}\label{thm:notane}
$SP_{sym}$ may not be an NE for a finite linear graph which is not a node symmetric graph. 
\end{lem}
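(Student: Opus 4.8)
The plan is to exhibit one explicit finite, non--node-symmetric linear conflict graph on which the construction underlying $\mathrm{SP}_{sym}$ fails the Nash condition of Definition~\ref{defn:ne}. I would take the path on three nodes $P_3$ (nodes $1-2-3$, edges $(1,2)$ and $(2,3)$): it is a linear conflict graph with $M=3$, and it is not node symmetric since node $2$ has degree $2$ while nodes $1$ and $3$ have degree $1$, and automorphisms preserve degrees (cf.\ Fig.~\ref{fig:automorphism}). For the channel statistics I would use the i.i.d.\ model of Section~\ref{sec:channel_statistic}: the channel is available independently at each node with probability $q\in(0,1)$, which satisfies Assumption~\ref{assum:prob}. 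Fix $l>m\ge 1$, so that $w(\cdot)$ is strictly increasing.

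First I would compute the node-offer probabilities $\alpha_a$ induced by $\mathrm{SP}_{sym}$, by running over the $2^3$ channel state vectors $J$, writing down $G_J$ and its maximum independent sets, and summing $\alpha_a=\sum_{J\in\mathcal P_a}q_J/|I_{max,J}|$ over those $J$ for which $a$ lies in some maximum independent set of $G_J$. The only cases with any content are: for $J=(1,1,0)$ and $J=(0,1,1)$ the graph $G_J$ is a single edge, so $I_{max,J}$ is the pair of singletons and the endpoint lying in it is offered with probability $\tfrac12$; and for $J=(1,0,1)$ and $J=(1,1,1)$ the unique maximum independent set is $\{1,3\}$, so there node $2$ is offered with probability $0$ while nodes $1,3$ are offered with probability $1$. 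Collecting the contributions gives
\[
\alpha_1=\alpha_3=q(1-q)^2+\tfrac{3}{2}q^2(1-q)+q^3,\qquad
\alpha_2=q(1-q)^2+q^2(1-q),
\]
so $\alpha_1-\alpha_2=\tfrac12 q^2(1-q)+q^3>0$; in words, the interior node is offered strictly less often than the endpoints, precisely because it is excluded from the maximum independent sets whenever the available subgraph is large enough. (Clearly $\alpha_1,\alpha_2\in(0,1)$.)

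Then I would exhibit a profitable unilateral deviation at the channel state vector $J=(1,1,0)$. Under $\mathrm{SP}_{sym}$, completed with the separation-consistent penalties forced by Lemma~\ref{separation-s1} (penalty $\phi_1(\cdot)$ with $\alpha_a$ in place of $q_1$ at node $a$), a primary randomizes equally over the singletons $\{1\}$ and $\{2\}$, so by Theorem~\ref{singlelocation} and \eqref{eq:ex_paya1s} its expected payoff at this state is $\tfrac12(f_1(v)-c)\big[(1-w(\alpha_1))+(1-w(\alpha_2))\big]$. Now let the primary instead offer at $\{2\}$ with probability $1$, using a best response penalty at node $2$; since this deviation does not alter the other primaries' play and per-node payoffs are linear in the deviating primary's penalty distribution, the single-location analysis of Section~\ref{sloneshot} gives it payoff exactly $(f_1(v)-c)(1-w(\alpha_2))$. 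Since $w(\cdot)$ is strictly increasing and $\alpha_2<\alpha_1$, we have $1-w(\alpha_2)>1-w(\alpha_1)$, hence
\[
(f_1(v)-c)(1-w(\alpha_2))>\tfrac12(f_1(v)-c)\big[(1-w(\alpha_1))+(1-w(\alpha_2))\big],
\]
a strict improvement, so $\mathrm{SP}_{sym}$ is not an NE on $P_3$.

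The steps are all elementary; the only point needing care --- and where I expect any real friction --- is the bookkeeping of the maximum independent sets over all channel configurations and verifying the strict inequality $\alpha_2<\alpha_1$, which is exactly the quantitative footprint of the missing node symmetry. The same argument applies verbatim to any finite path on at least three nodes (indeed to any finite non--node-symmetric linear graph): interior nodes are systematically under-offered by $\mathrm{SP}_{sym}$, so a primary can always profit by shifting probability mass toward them in a suitable channel state.
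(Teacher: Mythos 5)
Your proposal is correct and follows essentially the same route as the paper's own proof: compute the node-offer probabilities induced by $\mathrm{SP}_{sym}$, show the interior node satisfies $\alpha_2<\alpha_1$ because it is excluded from the maximum independent sets at the larger availability configurations, and then deviate at a state where the available component is a single edge and $\mathrm{SP}_{sym}$ splits $\tfrac12$--$\tfrac12$ between an over-offered and an under-offered node. The only difference is the instance (the paper uses the $4$-node linear graph with $q=0.5$; your $P_3$ with general $q$ is a slightly cleaner counterexample), so no substantive comparison is needed.
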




\subsubsection{Proof of Theorem~\ref{thm:nodesymmetricne}} \label{sec:appendix_nodesymmetricne}

We use Observation~\ref{obs:sameisomorphicpairs} stated in previous subsection (Section~\ref{sec:same_prob}).
Since the strategy profile is symmetric, it is enough to  show that primary $1$ does not have any incentive to deviate from $SP_{sym}$ when other primaries also select $SP_{sym}$. 

We first give an outline of the proof. First, we show that the maximum expected payoff attainable by primary $1$ is identical across the nodes using the node symmetric property and Assumption~\ref{assum:prob}.  Thus, it directly implies that primary $1$ will attain the maximum expected payoff by selecting a maximum independent set. Since $SP_{sym}$ only randomizes among the maximum independent sets, thus, primary $1$ will not have any incentive to deviate from $SP_{sym}$ which in turn proves Theorem~\ref{thm:policyuniqueness}. The details of the proof is given below.

In order to show Theorem~\ref{thm:nodesymmetricne} we show the following:

i) First, we show that the node selection probability $\alpha_{a}$ for a primary is identical when Assumption~\ref{assum:prob} is satisfied for each node under $SP_{sym}$ using Node symmetric graph and Observation~\ref{obs:sameisomorphicpairs}.

ii) Subsequently, we show that when all primaries other than primary $1$ select $SP_{sym}$, then the maximum expected payoff obtained by primary $1$ is identical across the nodes. 

iii) Finally, we show that primary $1$ does not have any incentive to deviate unilaterally from  $SP_{sym}$ which shows that $SP_{sym}$ is indeed an NE. 

 Part i): First, we introduce some notations. Let $I^{a}_{max,J}$ be the set of maximum independent sets of $G_{J}$ which contains node $a$. Note that the node $a$ can only be selected at a channel state vector $J$ if $J\in \mathcal{P}_a$ (Definition~\ref{defn:alpha_a}). 

Thus, under the strategy profile $SP_{sym}$ the node selection probability at node $a$ i.e. $\alpha_{a}$ is 
\begin{align}\label{eq:samenode_nodesym}
\alpha_{a}=\sum_{J\in \mathcal{P}_a}\dfrac{|I^{a}_{max, J}|}{|I_{max,J}|}q_J
 \end{align} 
Now, we show that $\alpha_a=\alpha_b$ where $b\neq a$. By Observation~\ref{obs:sameisomorphicpairs} for every $G_{J}$, there exists a distinct $G_{K}$ which is isomorphic to $G_{J}$ such that  in the isomorphic mapping $F(a)=b$.  Thus, $|I^{b}_{max,K}|=|I^{a}_{max,J}|$. Since $G_{J}$ and $G_{K}$ are isomorphic to each other thus $|I_{max,J}|=|I_{max,K}|$. Also note that, $q_J=q_K$ since $G_{J}$ is isomorphic to $G_{K}$ by Assumption~\ref{assum:prob}. Finally, note that the cardinalities of $\mathcal{P}_a$ and $\mathcal{P}_b$ are the same. Hence, $\alpha_{a}=\alpha_{b}$ for any two nodes $a,b\in V$ by (\ref{eq:samenode_nodesym}). Hence, the node selection probability is the same at every node. 

Part ii): When all the other primaries apart from primary $1$ selects $SP_{sym}$, then at node $a$ , the channel is offered for sale at node $a$ w.p. $\alpha_a$ by other primaries apart from primary $1$. Thus, by Theorem~\ref{singlelocation}, when all the other primaries select $SP_{sym}$, then the maximum  expected payoff obtained by primary $1$ at node $a$ is $(f_1(v)-c)(1-w(\alpha_a))$ (from (\ref{eq:ex_paya1s})). Moreover, by Theorem~\ref{singlelocation} the payoff is obtained by selecting any penalty within $[L_1,v]$.  Since $\alpha_a$\rq{}s are identical, hence, the maximum attainable expected payoff by primary $1$ is identical at each node.

Part iii): Consider a channel state vector $J$. Since the maximum attainable expected payoff at every node is identical, hence, primary $1$ can attain the total maximum expected payoff only by selecting a maximum independent set of $G_{J}$ when other primaries select the strategy $SP_{sym}$. Under $SP_{sym}$, primary $1$ randomizes among the maximum sets of $G_{J}$. Hence, the total expected payoff of primary $1$ is equal to the maximum expected payoff. Hence, primary $1$ does not have any incentive to deviate from $SP_{sym}$ when other primaries  select $SP_{sym}$. Thus, $SP_{sym}$ is an NE. \qed

\subsubsection{Proof of Lemma~\ref{lm:payoff_same}} Note that the proof of this result directly follows from part (ii) of the Theorem~\ref{thm:nodesymmetricne} where we have shown that primary $1$ will attain the same expected payoff at every node of the conflict graph if primary $1$ selects  $SP_{sym}$ when the other primaries also select strategy $SP_{sym}$.\qed

\subsection{Computational Complexity}\label{sec:computation}
In $SP_{sym}$ a primary needs to enumerate the maximum independent sets at a given channel state vector. In general, the number of maximum independent sets scales exponentially with the number of nodes.  But if a graph consists of disjoint components, then a primary can compute the maximum independent sets of each component and compute the strategy profile in each component  in parallel.   Hence, the size of the component will govern the computation time. 

 The conflict graph of a primary depends on the channel state vector which evolves randomly. Hence, the conflict graphs are random graphs.   Thus, it is important to find the average size of a component in a conflict graph which will govern the average computation time of maximum independent sets. In the following, we provide a bound on the expected  size of a component for some node symmetric graphs that arise in practice.   We also discuss how primaries can govern the component size using random sampling technique (selecting each node w.p. $p$) . Throughout this section, we consider that the channel states are I.I.D.  where the channel state is $1$ at a given location w.p. $q$. 

Let $\Delta$ be the degree of a node. We consider those node symmetric graphs where  $\Delta$ is finite. Nodes in most of the conflict graphs that we have discussed in Section~\ref{sec:nodesymmetricgraphs} have finite degrees. For example, in cyclic graph (of any size) $\Delta=2$, in infinite linear graph $\Delta=2$, in infinite square graph (Fig.~\ref{fig:linear_infinity}), $\Delta=4$ (Fig.~\ref{fig:random_square}), in infinite grid graph $\Delta=8$ (Fig.~\ref{fig:random_grid}), in infinite triangular graph $\Delta=6$ (Fig.~\ref{fig:random_hexacell}). 

We  find out the expected size of a component $C$ originating from node $a$ in a conflict graph $G_{J}$. Since the graph is a node symmetric graph, hence the expected size of a component originating from any other node will be the same.  Each node has an expected degree of $q\Delta$.   The component $C$ grows when $G_{J}$ contains neighbors of node $a$, the neighbors of the neighbors of node $a$ and so on. Thus, the growth of $C$  can be compared to the Galton-Watson branching process \cite{galton-watson} where each individual gives birth to $q\Delta$ number of children on average.  The only difference in our approach to the Galton-Watson process is that the number of nodes added each step may be smaller as some of the neighbors of a node may already be in $C$, thus, reducing the number of neighbors that can be added in $C$. Thus, the expected size of $C$ can be upper bounded by the expected umber of total descendants in Galton-Watson process \cite{galton-watson}. Hence,  the upper bound of expected size of $C$ is obtained from \cite{galton-watson}
\begin{lem}\label{lm:exp_comp}
$E(C)\leq \dfrac{1}{1-q\Delta}$ if $q\Delta<1$. 
\end{lem}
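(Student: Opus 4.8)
The plan is to formalize the branching-process domination that is already sketched in the text preceding the statement. Fix a node $a$ and expose the component $C$ containing $a$ in the random graph $G_J$ by a breadth-first (or depth-first) exploration: maintain a set of \emph{active} vertices (discovered but whose neighbourhoods have not yet been scanned), a set of \emph{explored} vertices, and start with $a$ active. At each step, pick an active vertex $u$, scan its $\Delta$ neighbours in $G$, and for each neighbour $w$ that is not already active or explored, include $w$ in $C$ (and mark it active) if and only if the channel state at $w$ is $1$, which happens independently with probability $q$. Move $u$ to the explored set. The process terminates when no active vertices remain, and $|C|$ equals the total number of vertices ever marked active.

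The key step is the coupling: I would define a Galton--Watson process $\{Z_k\}$ with offspring distribution $\mathrm{Binomial}(\Delta,q)$, so that each individual has mean number of children $q\Delta$, and exhibit a coupling under which the exploration process is dominated pointwise by this branching process. The domination holds because at every exploration step the number of \emph{new} vertices added to $C$ is at most the number of state-$1$ vertices among the $\le \Delta$ not-yet-seen neighbours of $u$; in the branching process the corresponding individual always gets a fresh $\mathrm{Binomial}(\Delta,q)$ batch of children with none ``wasted'' on collisions. Concretely one couples the $\Delta$ independent Bernoulli$(q)$ coin flips used at step $u$ with the $\Delta$ coin flips determining the children of the matched individual, so that each actual child of $u$ in $C$ corresponds to a distinct child in the tree; hence $|C| \le$ (total progeny of the Galton--Watson tree) almost surely under the coupling. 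Taking expectations gives $E(|C|) \le E(\text{total progeny})$.

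Finally I would invoke the standard fact about subcritical Galton--Watson processes: if the mean offspring number is $\mu = q\Delta < 1$, the expected total progeny (counting the root) is $\sum_{k\ge 0}\mu^k = \tfrac{1}{1-\mu} = \tfrac{1}{1-q\Delta}$; this is exactly the reference \cite{galton-watson}. Combining with the domination inequality yields $E(C) \le \tfrac{1}{1-q\Delta}$, which is the claim. I would also note that node symmetry is what makes the bound independent of the choice of $a$, so it bounds the expected size of the component from \emph{every} node. The main obstacle, and the only place needing care, is making the coupling argument rigorous — in particular arguing that ``collisions'' (neighbours already discovered) can only \emph{decrease} the number of offspring relative to the branching process, so the domination is in the right direction; the rest is the textbook subcritical-progeny computation.
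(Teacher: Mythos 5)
Your proposal is correct and follows essentially the same route as the paper, which itself only sketches the domination of the exploration process by a Galton--Watson process with mean offspring $q\Delta$ and then cites the subcritical total-progeny formula $\sum_{k\ge 0}(q\Delta)^k = \tfrac{1}{1-q\Delta}$. Your version merely makes the coupling explicit, which the paper leaves informal.
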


A primary can not control $q$, hence, the component size (and thus, the computational complexity) can be large for higher $q$. Thus, a primary may estimate its channel quality only at a subset of the locations of the region instead of the whole region  and sell its channel only among the locations where it knows the transmission quality in order to minimize the computation cost. Equivalently, a primary will consider that the channel state is $0$ at locations where it does not estimate its channel quality. In one simplistic setting which we consider,  each primary computes the transmission quality at a node w.p. $p$ independent of the other nodes.  A primary does not know the nodes where its competitors are estimating their channel states. But , it knows $p$. Thus, a primary is aware of the fact that the channel state is $1$ at any given node of its competitor w.p. $pq$ independent of the channel states at other locations. Thus, the probability distribution satisfies Assumption~\ref{assum:prob}. {\em Hence, the strategy profile $SP_{sym}$ will be a symmetric NE strategy profile in this setting where the channel states are I.I.D. and  the channel is in state $1$  at a given location w.p. $pq$  instead of $q$.}   Thus, from Lemma~\ref{lm:exp_comp} the expected size of component is now upper bounded by
\begin{align}\label{eq:exp_component}
E(C)\leq \dfrac{1}{1-pq\Delta} \quad \text{if } pq\Delta<1
\end{align}
Note that the above procedure also decreases the measurement and estimation cost, since a primary only estimates the channel states at a randomly selected subset of locations.  


Note that the right hand side in (\ref{eq:exp_component}) increases as $pq\Delta$ increases. If a primary selects lower (higher, respv.) $p$ the expected component size will decrease (increase, respv.), and thus, the computation complexity will decrease (increase, respv.); The bound in (\ref{eq:exp_component}) also decreases (increases, respv.). However, the expected payoff of a primary  will also decrease (increase, respv.)  with decrease in $p$ (increase, respv.) since the number of nodes where a primary can potentially sell its channel also decreases (increases, respv).  Hence, a primary has to judiciously select $p$ in order to achieve a desired   trade-off between the computation complexity, and the expected payoff. 
\begin{figure*}
\begin{minipage}{.49\linewidth}
\begin{center}
\includegraphics[width=90mm,height=40mm]{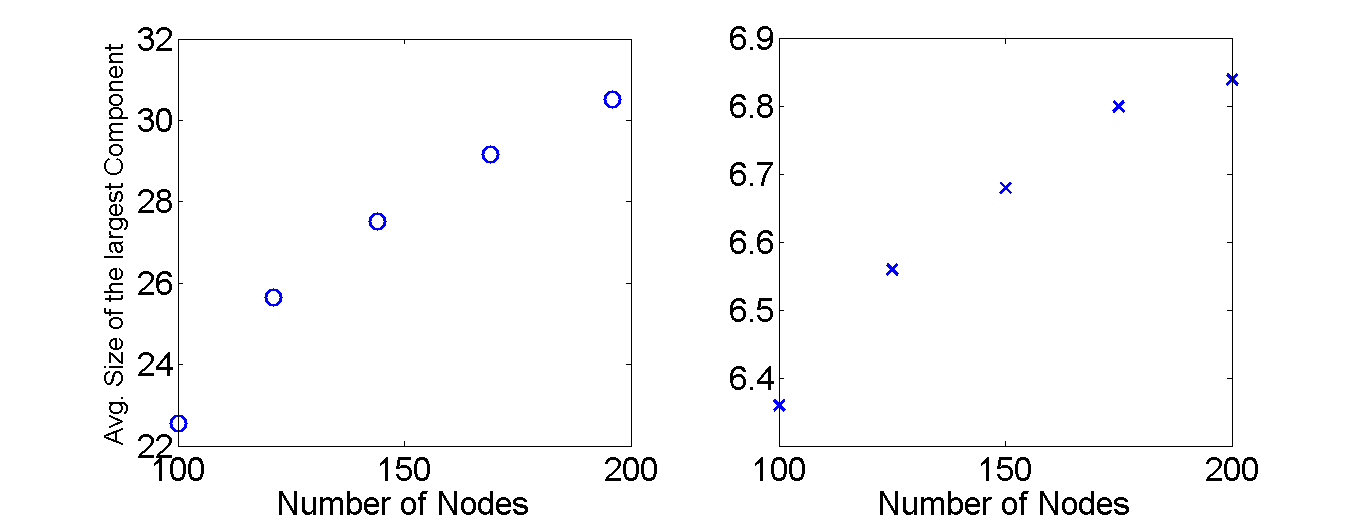}
\caption{\small Mean size of the largest component in a Square Graph and Linear graph  for $pq=0.5$. The square graph consists of a $j$ rows and columns. We vary $j$ while we increase the number of nodes.}
\label{fig:component_nodes}
\vspace{-0.3cm}
\end{center}
\end{minipage}\hfill
\begin{minipage}{0.49\linewidth}
\begin{center}
\includegraphics[width=90mm, height=40mm]{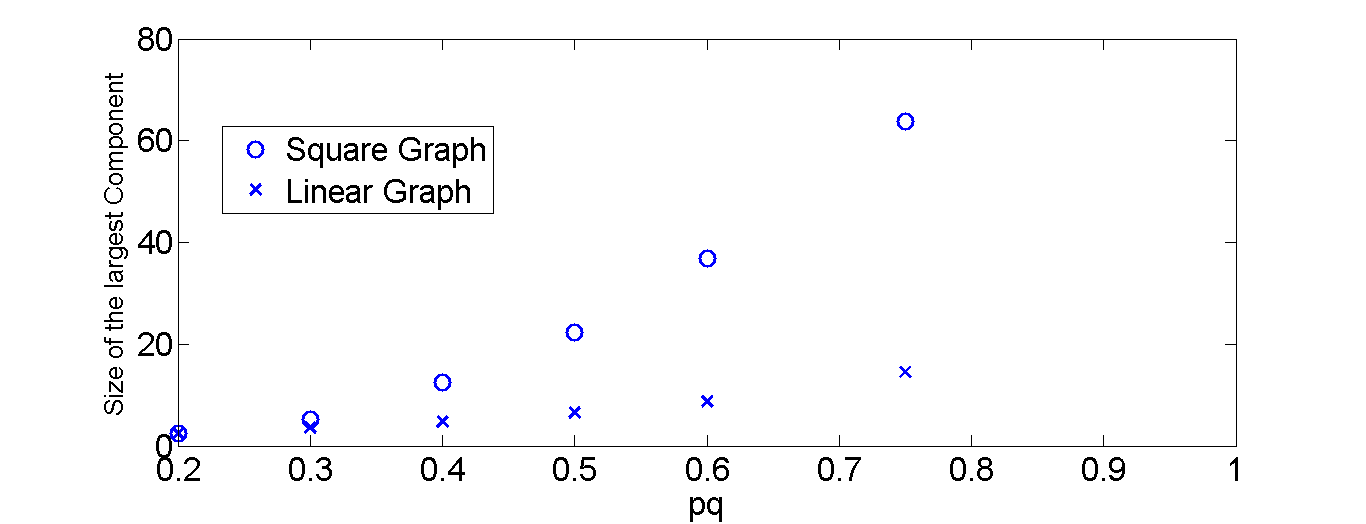}
\caption{\small Mean size of the  largest component in a Square Graph and Linear graph with $100$ nodes.}
\label{fig:prob}
\vspace{-0.3cm}
\end{center}
\end{minipage}
\end{figure*}

We now empirically investigate the variation of the mean size of the largest component with the number of nodes and the parameter $pq$. For each value of $pq$, we generate a certain number of random graphs. We compute the average of the largest component over $25$ samples. The number $25$ has been chosen since the average converges in $25$ samples.   Figure ~\ref{fig:component_nodes} shows  the variation of the mean size of the largest component  as the number of nodes increases.  Figure~\ref{fig:component_nodes} reveals that the growth of the average size of the largest component in a square graph is linear (not exponential) with the number of nodes whereas the growth of the mean size of the largest component in a linear graph is very slow with the number of nodes.  Additionally, when $pq=0.5$, the upper bound in (\ref{eq:exp_component}) is infinite both for square and linear graph, however, Fig.~\ref{fig:component_nodes} shows that the expected size of the largest component is moderate in the square graph as well as in the linear graph even when the number of nodes are large. Fig.~\ref{fig:prob} reveals that when $pq$ is exceeds a threshold the mean size of a largest component increases substantially in both linear conflict graph and square conflict graph.   However, Fig.~\ref{fig:prob} reveals that the upper bound computed in (\ref{eq:exp_component}) is loose. For example, when $0.25\leq pq\leq 0.6$, the upper bound in (\ref{eq:exp_component}) is infinite for square graph, however, Fig.~\ref{fig:prob} shows that the average size of the largest component is moderate. In the linear graph, the mean size of the largest component is small even when $0.5\leq pq\leq  0.75$ whereas the upper bound computed in (\ref{eq:exp_component}) is infinite when $0.5\leq pq\leq 0.75$.  

\section{Random Demand}\label{sec:random_demand}
Till now we have assumed that the number of secondaries ($m$) is constant at each node. But our analysis will readily generalize to the scenario where the number of secondaries at a given location is $Z\geq 1$ where $Z$ is a random variable independent of the number of secondaries at other locations with an additional assumption the p.m.f. $\Pr(Z=m)=\kappa_m$ must satisfy the condition $\sum_{m=1}^{l-1}\kappa_m>0$ (i.e. the total number of primaries exceeds the total number of secondaries with positive probability but not w.p. $1$). A primary does not know $Z$ apriori, however, it knows the p.m.f of $Z$. The analysis will go through with the following modifications in (\ref{d4})
\begin{align}
w(x)& =\sum_{m=1}^{\min\{\max(Z),l-1\}}\kappa_m\sum_{i=m}^{l-1}\dbinom{l-1}{i}x^i(1-x)^{l-i-1}
\end{align}
\section{Numerical Evaluations}\label{sec:numerical}
We numerically study the impact of competition on the payoffs of the primaries in the scenarios which we consider.  Towards that end, we compare the payoff under the symmetric NE strategy, $R_{M,NE}$, with the maximum possible value of social welfare, $R_{OPT}$ which is obtained when all the primaries collude.

\begin{align}
R_{M,NE}=\text{Number of Primaries }\times \text{Expected payoff of each primary }\nonumber
\end{align}
\begin{defn}\label{efficiency}
The efficiency of  NE is the ratio of the total expected payoff of primaries and the optimal value of social welfare ($R_{OPT}$). 
\end{defn}
In other words, efficiency ($\eta_M$)$=\dfrac{l\cdot R_{M,NE}}{R_{M,OPT}}$.

Fig. ~\ref{fig:effi} shows the variation of efficiency with the number of secondaries ($m$) in the scenario where the channel state remains the same throughout the network. Fig.~\ref{fig:effi_diff} shows the variation of efficiency with $m$ in the scenario where the channel state can be different at different locations. Both the figures reveal  that $\eta_M$ increases with increase in $m$. This is because when $m$ is low, competition becomes intense and  primaries  select lower penalties. Primaries also select independent sets of lower cardinalities when the channel state is the same at each location in Fig.~\ref{fig:effi}. But if they collude with each other, they still can offer highest penalty and only select the independent sets of the largest cardinalities in both of the settings, which lead to high payoff.  
%
%
\begin{figure*}
\begin{minipage}{0.49\linewidth}
\begin{center}
\includegraphics[width=90mm, height=30mm]{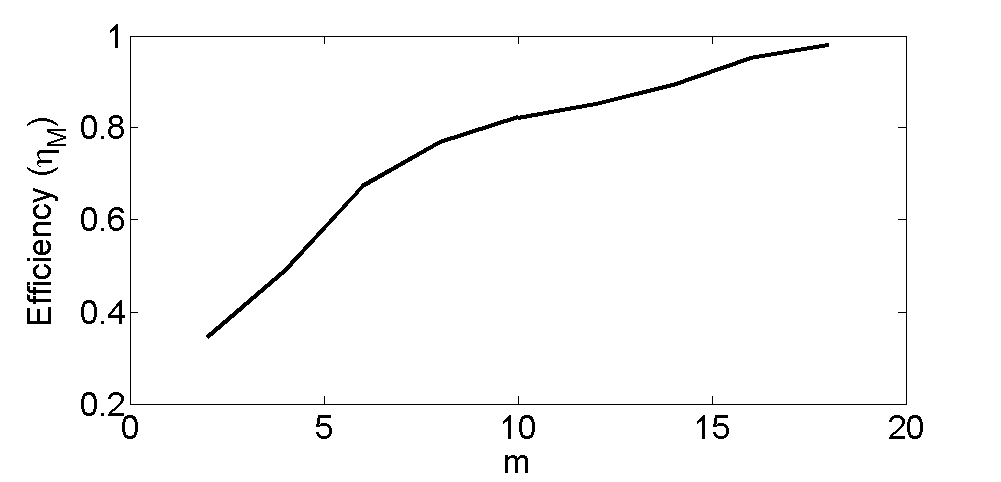}
\caption{\small This figure shows the variation of efficiency with $m$. We consider a $5\times 5$ grid graph (see Fig. ~\ref{fig:grid}). This is a mean valid graph with $d=4$ and $|I_1|=9, |I_2|=|I_3|=6, |I_4|=4$. We use the following parameter values, $l=20, n=3, v=100, c=1$, $g_i(x)=x^2-i^3$, $q_1=q_2=q_3=0.2$. }
\label{fig:effi}
\vspace{-0.5cm}
\end{center}
\end{minipage}\hfill
\begin{minipage}{0.49\linewidth}
\begin{center}
\includegraphics[width=90mm, height=30mm]{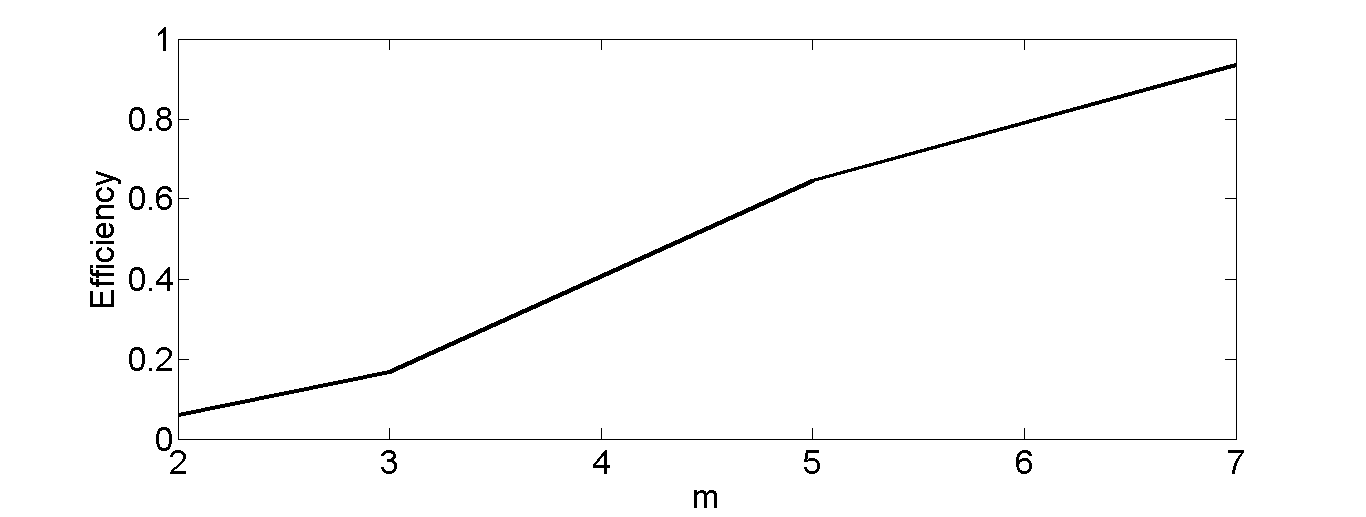}
\caption{\small This figure shows the variation of efficiency with $m$ when the channel states are independent and identically distributed at each location with $q_i=0.5, i=0,1$. We consider a cyclic graph of  100 We use the following parameter values: $l=10, v=11,c=1$.}
\label{fig:effi_diff}
\vspace{-0.5cm}
\end{center}
\end{minipage}
\end{figure*}

\section{Conclusions and Future Work}
We have studied a price competition model with the spatial reuse property where each primary selects a price and a set of non-interfering locations depending on the quality of its channel. We have considered two settings. In the first setting, we consider that the channel state is the same across the networks. We have shown that there exists a symmetric NE strategy profile in the class of mean valid graphs and we have computed a storage and computational efficient NE.  The NE strategy profile can be readily implemented as primaries need not communicate with each other even when the NE strategy is not unique. We show that the symmetric NE strategy profile is unique in a linear conflict graph. 

In the second setting, we allow that the channel state can be different at different locations. The above consideration significantly complicates the analysis as we have discussed in Section~\ref{sec:specific_assumptions}. We, therefore, consider that  the channel is either available or unavailable at each node. We have shown that there exists a symmetric NE strategy profile in the class of node symmetric graphs. In order to obtain the symmetric NE strategy, a primary only needs to enumerate the maximum independent sets. The NE strategy is also easy to implement. We have shown that symmetric NE strategy may not be  {\em unique}  in a linear conflict graph in contrast to the first setting.

The characterization of an NE in the second setting where the available channel may belong to multiple states remains open. The analytical results and tools that we have provided in this paper may provide the basis for developing a framework for this problem.
\appendix
\vspace{-0.1cm}  
We prove  Theorem~\ref{dist:existence}  (Section~\ref{sec:existencemultiplenodes}) in Appendix~\ref{sec:proof_dist_existence}. Subsequently, we show Theorem~\ref{nemulti} (Section~\ref{sec:existencemultiplenodes}) in Appendix~\ref{sec:proof_nemulti}. We show  Lemmas~\ref{thm:equalcardinality} and \ref{thm:disjointunequal} used in Section~\ref{sec:symmetricNEunique} to prove Theorem~\ref{thm:policyuniqueness} in Appendix~\ref{sec:proof_policyuniqueness}. We show Theorem~\ref{uniquelinear} (Section~\ref{sec:symmetricNEunique}) in Appendix~\ref{sec:appendix_uniquelinear}.   We prove Lemmas~\ref{thm:notsame} and \ref{thm:notane} (Section~\ref{sec:spsym}) in Section~\ref{sec:appendix_lineargraph}. Finally, in Appendix~\ref{sec:mrf} we provide  a formal definition of Markov Random Field and show that Markov random field modeling of correlated channel states satisfy Assumption~\ref{assum:prob} (Section~\ref{sec:channel_statistic}) under some additional assumptions which naturally arise in practice.

\subsection{Proof of Theorem~\ref{dist:existence}(Section~\ref{sec:existencemultiplenodes})}\label{sec:proof_dist_existence}
We proceed in two parts.  First, we will prove that there exists a distribution $t_{j}=(t_{1,j},\ldots,t_{d,j})$ which satisfies (\ref{dist}) and (\ref{ne2}). Subsequently, we will prove that such a distribution is the unique one. 

\textbf{Existence}:
First, we will show that the statement is true for $j=n$. Now, let $x\in [M_1W(q_n),M_1]$ and $s\in\{1,\ldots,d\}$. We will show that if $x\leq M_s$ then
\begin{align}
M_sW(rq_n)=x
\end{align} has a unique solution in $r$, which we will denote as $t_{s,n}(x)$. Let, $h(t_{s,n})=M_sW(t_{s,n}q_n)$, then
\begin{align}\label{s30}
h(1)& =M_sW(q_n)\nonumber\\
& \leq M_1W(q_n)
\leq x
\end{align}
and
\begin{align}\label{s31}
h(0)& =M_s
\geq x.
\end{align}
As $W(\cdot)$ is strictly decreasing and continuous, so is $h(\cdot)$, thus from (\ref{s30}) and (\ref{s31}), there exists a unique solution $t_{s,n}(x)$ between 0 and 1, such that $h(t_{s,n})=x$. Note that
\begin{align}\label{cont1}
t_{s,n}(x)=0 \quad (\text{if}\quad x=M_s).
\end{align}

$h(t_{s,n})$ is strictly decreasing in $0\leq t_{s,n}\leq 1$. Hence, inverse exists and $h^{-1}$ is also continuous as $h$ is continuous. But, $x=h(t_{s,n}(x))$. Hence, $t_{s,n}(x)=h^{-1}(x)$. Thus, $t_{s,n}(x)$ is continuous for $x\leq M_s$. For $x>M_s$, define $t_{s,n}(x)=0$. With the above definition and (\ref{cont1}) we obtain $t_{s,n}(x)$ is continuous function on $[M_1W(q_n),M_1]$ and thus
\begin{align}\label{cont1a}
t_{s,n}(x)=0 \quad(\text{if } x\geq M_s).
\end{align}
Now, let
\begin{align}
T_n(x)=\sum_{s=1}^{d}t_{s,n}(x).
\end{align}
As $h(t_{s,n})$ is strictly decreasing on $0\leq t_{s,n}\leq 1$ for $s\in \{1,\ldots,d\}$, $t_{s,n}(x)$ is strictly decreasing for $x\leq M_s$. Hence, $T_n(x)$ is strictly decreasing for $x\in [M_1W(q_n),M_1]$. Also, note that, $t_{s,n}(x)=0$ for $M_s<x\leq M_1$. Thus, for $x=M_1$, $t_{s,n}(x)=0$ $\forall s$, as $M_s\leq M_1$, hence for $x=M_1$, 
\begin{align}\label{lint}
T_n(x)=0
\end{align}
 Now, for $x=M_1W(q_n)$, $t_{1,n}(x)=1$, $t_{s,n}(x)\geq 0$ $s\in \{2,\ldots,d\}$, thus
\begin{align}\label{rint}
T_n(x)\geq 1.
\end{align}
As $t_{s,n}(x)$ are continuous, so is $T_n(x)$. Thus, from(\ref{lint}), (\ref{rint}) and intermediate value property, there exists a $x^{*}\in [M_1W(q_n),M_1]$ such that $T_n(x^{*})=1$ and this is unique as $T_n(\cdot)$ is strictly decreasing. Let, $d'_n=\max\{s:M_s> x^{*}\}$. By definition of $t_{s,n}$, for $s=1,\ldots,d'_n$,  $M_sW(t_{s,n}(x^{*})q_n)=x^{*}$ $t_{s,n}(x^{*})>0$ and for $s> d'_n$, $t_{s,n}(x^{*})=0$ (by (\ref{cont1a})). Since $\gamma_{s,n}=t_{s,n}q_n$ (from (\ref{recurg})) and $W(0)=1$, thus $M_sW(\gamma_{s,n})=x^{*}$ for $s\leq d\rq{}_n$ and for $s>d\rq{}_n$ $M_sW(\gamma_{s,n})\leq x^{*}$. Hence, $\{t_{1,n}(x^{*}),\ldots,t_{d,n}(x^{*})\}$ constitute a probability distribution and satisfy the equations (\ref{dist}) and (\ref{ne2}), with $d_n=d'_n$ and $\gamma_{s,n+1}=0$.  
Thus, the result is true for $n$.

Let, the statement be true for $k+1$, we have to show that the statement is indeed true for $k$. As the statement is true for $k+1$, thus, there exists unique distribution $t_{k+1}=(t_{1,k+1},\ldots,t_{d,k+1})$ such that (\ref{dist}) and (\ref{ne2}) holds for $j=k+1$.
 The argument will be similar to the case when $i=n$ with finding unique solution to the equation $M_sW(t_{s,k}(x)q_k+\gamma_{s,k+1})=x$ for $x\in [ M_1W(q_k+\gamma_{1,k+1}),M_1W(\gamma_{1,k+1})]$ for $x>M_sW(\gamma_{s,k+1})$ and making $t_{s,k}=0$ for $x\geq M_sW(\gamma_{s,k+1})$. Hence we omit the proof. Thus, the result is true by the principle of mathematical induction.\qed

\textbf{Uniqueness}:
We will prove the uniqueness by Induction hypothesis. First, consider the state $n$.\\
To reach a contradiction, assume that there exists $e,f\in \{1,\ldots,d\}$ such that $t'_{n}=(t'_{1,n},\ldots,t'_{d,n})$, $\bar{t}_n=(\bar{t}_{1,n},\ldots,\bar{t}_{d,n})$, $t'_{s,n}=0$ (respectively, $\bar{t}_{s,n}=0$) for $s>e$ (respectively $s>f$) and for some $y$ and $z$:
\begin{align}
y=& M_1W(t'_{1,n}q_n)=\ldots=M_eW(t'_{e,n}q_n) \geq M_{e+1}W(t^{\prime}_{e+1,n}q_n)\label{e4}\\
z=&M_1W(\bar{t}_{1,n}q_n)=\ldots=M_fW(\bar{t}_{f,n}q_n)\geq   M_{f+1}W(\bar{t}_{f+1q_n})\label{e5}.
\end{align}
First, suppose $e=f$, if $y=z$, then $M_sW(t'_{s,n}q_n)=M_sW(\bar{t}_{s,n}q_n)$ for $s\in\{1,\ldots,e\}$. But, $W(\cdot)$ is a strictly decreasing and one-to-one mapping, thus $t'_{s,n}=\bar{t}_{s,n}$ for $s\in\{1,\ldots,e\}$ and $t'_{s,n}=0=\bar{t}_{s,n}$ for $s>e$, which leads to a contradiction.

If $e=f$, but $y>z$, then $M_sW(t'_{s,n}q_n)>M_sW(\bar{t}_{s,n}q_n)$ for $s\in\{1,\ldots,e\}$. As $W(\cdot)$ is strictly decreasing function, hence we must have $t'_{s,n}<\bar{t}_{s,n}$. Now, $t^{\prime}_{s,n}=0$ for $s>e$. Thus, 
\begin{align}
\sum_{s=1}^{d}t'_{s,n}=\sum_{s=1}^{e}t^{\prime}_{s,n}<\sum_{s=1}^{d}\bar{t}_{s,n}=1.\nonumber
\end{align}The above inequality lea leads to a contradiction. Thus $y>z$ is not possible, by symmetry, $z>y$ is not possible.

Now, suppose $e>f$, thus $t\rq{}_{f+1,n}>0$. Since $W(\cdot)$ is strictly decreasing function, thus 
\begin{align}\label{u5}
M_{f+1}W(t\rq{}_{f+1,n}q_n)<M_{f+1}.
\end{align}
Since $\bar{t}_{f+1,n}=0$, thus
\begin{align}\label{eq:equality}
M_{f+1}W(\gamma_{f+1,n})=M_{f+1}.
\end{align}
Thus  from (\ref{e4}), (\ref{e5}), (\ref{eq:equality}) and (\ref{u5}), $y=M_{f+1}W(t'_{f+1,n}q_n)< M_{f+1}\leq z$. So, for $s\in\{1,\ldots,f\}$:
\begin{align}
M_sW(t'_{s,n}q_n)<M_sW(\bar{t}_{s,n}q_n).\nonumber
\end{align}
Hence, $t'_{s,n}>\bar{t}_{s,n}$, thus, $\sum_{s=1}^{f}t'_{s,n}>\sum_{s=1}^{f}\bar{t}_{s,n}=1$, which leads to a contradiction. Hence, $e>f$ is not possible, by symmetry, $e<f$ is not possible.\\ 
Thus, the result is true for $n$.

Now, assume that the statement is true for states $k+1,\ldots,n$. Since, the statement is true for states $k+1,\ldots,n$,  thus $t\rq{}_{s,j}=\bar{t}_{s,j}$ $\forall s$, $\forall j\geq k+1$. Hence,$\gamma\rq{}_{s,j}=\bar{\gamma}_{s,j}$ $\forall s$,$\forall j\geq k+1$. From (\ref{recurg}), $\gamma_{s,k}=\gamma_{s,k+1}+t_{s,k}q_k$. As $\gamma\rq{}_{s,k+1}=\bar{\gamma}_{s,k+1}$, the proof will be similar to the case when state is $n$.  \\
The result follows from the induction hypothesis.\qed

\subsection{Proof of Theorem~\ref{nemulti}(Section~\ref{sec:existencemultiplenodes})}\label{sec:proof_nemulti}
 In order to prove Theorem~\ref{nemulti} we have to show that any independent set selection strategy of the form (\ref{dist}) and (\ref{ne2}) is an NE. We can not use results derived in Section~\ref{sec:structuremultiplenodes} which we derive assuming that the strategy profile is an NE. However, we use Lemma~\ref{prop}, and Corollary~\ref{obs1} that any independent set selection strategy profile of the form (\ref{dist}) and (\ref{ne2}) satisfies regardless of whether it is an NE or not.  We also use the following result which can be easily seen from from Lemma~\ref{prop}.
Since $d_j\geq d_{j+1}$ (by Lemma~\ref{prop}), hence from (\ref{dist}), $t_{s,k}=0$ $\forall s>d_j, k\geq j$. Thus, from (\ref{recurg}) we obtain
\begin{align}\label{eq:gammazero}
\gamma_{s,j}=0 \quad\text{for } s>d_j.
\end{align}
Hence, we can write (\ref{ne2}) as 
\begin{align}\label{ne}
M_1W(\gamma_{1,j})=\ldots=M_{d_j}W(\gamma_{d_j,j})\geq M_{d_{j}+1}
\geq M_{d_{j}+2}\geq \ldots\geq M_d.
\end{align}
Now we state and prove Lemmas~\ref{obs3a} and \ref{obs3} which are satisfied by any strategy profile of the form (\ref{dist}) and (\ref{ne2}). We use these results to prove Theorem~\ref{nemulti}.
\begin{lem}\label{obs3a}
$U_{s,j}=U_{k,j}$ if $t_{s,j}, t_{k,j}>0$
\end{lem}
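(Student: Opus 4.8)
The plan is to prove Lemma~\ref{obs3a} by closely mimicking the argument used for Lemma~\ref{upend}, but being careful that here we may \emph{not} invoke any property that presupposes the strategy profile is an NE. So I would work entirely from the defining relations of the form~(\ref{dist}) and~(\ref{ne2}), together with the recursive definitions~(\ref{n152a}), (\ref{n152up}), (\ref{n152inv}) of $U_{s,j}$, $p_{s,j}$, $L_{s,j}$ and the function $\gamma_{s,j}$ from~(\ref{recurg}). The key algebraic facts I would establish first (or cite from the already-proved material) are: (i) $\gamma_{s,j}\le \sum_i q_i<1$, so $f_j(U_{s,j})>c$ and $f_j(L_{s,j})>c$ for all relevant $s,j$ (this is the analogue of~(\ref{pay0})); (ii) Observations~\ref{identity1} and~\ref{identity2}, which only use the recursions and not the NE property, and in particular $U_{s,j}=v$ iff $t_{s,k}=0$ for all $k<j$, and $U_{s,j}=L_{s,k}$ iff $t_{s,i}=0$ for all $k<i<j$; and (iii) the expression~(\ref{expayofflowerendpointstate}) for the payoff attained at a lower endpoint $L_{s,i}$, which again is purely a computation from the penalty distribution $\phi_j$ of Lemma~\ref{lm:computation}.

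The main argument proceeds by contradiction. Suppose $t_{s,j}>0$, $t_{k,j}>0$ but $U_{s,j}\ne U_{k,j}$; without loss of generality $U_{s,j}>U_{k,j}$. Then $U_{k,j}<v$, so by Observation~\ref{identity2} there is some index $m<j$ with $t_{k,m}>0$ and $L_{k,m}=U_{k,j}$. Here is the one place where I must be careful: in the proof of Lemma~\ref{upend} one writes the payoff at $I_s$ and $I_r$ in terms of $P^*_j$, which is an equilibrium quantity. Instead I would phrase everything in terms of the concrete payoff values $M_s(f_j(U_{s,j})-c)W(\gamma_{s,j})$ that the strategy profile~(\ref{dist})--(\ref{ne2}) actually produces. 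From~(\ref{ne2}) with both $s,k\le d_j$ (which holds since $t_{s,j},t_{k,j}>0$) we get $M_sW(\gamma_{s,j})=M_kW(\gamma_{k,j})$. The trick is to compare the payoff a primary would get at channel state $m$ by deviating to offer its channel at a node of $I_s$ at the penalty $U_{s,j}=L_{s,j-1}$: by~(\ref{expayofflowerendpointstate}) this equals $(f_m(U_{s,j})-c)W(\gamma_{s,j})$ per node, hence $M_s(f_m(U_{s,j})-c)W(\gamma_{s,j})$ over the whole set. Similarly at channel state $m$ the primary's actual payoff at $I_k$ (where $t_{k,m}>0$) is $M_k(f_m(U_{k,m})-c)W(\gamma_{k,m}) = M_k(f_m(L_{k,m})-c)W(\gamma_{k,m+1})$ using~(\ref{n152up})--(\ref{n152inv}). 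Then I would run exactly the chain of inequalities from the proof of Lemma~\ref{upend}: bound the $I_k$-at-state-$j$ payoff below using the $I_k$-at-state-$m$ identity, feed in the $I_s$-at-state-$m$ comparison, and finally apply Assumption~\ref{assum1} (inequality~(\ref{con1})) with $j>m$ and $U_{s,j}>U_{k,j}=L_{k,m}$ to conclude that offering at $I_k$ at state $j$ yields strictly more than $M_k(f_j(U_{k,j})-c)W(\gamma_{k,j})$, contradicting the equality~(\ref{ne2}) $M_kW(\gamma_{k,j})=M_sW(\gamma_{s,j})$ together with the definition of $U_{k,j}$ as the relevant endpoint.

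The part I expect to be the real obstacle is making the logic self-contained without circularity: the proof of Lemma~\ref{upend} is allowed to speak of $P^*_j$ and Definition~\ref{maxexpayoff}, but for Lemma~\ref{obs3a} I am only permitted the structural relations~(\ref{dist})--(\ref{ne2}), Lemma~\ref{prop}, Corollary~\ref{obs1}, and the single-location results; I must therefore re-derive the ``no profitable deviation to a lower endpoint'' bound directly from the payoff formulas rather than from the equilibrium characterization. Concretely I anticipate needing a small lemma (or an inline argument) stating that under any profile of the form~(\ref{dist})--(\ref{ne2}) the per-node payoff attainable at $I_s$ at state $m$ is at most $M_s(f_m(U_{s,m})-c)W(\gamma_{s,m})$-type quantities, which follows because a deviating primary, facing the fixed penalty distributions $\phi_i(\cdot)$ of the \emph{other} primaries at each node of $I_s$, is exactly in the single-location situation of Theorem~\ref{singlelocation} with $\gamma$-parameters determined by~(\ref{dist}); the contiguity and monotonicity of the $\phi_i$ supports then give the needed endpoint comparison. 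Once that is in place, the rest is the routine $f$-ratio manipulation plus one application of~(\ref{con1}), and the contradiction closes the proof.
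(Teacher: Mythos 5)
Your plan to transplant the contradiction argument of Lemma~\ref{upend} breaks down at exactly the joint you need most: the step that connects the two independent sets at the lower state $m$. In the proof of Lemma~\ref{upend} that bridge is supplied by the equilibrium quantity $P^{*}_m$ --- the deviation payoff at $I_s$ at state $m$ is bounded above by $P^{*}_m$, and the realized payoff at $I_k$ at state $m$ \emph{equals} $P^{*}_m$ because $k\in B_m$ under an NE. Once you strip away Definition~\ref{maxexpayoff}, your replacement bridge has to be $M_s(p_{s,m}-c)\le M_k(p_{k,m}-c)$, i.e. $M_s(f_m(U_{s,m})-c)W(\gamma_{s,m})\le M_k(f_m(U_{k,m})-c)W(\gamma_{k,m})$. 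Equation~(\ref{ne2}) at state $m$ only gives $M_sW(\gamma_{s,m})=M_kW(\gamma_{k,m})$; to finish you need $U_{s,m}=U_{k,m}$, which is the statement of the lemma at the lower state $m<j$. So the argument as written is circular. Your ``small lemma'' (the single-node best-response bound from Theorem~\ref{singlelocation}) is indeed available and non-circular, but it only controls the $I_s$ side; it does not relate $I_s$ to $I_k$, so you have misplaced where the real obstacle sits. The only escape is to run an induction on the channel state so that $U_{s,m}=U_{k,m}$ is available as an induction hypothesis, and your proposal never sets that up. (A smaller point: the final contradiction would be with the single-node optimality of the penalty distribution at $I_k$ --- the attainable payoff at state $j$ cannot exceed $p_{k,j}-c$ --- not with the equality in~(\ref{ne2}) itself.)

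Once you accept that an induction on $j$ is unavoidable, the entire deviation machinery becomes superfluous, and that is what the paper does. Base case: $U_{s,1}=v$ for all $s$ by~(\ref{n152up}). Inductive step: $t_{r,i+1},t_{r_1,i+1}>0$ forces $t_{r,i},t_{r_1,i}>0$ by Corollary~\ref{obs1}; the induction hypothesis gives $U_{r,i}=U_{r_1,i}$ and hence $(p_{r,i}-c)/W(\gamma_{r,i})=(p_{r_1,i}-c)/W(\gamma_{r_1,i})$ by~(\ref{n152up}); the two instances of~(\ref{ne2}) at states $i$ and $i+1$ give $W(\gamma_{r,i})/W(\gamma_{r_1,i})=M_{r_1}/M_r=W(\gamma_{r,i+1})/W(\gamma_{r_1,i+1})$, which converts the previous identity into $(p_{r,i}-c)/W(\gamma_{r,i+1})=(p_{r_1,i}-c)/W(\gamma_{r_1,i+1})$, i.e. $U_{r,i+1}=L_{r,i}=L_{r_1,i}=U_{r_1,i+1}$ by~(\ref{n152inv}). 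No appeal to Assumption~\ref{assum1}, no payoff comparison, no contradiction. I recommend abandoning the Lemma~\ref{upend} template here and writing this direct algebraic induction instead.
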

\begin{proof}
We prove the statement using induction argument.\\
The statement is trivially true for $j=1$ because $U_{s,1}= v$ $\forall s$ by (\ref{n152up}). 

Now, suppose the statement is true for $j=i$. Then, for any $s,k\in \{1,\ldots,d\}$, $t_{s,i}>0, t_{k,i}>0$, we have
\begin{align}\label{n60}
U_{s,i}=U_{k,i}
\end{align}
Now, let $t_{r,i+1}>0, t_{r_1,i+1}>0$ for $r,r_1\in \{1,\ldots,d\}$.  Note that  $L_{r,i}=U_{r,i+1}$ and $L_{r_1,i}=U_{r_1,i+1}$ from (\ref{n152up}). Thus, from (\ref{n152inv}),
\begin{eqnarray}
U_{r,i+1}=L_{r,i}=g_i(\dfrac{p_{r,i}-c}{W(\gamma_{r,i+1})}+c)\label{s99}\\
U_{r_1,i+1}=L_{r_1,i}=g_i(\dfrac{p_{r_1,i}-c}{W(\gamma_{r_1,i+1})}+c)\label{s100}
\end{eqnarray}
From corollary~\ref{obs1} $t_{r,i}>0, t_{r_1,i}>0$ since $t_{r,i+1}, t_{r_1,i+1}>0$. Using (\ref{n60}) for $r,r_1$  we have $U_{r,i}=U_{r_1,i}$, hence from (\ref{n152up})
\begin{align}\label{n61}
 \dfrac{p_{r,i}-c}{W(\gamma_{r,i})}& =\dfrac{p_{r_1,i}-c}{W(\gamma_{r_1,i})}
\end{align}
Now, since $t_{r,i+1}>0, t_{r_1,i+1}>0, t_{r,i}>0, t_{r_1,i}>0$ thus  $r,r_1\leq d_{i+1}\leq d_i$ (the last inequality follows from lemma~\ref{prop}). Hence,using (\ref{ne2}) for $r,r_1$ we obtain
\begin{align}
& M_rW(\gamma_{r,i})=M_{r_1}W(\gamma_{r,i})\label{n62}\\
& M_rW(\gamma_{r,i+1})=M_{r_1}W(\gamma_{r_1,i+1})\label{n63}
\end{align}
Thus, from (\ref{n61}), (\ref{n62}) and (\ref{n63}), we obtain
\begin{align}
& \dfrac{p_{r,i}-c}{W(\gamma_{r,i+1})}=\dfrac{p_{r_1,i}-c}{W(\gamma_{r_1,i+1})}\nonumber\\
& U_{r,i+1}=U_{r_1,i+1}\quad(\text{from (\ref{s99}) and (\ref{s100})})\nonumber
\end{align}
Hence, $U_{r,i+1}=U_{r_1,i+1}$. The result follows from the induction hypothesis.
\end{proof}
\begin{rmk}
 \textbf{Henceforth we denote  $U_{s,j}$ as $U_j$  when $t_{s,j}>0$ i.e. $s\leq d_j$}. Note that we have obtained similar result (Lemma~\ref{upend}) for any NE strategy profile of the form (\ref{dist1}). But Lemma~\ref{obs3a} is valid for any strategy profile of the form (\ref{dist}) and (\ref{ne2}) satisfies regardless of whether it is an NE or not.
\end{rmk}
\begin{lem}\label{obs3}
If $j\geq 2$,then for $d_k\geq s>d_j$, $L_{s,k}\geq U_j$, where $k<j$. 
\end{lem}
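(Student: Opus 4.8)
The plan is to fix an index $s$ with $d_j<s\le d_k$ and $k<j$ (so $t_{s,k}>0$ but $t_{s,j}=0$), and to split into the case $s\le d_{k+1}$ and the case $s>d_{k+1}$. A preliminary observation used in both cases is that $U_i$ is non-increasing in $i$: for any $s'\le d_{i+1}\le d_i$ (the second inequality by Lemma~\ref{prop}) we have $U_{i+1}=U_{s',i+1}=L_{s',i}\le U_{s',i}=U_i$, using Lemma~\ref{obs3a} twice together with the fact that $L_{s',i}\le U_{s',i}$ (which follows from $\gamma_{s',i}\ge\gamma_{s',i+1}$, Observation~\ref{identity1}, and (\ref{n152up}) and (\ref{n152inv})).

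In the case $s\le d_{k+1}$ we have $t_{s,k+1}>0$, so $L_{s,k}=U_{s,k+1}=U_{k+1}$ by (\ref{n152up}) and Lemma~\ref{obs3a}; since $k+1\le j$ and $U_\cdot$ is non-increasing, $L_{s,k}=U_{k+1}\ge U_j$, which is the claim.

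In the case $s>d_{k+1}$, Corollary~\ref{obs1} gives $t_{s,i}=0$ for every $i\ge k+1$, hence $\gamma_{s,k+1}=0$; then (\ref{n152inv}) and (\ref{n152a}), with $U_{s,k}=U_k$ from Lemma~\ref{obs3a}, give $f_k(L_{s,k})=(f_k(U_k)-c)\,W(\gamma_{s,k})+c$. Pick any $s''\le d_{k+1}$ (such $s''$ exists, and $d_{k+1}<d$ because $s>d_{k+1}$) and apply the same computation to $I_{s''}$ to get $f_k(U_{k+1})-c=(f_k(U_k)-c)\,W(\gamma_{s'',k})/W(\gamma_{s'',k+1})$. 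Since $U_j\le U_{k+1}$ (as $U_\cdot$ is non-increasing and $k+1\le j$) and $f_k$ is strictly increasing, it suffices to prove $W(\gamma_{s,k})\ge W(\gamma_{s'',k})/W(\gamma_{s'',k+1})$. Substituting $M_sW(\gamma_{s,k})=M_{s''}W(\gamma_{s'',k})$ — valid since $s,s''\le d_k$, by (\ref{ne}) at state $k$ — this reduces to $M_{s''}W(\gamma_{s'',k+1})\ge M_s$, and (\ref{ne}) at state $k+1$ gives $M_{s''}W(\gamma_{s'',k+1})=M_{d_{k+1}}W(\gamma_{d_{k+1},k+1})\ge M_{d_{k+1}+1}\ge M_s$, the last inequality because $s\ge d_{k+1}+1$ and $M_1\ge\dots\ge M_d$ by (\ref{eq:orderedcardinality}). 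All divisions are legitimate since $W(\cdot)>0$ on $[0,1)$ and $f_k(U_k)>c$ by (\ref{pay0}).

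The main obstacle is conceptual rather than computational: because Lemma~\ref{obs3} is used to \emph{prove} that the profiles (\ref{dist}) and (\ref{ne2}) are NEs, we may not use the equilibrium apparatus of Section~\ref{sec:structuremultiplenodes} — the quantities $P_j^{*}$, $B_j$, and Lemmas~\ref{upend} through \ref{lm:strictpayoff} — and in particular the slick contradiction proof of the analogous Lemma~\ref{lowerendpoint}, which hinges on $P_k^{*}$ being a genuine maximum, is off-limits. One therefore has to replace it by the direct chain of identities above, whose heart is the realization that its one nontrivial inequality, $M_{s''}W(\gamma_{s'',k+1})\ge M_s$, is exactly one of the ordering relations already encoded in (\ref{ne2}).
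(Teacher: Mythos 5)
Your proof is correct and takes essentially the same route as the paper's: in your nontrivial case ($s>d_{k+1}$) the index $k$ coincides with the paper's maximal index $i=\max\{y<j:\ t_{s,y}>0\}$, and your key inequality $M_{s''}W(\gamma_{s'',k+1})\ge M_s$ is exactly the paper's step (\ref{p16})--(\ref{p17}) with $I_{s''}$ in place of $I_1$. The only difference is organizational: you dispose of the case $s\le d_{k+1}$ immediately via $L_{s,k}=U_{k+1}\ge U_j$, whereas the paper absorbs it into the reduction $L_{s,k}\ge L_{s,i}$.
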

\begin{proof}
 Let $i=\max\{y\in\{1,\ldots,j-1\}:t_{s,y}>0\}$ . Thus, $t_{s,i}>0$, $t_{s,i+1}=0$, and $s>d_{i+1}$. 
 Since $t_{1,j}>0$ $\forall j$ by corollary~\ref{obs1}, thus $U_{k}>U_{i}$, (or $U_{1,k}>U_{1,i}$) if $k<i$. So, it is enough to show that $L_{s,i}\geq U_{i+1}$ because $i+1\leq j$ and thus $U_{i+1}\geq U_j$.  
 
Since $s>d_{i+1}$, thus $s>d_{k}$ $\forall k>i$ by Lemma~\ref{prop}. Thus, $t_{s,k}=0$ $\forall k>i$, thus $\gamma_{s,i+1}=0$ (from observation~\ref{identity1}) and thus $W(\gamma_{s,i+1})=1$. Thus, from (\ref{n152inv})
\begin{align}\label{p14}
L_{s,i}& =g_i(p_{s,i}-c+c)\nonumber\\
& =g_i((f_i(U_i)-c)W(\gamma_{s,i})+c)\quad (\text{from} (\ref{n152a}))
\end{align}
Since $L_{1,i}=U_{i+1}$ and $p_{1,i}-c=(f_i(U_i)-c)W(\gamma_{1,i})$ from (\ref{n152a}); hence, from (\ref{n152inv})
\begin{align}\label{p15}
L_{1,i}=U_{i+1}=g_i(\dfrac{(f_i(U_i)-c)W(\gamma_{1,i})}{W(\gamma_{1,i+1})}+c)
\end{align}
Since $s>d_{i+1}$, hence using (\ref{ne}) for state $i+1$, we obtain
\begin{align}\label{p16}
M_1W(\gamma_{1,i+1})\geq M_s
\end{align}
Again using (\ref{ne}) for state $i$ and noting that $1,s\leq d_i$, we obtain
\begin{align}\label{p17}
M_1W(\gamma_{1,i})& =M_sW(\gamma_{s,i})\nonumber\\
\dfrac{W(\gamma_{1,i})}{W(\gamma_{s,i})}& =\dfrac{M_s}{M_1}
\leq W(\gamma_{1,i+1})\quad(\text{from} (\ref{p16}))
\end{align}
Since $g_i(\cdot)$ is strictly increasing, in order to show that $U_i\leq L_{s,i}$, from (\ref{p14}) and (\ref{p15}) it is sufficient to show the following
\begin{align}\label{p18}
\dfrac{(f_i(U_i)-c)W(\gamma_{1,i})}{W(\gamma_{1,i+1})}\leq (f_i(U_i)-c)W(\gamma_{s,i})
\end{align}
Since $f_i(U_i)>c$, (\ref{p18}) readily follows from (\ref{p17}).  
\end{proof}

Now we are ready to show Theorem~\ref{nemulti}.

\textit{Proof of Theorem~\ref{nemulti}}:
We will show that for channel state $j\in\{1,\ldots,n\}$, probability distribution $t_{s,j}$ as described in (\ref{dist}) and (\ref{ne2})  for $s\in\{1,\ldots,d\}$ is a best response.
\begin{enumerate}
\item First, we will show that under the strategy profile for $s\leq d_j$, at any independent set $I_s$,maximum  expected payoff is given by $P^{*}_j$ (equation (\ref{c61}))  and the maximum value is obtained at each  penalty value in the interval $[L_{s,j},U_j]$ at every node of $I_s$. (Case i)
\item Next, we will show that for any choice of penalty a primary can only attain at most an expected payoff of $P^{*}_j$ at any $I_s$, $s>d_j$ (Case ii and Case iii).
\item Finally, we will show that if a primary selects any other independent set i.e. apart from $I_1,\ldots, I_d$ then its expected payoff is upper bounded by $P^{*}_j$ for any choice of penalty (Case iv).
 \end{enumerate} 
\textit{Case i}:At independent set $I_s,s\leq d_j$.\\
In this case $t_{s,j}>0$. From Theorem~\ref{singlelocation}, Lemma~\ref{separation-s1} and (\ref{n152a}) at a node $D\in I_s$, $s\leq d_j$,  a primary gets a maximum payoff of $p_{s,j}-c$ when the channel state is $j$. Since $s\leq d_j$, hence $U_{s,j}=U_j$. Thus,
 \begin{align}
 p_{s,j}-c=(f_j(U_j)-c)W(\gamma_{s,j})\quad(\text{from } (\ref{n152a})).\nonumber
 \end{align}
Thus, the expected payoff that a primary obtains when channel state is $j$, at independent set $I_s$, 
\begin{align}
M_s(f_j(U_j)-c)W(\gamma_{s,j})=P^*_{j} \quad (\text{from} (\ref{c61})).\nonumber
\end{align}
Hence, payoff at each node of independent set $I_s, s\leq d_j$ is 
\begin{align}\label{npay}
(f_j(U_j)-c)W(\gamma_{s,j})=p_{s,j}-c=\dfrac{P^{*}_j}{M_s}.
\end{align}
From Theorem \ref{singlelocation}, the best response penalty set is $[L_{s,j},U_j]$ under $t_{s,k}$ $k= 1,\ldots,n$ at node $D$. Thus, for any $x\notin[U_{j},L_{s,j}]$, payoff is atmost equal to (\ref{npay}). This completes case i.

\textit{Case ii}: At independent set $I_s$, where$d_1\geq s>d_j$.

Note from Lemma~\ref{prop} that $d_{i}\geq d_{i+1}$. Thus, in this case we must have $k=\max \{i\in\{1,\ldots,j-1\}:s\leq d_i\}$. As $s\leq d_1$, hence this case arises only when $j\geq 2$. So, we have $L_{s,k}\geq U_j, j>k$ from Lemma \ref{obs3}. 

Now, $\gamma_{s,k+1}=0$ as $s>d_{k+1}$ from (\ref{eq:gammazero}) and thus, $W(\gamma_{s,k+1})=1$. 
Thus, the expected payoff to a primary at $L_{s,k}$, when the channel state is $j$, is $(f_j(L_{s,k})-c)$. So, any penalty less than $L_{s,k}$ will fetch a strictly lower payoff compared to penalty $L_{s,k}$ at any node at $I_s$.   Hence, it is enough to show that if a primary chooses penalty in the interval $[L_{s,k},v]$ at a node of $I_s$, then its payoff will be strictly less than $P^{*}_j/M_s$. 

If $x\in [L_{s,k},v]$ then $x$ must belong to interval $[L_{s,r},L_{s,r-1}]$ for some $r\leq k$, with $L_{s,0}=v$. Without loss of generality, we can assume that $x\in [L_{s,i},L_{s,i-1}]$ where $i\leq k$ . From Corollary~\ref{obs1} $t_{s,i}>0$, since $t_{s,k}>0$; thus $x$ is a best penalty response for channel state $i$ by Theorem~\ref{singlelocation}. The expected payoff to a primary, when it selects penalty $x$ at channel state $i$ at a node $D\in I_s$, is given by
\begin{align}
(f_i(x)-c)P(A)=p_{s,i}-c=(f_i(U_i)-c) W(\gamma_{s,i})
\end{align}
where, $P(A)$ is the probability of winning, when a primary selects penalty $x$ at channel state $i$, at any node $D\in I_s$.

Since $s\leq d_i$, thus from (\ref{npay}) we obtain for state $i$
\begin{align}\label{g7}
P^{*}_i=M_s(f_i(U_i)-c)W(\gamma_{s,i})=M_s(f_i(x)-c)P(A).
\end{align}
Since $x\geq L_{s,i}$ and $f_i(L_{s,i})>c$ from (\ref{pay0}), thus $f_i(x)>c$. Since probability of winning only depends on the penalty selected by a primary, thus, when a primary selects penalty $x$ at node $D\in I_s$, at channel state $j$, its expected payoff is
\begin{align}\label{s4}
(f_j(x)-c)P(A)=\dfrac{P^{*}_i}{M_s}\dfrac{f_j(x)-c}{f_i(x)-c}\quad (\text{from} (\ref{g7})).
\end{align}
Since $1\leq d_i$, thus expected payoff at any node of $I_1$ at channel state $i$ is given by (\ref{n152inv}) and (\ref{npay})
\begin{align}\label{s6}
p_{1,i}-c=(f_i(L_{1,i})-c)W(\gamma_{1,i+1})=\dfrac{P^{*}_i}{M_1}.
\end{align}
Again since $1\leq d_j$, thus, at any node of independent set $I_1$, maximum expected payoff obtained by a primary at channel state $j$ is $P^{*}_j/M_1$ as given in (\ref{npay}). 
Expected payoff at $L_{1,i}$ at channel state $j$ is
\begin{align}\label{s7}
(f_{j}(L_{1,i})-c)W(\gamma_{1,i+1}) 
\leq \dfrac{P^{*}_{j}}{M_1}.
\end{align}
If $s\leq d_{i+1}$, then $L_{s,i}=U_{i+1}$; on the other hand if $s>d_{i+1}$, then by Lemma~\ref{obs3}, $L_{s,i}\geq U_{i+1}$. Hence, $x\geq L_{s,i}\geq U_{i+1}$. Also, note that $i<j$ by definition of $i$. Since $L_{1,i}=U_{i+1}$ (as $1\leq d_{i+1}$); hence, at penalty $x$, at channel state $j$ and at a node $D\in I_s$,  expected payoff to a primary is
\begin{align}\label{c99}
& \leq \dfrac{P^{*}_j}{M_s}\dfrac{(f_j(x)-c)(f_i(U_{i+1})-c)}{(f_j(U_{i+1})-c)(f_i(x)-c)}\quad (\text{from } (\ref{s6}) \& (\ref{s7}))\nonumber\\
& \leq \dfrac{P^{*}_j}{M_s} \quad (\text{by (\ref{con1}) as } x\geq U_{i+1}, i<j, f_i(U_{i+1})>c ).
\end{align}
\textit{Case iii}: $s>d_1$.

We have from (\ref{ne})
\begin{align}\label{s52}
M_1W(\gamma_{1,1})=\ldots=M_{d_1}W(\gamma_{d_1,1})\geq M_{s}\quad s>d_1.
\end{align}
Since $1\leq d_j$ thus the maximum expected payoff at $v$ is upper bounded by $P^{*}_j/M_1$ by (\ref{npay}). But, expected payoff to a primary at channel state $j$ at $v$ at any node of $I_1$ is 
\begin{align}\label{s59}
(f_j(v)-c)W(\gamma_{1,1})\leq \dfrac{P^{*}_j}{M_1}.
\end{align}
Since the expected payoff a primary can attain at a node is at most $f_j(v)-c$ at channel state $j$. Thus, a primary\rq{}s expected payoff at  any node $D\in I_s$ is always upper bounded by
\begin{align}\label{c100} 
f_j(v)-c& \leq \dfrac{M_1}{M_s}(f_j(v)-c)W(\gamma_{1,1})\quad(\text{from} (\ref{s52}))\nonumber\\
& \leq \dfrac{P^{*}_j}{M_s}\quad (\text{from } (\ref{s59})).
\end{align}

\textit{Case iv}: At any independent set other than $I_1,\ldots,I_d$:\\
From (\ref{npay}), (\ref{c99}) and (\ref{c100}), at any node at independent set $I_s, s>d_j$, we obtain that  maximum expected payoff a primary can obtain for state $j$-
\begin{align}\label{s8}
\leq \dfrac{P^{*}_j}{M_s}.
\end{align}
 The graph we have considered, is a d-partite graph (Section~\ref{sec:meanvalidgraph}) . Now, consider an independent set $I$  which contains $m_s(I)$ number of nodes from $I_s, s=1,\ldots, d$.  Then at channel state $j$, expected payoff at independent set $I$ is sum of all payoffs at all the nodes contained in $I$. Hence, from (\ref{s8})
\begin{align}
\text{Expected Payoff at $I$}& \leq \sum_{s=1}^{d}\dfrac{P^{*}_j}{M_s}m_s(I)\nonumber\\
& =P^{*}_j\sum_{s=1}^{d}\dfrac{m_s(I)}{M_s}\nonumber\\
& \leq P_j^{*}\quad (\text{from} (\ref{mvg})).\nonumber
\end{align}
Thus, at any independent set $I$, expected payoff to a primary at channel state $j$ is at most $P^*_j$ for any selection of penalty. From case (i) a primary attains $P^*_j$ at $I_s, s\leq d_j$ following the strategy profile. Hence, the result follows. 
\qed

\subsection{Proof of  Lemmas~\ref{thm:equalcardinality} and \ref{thm:disjointunequal}}\label{sec:proof_policyuniqueness}
Throughout this section we use $|I_i|$ and $M_i, i\in \{1,\ldots,d\}$ ($|\bar{I}_i|$ and $\bar{M}_i, i=1,\ldots,\bar{d}$ respv.) interchangeably. 
\subsubsection{Proof of Lemma~\ref{thm:equalcardinality}}
First we show that $M_1=\bar{M}_1$. 

Let $M_1\neq \bar{M}_1$. Without loss of generality assume that $M_1>\bar{M}_1$. Let $\bar{I}_1$ consists of $m_s(\bar{I}_1)$ number of nodes from $I_s$. Then
\begin{align}
\sum_{s=1}^{d}\dfrac{m_s(\bar{I}_1)}{M_s}\geq \sum_{s=1}^{d}\dfrac{m_s(\bar{I}_1)}{M_1}=\dfrac{\bar{M}_1}{M_1}>1.
\end{align}
which contradicts (\ref{mvg}).

Suppose that $M_j\neq \bar{M}_j$ for some smallest index $j\in \{2,\ldots,d\}$. Without loss of generality, we assume that $M_j<\bar{M}_j$. By the definition of $j$, $M_k=\bar{M}_k$ for $k< j$, thus $\sum_{k=1}^{j-1}M_k=\sum_{k=1}^{j-1}\bar{M}_k$. Note that 
\begin{align}
M_{j-1}=\bar{M}_{j-1}\geq \bar{M}_j>M_j.
\end{align}
We consider two possible scenarios:

\textit{Case i}: $\bar{I}_k, k\in \{1,\ldots,j-1\}$ does not contain node from $I_s$ $s\geq j$. 

Since $\sum_{k=1}^{j-1}|\bar{I}_k|=\sum_{k=1}^{j=1}\bar{M}_k=\sum_{k=1}^{j-1}M_k$, thus, $\bar{I}_j$ must consist of nodes of only $I_s,s\geq j$. Let $\bar{I}_j$ consist of $m_s(\bar{I}_j)$ nodes of $I_s$. Then,
\begin{align}
\sum_{k=j}^{d}\dfrac{m_k(\bar{I}_j)}{M_k}\geq \sum_{k=j}^{d}\dfrac{m_k(\bar{I}_j)}{M_j}
= \dfrac{|\bar{I}_j|}{|I_j|}>1
\end{align}
which is not possible by (\ref{mvg}).

\text{Case ii}:$\bar{I}_k$ contains at least one node from $I_s$ $s\geq j$ for some $k\in \{1,\ldots,j-1\}$. 
 
 Let $\bar{I}_k$ consist of $m_i(\bar{I}_k)$ number of nodes from $I_i$. 
 Since $M_{j-1}>M_j$, thus, $\dfrac{m_s(\bar{I}_k)}{M_i}<\dfrac{m_s(\bar{I}_k)}{M_s}$ for any $s\leq j$ and $i>j$. By (\ref{mvg}) for each $k\in \{1,\ldots,j-1\}$ we have
 \begin{align}\label{eq:sumind}
 \sum_{i=1}^{d}\dfrac{m_i(\bar{I}_k)}{M_i}\leq 1\nonumber\\
 \sum_{k=1}^{j-1}\sum_{i=1}^{d}\dfrac{m_i(\bar{I}_k)}{M_i}\leq j-1.
 \end{align}
 Since $\bar{I}_k$ s are disjoint, thus, $|\bar{I}_1\cup\ldots\bar{I}_{j-1}|=\sum_{k=1}^{j-1}\bar{M}_k=\sum_{k=1}^{j-1}M_k$. 
 Thus, $\sum_{i=1}^{j-1}M_i=\sum_{k=1}^{j-1}\sum_{i=1}^dm_i(\bar{I}_k)$. Hence,
 \begin{align}\label{eq:indidentity}
 \sum_{i=j}^{d}\sum_{k=1}^{j-1}m_i(\bar{I}_k)=\sum_{i=1}^{j-1}(M_i-\sum_{k=1}^{j-1}m_i(\bar{I}_k)).
 \end{align}
 Since $\bar{I}_k$ contains at least one node from $I_s, s\geq j$, thus, the above expression is strictly positive. Note that
\begin{align}
\sum_{k=1}^{j-1}\sum_{i=1}^{d}\dfrac{m_i(\bar{I}_k)}{M_i}& \nonumber\\
& \geq\sum_{k=1}^{j-1}(\sum_{i=1}^{j-1}\dfrac{m_i(\bar{I}_k)}{M_i}+\sum_{i=j}^{d}\dfrac{m_i(\bar{I}_k)}{M_j})\nonumber\\
& =\sum_{i=1}^{j-1}\sum_{k=1}^{j-1}\dfrac{m_i(\bar{I_k})}{M_i}+\sum_{i=j}^{d}\sum_{k=1}^{j-1}\dfrac{m_i(\bar{I}_k)}{M_j}\nonumber\\
& >\sum_{i=1}^{j-1}\sum_{k=1}^{j-1}\dfrac{m_i(\bar{I}_k)}{M_i}+\sum_{i=1}^{j-1}\dfrac{M_i-\sum_{k=1}^{j-1}m_i(\bar{I}_k)}{M_i}\nonumber\\
& (\text{from } (\ref{eq:indidentity}) \text{and } M_i>M_j, i<j)\nonumber\\
& =\sum_{i=1}^{j-1}\dfrac{M_i}{M_i}=j-1.
\end{align}
which contradicts (\ref{eq:sumind}), hence this case can not arise. 
Hence, the result follows. \qed

\subsubsection{Proof of Lemma~\ref{thm:disjointunequal}}

Let the lowest index be $j$ such that $I_j\cap \bar{I}_k\neq \Phi$ but $|I_j|\neq \bar{I}_k$. Thus, $I_j$ contains at least one node from $I_k$. Without loss of generality we can assume that $|I_j|<|\bar{I}_k|$.

Since $|I_k|=|\bar{I}_k|$ for all $k$ by Lemma~\ref{thm:equalcardinality}, thus, $M_k>M_j$. Let $k_1=\max\{i\in \{1,\ldots,j-1\}:M_i>M_j\}$. Let $I_i$ consists of $m_s(I_i)$ number of nodes from $\bar{I}_s$. Thus, 
\begin{align}\label{eq:totalnode}
\sum_{i=1}^{k_1}|I_i|=\sum_{s=1}^{d}\sum_{i=1}^{k_1}m_s(I_i)\nonumber\\
\sum_{i=1}^{k_1}\sum_{s=k_1+1}^{d}m_s(I_i)=\sum_{i=1}^{k_1}M_i-\sum_{i=1}^{k_1}\sum_{s=1}^{k_1}m_s(I_i)
\end{align}
 Note that LHS of (\ref{eq:totalnode}) is always non-negative. Now, we will show that (\ref{eq:totalnode}) is strictly positive. If it is not strictly positive then we must have
\begin{align}\label{eq:set1}
\sum_{i=1}^{k_1}M_i=\sum_{i=1}^{k_1}\sum_{s=1}^{k_1}m_s(I_i).
\end{align}
But RHS of (\ref{eq:set1}) is equal to 
\begin{align}
|(\bar{I}_1\cup\bar{I}_2\ldots\cup\bar{I}_{k_1})\cap (I_1\cup I_2\ldots\cup I_{k_1})|.
\end{align}
and LHS of (\ref{eq:set1}) is equal to 
\begin{align}
|\bar{I}_1\cup\ldots\cup \bar{I}_{k_1}|=|I_1\cup\ldots\cup I_{k_1}|.
\end{align}
Thus,
\begin{align}
I_1\cup\ldots\cup I_{k_1}=\bar{I}_1\cup\ldots\cup \bar{I}_{k_1}.
\end{align}
But $I_j$ contains at least one node from $\bar{I}_l$ and $k\leq k_1<j$. Thus, $I_j$ contains at least one node in common with $I_1\cup\ldots\cup I_{k_1}$ which is not possible since $I_j$s are disjoint. Thus (\ref{eq:totalnode}) is strictly positive. Thus, there must exist a $i\in \{1,\ldots,k_1\}$ such that $I_i$ contains at least one node from $\bar{I}_s$ $s>k_1$. Since $i\leq k_1$ and $s>k_1$, thus, $|\bar{I}_s|=|I_s|<|I_{i}|$. Hence, we have found a $i<j$ such that $I_i$ contains at least one node from $\bar{I}_s$ such that $|\bar{I}_s|<|I_i|$ which contradicts the definition of $j$. Hence, the result follows.\qed  
 \subsection{Proof of Theorem~\ref{uniquelinear} (Section~\ref{sec:symmetricNEunique})}\label{sec:appendix_uniquelinear}
In order to prove Theorem~\ref{uniquelinear} we must consider all symmetric NE strategy profiles which need not be of the form (\ref{dist1}); this precludes the use of the results in section~\ref{sec:structuremultiplenodes}. First, we characterize some properties that any symmetric NE strategy must follow (Lemmas~\ref{high},~\ref{high2}) in Appendix~\ref{sec:symmetricstructure}. Then we deduce some important properties (Lemma~\ref{lm:eqpayoff} and ~\ref{equalnodedist}) that any NE strategy profile must satisfy in a linear graph in Appendix~\ref{sec:propertieslinear}. We then use those properties to prove Theorem ~\ref{uniquelinear}. 
\subsubsection{Properties of any symmetric NE strategy profile (Lemmas~\ref{high} and ~\ref{high2})}\label{sec:symmetricstructure}

In order to prove Lemma~\ref{high} we state and prove Observations~\ref{identity3}, and \ref{identity4} and Lemma\ref{lem:best}. We subsequently state and prove Lemma~\ref{obs:recurrence} to prove Lemma~\ref{high2}. 

  We start with some notations, which we use throughout.
\begin{defn}
$u_{s,i,max}$ denotes the maximum expected payoff under an NE strategy for state $i$ at node $s$ \footnote{Even if node $a$ is selected with probability $0$ when the channel state is $i$, we can still defined $u_{a,i,max}$ as  the maximum expected payoff that a primary would have obtained if it would select node $a$} .
\end{defn}
Recall from (\ref{charac}) that the channel is offered at node $a$ when the state is $j$ with probability $\alpha_{a,j}$. With slight abuse of notation, we define $\gamma_{a,i}$ for node $a$ in the following manner:
\begin{align}\label{ne+1}
\gamma_{a,i}=\sum_{j=i}^{n}\alpha_{a,j}
\end{align}
Thus, $\gamma_{a,i}$ denotes the probability that the channel is offered at node $a$ when the state is higher or equal to $i$.

Since we have to consider all NE strategy profiles which may not be of the form (\ref{dist1}), thus, the payoff, upper and lower endpoint need not be the identical at each node of $I_s$ at a given channel state. By Lemma~\ref{separation-s1} if $\alpha_{a,j}$ is known then the above parameters can be obtained using Lemma~\ref{lm:computation} with $\alpha_{a,j}$ in place of $q_j$. With slight abuse of notation we denote $p_{a,i}, L_{a,i}$ and $U_{a,i}$ for node $a$ i.e. 
for $i=1,\ldots,n$
\begin{eqnarray}
p_{a,i}&=
&c+(f_i(U_{a,i})-c)W(\gamma_{a,i})(\text{from }(\ref{ne+1})\& (\ref{defnW}))\label{nnodepay}\\
L_{a,i}&=&g_i(\dfrac{p_{a,i}-c}{W(\gamma_{a,i+1})}+c), U_{a,i}=L_{a,i-1}, L_{a,0}=v\label{nnode151}
\end{eqnarray}
By Theorem~\ref{singlelocation} $p_{a,i}-c$ is the expected payoff at node $a$ when the channel state is $i$ if node $a$ is selected with positive probability. By Theorem~\ref{singlelocation} a primary selected penalty from the interval $[L_{s,j}, U_{a,j}]$ when the channel state is $j$ using the distribution ~(\ref{d4}) with $\alpha_{a,j}$ in place of $q_j$. 

Now we state some observations which we use throughout.
\begin{obs}\label{identity3}
At node $a$, $\gamma_{a,k}=\gamma_{a,k_1}+\sum_{i=k}^{k_1-1}\alpha_{a,i}$ where $n\geq k_1>k$.
\end{obs}
Observation~\ref{identity3} readily follows from (\ref{ne+1}). Since from (\ref{ne+1})
\begin{align}
\gamma_{a,k} =\sum_{i=k}^{k_1-1}\alpha_{a,i}+\sum_{i=k_1}^{n}\alpha_{a,i}=\sum_{i=k}^{k_1-1}\alpha_{a,i}+\gamma_{a,k_1}\quad (\text{from } (\ref{ne+1}))\nonumber
\end{align}
Similar to observation~\ref{identity2}, using observation~\ref{identity3}, (\ref{nnode151}) and (\ref{nnodepay}) we obtain
\begin{obs}\label{identity4}
At node $a$, $U_{a,j}=L_{a,j}$ for $j\in\{1,\ldots,n\}$ iff $t_{a,j}=0$. $U_{a,j}=L_{a,k}$ iff $t_{a,i}=0$ $\forall k<i<j$.  Hence, $U_{a,j}=v$ iff $t_{a,k}=0$ $\forall k<j$.
\end{obs}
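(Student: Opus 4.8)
The plan is to mirror the proof of Observation~\ref{identity2}, replacing the independent-set quantities by their node-level counterparts: $\alpha_{a,i}$ in place of $t_{s,i}$ (the probability the channel is offered at node $a$ in state $i$), $p_{a,i},L_{a,i},U_{a,i}$ as defined in (\ref{nnodepay})--(\ref{nnode151}), and Observation~\ref{identity3} in place of Observation~\ref{identity1}. First I would record a positivity fact used throughout: $p_{a,j}-c>0$ for every $j$. Indeed, (\ref{prob1}) forces the channel to be unavailable everywhere with positive probability, so $\alpha_a<1$ and hence $\gamma_{a,j}<1$, giving $W(\gamma_{a,j})>0$; starting from $U_{a,1}=v$ with $f_1(v)>c$ and using $f_i<f_{i+1}$ exactly as in Section~\ref{sloneshot}, an induction on $j$ gives $f_j(U_{a,j})>c$, whence $p_{a,j}-c=(f_j(U_{a,j})-c)W(\gamma_{a,j})>0$ by (\ref{nnodepay}).

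For the first assertion: from (\ref{nnode151}), $L_{a,j}=g_j\!\left(\tfrac{p_{a,j}-c}{W(\gamma_{a,j+1})}+c\right)$, and combining $U_{a,j}=L_{a,j-1}$ with (\ref{nnodepay}) gives $U_{a,j}=g_j\!\left(\tfrac{p_{a,j}-c}{W(\gamma_{a,j})}+c\right)$. Since $g_j(\cdot)$ is strictly increasing (hence injective) and $p_{a,j}-c>0$, the equality $U_{a,j}=L_{a,j}$ is equivalent to $W(\gamma_{a,j})=W(\gamma_{a,j+1})$; since $W(\cdot)$ is strictly decreasing (hence injective), this is equivalent to $\gamma_{a,j}=\gamma_{a,j+1}$, which by Observation~\ref{identity3} (taking $k_1=j+1$) is equivalent to $\alpha_{a,j}=0$. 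For the iterated version, I would first note the contiguity $v=L_{a,0}\ge L_{a,1}\ge\cdots\ge L_{a,n}$, inherited from Theorem~\ref{singlelocation} and Lemma~\ref{separation-s1} applied at node $a$ with $\alpha_{a,j}$ in place of $q_j$. Then, for $k<j$, using $L_{a,i}=U_{a,i+1}$ repeatedly, $U_{a,j}=L_{a,k}$ is the same as $L_{a,j-1}=L_{a,k}$; by the monotonicity just noted this forces $L_{a,i-1}=L_{a,i}$, i.e. $U_{a,i}=L_{a,i}$, for every $i$ with $k<i<j$, which by the first part is equivalent to $\alpha_{a,i}=0$ for all such $i$; conversely these equalities give the chain $L_{a,k}=\cdots=L_{a,j-1}=U_{a,j}$. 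Taking $k=0$ and recalling $L_{a,0}=v$ yields $U_{a,j}=v$ iff $\alpha_{a,k}=0$ for all $k<j$.

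The argument is essentially routine; the only points requiring a little care are the positivity $p_{a,j}-c>0$ (needed to cancel that factor and invoke injectivity) and the monotonicity of the $L_{a,i}$ used in the iterated step. Both are consequences of the single-location characterization of Section~\ref{sloneshot} transferred to node $a$ via Lemma~\ref{separation-s1}, so no genuinely new obstacle arises.
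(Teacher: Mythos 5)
Your argument is correct and is essentially the paper's own: the paper proves this by noting it follows "similar to Observation~\ref{identity2}" using Observation~\ref{identity3}, (\ref{nnodepay}) and (\ref{nnode151}), i.e.\ the equivalence $U_{a,j}=L_{a,j}\Leftrightarrow\gamma_{a,j}=\gamma_{a,j+1}\Leftrightarrow\alpha_{a,j}=0$ via injectivity of $g_j$ and $W$, followed by the telescoping chain $L_{a,k}=U_{a,k+1}$. Your added care about $p_{a,j}-c>0$ and the monotonicity of the $L_{a,i}$ only makes explicit what the paper leaves implicit; no substantive difference.
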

\begin{lem}\label{lem:best}
 Maximum expected payoff under the NE strategy profile at a node $s$ is obtained at $L_{s,i}$ when channel states are  $i$ and $i+1$. When the channel state is $1$, primary attains its maximum expected payoff at $v$ at any node.
\end{lem}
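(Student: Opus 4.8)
The plan is to show that, at any node $s$ and for any fixed channel state, the expected payoff as a function of the chosen penalty is maximised exactly at the upper end of the relevant support interval determined by Lemma~\ref{lm:computation} and Theorem~\ref{singlelocation}, applied at node $s$ with $\alpha_{s,i}$ in place of $q_i$. The key observation is that, by Lemma~\ref{separation-s1}, the competitors' penalty behaviour at node $s$ is exactly the single-location NE strategy $\phi_i(\cdot)$ with parameters $\alpha_{s,i}$. Hence the probability $P(A)$ that a primary quoting penalty $x$ wins at node $s$ depends only on $x$ (not on the deviator's own channel state), and is a step/continuous function: for $x$ in the support interval $[L_{s,i},U_{s,i}]$ it equals $W$ evaluated at the appropriate combination of the $\alpha_{s,k}$, and for $x$ above all supports it equals $W(\gamma_{s,1})$-type expressions, etc.

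First I would write, for a deviating primary whose channel state is $j$ and who quotes penalty $x$ at node $s$, the expected payoff $(f_j(x)-c)\,P(A)(x)$. For $x\in[L_{s,i},U_{s,i}]$ with $i\ge 1$, the competitors with state $>i$ undercut with probability one and those with state $\le i$ do not reach into this interval with positive mass except through $\phi_i$, so $P(A)(x) = W(\gamma_{s,i+1})\cdot(\text{winning prob.\ within the }\phi_i\text{ randomisation})$. Because along the single-location NE the product $(f_i(x)-c)P(A)(x)$ is constant and equal to $p_{s,i}-c$ on $[L_{s,i},U_{s,i}]$ (this is precisely the content of Theorem~\ref{singlelocation} transplanted to node $s$), we get $P(A)(x) = (p_{s,i}-c)/(f_i(x)-c)$ on that interval. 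Substituting, the state-$j$ payoff on this interval is $(p_{s,i}-c)\cdot\frac{f_j(x)-c}{f_i(x)-c}$. When $j=i$ this is constant; when $j=i+1$ (or $j>i$) Assumption~\ref{assum1}, inequality~(\ref{con1}), tells us $\frac{f_j(x)-c}{f_i(x)-c}$ is strictly increasing in $x$ for $x>g_i(c)$, so the payoff is strictly increasing on $[L_{s,i},U_{s,i}]$ and hence maximised at the right endpoint $U_{s,i}=L_{s,i-1}$. Thus when the channel state is $i+1$, the maximum over penalties in $[L_{s,i},U_{s,i}]$ is attained at $L_{s,i}$.

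Next I would piece together the intervals. A penalty $x$ lying in $[L_{s,r},L_{s,r-1}]$ for $r<j$ (with $L_{s,0}=v$) gives, by the same computation with $i=r$, a payoff proportional to $\frac{f_j(x)-c}{f_r(x)-c}$, again strictly increasing in $x$ by (\ref{con1}) since $j>r$; so within each such block the best choice is the upper endpoint $L_{s,r-1}$, and comparing across blocks, pushing $x$ upward always helps — so among all $x\le v$ the state-$j$ payoff is non-decreasing as $x$ climbs through the supports, and is maximised at the top, i.e.\ at $L_{s,j-1}=U_{s,j}$; equivalently for state $i+1$ the maximiser is $L_{s,i}$. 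For the special case of channel state $1$: since $U_{s,1}=v$ for every node (from (\ref{n51})/(\ref{nnode151}), as $L_{s,0}=v$), and quoting anything above $v$ never sells (no secondary buys a channel with penalty exceeding $v$), while quoting in any lower interval $[L_{s,r},L_{s,r-1}]$ yields a payoff proportional to $\frac{f_1(x)-c}{f_r(x)-c}$ which is strictly increasing in $x$ by (\ref{con1}) (here $1<r$ is impossible, so actually state $1$ sits at the top block $[L_{s,1},v]$ on which its payoff is already constant and equal to $p_{s,1}-c$), the maximum expected payoff at state $1$ is attained at $v$ at every node.

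The main obstacle is bookkeeping the case analysis cleanly: one must handle nodes selected with probability zero at some states (where the ``interval'' degenerates, $L_{s,i}=U_{s,i}$, per Observation~\ref{identity4}), make sure $f_i(L_{s,i})>c$ so that (\ref{con1}) is applicable (this follows as in (\ref{pay0})), and verify that $P(A)(x)$ is genuinely independent of the deviator's own state so that the single-location constancy identity can be borrowed verbatim. None of these is deep, but they must be stated carefully; the one genuinely load-bearing ingredient is Assumption~\ref{assum1}/(\ref{con1}), which is exactly what forces monotonicity of $\frac{f_j(x)-c}{f_i(x)-c}$ and hence locates the maximiser at the right endpoint.
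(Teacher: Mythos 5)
Your proposal contains a sign error that breaks the argument, and it skips the case that is the actual content of the lemma.

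First, the monotonicity direction. By (\ref{con1}), for $i<j$ the ratio $\frac{f_i(x)-c}{f_j(x)-c}$ is strictly \emph{increasing} in $x$, so $\frac{f_j(x)-c}{f_i(x)-c}$ is strictly \emph{decreasing}, not increasing as you claim. With the correct direction, the state-$(i+1)$ payoff $(p_{s,i}-c)\frac{f_{i+1}(x)-c}{f_i(x)-c}$ on $[L_{s,i},U_{s,i}]$ is decreasing, hence maximised at the \emph{left} endpoint $L_{s,i}$ --- which is what the lemma asserts, but is the opposite of your intermediate conclusion ``maximised at the right endpoint $U_{s,i}=L_{s,i-1}$'' (you then assert the maximiser is $L_{s,i}$ in the very next sentence, contradicting yourself). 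The error propagates: your global claim that ``pushing $x$ upward always helps'' and that the state-$j$ payoff is maximised ``at the top'' is false. The correct picture is that the state-$j$ payoff increases through the blocks of states $r>j$, is constant on $[L_{s,j},L_{s,j-1}]$, and then \emph{decreases} through the blocks of states $r<j$; if your monotone-to-the-top claim were right, every state would maximise at $v$, contradicting the lemma for $i\geq 2$.

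Second, the degenerate case. The lemma is stated for an \emph{arbitrary} symmetric NE, where a node $s$ may have $\alpha_{s,i}=0$ for some states; the paper's remark after the lemma makes clear that this is the whole point (when $\alpha_{s,i}>0$ the claim is immediate from Theorem~\ref{singlelocation}). Your interval-by-interval computation presupposes nondegenerate supports at node $s$ for every state and relegates $\alpha_{s,i}=0$ to ``bookkeeping''. But when $\alpha_{s,i}=0$ the interval $[L_{s,i},U_{s,i}]$ collapses and the ``payoff is constant on the support'' identity is vacuous; the paper's proof does its real work exactly here, by locating the nearest state $j\leq i$ (respectively $k>i$) with $\alpha_{s,j}>0$ (resp.\ $\alpha_{s,k}>0$), noting via Observation~\ref{identity4} that $L_{s,i}$ coincides with $L_{s,j}$ (resp.\ $U_{s,k}$), and then deriving a contradiction with the best-response property of that endpoint at state $j$ (resp.\ $k$) using (\ref{con1}). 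You would need to supply that argument; as written the proposal does not establish the lemma.
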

Note that if $\alpha_{s,i}>0$ then by Theorem~\ref{singlelocation} $L_{s,i}$ is a best penalty response at channel state $i$. Here we show that even if $\alpha_{s,i}=0$, then the maximum expected payoff is obtained at $L_{s,i}$ at node $s$ under any NE strategy profile. 
\begin{proof}
First, we will prove the statement for channel state $i$. The proof for channel state $i+1$ will readily follow.\\
Suppose the statement is false for channel state $i$. Hence, there exists $x$ at which expected payoff is higher compared to the expected payoff at $L_{s,i}$ when the channel state is $i$. First we rule out $x>L_{s,i}$ (case i) and then $x<L_{s,i}$ (case ii).

{\em case i}: $x>L_{s,i}$:\\
Note that this case can not arise when $L_{s,i}=v$. Hence, $L_{s,i}<v$; thus by observation~\ref{identity4} there must exist $j=\max\{1,\ldots,i\}$ such that $\alpha_{s,j}>0$. If $j=i$, then $L_{s,j}=L_{s,i}$; on the other hand if $\alpha_{s,i}=0$ then by observation~\ref{identity4} $L_{s,i}=U_{s,j+1}=L_{s,j}$.  Expected payoff to a primary at state $i$ at $x$ is
\begin{align}\label{ln1}
(f_i(x)-c)P(A)
\end{align}
where $P(A)$ is the probability of winning at penalty $x$ at node $s$. By theorem~\ref{singlelocation}, $L_{s,j}$ is a best penalty response at node $s$ when the channel state is $j$. Now, expected payoff at $L_{s,j}$ when channel state is $j$, is
\begin{align}
p_{s,j}-c=(f_{j}(L_{s,j})-c)W(\gamma_{s,j+1})\quad (\text{from } (\ref{nnode151}))\nonumber
\end{align}
Note that players with channel state higher than $j$ select a penalty lower than or equal to $L_{s,j}$ with probability $1$ and players with channel state lower than or equal to $i$ select a penalty lower than or equal to $L_{s,j}$ with probability $0$ at node $s$. Thus, the expected payoff to a primary when it selects penalty $L_{s,j}$ at channel state $i$ at node $s$ is
\begin{align}\label{ln3}
(f_i(L_{s,j})-c)W(\gamma_{s,j+1})
\end{align}
Note that $f_{j}(L_{s,j})>c$ by (\ref{nnode151}).\\
Since expected payoff at $x$ is strictly higher compared to the expected payoff at $L_{s,i}$ at  node $s$ at channel state $i$ and $L_{s,i}=L_{s,j}$, thus, we have from (\ref{ln1}) and (\ref{ln3})
\begin{align}\label{ln4}
(f_i(x)-c)P(A)>(f_i(L_{s,j})-c)W(\gamma_{s,j+1})
\end{align}
On the other hand, the expected payoff that a primary will obtain when it selects penalty $x$ at node $s$ at channel state $j$-
\begin{align}
& (f_{j}(x)-c)P(A) \nonumber\\
& > (f_{j}(x)-c)\dfrac{f_i(L_{s,j})-c}{f_i(x)-c}W(\gamma_{s,j+1}) \quad (\text{from} (\ref{ln4}))\nonumber\\
& > (f_{j}(L_{s,j})-c)P(A_1)\quad \nonumber\\
& (\text{from} (\ref{con1}) \text{as }i\geq j, f_{j}(L_{s,j})>c, x>L_{s,j})
\end{align}
which contradicts the fact that $L_{s,j}$ is a best penalty response at channel state $j$. 

\textit{Case ii} $x<L_{s,i}$:\\
Note that, if $\alpha_{s,j}=0$ forall $j> i$, then it is trivial that this case can not arise\footnote{ In this case, $L_{s,j}=L_{s,i}$ (by observation~\ref{identity4}) for all $j>i$. Thus, expected payoff at any penalty strictly less than at $L_{s,i}$ will yield strictly lower payoff compared to payoff at $L_{s,i}$}. We only consider the scenario when $\alpha_{s,j}>0$ for some $j\in \{i+1,\ldots,n\}$. Note that $f_i(x)>c$. Now let, $k=\min\{j>i: \alpha_{s,j}>0\}$.  By definition of $k$ and observation~\ref{identity4} $L_{s,i}=U_{s,k}$. Since $\alpha_{s,k}>0$, thus expected payoff at $U_{s,k}$ is the maximum expected payoff at node $s$ when the channel state is $k$ (theorem~\ref{singlelocation}). Expected payoff to a primary channel state $k$ at $L_{s,i}$ is
\begin{equation*}\label{ln5}
(f_k(L_{s,i})-c)P(A_2)\nonumber
\end{equation*}
where $P(A_2)$ denotes the probability of winning when a primary offers penalty $L_{s,i}$ at node $s$. Since probability of winning does not depend on the channel state, hence, expected payoff to a primary at channel state $i$ and at penalty $L_{s,i}$ is
\begin{align}\label{ln6}
(f_i(L_{s,i})-c)P(A_2)
\end{align}
Let, probability of winning at penalty $x$ at node $s$ be $P(A_3)$. Since, probability of winning does not depend on the channel state , thus expected payoff to a primary when it offers penalty $x$ at channel state $k$ and at node $s$ is
\begin{equation*}
(f_k(x)-c)P(A_3)
\end{equation*}
Similarly expected payoff at node $s$, at channel state $i$ and at penalty $x$ is-
\begin{equation*}
(f_i(x)-c)P(A_3)
\end{equation*}
Since $L_{s,i}$ is a best penalty response to channel state $k$ at node $s$, thus
\begin{align}\label{ln7}
(f_k(L_{s,i})-c)P(A_2)\geq (f_k(x)-c)P(A_3)
\end{align}
From (\ref{ln6}), expected payoff at $L_{s,Z_{s,i}}$ at node $s$ and at channel state $i$ is given by
\begin{align}\label{ln8}
& (f_i(L_{s,i})-c)P(A_2) \nonumber\\
& \geq (f_i(L_{s,i})-c)P(A_3)\dfrac{f_k(x)-c}{f_k(L_{s,i})-c}\quad (\text{from} (\ref{ln7}))\nonumber\\
& >(f_i(x)-c)P(A_3)\nonumber\\& 
\quad(\text{Using} (\ref{con1}) \text{as} i<k, L_{s,i}>x, f_i(x)>c)
\end{align}
which contradicts the fact that expected payoff at $x$ is higher compared to the expected payoff at $L_{s,i}$ when the channel state is $i$.

Now, we show the result for channel state $i+1$.

If $\alpha_{s,i+1}=0$, then by observation~\ref{identity4} $L_{s,i}=U_{s,i+1}=L_{s,i+1}$. Hence, the same analysis will follow for channel state $i+1$. On the other hand if $\alpha_{s,i+1}>0$ which along with $L_{s,i}=U_{s,i+1}$ (by (\ref{nnode151}) implies that $L_{s,i}$ is the upper endpoint of the penalty selection strategy profile for channel state $i+1$ at node $s$. Since the upper endpoint is also a best penalty response by theorem~\ref{singlelocation}, thus the result follows.
\end{proof}

Now, we provide expressions for $u_{s,i,max}, u_{s,i+1,max}$ for node $s$, $i\in \{1,\ldots,n-1\}$ in terms of $L_{s,i}$which we use to prove Lemmas ~\ref{high} and \ref{high2}.

Since $v$ is  a best response at channel state $1$ at  any node in the network by Lemma~\ref{lem:best}, thus, 
\begin{align}\label{vbest}
u_{s,1,max}=(f_1(v)-c)W(\gamma_{s,1})
\end{align}
By (\ref{nnode151}) expected payoff at $L_{s,i}$ is
\begin{eqnarray}
(f_i(L_{s,i})-c)W(\gamma_{s,i+1})=u_{s,i,max}\label{eq:payi}\\
u_{s,i+1,max}=(f_{i+1}(L_{s,i})-c)W(\gamma_{s,i+1})\label{eq:payi+1}
\end{eqnarray}

\begin{lem}\label{high}
i) For, $i\in \{1,\ldots,n-1\}$, if $u_{s,i,max}\geq u_{r,i,max}$  and $\gamma_{s,i}\leq \gamma_{r,i},\alpha_{r,i}<\alpha_{s,i}$, then $u_{s,i+1,max}>u_{r,i+1,max}$. \\
ii) If $u_{s,i,max}\geq u_{r,i,max}$ and $\gamma_{s,i}<\gamma_{r,i}$, $\alpha_{s,i}\geq \alpha_{r,i}$, then $u_{s,i+1,max}>u_{r,i+1,max}$.
\end{lem}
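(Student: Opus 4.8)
The plan is to exploit the expressions for the maximum expected payoffs collected just before the lemma, namely \eqref{vbest}, \eqref{eq:payi}, and \eqref{eq:payi+1}, together with the contiguity of the support sets (Section~\ref{sloneshot}) and Assumption~\ref{assum1}. The crucial observation is that $L_{s,i}$ is precisely the common endpoint separating the support of the state-$i$ penalty distribution from the support of the state-$(i+1)$ distribution at node $s$, and by Lemma~\ref{lem:best} the maximum payoff at state $i$ (respectively $i+1$) is attained at $L_{s,i}$ from above (respectively from below). So both $u_{s,i,max}$ and $u_{s,i+1,max}$ are ``evaluated at the same penalty $L_{s,i}$'', only with $f_i$ versus $f_{i+1}$ in front and with the same winning probability $W(\gamma_{s,i+1})$. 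This is exactly the situation in which Assumption~\ref{assum1} (the ratio inequality \eqref{con1}) bites.

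First I would compare $L_{s,i}$ and $L_{r,i}$. From \eqref{eq:payi}, $u_{s,i,max}\geq u_{r,i,max}$ reads $(f_i(L_{s,i})-c)W(\gamma_{s,i+1})\geq (f_i(L_{r,i})-c)W(\gamma_{r,i+1})$. In case (i) the hypothesis $\gamma_{s,i}\le\gamma_{r,i}$ together with $\alpha_{r,i}<\alpha_{s,i}$ gives, via Observation~\ref{identity3} ($\gamma_{s,i}=\gamma_{s,i+1}+\alpha_{s,i}$), that $\gamma_{s,i+1}=\gamma_{s,i}-\alpha_{s,i}<\gamma_{r,i}-\alpha_{r,i}=\gamma_{r,i+1}$, hence $W(\gamma_{s,i+1})>W(\gamma_{r,i+1})$ since $W$ is strictly decreasing; in case (ii) the hypotheses $\gamma_{s,i}<\gamma_{r,i}$ and $\alpha_{s,i}\ge\alpha_{r,i}$ give the same strict inequality $\gamma_{s,i+1}<\gamma_{r,i+1}$, hence again $W(\gamma_{s,i+1})>W(\gamma_{r,i+1})$. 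Combining this with $u_{s,i,max}\geq u_{r,i,max}$ forces $f_i(L_{s,i})-c < f_i(L_{r,i})-c$, i.e. $L_{s,i}<L_{r,i}$ (using $f_i(L_{s,i})>c$ from \eqref{pay0}-type positivity), and also $f_i(L_{s,i})-c>0$.

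Now from \eqref{eq:payi+1} I want $u_{s,i+1,max}=(f_{i+1}(L_{s,i})-c)W(\gamma_{s,i+1})$ strictly above $u_{r,i+1,max}=(f_{i+1}(L_{r,i})-c)W(\gamma_{r,i+1})$. Since $W(\gamma_{s,i+1})>W(\gamma_{r,i+1})>0$ it suffices to show $(f_{i+1}(L_{s,i})-c)W(\gamma_{s,i+1})\ge (f_{i+1}(L_{r,i})-c)W(\gamma_{r,i+1})$ with at least one of the two comparisons strict; but in fact both are strict here. Write the ratio
\[
\frac{u_{s,i+1,max}}{u_{r,i+1,max}}=\frac{f_{i+1}(L_{s,i})-c}{f_{i+1}(L_{r,i})-c}\cdot\frac{W(\gamma_{s,i+1})}{W(\gamma_{r,i+1})}.
\]
We already have the second factor $>1$. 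For the first factor, apply \eqref{con1} with $x=L_{r,i}>y=L_{s,i}$ (the inequality $L_{r,i}>L_{s,i}$ was just established, and $L_{s,i}>g_i(c)$ follows from $f_i(L_{s,i})>c$), with the index pair $i<i+1$: this yields $\dfrac{f_i(L_{r,i})-c}{f_{i+1}(L_{r,i})-c}<\dfrac{f_i(L_{s,i})-c}{f_{i+1}(L_{s,i})-c}$, equivalently $\dfrac{f_{i+1}(L_{s,i})-c}{f_{i+1}(L_{r,i})-c}>\dfrac{f_i(L_{s,i})-c}{f_i(L_{r,i})-c}$. Since $u_{s,i,max}\ge u_{r,i,max}$ gives $\dfrac{f_i(L_{s,i})-c}{f_i(L_{r,i})-c}\ge \dfrac{W(\gamma_{r,i+1})}{W(\gamma_{s,i+1})}$, putting these together makes the whole product strictly exceed $1$, which is the claim. (In the degenerate situation $L_{s,i}=L_{r,i}$, \eqref{con1} does not apply, but then $u_{s,i,max}\ge u_{r,i,max}$ together with $W(\gamma_{s,i+1})>W(\gamma_{r,i+1})$ is impossible unless $f_i(L_{s,i})=c$, contradicting positivity; so $L_{s,i}<L_{r,i}$ strictly and the argument above runs.)

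The main obstacle I anticipate is the careful bookkeeping of which endpoints coincide when some $\alpha_{\cdot,i}$ vanish: one must invoke Lemma~\ref{lem:best} (not just Theorem~\ref{singlelocation}) to justify that $u_{s,i,max}$ and $u_{s,i+1,max}$ really are attained at the single point $L_{s,i}$ even when node $s$ is offered with probability zero in state $i$ or $i+1$, and one must make sure the strict inequality $L_{s,i}<L_{r,i}$ genuinely holds so that \eqref{con1} (which requires $x>y$ strictly) is applicable. Everything else is an algebraic manipulation of the ratio identity above.
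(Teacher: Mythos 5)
There is a genuine gap in your argument, located exactly at the step you flag as "crucial": the derivation of $L_{s,i}<L_{r,i}$. You write that $u_{s,i,max}\geq u_{r,i,max}$, i.e. $(f_i(L_{s,i})-c)W(\gamma_{s,i+1})\geq (f_i(L_{r,i})-c)W(\gamma_{r,i+1})$, combined with $W(\gamma_{s,i+1})>W(\gamma_{r,i+1})$, \emph{forces} $f_i(L_{s,i})-c<f_i(L_{r,i})-c$. It does not: the larger product sits on the side with the larger $W$-factor, so the inequality is satisfied regardless of how the two $f_i(L)$ terms compare (e.g.\ $f_i(L_{s,i})-c=10$, $f_i(L_{r,i})-c=1$, $W(\gamma_{s,i+1})=0.9$, $W(\gamma_{r,i+1})=0.5$). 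Your parenthetical dismissal of the case $L_{s,i}=L_{r,i}$ fails for the same reason: in that case $u_{s,i,max}>u_{r,i,max}$ follows immediately and is perfectly consistent with the hypothesis, not "impossible." So the configuration $L_{s,i}\geq L_{r,i}$ is genuinely attainable under the hypotheses of the lemma, and your proof never covers it, while incorrectly asserting it cannot occur.

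The gap is repairable in two ways. The cheap fix for your direct approach is a case split: if $L_{s,i}\geq L_{r,i}$, then $u_{s,i+1,max}=(f_{i+1}(L_{s,i})-c)W(\gamma_{s,i+1})\geq (f_{i+1}(L_{r,i})-c)W(\gamma_{s,i+1})>(f_{i+1}(L_{r,i})-c)W(\gamma_{r,i+1})=u_{r,i+1,max}$ and you are done without \eqref{con1}; if $L_{s,i}<L_{r,i}$, your ratio computation with \eqref{con1} goes through as written. The paper instead argues by contradiction: assuming $u_{s,i+1,max}\leq u_{r,i+1,max}$, the \emph{smaller} product now carries the \emph{larger} factor $W(\gamma_{s,i+1})$, so via \eqref{eq:payi+1} one legitimately deduces $f_{i+1}(L_{s,i})<f_{i+1}(L_{r,i})$, hence $L_{s,i}<L_{r,i}$, and then \eqref{con1} yields $u_{s,i,max}<u_{r,i,max}$, contradicting the hypothesis. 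Your setup of the two cases via Observation~\ref{identity3}, the use of \eqref{eq:payi}--\eqref{eq:payi+1}, and the final application of \eqref{con1} all match the paper; only the logical direction in which $L_{s,i}<L_{r,i}$ is extracted is wrong, and that is precisely what the contrapositive structure of the paper's proof is there to supply.
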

\begin{proof} 
First we show part (i). Proof of part (ii) follows by simple modification of the proof of part (i).

Suppose, the statement is false, i.e. $u_{s,i+1,max}\leq u_{r,i+1,max}$ for some $s$ and $r$. 
As $\gamma_{s,i}\leq\gamma_{r,i}$ thus,
\begin{align}\label{n85a}
& \gamma_{s,i+1}+\alpha_{s,i}\leq \gamma_{r,i+1}+\alpha_{r,i}\quad (\text{by observation~\ref{identity3}})\nonumber\\
& \gamma_{s,i+1}<\gamma_{r,i+1} \quad (\text{since } \alpha_{s,i}>\alpha_{r,i})
\end{align}
 Now, as $u_{s,i+1,max}\leq u_{r,i+1,max}$, hence from (\ref{eq:payi+1})
\begin{align}\label{n86a}
(f_{i+1}(L_{s,i})-c)W(\gamma_{s,i+1})& \leq (f_{i+1}(L_{r,i})-c)W(\gamma_{r,i+1})\nonumber\\
\dfrac{W(\gamma_{s,i+1})}{W(\gamma_{r,i+1})} &\leq \dfrac{f_{i+1}(L_{r,i})-c}{f_{i+1}(L_{s,i})-c}
\end{align}
Since $\gamma_{r,i+1}>\gamma_{s,i+1}$ (from (\ref{n85a})) $W(\cdot)$ is strictly decreasing, thus $W(\gamma_{r,i+1})<W(\gamma_{s,i+1})$. Since $f_{i+1}(\cdot)$ is strictly increasing, thus we obtain from (\ref{n86a}) $L_{s,i}<L_{r,i}$.
Now, from (\ref{n86a}) and the fact that $f_i(L_{s,i})>c$ , we obtain
\begin{align}
\dfrac{W(\gamma_{s,i+1})}{W(\gamma_{r,i+1})}< \dfrac{f_{i}(L_{r,i})-c}{f_{i}(L_{s,i})-c}\quad (\text{from (\ref{con1}) and } L_{s,i}<L_{r,i})\nonumber
\end{align}
\begin{align}
& (f_i(L_{s,i})-c)W(\gamma_{s,i+1})<(f_i(L_{r,i})-c)W(\gamma_{r,i+1})\nonumber\\
& u_{s,i,max}<u_{r,i,max}\quad(\text{from } (\ref{eq:payi}))\nonumber
\end{align}
which contradicts the fact that $u_{s,i}\geq u_{r,i}$.

Note that, if $\gamma_{s,i}<\gamma_{r,i}$ and $\alpha_{s,i}\geq \alpha_{r,i}$, then we also obtain (\ref{n85a}) by simple algebraic manipulation, hence the proof of part (ii) is exactly similar to the proof of part (i).
\end{proof}
We use the following result in proving lemma~\ref{high2}.
\begin{lem}\label{obs:recurrence}
Suppose $u_{s,k,max}>u_{r,k,max}, \gamma_{s,k}<\gamma_{r,k}$. Let, $i=\min\{j\in \{k,\ldots,n\}: \alpha_{s,j}<\alpha_{r,j}\}$, then $\forall j$ such that $k\leq j\leq i$, we must have $u_{s,j,max}>u_{r,j,max}$.
\end{lem}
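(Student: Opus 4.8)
The plan is to induct on $j$ from $k$ up to $i$, using Lemma~\ref{high} as the engine that propagates the strict inequality $u_{s,j,max}>u_{r,j,max}$ one state at a time. The base case $j=k$ is given by hypothesis. For the inductive step, suppose $k\le j<i$ and $u_{s,j,max}>u_{r,j,max}$; I want to conclude $u_{s,j+1,max}>u_{r,j+1,max}$. The key observation is that, by the definition of $i$ as the \emph{smallest} index in $\{k,\ldots,n\}$ with $\alpha_{s,i}<\alpha_{r,i}$, we have $\alpha_{s,j}\ge\alpha_{r,j}$ for every $j$ with $k\le j<i$. This is exactly the sign condition on $\alpha$ needed to invoke one of the two branches of Lemma~\ref{high}.

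First I would establish, as a side fact carried along the induction, that $\gamma_{s,j}\le\gamma_{r,j}$ (indeed $<$) for all $j$ in the range $k\le j\le i$. For $j=k$ this is the hypothesis $\gamma_{s,k}<\gamma_{r,k}$. For the step, from Observation~\ref{identity3} we have $\gamma_{s,j}=\gamma_{s,j+1}+\alpha_{s,j}$ and $\gamma_{r,j}=\gamma_{r,j+1}+\alpha_{r,j}$, hence $\gamma_{s,j+1}=\gamma_{s,j}-\alpha_{s,j}$ and similarly for $r$; combining $\gamma_{s,j}<\gamma_{r,j}$ with $\alpha_{s,j}\ge\alpha_{r,j}$ gives $\gamma_{s,j+1}=\gamma_{s,j}-\alpha_{s,j}<\gamma_{r,j}-\alpha_{r,j}=\gamma_{r,j+1}$. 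So the pair of conditions ``$u_{s,j,max}\ge u_{r,j,max}$, $\gamma_{s,j}<\gamma_{r,j}$, $\alpha_{s,j}\ge\alpha_{r,j}$'' needed for part (ii) of Lemma~\ref{high} holds at each step, and it yields $u_{s,j+1,max}>u_{r,j+1,max}$, closing the induction. (One must be slightly careful at the terminal index: the claim is only asserted for $j\le i$, and the $\gamma$ and $\alpha$ comparisons I need are only those at indices strictly less than $i$, which is consistent with the definition of $i$.)

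The main obstacle — really the only place requiring care — is making sure the two auxiliary facts (the $\alpha$-sign condition and the $\gamma$-comparison) are available at exactly the indices where Lemma~\ref{high}(ii) is applied, and that the strictness in the $\gamma$ inequality is genuinely preserved (it is, precisely because $\alpha_{s,j}\ge\alpha_{r,j}$ keeps the gap from closing). There is no heavy computation; the lemma is a bookkeeping consequence of Observation~\ref{identity3} plus Lemma~\ref{high}. I would also note explicitly why $\alpha_{s,j}\ge\alpha_{r,j}$ for $k\le j<i$: if it failed for some such $j$, that $j$ would be a smaller element of $\{k,\ldots,n\}$ with $\alpha_{s,j}<\alpha_{r,j}$ than $i$, contradicting minimality of $i$. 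With that remark in place the induction runs cleanly to $j=i$.
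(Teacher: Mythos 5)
Your proof is correct and uses the same ingredients as the paper's: Observation~\ref{identity3} to propagate the $\gamma$-comparison and Lemma~\ref{high}(ii) to push the payoff inequality forward one state at a time, with the minimality of $i$ supplying $\alpha_{s,j}\ge\alpha_{r,j}$ for $k\le j<i$. The only difference is organizational — you run a direct forward induction while the paper argues by contradiction at the first index where the inequality would fail — so the two arguments are essentially identical.
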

\begin{proof}
Suppose the statement is false. So, there exists a $j$ such that $k<j\leq i$, $u_{s,j,max}\leq u_{r,j,max}$\footnote{Note that the statement is true at state $k$, since $u_{s,k,max}>u_{r,k,max}$}. Since $u_{s,k}>u_{r,k}$, thus, there must exist a $k_1\in \{k,\ldots,j-1\}$, such that $u_{s,k_1,max}>u_{r,k_1,max}$ but $u_{s,k_1+1,max}\leq u_{r,k_1+1,max}$. Because otherwise we have $u_{s,j,max}>u_{r,j,max}$. 

Since $\gamma_{s,k}<\gamma_{r,k}$, thus from observation~\ref{identity3}
\begin{align}\label{n201}
\gamma_{s,k_1}+\sum_{j=k}^{k_1-1}\alpha_{s,j}<\gamma_{r,k_1}+\sum_{j=k}^{k_1-1}\alpha_{r,j}
\end{align}
By definition of $i$, $\alpha_{s,k_2}\geq \alpha_{r,k_2}$ for $k\leq k_2<i$, since $k_1<j$ and $j\leq i$, thus $\alpha_{s,k_2}\geq \alpha_{r,k_2}$ $\forall k_2\in \{k,\ldots,k_1\}$. Hence, from (\ref{n201}), we have $\gamma_{s,k_1}<\gamma_{r,k_1}$.

But $\alpha_{s,k_1}\geq \alpha_{r,k_1}$ and $u_{s,k_1,max}>u_{r,k_1,max}$, hence by lemma~\ref{high} we have $u_{s,k_1+1,max}>u_{r,k_1+1,max}$ which leads to a contradiction.
\end{proof}
\begin{lem}\label{high2}
Suppose, $u_{s,j,max}>u_{r,j,max}$, then there must exist a state $i\in \{1,\ldots,n\}$ such that $u_{s,i,max}>u_{r,i,max}$ but $\alpha_{s,i}<\alpha_{r,i}$.
\end{lem}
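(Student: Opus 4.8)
The plan is to argue by contradiction while essentially constructing the state $i$. Suppose, to the contrary, that no such state exists, i.e.\ for every state $k\in\{1,\ldots,n\}$
\begin{align}
u_{s,k,max}>u_{r,k,max}\ \Longrightarrow\ \alpha_{s,k}\geq\alpha_{r,k}.
\end{align}
Starting from the hypothesis $u_{s,j,max}>u_{r,j,max}$, I would split the analysis according to the sign of $\gamma_{s,j}-\gamma_{r,j}$.

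\emph{Case $\gamma_{s,j}<\gamma_{r,j}$.} Since $\gamma_{a,j}=\sum_{k=j}^{n}\alpha_{a,k}$, it is impossible that $\alpha_{s,k}\geq\alpha_{r,k}$ for all $k\geq j$, so the set $\{k\in\{j,\ldots,n\}:\alpha_{s,k}<\alpha_{r,k}\}$ is nonempty; let $i$ be its least element. This is precisely the situation of Lemma~\ref{obs:recurrence} (with its ``$k$'' taken to be our $j$), which yields $u_{s,i,max}>u_{r,i,max}$; together with $\alpha_{s,i}<\alpha_{r,i}$ this contradicts the standing assumption. Note this case always applies when $j=1$: by (\ref{vbest}) we have $u_{s,1,max}=(f_1(v)-c)W(\gamma_{s,1})$ with $W$ strictly decreasing, so $u_{s,1,max}>u_{r,1,max}$ forces $\gamma_{s,1}<\gamma_{r,1}$.

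\emph{Case $\gamma_{s,j}\geq\gamma_{r,j}$ (hence $j\geq 2$, by the previous remark).} Here the core ingredient — and the step I expect to be the \textbf{main obstacle} — is a \emph{downward propagation lemma}: if $u_{s,k,max}>u_{r,k,max}$ and $\gamma_{s,k}\geq\gamma_{r,k}$ for some $k\geq 2$, then $u_{s,k-1,max}>u_{r,k-1,max}$. I would prove it from the closed forms $u_{a,k,max}=(f_k(U_{a,k})-c)W(\gamma_{a,k})$ with $U_{a,k}=L_{a,k-1}$ (from (\ref{nnodepay})) and $u_{a,k-1,max}=(f_{k-1}(L_{a,k-1})-c)W(\gamma_{a,k})$ (from (\ref{eq:payi})). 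First, $\gamma_{s,k}\geq\gamma_{r,k}$ gives $W(\gamma_{s,k})\leq W(\gamma_{r,k})$; since the payoff inequality forces $f_k(L_{r,k-1})-c>0$ and $f_k(L_{s,k-1})-c>0$, this yields $L_{s,k-1}>L_{r,k-1}$. Rewriting $u_{s,k,max}>u_{r,k,max}$ as $\frac{W(\gamma_{s,k})}{W(\gamma_{r,k})}>\frac{f_k(L_{r,k-1})-c}{f_k(L_{s,k-1})-c}$ and invoking Assumption~\ref{assum1} (inequality (\ref{con1})) with $x=L_{s,k-1}>y=L_{r,k-1}$ and index pair $k-1<k$ (using $f_{k-1}(L_{r,k-1})>c$, cf.\ (\ref{pay0})) turns this into $\frac{W(\gamma_{s,k})}{W(\gamma_{r,k})}>\frac{f_{k-1}(L_{r,k-1})-c}{f_{k-1}(L_{s,k-1})-c}$, which is exactly $u_{s,k-1,max}>u_{r,k-1,max}$.

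Granting the downward step, I would iterate it starting at $k=j$. Each iteration produces $u_{s,k-1,max}>u_{r,k-1,max}$, and I then compare $\alpha_{s,k-1}$ with $\alpha_{r,k-1}$. If $\alpha_{s,k-1}<\alpha_{r,k-1}$, then $i=k-1$ is the sought state and we contradict the standing assumption. Otherwise $\gamma_{s,k-1}=\gamma_{s,k}+\alpha_{s,k-1}\geq\gamma_{r,k}+\alpha_{r,k-1}=\gamma_{r,k-1}$, so the hypotheses of the downward step hold again at state $k-1$ and the recursion continues one state lower. It cannot reach state $1$ with $\gamma_{s,1}\geq\gamma_{r,1}$, since that contradicts $u_{s,1,max}>u_{r,1,max}$ as above; hence the $\alpha$-comparison must fail at some state, exhibiting the required $i$. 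Beyond the Assumption~\ref{assum1} manipulation in the downward step, the argument uses only monotonicity of $W$, the positivity facts $f_k(L_{a,k})-c>0$ (cf.\ (\ref{pay0})), and the bookkeeping in Observations~\ref{identity3}--\ref{identity4}.
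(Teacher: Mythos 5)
Your proposal is correct, but it reorganizes the argument around a different case split and a different propagation direction than the paper. The paper splits on whether $u_{s,1,max}>u_{r,1,max}$: if so, (\ref{vbest}) gives $\gamma_{s,1}<\gamma_{r,1}$ and Lemma~\ref{obs:recurrence} is applied upward from state $1$; if not, it locates the crossing index $k=\min\{i<j: u_{s,i,max}\leq u_{r,i,max},\ u_{s,i+1,max}>u_{r,i+1,max}\}$, shows $\gamma_{s,k+1}<\gamma_{r,k+1}$ by contradiction via the Assumption~\ref{assum1} computation in (\ref{ns20})--(\ref{ns22}), and then applies Lemma~\ref{obs:recurrence} upward from $k+1$. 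You instead split on the sign of $\gamma_{s,j}-\gamma_{r,j}$; when $\gamma_{s,j}<\gamma_{r,j}$ you invoke Lemma~\ref{obs:recurrence} directly from state $j$, and when $\gamma_{s,j}\geq\gamma_{r,j}$ you iterate a downward step that is exactly the contrapositive of the paper's (\ref{ns20})--(\ref{ns22}) manipulation (same use of (\ref{con1}) on the pair of consecutive states via (\ref{eq:payi}) and (\ref{eq:payi+1})), terminating either at the sought state or at the impossible configuration $u_{s,1,max}>u_{r,1,max}$ with $\gamma_{s,1}\geq\gamma_{r,1}$. The core algebraic ingredient is therefore identical; what your packaging buys is that the second case is self-contained (it never needs a second invocation of Lemma~\ref{obs:recurrence} and produces a witness $i\leq j$), at the cost of wrapping a fundamentally constructive argument in a contradiction frame that you could drop. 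Two cosmetic points: the positivity $f_k(L_{r,k-1})-c>0$ that licenses both the deduction $L_{s,k-1}>L_{r,k-1}$ and the application of (\ref{con1}) comes from $u_{r,k,max}>0$ (equivalently from the node-level analogue of (\ref{pay0}) via (\ref{nnodepay})--(\ref{nnode151})), not from the comparison $u_{s,k,max}>u_{r,k,max}$ itself; and your appeal to Lemma~\ref{obs:recurrence} in the first case should note explicitly that its hypothesis set $\{k\geq j:\alpha_{s,k}<\alpha_{r,k}\}$ is nonempty precisely because $\gamma_{s,j}<\gamma_{r,j}$, which you do. Neither point is a gap.
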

\begin{proof}
First we show that the statement is true when $u_{s,1,max}>u_{r,1,max}$ (case i) and then we show when $u_{s,1,max}\leq u_{r,1,max}$ (case ii); which completes the proof.

\textit{Case 1}: Suppose $u_{s,1,max}> u_{r,1,max}$. Since, $W(\cdot)$ is strictly decreasing, thus from (\ref{vbest}) we obtain $\gamma_{s,1}<\gamma_{r,1}$. Thus, from (\ref{ne+1}), there must exist $k=\min\{i\in\{1,\ldots,n\}: \alpha_{s,i}<\alpha_{r,i}\}$. By lemma~\ref{obs:recurrence}, $u_{s,j,max}>u_{r,j,max}$ $\forall j$ such that $1\leq j\leq k$. Since at $k$, $\alpha_{s,k}<\alpha_{r,k}$, $u_{s,k,max}>u_{r,k,max}$ thus, the statement is true for $k$. 

\textit{Case 2} Now, assume that $u_{s,1,max}\leq u_{r,1,max}$. Hence, it is obvious that $j\neq 1$. So, we must have $k=\min\{i\in\{1,\ldots,j-1\}: u_{s,i,max}\leq u_{r,i,max}, u_{s,i+1,max}>u_{r,i+1,max}\}$.  Note that if $\gamma_{s,k+1}<\gamma_{r,k+1}$, then from (\ref{ne+1}) there must exist $i=\min\{j:\{k+1,\ldots,n\}: \alpha_{s,j}<\alpha_{r,j}\}$. Since $u_{s,k+1,max}>u_{r,k+1,max}$, thus by lemma~\ref{obs:recurrence} at $i$, $u_{s,i}>u_{r,i}$ but $\alpha_{s,i}<\alpha_{r,i}$. Thus, the result is true for $i$ if we show that $\gamma_{s,k+1}<\gamma_{r,k+1}$. Now we complete the proof by showing that $\gamma_{s,k+1}<\gamma_{r,k+1}$.

Suppose that $\gamma_{s,k+1}\geq \gamma_{r,k+1}$. By definition of $k$, $u_{s,k}\leq u_{r,k}$, hence we obtain from (\ref{eq:payi})
\begin{align}\label{ns20}
(f_k(L_{s,k})-c)W(\gamma_{s,k+1})\leq (f_k(L_{r,k})-c)W(\gamma_{r,k+1})
\end{align}
Since $u_{s,k+1,max}>u_{r,k+1,max}$, thus from (\ref{eq:payi+1})
\begin{align}\label{ns21}
(f_{k+1}(L_{s,k})-c)W(\gamma_{s,k+1})
>(f_{k+1}(L_{r,k})-c)W(\gamma_{r,k+1})
\end{align}
Since$\gamma_{s,k+1}\geq \gamma_{r,k+1}$ and $W(\cdot)$ is strictly increasing, hence, $L_{r,k}< L_{s,k}$ from (\ref{ns21}). Thus from (\ref{ns21})
\begin{eqnarray}\label{ns22}
\dfrac{W(\gamma_{s,k+1})}{W(\gamma_{r,k+1})}
> \dfrac{f_{k}(L_{r,k})-c}{f_{k}(L_{s,k})-c}\\ (\text{from} (\ref{con1}) \text{as }c< f_k(L_{r,k}), L_{s,k}>L_{r,k})\nonumber
\end{eqnarray}
But (\ref{ns22}) contradicts (\ref{ns20}). Hence, $\gamma_{s,k+1}<\gamma_{r,k+1}$.
\end{proof}
\subsubsection{Properties of any symmetric NE strategy profile in a linear graph (lemmas~\ref{lm:eqpayoff} and ~\ref{equalnodedist})}\label{sec:propertieslinear}
We consider a linear graph (fig. ~\ref{fig:linear}) consisting of $M$ number of nodes. We use the properties of linear graph and a NE strategy profile to prove the results.  First, we state and prove Lemma~\ref{lownp}. Subsequently, we show that under an NE strategy profile the maximum expected payoff to a primary at a channel state at each node of $I_k, k\in \{1,2\}$ must be equal (Lemma~\ref{lm:eqpayoff}). Then, we show that under an NE strategy profile nodes of $I_k, k=\{1,2\}$ are selected with equal probability (Lemma~\ref{equalnodedist}). Finally, we show theorem~\ref{uniquelinear} using lemmas~\ref{lm:eqpayoff} and \ref{equalnodedist}.\\
In order to prove Lemma~\ref{lownp} we state and prove Observations~\ref{maximal},\ref{fact1},and \ref{necond}. 
\begin{obs}\label{maximal}
An NE independent set selection strategy profile only selects a maximal independent set with positive probability.
\end{obs}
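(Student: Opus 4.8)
An NE independent set selection strategy profile only selects a maximal independent set with positive probability.

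The plan is to argue by contradiction using the additivity of payoff over nodes of an independent set, together with the fact that the payoff a primary obtains at any given node depends only on the strategies of the \emph{other} primaries (through the node selection probabilities $\alpha_{a,j}$), not on the primary's own choice of which independent set contains that node. Suppose a symmetric NE selects some non-maximal independent set $I$ with positive probability at some channel state $j$. Since $I$ is not maximal, there is a node $b \notin I$ such that $I \cup \{b\}$ is still an independent set. I would first note that, by Lemma~\ref{separation-s1} and Theorem~\ref{singlelocation}, whenever a primary offers its channel at a node $a$ (in state $j$), it obtains a strictly positive expected payoff there: the maximum expected payoff at node $a$ is $p_{a,j}-c > 0$, since $f_j(U_{a,j}) > c$ (this positivity is exactly the content of the inequalities used throughout, e.g. \eqref{pay0}). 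In particular the per-node contribution is bounded below by a strictly positive quantity whenever the node is available.

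Next I would perform the deviation argument. Consider the deviating strategy which is identical to the NE except that, whenever the primary would have selected $I$, it instead selects $I \cup \{b\}$. This is legitimate because $I \cup \{b\}$ is an independent set of $G$ (in the same-channel-state setting the relevant graph is $G$; in the general setting one should take $b$ available, but the natural formulation of the observation is in the mean-valid/same-state context where the graph is fixed). The payoff over $I$ is unchanged — the per-node payoffs depend only on the other primaries' strategies and the penalty distribution used, which the deviating primary keeps the same on $I$ — and the primary now additionally collects the payoff at node $b$, where it can quote, say, the penalty $v$ (or the appropriate best response), earning a strictly positive expected amount because $m < l$ guarantees a positive probability of a sale at $v$. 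Hence the deviating strategy yields strictly larger expected payoff at state $j$, contradicting the NE condition of Definition~\ref{defn:ne}.

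The main subtlety — and the step I would be most careful about — is ensuring that the per-node payoffs over $I$ genuinely do not change when the primary replaces $I$ by $I \cup \{b\}$. This requires that adding node $b$ to the primary's own chosen set does not alter the competitive environment at the other nodes: it does not, because a competitor's winning/losing event at node $a$ is governed solely by the number of rival channels offered at $a$ with penalty below the quoted one, which is a function of the \emph{other} primaries' node-selection and penalty strategies, held fixed. One should also check that $b$ can be assigned a valid penalty (any penalty $\le v$ works, and the best response exists by Theorem~\ref{singlelocation} applied at node $b$ with the induced $\alpha_{b,j}$), so the deviating strategy is well-defined. With these points in place the contradiction is immediate, and the observation follows.
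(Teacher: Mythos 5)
Your argument is correct and is essentially the paper's own proof: contradict non-maximality by augmenting $I$ with an available node $z$, keep everything else fixed, and collect the strictly positive extra payoff $(f_j(v)-c)W\bigl(\sum_{k=1}^{n}q_k\bigr)>0$ at $z$. The only small imprecision is attributing the positive sale probability at penalty $v$ to $m<l$; the paper derives it from $\sum_{j=1}^{n}q_j<1$ (equation (\ref{prob})), which ensures fewer than $m$ rivals offer at the node with positive probability.
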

\begin{proof}
Suppose not; so an independent set $I$ has been chosen with positive probability under an NE strategy profile, but it is not maximal which in turn implies that there exists a node $z$, such that $\bar{I}=I\cup \{z\}$ is an independent set. Since $\sum_{j=1}^{n}q_j=q<1$ (from (\ref{prob})), hence at node $z$, primary $1$ will attain at least a payoff of $(f_j(v)-c)W(q)>0$  for state $j$ when the primary selects the highest possible penalty $v$. Hence, a primary can attain strictly higher payoff  by choosing independent set $\bar{I}$ compared to $I$. Hence, the result follows.
\end{proof}
Observation~\ref{maximal} enables us to focus only on the maximal independent sets for an NE strategy profile.
\begin{obs}\label{fact1}
For a maximal independent set $I$-\\
(i) If $s\in I$, but $s+2\notin I$, then $s+3\in I$ for some $s\in V$. \\
(ii) If $s+2\in I$, but $s\notin I$, then $s-1\in I$ for some $s\in V$.
\end{obs}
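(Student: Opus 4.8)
The plan is to obtain both parts as immediate consequences of the definition of a maximal independent set together with the elementary structure of the linear conflict graph, in which every interior node $s$ is adjacent exactly to $s-1$ and $s+1$, while the two end nodes $1$ and $M$ each have a single neighbour. The one fact I will use repeatedly is that, since $I$ is maximal, every node $z\notin I$ must have at least one neighbour lying in $I$; otherwise $I\cup\{z\}$ would be an independent set strictly containing $I$.

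For part (i), I would start from the hypotheses $s\in I$ and $s+2\notin I$. Since $s$ and $s+1$ are adjacent and $s\in I$, independence of $I$ forces $s+1\notin I$. Now apply maximality to the node $s+2$: being outside $I$, it must have a neighbour in $I$, and its only candidate neighbours are $s+1$ and $s+3$. As $s+1\notin I$, the node $s+3$ must exist and satisfy $s+3\in I$, which is exactly the claim. I would also remark that this reasoning incidentally rules out $s+2=M$, because then $s+2$ would have $s+1$ as its sole neighbour and hence no neighbour in $I$, contradicting maximality; so the hypotheses already guarantee that $s+3$ is a genuine vertex.

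For part (ii), the argument is the mirror image: from $s+2\in I$ and the adjacency of $s+1$ and $s+2$ we obtain $s+1\notin I$; applying maximality to $s\notin I$, one of its neighbours $s-1,\,s+1$ must lie in $I$, and since it cannot be $s+1$, it must be $s-1$, which therefore exists and belongs to $I$.

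I do not anticipate any real obstacle, as each part is a one-line deduction from maximality and the path structure. The only subtlety worth flagging is the boundary case, namely ensuring that the node $s+3$ (respectively $s-1$) actually exists; this is handled precisely as above, by noting that the stated hypotheses together with maximality are incompatible with the relevant node being an endpoint of the linear graph.
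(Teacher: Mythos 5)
Your proof is correct and uses essentially the same argument as the paper: both hinge on $s+1\notin I$ (forced by $s\in I$ in part (i), resp.\ $s+2\in I$ in part (ii)) and then invoke maximality of $I$ at the excluded node $s+2$ (resp.\ $s$); the paper phrases this as ``otherwise $I\cup\{s+2\}$ (resp.\ $I\cup\{s\}$) would still be independent,'' which is the same fact as your ``every node outside $I$ has a neighbour in $I$.'' Your explicit handling of the endpoint case is a small additional care the paper leaves implicit, but it does not change the argument.
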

\begin{proof}
{\em part (i)}: If it is not then $I\cup \{s+2\}$ is maximal, since $s+1\notin I$ (as $s\in I$ and $I$ is an independent set); which contradicts that $I$ is maximal.

{\em part (ii)}: If it is not then $I\cup\{s\}$ is an independent set since $s-1\notin I, s+1\notin I$ which contradicts that $I$ is maximal.
 \end{proof}
\begin{obs}\label{necond}
Consider an independent set $I$, such that $s\in I$, but $s+2\notin I$, for some $s\in \{1,\ldots, M-2\}$; NE independent selection strategy profile selects $I$ with positive probability, the following condition must be satisfied for $s\leq M-3$
\begin{align}\label{condn}
u_{s,j,max}\geq u_{s+1,j,max}, u_{s+3,j,max}\geq u_{s+2,j,max}
\quad \text{for } j\in \{1,\ldots,n\}
\end{align}
\end{obs}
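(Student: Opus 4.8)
The plan is to read the hypothesis in the natural way: fix a channel state $j$ at which the NE independent-set selection distribution $\beta_j$ puts positive weight on $I$, so that choosing $I$ is a best response of a primary at state $j$, and derive (\ref{condn}) at this $j$ (hence at every such $j$). First I would record the purely combinatorial facts about $I$. By Observation~\ref{maximal} an NE only selects maximal independent sets, so $I$ is maximal; since $s$ and $s+1$ are adjacent in the linear graph, $s\in I$ forces $s+1\notin I$; and by Observation~\ref{fact1}(i), $s\in I$ together with $s+2\notin I$ forces $s+3\in I$ (this is exactly where $s\leq M-3$ is needed, so that $s+3$ is a genuine vertex). Thus the local picture around $s$ is $s\in I$, $s+1,s+2\notin I$, $s+3\in I$.

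Next I would exhibit two ``swapped'' independent sets that a primary could deviate to from $I$. Let $I' := (I\setminus\{s\})\cup\{s+1\}$ and $I'' := (I\setminus\{s+3\})\cup\{s+2\}$. Using the linear-graph adjacency, $I'$ is an independent set: the only neighbours of $s+1$ are $s$ (removed) and $s+2$ (not in $I$, hence not in $I'$), while $I\setminus\{s\}$ is independent; likewise $I''$ is an independent set, since the neighbours of $s+2$ are $s+1$ ($\notin I$) and $s+3$ (removed). These are legitimate (not necessarily maximal) alternatives available to any primary, so a primary that follows the NE and selects $I$ cannot do strictly better by switching to $I'$ or $I''$.

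The core step is then a best-response comparison. The payoff over an independent set is additive over its nodes, and the penalty at each node may be optimised independently; hence the best attainable payoff from $I'$ at state $j$ equals $\sum_{a\in I'}u_{a,j,max}$, and the payoff from $I$ is at most $\sum_{a\in I}u_{a,j,max}$ (in fact equal, since every $a\in I$ has $\alpha_{a,j}>0$, so by Lemma~\ref{separation-s1} and Theorem~\ref{singlelocation} the prescribed NE penalty distribution at $a$ already attains $u_{a,j,max}$ — but only the inequality is needed). Since selecting $I$ is a best response,
\begin{align*}
\sum_{a\in I}u_{a,j,max}\ \geq\ \sum_{a\in I'}u_{a,j,max}\ =\ \sum_{a\in I\setminus\{s\}}u_{a,j,max}+u_{s+1,j,max},
\end{align*}
and cancelling the common sum over $I\setminus\{s\}$ gives $u_{s,j,max}\geq u_{s+1,j,max}$. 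Running the identical comparison with $I''$ in place of $I'$ (cancelling the common sum over $I\setminus\{s+3\}$) gives $u_{s+3,j,max}\geq u_{s+2,j,max}$, which is (\ref{condn}); the argument applies verbatim at every state $j$ for which $I$ has positive probability.

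I expect the only genuinely delicate point to be the additive decomposition of the best attainable payoff over an independent set into the node-wise maxima $u_{a,j,max}$: one must justify that $u_{a,j,max}$ is actually attainable at node $a$ (even when $a$ is outside the NE support, per the footnote defining $u_{a,j,max}$), which follows from Theorem~\ref{singlelocation} and Lemma~\ref{separation-s1} characterising the single-node best responses, and that optimising penalties at distinct nodes does not interact. Verifying that $I'$ and $I''$ are independent sets, and the final cancellations, are routine.
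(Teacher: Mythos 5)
Your proposal is correct and follows essentially the same route as the paper: the paper's proof is exactly the local swap argument, replacing $s$ by $s+1$ (valid since $s+2\notin I$) and $s+3$ by $s+2$, and concluding that a strict violation of (\ref{condn}) would let a primary strictly improve by deviating to the swapped independent set. Your additional care in justifying $s+3\in I$ via Observation~\ref{fact1} and in decomposing the payoff additively into the node-wise maxima $u_{a,j,max}$ only makes explicit what the paper leaves implicit.
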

\begin{proof}
Note that if $s=M-2$, then $I$ does not contain node $M, M-1$, hence $I$ is not maximal. Thus, an NE strategy profile can not select $I$ by Observation~\ref{maximal}. Hence, we must have $s\leq M-3$. 

If $u_{s,j,max}<u_{s+1,j,max}$, then we can replace node $s$ with node $s+1$ and we obtain an independent set $\bar{I}$ as $s+2\notin I$. But, we can get strictly higher payoff at the independent set $\bar{I}$, as all the nodes are same except $s$ and $u_{s,j,max}<u_{s+1,j,max}$. This contradicts that NE strategy profile selects $I$ with positive probability. 

Similarly if $u_{s+3,j,max}<u_{s+2,j,max}$ then we obtain an independent set by replacing node $s+3$ with $s+2$ in $I$ and can get a strictly higher payoff at that independent set.
\end{proof}
\begin{lem}\label{lownp}
i) If $u_{s,k,max}>u_{s+2,k,max}$, then $u_{1,i,max}>u_{3,i,max}$ for some $i\in \{1,\ldots,n\}$. \\
ii)If $u_{s,k,max}<u_{s+2,k,max}$, then $u_{M,i,max}>u_{M-2,i,max}$ for some $i\in \{1,\ldots,n\}$.
\end{lem}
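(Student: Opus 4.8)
The plan is to derive part (ii) from part (i) by the reflection symmetry of the path, and to prove part (i) by pushing the payoff imbalance one node at a time toward the left endpoint, alternating between Lemmas~\ref{high} and \ref{high2} (which trade maximum‑payoff inequalities for node‑offer‑probability inequalities and back) and the combinatorics of maximal independent sets of a linear graph (Observations~\ref{maximal}, \ref{fact1}, \ref{necond}).

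First I would dispose of part (ii). The map $a\mapsto M+1-a$ is a graph automorphism of the linear graph, so it carries any symmetric NE to a symmetric NE in which node $a$ plays the former role of node $M+1-a$; in the reflected equilibrium $u_{a,j,max}$ becomes $u_{M+1-a,j,max}$. Applying part (i) to the reflected equilibrium with $s$ replaced by $M-1-s$ turns its hypothesis into $u_{M+1-s,k,max}>u_{M-1-s,k,max}$, i.e.\ $u_{s,k,max}<u_{s+2,k,max}$ after relabelling, and its conclusion into $u_{M,i,max}>u_{M-2,i,max}$; that is exactly part (ii). So it suffices to prove part (i).

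For part (i) I would take, among all pairs $(s,k)$ with $u_{s,k,max}>u_{s+2,k,max}$, one with $s$ smallest; if $s=1$ then $i=k$ works and we are done, so assume $s\ge 2$. By Lemma~\ref{high2} (with node $s$ in the role of ``$s$'' and node $s+2$ in the role of ``$r$'') there is a state $i$ with $u_{s,i,max}>u_{s+2,i,max}$ and $\alpha_{s,i}<\alpha_{s+2,i}$; writing $P_i(a)=\sum_{I\ni a}\beta_i(I)$, so that $\alpha_{a,i}=q_iP_i(a)$, this says $P_i(s)<P_i(s+2)$. By Observation~\ref{maximal} only maximal independent sets are ever offered, and by Observation~\ref{fact1} every node absent from a maximal independent set of the path has a present neighbour; this yields the local relations $P_i(a)+P_i(a+1)\le 1$ and $1-P_i(a+1)\le P_i(a)+P_i(a+2)$ for interior $a$ and, since node $1$ has a single neighbour, $P_i(1)+P_i(2)=1$. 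Using these — and Observation~\ref{necond} to rule out the tie configurations (a symmetric NE cannot offer a set containing $s$ but not $s+2$ once $u_{s,j,max}<u_{s+1,j,max}$ for some $j$, and symmetrically) — I would convert $P_i(s)<P_i(s+2)$ into a strict probability imbalance one step to the left, of the form $\alpha_{s-2,i}>\alpha_{s,i}$ or $\alpha_{s-1,i}>\alpha_{s+1,i}$, together with the matching weak payoff inequality; feeding this back through Lemma~\ref{high} (and, to carry the conclusion across several states, the recurrence of Lemma~\ref{obs:recurrence}) yields $u_{s-2,j,max}>u_{s,j,max}$ (respectively $u_{s-1,j,max}>u_{s+1,j,max}$) for some $j$, contradicting the minimality of $s$. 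The degenerate cases — a node never being offered, so that its maximum payoff equals the global maximum $f_j(v)-c$ at every state — are handled directly and quickly.

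The hard part will be this conversion step in the middle. Lemmas~\ref{high} and \ref{high2} require simultaneous control of the offer probabilities $\alpha_{\cdot,j}$ and the tail sums $\gamma_{\cdot,j}$, and the state $i$ at which the probability imbalance appears need not be the original state $k$; moreover, for a general linear graph an $\alpha$‑imbalance between nodes $s$ and $s+2$ does not translate cleanly to one between $s-2$ and $s$ except at the very endpoints (where $P_j(1)+P_j(2)=1$ and $P_j(M-1)+P_j(M)=1$ hold), so the ``push one step left'' move needs a careful case analysis on the local pattern of the offered maximal independent sets together with an inner induction over channel states in the spirit of Lemma~\ref{obs:recurrence}. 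In particular one must account for the fact that node $1$'s maximum payoff at state $1$ is actually no larger than node $3$'s (since $P_j(3)\le P_j(1)$ for every $j$ in a path), so the strict gain $u_{1,i,max}>u_{3,i,max}$ can only materialise at a higher state, through the ratio effect of Assumption~\ref{assum1}.
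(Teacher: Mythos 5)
Your reduction of part (ii) to part (i) by the reflection automorphism is fine and matches the paper's ``by symmetry,'' and your overall descent on $s$ (minimal counterexample) is logically equivalent to the paper's recurrence. The first half of your part (i) argument is also exactly the paper's: apply Lemma~\ref{high2} to obtain a state $i$ with $u_{s,i,max}>u_{s+2,i,max}$ \emph{and} $\alpha_{s,i}<\alpha_{s+2,i}$, and note that the latter forces some maximal independent set $I$ with $s+2\in I$, $s\notin I$ to be played with positive probability at state $i$ (Observation~\ref{maximal}).

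The gap is in your ``conversion step,'' which you leave as a sketch and route through an unnecessary and problematic detour. You propose to turn the probability imbalance $\alpha_{s,i}<\alpha_{s+2,i}$ into an $\alpha$-imbalance one node to the left and then feed it back through Lemma~\ref{high} and Lemma~\ref{obs:recurrence} to recover a payoff inequality; you yourself note this translation ``does not go cleanly'' away from the endpoints, and indeed it is not needed. The missing observation is that the set $I$ above, being maximal with $s+2\in I$ and $s\notin I$, must contain $s-1$ (Observation~\ref{fact1}(ii)), and hence $I$ contains $s-1$ but not $s+1$. Applying Observation~\ref{necond} to $I$ at node $s-1$ then gives, \emph{at the same state $i$}, the two payoff inequalities $u_{s-1,i,max}\geq u_{s,i,max}$ and $u_{s+2,i,max}\geq u_{s+1,i,max}$; chaining these with $u_{s,i,max}>u_{s+2,i,max}$ yields $u_{s-1,i,max}>u_{s+1,i,max}$ directly, with no further use of Lemma~\ref{high}, no tracking of $\gamma$'s, and no case analysis on the local pattern of offered sets. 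With that step in place your descent terminates at $s=1$ exactly as in the paper; without it, the proposal does not constitute a proof, since the step you flag as ``the hard part'' is precisely the content of the lemma.
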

\begin{proof}
We prove (i). The proof of (ii) will be similar to the proof of part (i) by symmetry.\\
 Since $u_{s,k,max}>u_{s+2,k,max}$, hence, from Lemma ~\ref{high2}, there exists $i\in \{1,\ldots,n\}$ such that $u_{s,i,max}>u_{s+2,i,max}$, but $\alpha_{s,i}<\alpha_{s+2,i}$. Hence, there must exist a maximal independent set $I$ such that $s\notin I$, but $s+2\in I$, which is chosen with positive probability in an NE strategy profile when the channel state is $i$. But, as $I$ is maximal, thus, $s-1\in I$ from Observation~\ref{fact1}. Also from Observation~\ref{necond}, we must have
\begin{align}\label{n80}
u_{s-1,i,max}\geq u_{s,i,max}, u_{s+2,i,max}\geq u_{s+1,i,max}
\end{align}
Since $u_{s,i,max}>u_{s+2,i,max}$, thus, from (\ref{n80}), we obtain
\begin{align}
u_{s-1,i,max}>u_{s+1,i,max}\nonumber
\end{align}
Hence, we obtain $u_{s-1,i,max}>u_{s+1,i,max}$  for some $i\in \{1,\ldots,n\}$  only using the fact that $u_{s,k,max}>u_{s+2,k,max}$.Thus, by recurrence on the index $s$ we obtain the result.
\end{proof}
Next Lemma characterizes that under an NE strategy profile maximum expected payoff must be equal at every node of $I_k, k\in \{1,2\}$.
\begin{lem}\label{lm:eqpayoff}
Under NE strategy profile, we must have $\forall j\in \{1,\ldots,n\}, \forall s,r \in I_k, k\in\{1,2\}$
\begin{align}\label{nsc}
u_{s,j,max}=u_{r,j,max}
\end{align}
\end{lem}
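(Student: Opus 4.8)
The plan is to combine the structural facts collected in Appendix~\ref{sec:symmetricstructure} (Lemmas~\ref{high}, \ref{high2} and the auxiliary observations) with the geometry of the linear graph (Observations~\ref{maximal}, \ref{fact1}, \ref{necond}) and, above all, with the amplification result Lemma~\ref{lownp}. The first step is to reduce the claim to the two boundary nodes. Observe that $I_1$ is the set of odd-numbered nodes and $I_2$ is the set of even-numbered nodes, and any two nodes $s, r$ in the same $I_k$ differ by an even number, so they are linked by a chain of nodes $s, s+2, s+4, \ldots, r$ all inside $I_k$. Hence it suffices to prove, for every consecutive pair $s, s+2$ lying in the same part, that $u_{s,j,max} = u_{s+2,j,max}$ for all $j$. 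By symmetry of the linear graph (reflection $t \mapsto M+1-t$), and by a further reduction, it is enough to rule out both strict inequalities $u_{s,k,max} > u_{s+2,k,max}$ and $u_{s,k,max} < u_{s+2,k,max}$ for any $s$ and any $k$.

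Next I would invoke Lemma~\ref{lownp}: if $u_{s,k,max} > u_{s+2,k,max}$ for some $s,k$, then $u_{1,i,max} > u_{3,i,max}$ for some state $i$; and if $u_{s,k,max} < u_{s+2,k,max}$ for some $s,k$, then $u_{M,i,max} > u_{M-2,i,max}$ for some state $i$. So the whole lemma collapses to showing two things: (a) $u_{1,i,max} \le u_{3,i,max}$ for every $i$, and (b) $u_{M,i,max} \le u_{M-2,i,max}$ for every $i$. These are again reflections of each other, so I focus on (a). The idea is that node $1$ has degree one in the linear graph: its only neighbour is node $2$. Suppose toward a contradiction $u_{1,i,max} > u_{3,i,max}$ for some $i$. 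By Lemma~\ref{high2} there is a state $i'$ with $u_{1,i',max} > u_{3,i',max}$ and $\alpha_{1,i'} < \alpha_{3,i'}$; the strict inequality $\alpha_{1,i'} < \alpha_{3,i'}$ forces the existence of a maximal independent set $I$, selected with positive probability at channel state $i'$, with $3 \in I$ but $1 \notin I$. Since $I$ is maximal and $1 \notin I$, node $1$'s only neighbour $2$ must be in $I$ (otherwise $I \cup \{1\}$ would still be independent, contradicting maximality). But then $2 \in I$ and $3 \in I$ are adjacent, contradicting that $I$ is an independent set. This contradiction establishes (a); (b) follows by the reflection symmetry. Therefore no strict inequality can occur, and $u_{s,j,max} = u_{s+2,j,max}$ for every consecutive same-part pair, which chains up to $u_{s,j,max} = u_{r,j,max}$ for all $s, r \in I_k$, $k \in \{1,2\}$, and all $j$.

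The main obstacle I anticipate is making the boundary argument for node $1$ watertight when $\alpha_{1,i'}$ or $\alpha_{3,i'}$ could be zero: I must be careful that $u_{1,i',max}$ is defined as the maximum \emph{attainable} payoff even when the node is chosen with probability zero (Definition before Lemma~\ref{high}), and that Lemma~\ref{high2} genuinely yields a state where the node-selection probability strictly increases, which in turn genuinely produces a concrete maximal independent set containing $3$ but not $1$. A secondary subtlety is confirming that Lemma~\ref{lownp}'s hypotheses match exactly what the reduction produces — in particular that ``$u_{s,k,max} > u_{s+2,k,max}$ for \emph{some} $s,k$'' (rather than for a fixed pair) is what Lemma~\ref{lownp} consumes, which it does. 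Once those bookkeeping points are nailed down, the proof is short: reduce to boundary nodes, apply Lemma~\ref{lownp}, and exploit the degree-one structure of the endpoints of the linear graph.
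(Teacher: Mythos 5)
Your proposal is correct and follows essentially the same route as the paper: reduce a violation between two nodes of the same part to a violation for some consecutive pair $a,a+2$, push it to the boundary via Lemma~\ref{lownp}, convert $u_{1,i,max}>u_{3,i,max}$ into $\alpha_{1,b}<\alpha_{3,b}$ via Lemma~\ref{high2}, and kill that with the observation that every maximal independent set containing node $3$ must contain node $1$ (and symmetrically at node $M$). The only cosmetic difference is that the paper establishes $\alpha_{1,i}\geq\alpha_{3,i}$ and $\alpha_{M,i}\geq\alpha_{M-2,i}$ as a preliminary fact before deriving the contradiction, whereas you produce the offending maximal independent set at the end.
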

\begin{proof}
First, we prove $\alpha_{1,i}\geq \alpha_{3,i}$, $\alpha_{M,i}\geq \alpha_{M-2,i}, \forall i$.\\
We show that $\alpha_{1,i}\geq\alpha_{3,i}$ $\forall i$; by symmetry we get $\alpha_{M,i}\geq \alpha_{M-2,i}$.  Suppose, $\alpha_{1,j}<\alpha_{3,j}$ for some $j\in \{1,\ldots,n\}$. Then, there must exist a maximal independent set $I$ such that node $1\notin I$, but node $3\in I$; which is not possible (figure~\ref{fig:linear}). \\
Now, we are ready to prove the lemma.
Suppose the statement is false. So, we must have $u_{s,j,max}>u_{r,j,max}$ for some $j\in \{1,\ldots,n\}$ and $s,r\in I_k,k\in \{1,2\}$. We rule out $s<r$, by symmetry it follows that $s>r$; which completes the proof.

Since $u_{s,j,max}>u_{r,j,max}$, we must have some $a\in \{s,\ldots,r-2\}$, such that $u_{a,j,max}>u_{a+2,j,max}$.  Otherwise, $u_{s,j,max}\leq u_{r,j,max}$ since $r-s=2z$ for some positive integer $z$.  But, this entails that $u_{1,i,max}>u_{3,i,max}$ by Lemma~\ref{lownp} for some $i\in \{1,\ldots,n\}$, which in turn entails that $\alpha_{1,b}<\alpha_{3,b}$ for some $b\in \{1,\ldots,n\}$ (Lemma~\ref{high2}). But, we have already proved that $\alpha_{1,b}\geq \alpha_{3,b} \forall b\in\{1,\ldots,n\}$. Hence, the result follows.
\end{proof}

Next, lemma  shows that under an NE strategy profile nodes in $I_k, k\in \{1,2\}$ are selected with equal probability.
\begin{lem}\label{equalnodedist}
For state $z=1,\ldots,n$, $\alpha_{z,i}=\alpha_{z,j}$ where $i,j\in I_s, s\in \{1,2\}$.
\end{lem}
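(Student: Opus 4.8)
The plan is to derive Lemma~\ref{equalnodedist} from Lemma~\ref{lm:eqpayoff} together with the payoff identities (\ref{vbest}), (\ref{eq:payi}), (\ref{eq:payi+1}) and Assumption~\ref{assum1}. Fix $s\in\{1,2\}$ and two nodes $a,b\in I_s$; by Lemma~\ref{lm:eqpayoff} we have $u_{a,j,max}=u_{b,j,max}$ for every $j\in\{1,\ldots,n\}$. Since $\gamma_{a,j}=\sum_{k=j}^{n}\alpha_{a,k}$ by (\ref{ne+1}) and $\alpha_{a,j}=\gamma_{a,j}-\gamma_{a,j+1}$ with $\gamma_{a,n+1}=0$, it suffices to prove $\gamma_{a,j}=\gamma_{b,j}$ for all $j\in\{1,\ldots,n\}$, as this then gives $\alpha_{a,j}=\alpha_{b,j}$ for all $j$, which is the claim.

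For $j=1$ this is immediate: by Lemma~\ref{lem:best} the state-$1$ maximum is attained at $v$, so $u_{a,1,max}=(f_1(v)-c)W(\gamma_{a,1})$ by (\ref{vbest}), and since $W(\cdot)$ is strictly decreasing, hence injective, $u_{a,1,max}=u_{b,1,max}$ forces $\gamma_{a,1}=\gamma_{b,1}$. For the remaining indices I would run, for each $i=1,\ldots,n-1$, the following step to obtain $\gamma_{a,i+1}=\gamma_{b,i+1}$. Dividing (\ref{eq:payi+1}) by (\ref{eq:payi}) gives $u_{a,i+1,max}/u_{a,i,max}=(f_{i+1}(L_{a,i})-c)/(f_i(L_{a,i})-c)=:R_i(L_{a,i})$; this is legitimate because $f_i(L_{a,i})>c$ (by (\ref{nnode151})) and $W(\gamma_{a,i+1})>0$ (as $\gamma_{a,i+1}\le\gamma_{a,1}$ is at most the probability that the channel is not in state $0$, which is $<1$ by (\ref{prob})), so $u_{a,i,max}>0$. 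The equalities $u_{a,i,max}=u_{b,i,max}$ and $u_{a,i+1,max}=u_{b,i+1,max}$ give $R_i(L_{a,i})=R_i(L_{b,i})$; once $R_i$ is known to be injective on $\{x:f_i(x)>c\}$ this forces $L_{a,i}=L_{b,i}$, whereupon (\ref{eq:payi}) reads $(f_i(L_{a,i})-c)W(\gamma_{a,i+1})=(f_i(L_{b,i})-c)W(\gamma_{b,i+1})$; cancelling the equal positive factor $f_i(L_{a,i})-c$ and using injectivity of $W$ yields $\gamma_{a,i+1}=\gamma_{b,i+1}$. Together with the $j=1$ case this gives $\gamma_{a,j}=\gamma_{b,j}$ for all $j$.

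The crux, and the step I expect to require the structural hypothesis on the penalty functions, is the injectivity of $R_i$. Writing $1/R_i(x)=(f_i(x)-c)/(f_{i+1}(x)-c)$ and applying (\ref{con1}) with the two indices taken to be $i$ and $i+1$ (valid since $i<i+1$ and the arguments exceed $g_i(c)$), for $x>y>g_i(c)$ one gets $(f_i(y)-c)/(f_{i+1}(y)-c)<(f_i(x)-c)/(f_{i+1}(x)-c)$, i.e.\ $1/R_i$ is strictly increasing on $(g_i(c),\infty)$, so $R_i$ is strictly decreasing there and in particular injective; since $L_{a,i},L_{b,i}>g_i(c)$ (equivalently $f_i(L_{\cdot,i})>c$, the node-indexed analogue of (\ref{pay0})), this is exactly what the previous paragraph needs. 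The remaining work is bookkeeping only; in particular nothing in the argument requires $\alpha_{a,i}>0$, because the identities (\ref{vbest})--(\ref{eq:payi+1}) and Lemma~\ref{lem:best} hold at every node whether or not it is selected with positive probability under the equilibrium.
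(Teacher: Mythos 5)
Your proof is correct, and it takes a genuinely different route from the paper's. The paper argues by contradiction on the smallest state $k$ at which $\alpha_{a,k}\neq\alpha_{b,k}$: it first extracts $\gamma_{a,1}=\gamma_{b,1}$ from the state-$1$ payoffs, uses the minimality of $k$ to push this to $\gamma_{a,k}=\gamma_{b,k}$, and then invokes the comparison Lemma~\ref{high} (itself a consequence of Assumption~\ref{assum1}) to force $u_{a,k+1,max}>u_{b,k+1,max}$, contradicting Lemma~\ref{lm:eqpayoff}. You instead observe that (\ref{eq:payi}) and (\ref{eq:payi+1}) share the common evaluation point $L_{\cdot,i}$ and the common factor $W(\gamma_{\cdot,i+1})$, so the ratio $u_{\cdot,i+1,max}/u_{\cdot,i,max}$ eliminates $W$ and isolates $\bigl(f_{i+1}(L_{\cdot,i})-c\bigr)/\bigl(f_i(L_{\cdot,i})-c\bigr)$, whose strict monotonicity on $\{x:f_i(x)>c\}$ is exactly the $j=i+1$ instance of (\ref{con1}); this pins down $L_{a,i}=L_{b,i}$ and then, by back-substitution and injectivity of $W$, $\gamma_{a,i+1}=\gamma_{b,i+1}$, with each state handled independently rather than by a minimal-counterexample induction. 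Your justifications of the side conditions are the right ones: $u_{a,i,max}>0$ because $\gamma_{a,i+1}\leq\sum_k q_k<1$ by (\ref{prob}) so $W(\gamma_{a,i+1})>0$ and $f_i(L_{a,i})>c$ (the node-indexed analogue of (\ref{pay0})), and the identities hold at every node regardless of whether it is selected with positive probability, which is precisely what Lemma~\ref{lem:best} and the footnoted definition of $u_{a,i,max}$ provide. What each approach buys: the paper's reuses Lemma~\ref{high}, which it needs elsewhere anyway (e.g.\ in Lemma~\ref{high2}); yours is shorter, avoids that machinery entirely, and delivers the stronger intermediate fact $L_{a,i}=L_{b,i}$ for free. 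Both ultimately rest on Assumption~\ref{assum1} in the same essential way.
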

\begin{proof}
Let, $k$ be the lowest channel state, for which the statement is false. Thus, there must exist node $a,b\in I_s, s\in \{1,2\}$ such that, $\alpha_{a,k}>\alpha_{b,k}$, but $u_{a,k,max}= u_{b,k,max}$ (by Lemma~\ref{lm:eqpayoff}). First we rule out that $k=n$ (case i) and then $k<n$ (case ii).

\textit{Case 1}
Suppose, $k=n$.\\ By definition of $k$, $\alpha_{a,j}=\alpha_{b,j}$ $\forall j<k$, thus from Observation~\ref{identity3}, we have $\gamma_{a,1}>\gamma_{b,1}$. Since $W(\cdot)$ is strictly decreasing function, thus from (\ref{vbest}) we obtain $u_{a,1,max}<u_{b,1,max}$; which contradicts (\ref{nsc}).

\textit{Case 2} Now, suppose $k<n$.\\
Since $u_{a,1,max}=u_{b,1,max}$ by Lemma~\ref{lm:eqpayoff}, thus from (\ref{vbest}) $\gamma_{a,1}=\gamma_{b,1}$. Thus from Observation~\ref{identity3}
\begin{align}\label{n203}
\gamma_{a,k}+\sum_{j=1}^{k-1}\alpha_{a,j}=\gamma_{b,k}+\sum_{j=1}^{k-1}\alpha_{b,j}
\end{align}
By definition of $k$, we have $\alpha_{a,j}=\alpha_{b,j}$ $\forall j\leq k-1$. Hence, from (\ref{n203}), $\gamma_{a,k}=\gamma_{b,k}$. 
Since $\alpha_{a,k}>\alpha_{b,k}$, $\gamma_{a,k}=\gamma_{b,k}$, and $u_{a,k,max}=u_{b,k,max}$, hence  by Lemma ~\ref{high}, we obtain $u_{a,k+1,max}>u_{b,k+1,max}$. This expression again contradicts (\ref{nsc}). Hence $k\neq n$.
\end{proof}
From Lemma~\ref{equalnodedist}, we have $\alpha_{s,j}=\alpha_{r,j}=\bar{\alpha}_{k,j}$(let) where $s,r\in I_k$ $k\in \{1,2\}$ $j=1,\ldots,n$. From Lemma~\ref{lm:eqpayoff}, we have $u_{s,j,max}=u_{r,j,max}=\bar{u}_{k,j}$(let). 

\textit{Proof of Theorem~\ref{uniquelinear}}: First, we will show that for any NE strategy profile $\bar{\alpha}_{k,j}$ we must have $\sum_{k=1}^{2}\bar{\alpha}_{k,j}\geq 1$ $\forall j$. Then, we will show that if a primary chooses a maximal independent set other than $I_1$ and $I_2$ with positive probability, then we must have $\sum_{k=1}^{2}\bar{\alpha}_{k,j}<1$, which completes the proof.

Suppose $\sum_{k=1}^{2}\bar{\alpha}_{k,j}<1$ but it is an  NE  for some $j$. Since $I_1$ and $I_2$ constitute a partition of $V$, thus, the expected payoff that any primary at channel state $j$ will get is the following
\begin{align}\label{totalexpay}
& \sum_{s\in I_1}\bar{\alpha}_{1,j}u_{s,j}+\sum_{r\in I_2}\bar{\alpha}_{2,j}u_{r,j}\nonumber\\
& =\sum_{k=1}^{2}M_k\bar{\alpha}_{k,j}\bar{u}_{k,j}\quad (\text{since} |I_k|=M_k, u_{s,j}=\bar{u}_{k,j}, s\in I_k)
\end{align}
Consider the following unilateral deviation for primary 1 at channel state $j$: Primary 1 chooses $I_1$ with probability $\bar{\alpha}_{1,j}$ and $I_2$ with probability $1-\bar{\alpha}_{1,j}$. Since $\bar{u}_{k,j}$  remains the same, is strictly positive, and $1-\bar{\alpha}_{1,j}>\bar{\alpha}_{2,j}$ , hence primary 1 gets a strictly higher payoff following the above mentioned strategy by (\ref{totalexpay}). This contradicts that $\bar{\alpha}_{k,j}$ is an NE distribution.

Next, consider an NE strategy profile which selects a maximal independent set $I$, which has at least one node both from $I_1$ and $I_2$, with positive probability.  Hence, there exists a node $a$ such that $a, a+1\notin I$. Since $a$ and $a+1$ are adjacent, hence both can not appear in any independent set $\bar{I}\in \mathcal{I}$ otherwise $\bar{I}$ can not be  an independent set. Hence, by valid distribution property, we must have
 \begin{align}\label{validdistnode}
 \alpha_{a,j}+\alpha_{a+1,j}\leq 1
 \end{align}
 On the other hand for independent set $I$, both $a, a+1\notin I$. Since $I$ is chosen with positive probability, hence from (\ref{validdistnode})
 \begin{align}\label{n207}
 \alpha_{a,j}+\alpha_{a+1,j}<1
 \end{align}
 Without loss of generality, we can assume that $a\in I_1$, hence $a+1\in I_2$. Thus, $\alpha_{a,j}=\bar{\alpha}_{1,j}$ and $\alpha_{a+1,j}=\bar{\alpha}_{2,j}$. We have already shown that for any NE strategy profile we must have $\sum_{k=1}^{2}\bar{\alpha}_{k,j}=1$ which contradicts (\ref{n207}). Hence, a primary can not choose an independent set which contains at least one node from $I_1$ and $I_2$ under an NE strategy; since $I_1$ and $I_2$ constitute a partition of $V$; thus, only subsets of either $I_1$ or $I_2$ can be selected with positive probability. Since proper subsets of either $I_1$ or $I_2$ are not maximal, they can not be chosen with positive probability under an NE strategy by Observation~\ref{maximal}. Hence, the result follows.\qed


\subsection{Proof of Lemmas~\ref{thm:notsame} and \ref{thm:notane} (Section~\ref{sec:spsym})}\label{sec:appendix_lineargraph}

\subsubsection{Proof of Lemma~\ref{thm:notsame}}:In order to prove Lemma~\ref{thm:notsame} first, we describe an infinite  set of strategy profile $SP_{l,r,r_1}$. Subsequently, we show that every strategy profile in $SP_{l,r,r_1}$ is an NE.

Note that at a channel state vector $J$, a linear graph consists of disjoint smaller linear graphs (Fig.~\ref{fig:different_meanvalid}). First, we introduce some notations. Let $M_i$ be the linear graph which starts from node $i$ i.e. the channel is not available at node $i-1$ if $i>1$ (fig.~\ref{fig:different_meanvalid}), but it is available at node $i$. 

In $M_i$ the two maximal independent sets which partition the set of nodes in $M_i$ are: $I_{1,i}$ which contains the nodes numbered $i,i+2,\ldots$ and $I_{2,i}$ which contains the nodes numbered $i+1,i+3,\ldots$.  In figure ~\ref{fig:different_meanvalid}, $M_i$ and $M_j$ constitute two disconnected linear graphs. The cardinality of $M_i$ can be an even or odd number depending on the number of consecutive nodes where the channel is available starting from node $i$. To illustrate the cardinalities of $|M_i|$, consider the linear graph with $4$ nodes. Here, $|M_1|$ can take any value in $\{1,2,3,4\}$. When $|M_1|=1$, then the channel is available at node $1$ but not at node $2$. When $|M_1|=2$, then the channel is available at node $1$ and $2$, but the channel is not available at node $3$. Here $I_{1,1}=\{1\}$ and $I_{2,1}=\{2\}$. When $|M_1|=3$, then the channel is available at nodes $1,2, 3$, but the channel is unavailable at node $4$. Here, $I_{1,1}=\{1,3\}$ and $I_{2,1}=\{2\}$. When $|M_1|=4$, the channel of the primary is available at all nodes. Thus, $I_{1,1}$ coincides with $I_1$ and $I_{2,1}$ coincides with $I_2$ where $I_1=\{1,3\}$ and $I_2=\{2,4\}$.
\begin{figure}
\includegraphics[width=120mm,height=40mm]{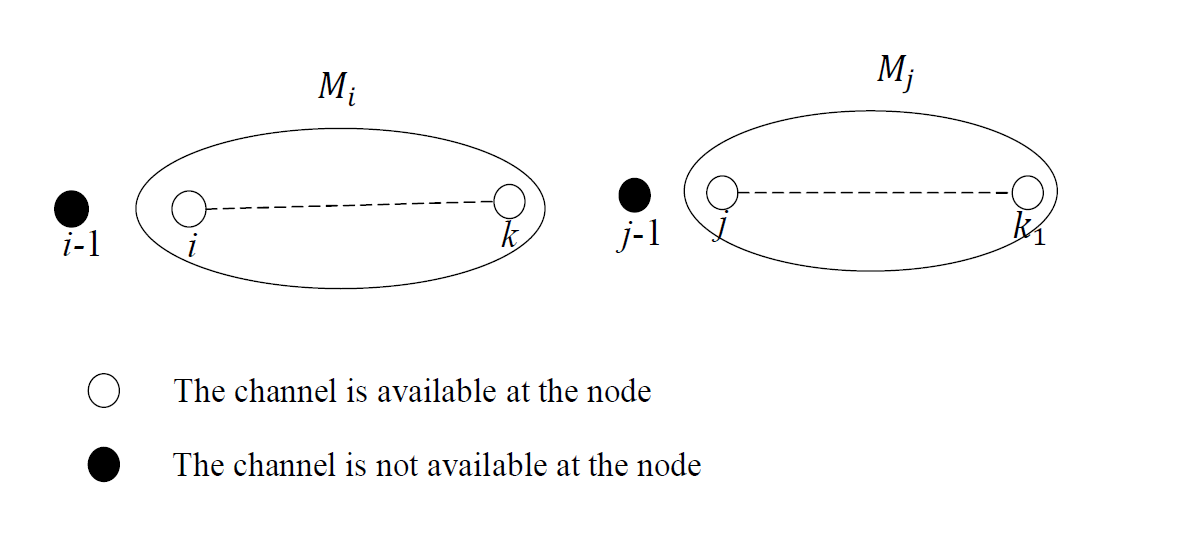}
\vspace{-0.5cm}
\caption{\small Figure shows the linear graph $M_i$ and $M_j$ with $|M_i|=k-i+1$ and $|M_j|=k_1-j+1$. $M_i$ and $M_j$ are disconnected. The maximal independent sets in $M_i$ is $I_{1,i}$ and $I_{2,i}$ where $I_{1,i}$ contains nodes numbered $i,i+2,\ldots$ and $I_{2,i}$ contains nodes numbered $i+1,i+3,\ldots$. }
\vspace{-0.3cm}
\label{fig:different_meanvalid}
\vspace{-0.3cm}
\end{figure}
Since for a given channel state vector $J$, the graph $G_J$  can be partitioned into linear graphs $M_i$   (fig.~\ref{fig:different_meanvalid}), thus, a primary only needs to select strategy for each such linear graph.  Thus,
\begin{lem}
Obtaining an NE strategy profile is equivalent to obtain an NE strategy at  each possible mean valid graph $M_i$, $i=1,\ldots,M$.
 \end{lem}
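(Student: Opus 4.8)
The plan is to exploit the fact that, conditioned on the channel state vector $J$, the conflict graph $G_J$ of a linear graph is a disjoint union of smaller linear graphs (the "segments" $M_i$), and that a primary's payoff and strategy decompose additively and independently across these components. First I would observe that a strategy profile for the full linear graph is, by definition, a collection of independent set selection strategies $\{\beta_J(\cdot)\}_{J\in\mathcal{P}}$, and that each independent set of $G_J$ is precisely a disjoint union, over the segments $M_i$ making up $G_J$, of an independent set chosen within each $M_i$. Hence specifying $\beta_J$ is equivalent to specifying, for each segment $M_i$ occurring in $G_J$, a distribution over the independent sets of that segment together with a coupling (joint distribution) across segments. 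The key reduction step is to show that the coupling is irrelevant: since the payoff a primary obtains over an independent set $I=\bigcup_i I\cap M_i$ is the \emph{sum} of the payoffs obtained on each $I\cap M_i$ (Section~\ref{sec:strategy}), and since the penalty competition at a node depends only on the node selection probability $\alpha_{a}$ there (Lemma~\ref{separation-s1}), the expected payoff of any strategy profile is a sum of terms, one per segment, each depending only on the marginal distribution of the independent set chosen within that segment.

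Next I would make the correspondence between "NE on the whole graph" and "NE on each $M_i$" precise in both directions. For the forward direction: if $\{\beta_J\}$ is an NE on $G$, fix a segment $M_i$ and a channel state vector $J$ in which $M_i$ appears; a unilateral deviation by one primary that changes only the component of its independent set lying in $M_i$ (leaving the rest of its strategy fixed) changes its expected payoff only through the $M_i$-term, so the induced marginal on independent sets of $M_i$ must itself be a best response against the induced marginals of the competitors on $M_i$ — i.e.\ it is an NE of the game restricted to $M_i$. For the converse: given NE strategies on every possible segment $M_i$, define a strategy on $G$ that, at state vector $J$, selects independently in each segment of $G_J$ according to the corresponding segment-NE; additivity of payoffs and the fact that each segment-term is separately maximized shows no unilateral deviation on $G$ can help, so this is an NE on $G$. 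Combining the two directions gives the claimed equivalence.

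The main technical care — and the step I expect to be the real obstacle — is bookkeeping: the same abstract segment $M_i$ (i.e.\ a linear graph of a given length) arises from many different channel state vectors $J$, and one must check that Assumption~\ref{assum:prob} (isomorphic $G_J$'s have equal probability) together with the symmetry of the channel statistics makes the "restricted game on a segment of length $k$" a well-defined single game whose NE can be analyzed once and reused. Concretely, I would argue that the probability weight a primary's strategy must put on each occurrence of a length-$k$ segment is the same across occurrences, so that finding an NE strategy for the abstract mean valid graph $M_i$ (a linear graph) really does determine the NE behavior at every place that segment shows up in $G_J$ for every $J$. Once this indexing is pinned down, the decomposition is essentially immediate from additivity of payoffs and the separation result (Lemma~\ref{separation-s1}), and the equivalence follows; the subsequent lemmas can then construct/enumerate NEs segment by segment, which is how both the multiplicity claim of Lemma~\ref{thm:notsame} and the non-NE example of Lemma~\ref{thm:notane} will be obtained.
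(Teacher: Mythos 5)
Your first two paragraphs are essentially the paper's argument: the paper states this lemma with no proof beyond the one-line observation that $G_J$ decomposes into disjoint segments, so a primary ``only needs to select a strategy for each such linear graph.'' Your fleshing-out of that observation is sound: every independent set of a disjoint union is the union of independent sets of the components, the payoff is additive over nodes and hence over components, the penalty game at a node depends only on the aggregate node selection probability $\alpha_a$ (Lemma~\ref{separation-s1}), so the expected payoff is a sum of per-segment terms depending only on the per-segment marginals; a deviation confined to one segment changes only that segment's term, which gives both directions of the equivalence. That is exactly what the lemma is meant to capture.

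Your third paragraph, however, introduces a claim that is both unnecessary and false. You propose to show that ``the probability weight a primary's strategy must put on each occurrence of a length-$k$ segment is the same across occurrences,'' so that the abstract length-$k$ game can be ``analyzed once and reused.'' The lemma does not assert this: it is indexed by the starting node $i$, so the equivalence is per occurrence $(i,|M_i|)$, and different occurrences of the same abstract segment may (and in general must) carry different strategies, because the induced node selection probabilities $\alpha_a$, the probabilities $t_{i,j}$ with which the segments arise, and the boundary structure all differ across positions. The paper's own NE family $\mathrm{SP}_{l,r,r_1}$ is a direct counterexample to your claim: when $|M_1|=2$ the set $I_{1,1}$ is chosen with probability $r\le 0.25$, when $|M_2|=2$ the split is $\tfrac12$--$\tfrac12$, and when $|M_3|=2$ it is $0.75+r$, yet all three are length-$2$ segments. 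If you delete that paragraph and state the equivalence occurrence by occurrence (each segment's distribution is a best response given the competitors' induced $\alpha_a$'s), your argument is complete; as written, the final step would fail and would also contradict Lemma~\ref{thm:notsame}.
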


We use the following result to prove Theorem~\ref{thm:notsame}.
\begin{obs}\label{obs:maximum_ind}
When $|M_i|$ is odd, then the only maximum independent set is $I_{1,i}$, if $|M_i|$ is even, then both $I_{1,i}$ and $I_{2,i}$ are maximum independent sets of $M_i$.
\end{obs}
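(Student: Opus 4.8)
The plan is to reduce everything to the standard fact that a maximum independent set of a path (linear graph) on $k$ vertices has cardinality $\lceil k/2\rceil$, and then read off the two cases by parity. Write the vertices of $M_i$ in order as $v_1=i, v_2=i+1,\ldots,v_k=i+k-1$ where $k=|M_i|$, so that $I_{1,i}=\{v_1,v_3,v_5,\ldots\}$ and $I_{2,i}=\{v_2,v_4,v_6,\ldots\}$; consecutive $v_\ell,v_{\ell+1}$ are adjacent and no other pairs of vertices of $M_i$ are adjacent.

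First I would establish the size bound. Partition $\{v_1,\ldots,v_k\}$ into the $\lceil k/2\rceil$ blocks $\{v_1,v_2\},\{v_3,v_4\},\ldots$, where the last block is the singleton $\{v_k\}$ when $k$ is odd. Within each block the two vertices (when there are two) are adjacent, so any independent set of $M_i$ contains at most one vertex per block; hence $|S|\le\lceil k/2\rceil$ for every independent set $S$ of $M_i$. Since $I_{1,i}$ is independent and $|I_{1,i}|=\lceil k/2\rceil$, $I_{1,i}$ is a maximum independent set of $M_i$ in both parities of $k$.

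For $k$ even, $|I_{1,i}|=|I_{2,i}|=k/2=\lceil k/2\rceil$ and both sets are independent, hence both are maximum independent sets of $M_i$; this settles the even case. For $k$ odd, $|I_{2,i}|=(k-1)/2<(k+1)/2=\lceil k/2\rceil$, so $I_{2,i}$ is not maximum, and it remains to show $I_{1,i}$ is the \emph{only} maximum one. Let $S$ be any independent set with $|S|=(k+1)/2$. Using instead the refined block partition $\{v_1\},\{v_2,v_3\},\{v_4,v_5\},\ldots,\{v_{k-1},v_k\}$ (which has $1+(k-1)/2=(k+1)/2$ blocks), $S$ must meet each block in exactly one vertex; in particular $v_1\in S$. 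By induction on $j$: if $v_{2j-1}\in S$, the block $\{v_{2j},v_{2j+1}\}$ cannot contribute $v_{2j}$ (it is adjacent to $v_{2j-1}$), so $v_{2j+1}\in S$. Hence $S=\{v_1,v_3,\ldots,v_k\}=I_{1,i}$, proving uniqueness.

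There is no substantive obstacle here beyond bookkeeping: the one point needing care is keeping the two block partitions straight — the coarse one for the cardinality bound, and the refined one with a leading singleton for the uniqueness argument — and handling the parity of $k$ consistently with the local relabeling of the vertices of $M_i$. I would state the elementary ``one vertex per block'' observation once and invoke it in both places.
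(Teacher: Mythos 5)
Your proof is correct. The paper states this observation without any proof at all, treating it as a standard fact about maximum independent sets in a path, so there is no argument in the paper to compare against; your two-partition argument (the coarse pairing for the bound $|S|\le\lceil k/2\rceil$, and the refined partition with a leading singleton to force $v_1\in S$ and propagate membership down the odd-indexed vertices) is a complete and clean way to supply the missing justification, including the uniqueness claim in the odd case, which is the only part that requires more than counting.
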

 Note that when $|M_i|$ is even, there can be other maximum independent sets apart from $I_{1,i}$ and $I_{2,i}$\footnote{For example, when $|M_1|=4$, then, the following are maximum independent sets, $\{1,3\}$, $\{2,4\}$, and $\{1,4\}$ where the first two independent sets belong to $I_{1,1}$ and $I_{2,1}$ respectively}.

%
%
%
%
%
%

Now, we consider a linear graph with $4$ nodes and the channel states are I.I.D. i.e. the channel is at state $1$ at a given node is w.p. $q_1=q=0.5$.
Let $t_{i,j}$ denote the probability of the event that  $|M_i|=j$.  
%
It is easy to show the following
\begin{align}\label{eq:1and2}
t_{1,3}=t_{2,3},\quad t_{4,1}=t_{1,1}
\end{align}
\begin{align}\label{eq:2and3}
t_{3,1}=t_{2,1},\quad  t_{3,2}=t_{1,2}.
\end{align}
\begin{align}\label{eq:12and14}
t_{2,2}=t_{1,4},\quad
t_{1,2}=2t_{1,4}, \quad t_{3,2}=2t_{1,4}
\end{align}
%
\begin{align}\label{eq:node_prob1}
t_{2,1}+t_{1,2}& =t_{1,1}, \quad t_{4,1}=t_{3,1}+t_{3,2}
\end{align}
%

Now, we describe an uncountable set of strategy profiles parameterized by parameters $r$ and $r_1$. 

Strategy profile $SP_{l,r,r_1}$: \textbf{If $|M_i|$ is odd, then $I_{1,i}$ will be selected w.p. $1$. If $|M_1|=2$, then $I_{1,1}$ will be selected w.p. $r$ and $I_{2,1}$ will be selected w.p. $1-r$. If $|M_1|=4$ i.e. when the channel is available at all nodes, then $I_1$ will be selected w.p. $r_1$ and $I_2$ will be selected w.p. $1-r_1$. If $|M_2|=2$, then $I_{1,2}$ will be selected w.p. $\dfrac{1}{2}$ and $I_{2,2}$ will be selected w.p. $\dfrac{1}{2}$. If $|M_3|=2$, then $I_{1,3}$ will be selected w.p. $0.75+r$ and $I_{2,3}$ will be selected w.p. $0.25-r$. }

where $r,r_1\geq 0$ are such that 
\begin{align}\label{eq:assum}
2r+r_1=0.75\quad \& r\leq 0.25
\end{align}

Since $0\leq r\leq 0.25$ and $0\leq r_1\leq 0.75$, thus, it is easy to discern that the strategy profile described in $SP_{l,r,r_1}$ constitutes a valid distribution. Note that there are uncountably infinite numbers of $r, r_1$ satisfying (\ref{eq:assum}). Thus, $SP_{l}$ gives rise an infinite number of strategies.

%

{\em Proof of Lemma~\ref{thm:notsame}}:  We first prove that for every $r, r_1$ which satisfy (\ref{eq:assum}) the strategy profile $SP_{l,r,r_1}$ is  an NE. 

  Towards this end we first show that  under the strategy profile $SP_{l,r,r_1}$ the channel is offered by a primary at every node with the same probability. 

Node selection probability of node $1$ i.e. $\alpha_{1}$ is
\begin{align}\label{eq:alpha1}
\alpha_1=t_{1,1}+t_{1,3}+t_{1,2}r+t_{1,4}r_1
\end{align}
and node selection probability of node $2$ is
\begin{align}\label{eq:alpha2}
\alpha_2=t_{1,2}(1-r)+t_{1,4}(1-r_1)+t_{2,1}+t_{2,3}+t_{2,2}/2
\end{align}
Node selection probability of node $3$ i.e. $\alpha_3$ is
\begin{align}\label{eq:alpha3}
t_{1,3}+t_{1,4}r_1+t_{2,2}/2+t_{3,2}(0.75+r)+t_{3,1}
\end{align}
Node selection probability of node $4$ is
\begin{align}\label{eq:alpha4}
\alpha_4=t_{1,4}*(1-r_1)+t_{3,2}*(0.25-r)+t_{4,1}+t_{2,3}
\end{align}
Note that 
\begin{align}\label{eq:mixed_equality}
2t_{1,4}r_1+2t_{1,2}r&=t_{1,4}(2r_1+4r) \quad (\text{from } (\ref{eq:12and14}))\nonumber\\
& =t_{1,4}*3/2\quad (\text{from } (\ref{eq:assum}))
\end{align}
Thus, from (\ref{eq:alpha1}) and (\ref{eq:alpha2}), we obtain that $\alpha_{1}-\alpha_{2}$ is equal to
\begin{align}\label{eq:diff_alpha}
t_{1,1}-t_{1,2}-t_{2,1}+t_{1,3}-t_{2,3}+2t_{1,2}r+ 2t_{1,4}r_1-t_{1,4}-t_{2,2}/2
\end{align}
Note that $t_{1,3}=t_{2,3}$ by (\ref{eq:1and2}) and $t_{1,1}=t_{2,1}+t_{1,2}$ by (\ref{eq:node_prob1}). 
Since $t_{2,2}=t_{1,4}$ from (\ref{eq:12and14}), thus it readily follows from (\ref{eq:diff_alpha}) that  $\alpha_{1}=\alpha_{2}$. 
From (\ref{eq:alpha3}) and (\ref{eq:alpha2}) we obtain $\alpha_2-\alpha_3$ is equal to
\begin{align}\label{eq:diff_alpha23}
t_{2,1}-t_{3,1}+t_{2,3}-t_{1,3}+t_{1,4}+ t_{1,2}-0.75t_{3,2}-rt_{3,2}-rt_{1,2}-2t_{1,4}r_1
\end{align}
Note that $t_{1,3}=t_{2,3}$ by (\ref{eq:1and2}), $t_{1,2}=t_{3,2}$ and $t_{2,1}=t_{3,1}$ by (\ref{eq:2and3}). 
Thus, from (\ref{eq:diff_alpha23}) 
\begin{align}\label{eq:diff_alpha23b}
\alpha_2-\alpha_3=t_{1,4}+t_{1,2}/4-2rt_{1,2}-2r_1t_{1,4}
\end{align}
Also note from (\ref{eq:12and14}) that $t_{1,4}=t_{1,2}/2$. Thus, $t_{1,4}+t_{1,2}/4=t_{1,4}*3/2$. Thus, from (\ref{eq:mixed_equality}) and (\ref{eq:diff_alpha23b})  it readily follows that $\alpha_{3}=\alpha_{2}$. 
From (\ref{eq:alpha3}) and (\ref{eq:alpha4}) we obtain that $\alpha_3-\alpha_4$ is equal to
\begin{align}\label{eq:diff_alpha34}
 t_{1,3}-t_{2,3}+t_{3,1}+t_{3,2}-t_{4,1}+t_{2,2}/2 -t_{1,4}-t_{3,2}/2+2t_{1,4}r_1+2rt_{3,2}
\end{align}
Note that $t_{1,3}=t_{2,3}$ by (\ref{eq:1and2}). Also note that $t_{4,1}=t_{3,1}+t_{3,2}$ by (\ref{eq:node_prob1}).  Thus, from 
\begin{align}\label{eq:diff_alpha34b}
\alpha_3-\alpha_4=t_{2,2}/2-t_{1,4}-t_{3,2}/2+2t_{1,4}r_1+2rt_{3,2}
\end{align}
Since $t_{3,2}/2=t_{1,4}$ by (\ref{eq:12and14}), thus, we obtain $t_{1,4}+t_{3,2}/2-t_{2,2}/2=t_{1,4}*3/2$.  Since $t_{1,2}=t_{3,2}$ by (\ref{eq:2and3}), thus, from (\ref{eq:mixed_equality}) and (\ref{eq:diff_alpha34b}) it readily follows that $\alpha_4=\alpha_3$. Hence, we obtain $\alpha_{1}=\alpha_2=\alpha_3=\alpha_4$. 

Since node selection probability is identical across the nodes, thus, when all the other primaries select a strategy profile in the set $SP_{l,r,r_1}$, then, the maximum payoff of primary $1$ at a node $i$ is  $(f_1(v)-c)(1-w(\alpha_{i}))$ by Theorem~\ref{singlelocation} and (\ref{eq:ex_paya1s}) and this is obtained for any penalty in the interval $[L_1,v]$ with $\alpha_i$ in place of $q_1$ by Lemma~\ref{separation-s1}. Hence, the maximum attainable expected payoff of primary $1$ at each location is the same since $\alpha_i$\rq{}s are identical. 

Now, we show that primary $1$ does not have any incentive to deviate from a strategy profile for fixed $r,r_1$ when other primaries also select that strategy profile.

  When $|M_{i}|$ is odd, then by Observation~\ref{obs:maximum_ind} $I_{1,i}$ is the only maximum independent set in $M_i$. Since the maximum attainable expected payoff for primary $1$ is the same at every node, thus,  the  expected payoff at $I_{1,i}$ is the highest for primary $1$ when all the other primaries select a strategy profile in $SP_{l,r,r_1}$. Hence,  primary $1$ does not have any incentive to deviate from $SP_{l,r,r_1}$ when $|M_i|$ is odd since under $SP_{l,r,r_1}$ primary $1$ selects $I_{1,i}$ w.p. $1$ when $|M_i|$ is odd.
  
  When $|M_i|$ is even, then $|I_{1,i}|=|I_{2,i}|$. By Observation~\ref{obs:maximum_ind}   both $I_{1,i}$, $I_{2,i}$ are the maximum independent sets. Since the maximum attainable expected payoff is the same at each node, thus, any strategy profile which randomizes between $I_{1,i}$ and $I_{2,i}$ gives the highest expected payoff to primary $1$. Thus, primary $1$ does not have any incentive to deviate from $SP_{l,r,r_1}$ when $|M_{i}|$ is even since under $SP_{l,r,r_1}$ primary $1$ only randomizes between $I_{1,i}$ and $I_{2,i}$. 
  
Though we only consider primary $1$ since the every strategy in $SP_{l,r,r_1}$ is symmetric, hence, no primary will have any incentive to deviate unilaterally from the strategy profile for a fixed $r,r_1$. Thus, we show that every $r,r_1$ which satisfy (\ref{eq:assum}), the strategy set in $SP_{l,r,r_1}$ is an NE.

Since there are uncountable number of $r,r_1$s which satisfy (\ref{eq:assum}), hence there are multiple NEs in this setting.\qed 

\subsubsection{Proof of Lemma~\ref{thm:notane}} We show that $SP_{sym}$ is not a NE strategy profile in the above linear graph with $4$ nodes where the channel is in state $1$ at a given location w.p $0.5$ irrespective of the channel states at other locations. In order to prove the result we use some of the results which we derived in the previous section to prove Lemma~\ref{thm:notsame}.

  First, we point out the how $SP_{sym}$ (described in Section~\ref{sec:spsym}) is different from  the class of strategy profile $SP_{l,r,r_1}$ (described in the previous section). Then, we show that $SP_{sym}$ is not an NE in this setting.

Since $I_{1,i}$ and $I_{2,i}$ are the only maximum independent sets of $M_i$ when $|M_i|=2$ by Observation~\ref{obs:maximum_ind}, thus,  according to $SP_{sym}$ (Section~\ref{sec:spsym}), when $|M_i|=2$, $I_{1,i}$ and $I_{2,i}$ must be selected w.p $\dfrac{1}{2}$.  Note that in $SP_{l,r,r_1}$ when $|M_1|=2$, $I_{1,1}$ is selected w.p. $r$, and $I_{2,1}$ is selected w.p. $1-r$ where $r\leq 0.25$.  Thus, $I_{1,1}$ and $I_{2,1}$ are not selected with equal probabilities even though they are of same sizes. Thus, $SP_{sym}$ does not belong to $SP_{l,r,r_1}$. Now we show that $SP_{sym}$ can not be an NE.
 
$SP_{sym}$ puts equal weight on every maximum independent sets. When $|M_1|=4$, then under $SP_{sym}$, each of the maximum independent sets $\{1,3\}, \{2,4\}$ and $\{1,4\}$ with equal probabilities. Hence, the channel is offered at node $1$ w.p. $t_{1,4}*2/3$ when $|M_1|=4$. Thus, under $SP_{sym}$, the node selection probability is
\begin{align}\label{eq:alpha-n1}
\alpha_{1}=t_{1,1}+t_{1,3}+t_{1,2}/2+2t_{1,4}/3
\end{align}
and node selection probability of node $2$ is
\begin{align}\label{eq:alpha-n2}
\alpha_2=t_{1,2}/2+t_{1,4}/3+t_{2,1}+t_{2,3}+t_{2,2}/2
\end{align}
Now, we show that $\alpha_{1}>\alpha_{2}$. Since $t_{1,3}=t_{2,3}$ (by (\ref{eq:1and2})) and $2t_{1,4}/3>t_{1,4}/3$, thus, we are left to show that $t_{1,1}>t_{2,1}+t_{2,2}/2$. By simple algebraic calculation for $q_1=q_0=q=0.5$, we have $t_{1,1}=0.25, t_{2,1}=1/8, t_{2,2}=1/16$. Hence $t_{1,1}>t_{2,1}+t_{2,2}/2$. Thus, $\alpha_1>\alpha_2$.

Thus, by the single location pricing strategy the maximum  expected payoff attained by a primary at node $1$ is $(f_1(v)-c)(1-w(\alpha_1))$ (from (\ref{eq:ex_paya1s}))  and the expected payoff attained by a primary at node $2$ is $(f_1(v)-c)(1-w(\alpha_2))$ ( by (\ref{eq:ex_paya1s})) when the other primaries select $SP_{sym}$. Since $\alpha_1>\alpha_2$ and $w(\cdot)$ is strictly increasing, thus, the expected payoff at node $2$ is strictly higher compared to the node $1$. Thus, when $|M_1|=2$, if a primary selects node $2$ w.p. $1$, then it would attain strictly higher payoff compared to the strategy   $SP_{sym}$ where a primary selects node $2$ w.p. $\dfrac{1}{2}$ and node $1$ w.p. $\dfrac{1}{2}$ when $|M_1|=2$. Hence, a primary has an incentive to deviate unilaterally from its strategy profile. Hence, $SP_{sym}$ {\em is not an NE}.\qed

\subsection{Markov Random Field}\label{sec:mrf}
\subsubsection{Background}
A Markov random field is a graphical model which represents the joint probability distributions of random variables having Markov property. It is represented by an undirected graph $H=(V,E)$ in which the nodes $V$ represents the random variables. The edges $E$ encodes the dependencies among the random variables in the following way: if $N(A)$ is the set of neighbors of $A$, then in a Markov random field\cite{markov},
\begin{align}
A \perp \text{other random variables}|N(A)\nonumber
\end{align}
Figure~\ref{fig:cycle_mrf} provides a cyclic Markov random field. Here, $A \perp C|B,D$. 
\begin{figure*}
\begin{minipage}{0.49\linewidth}
\begin{center}
\includegraphics[width=90mm,height=40mm]{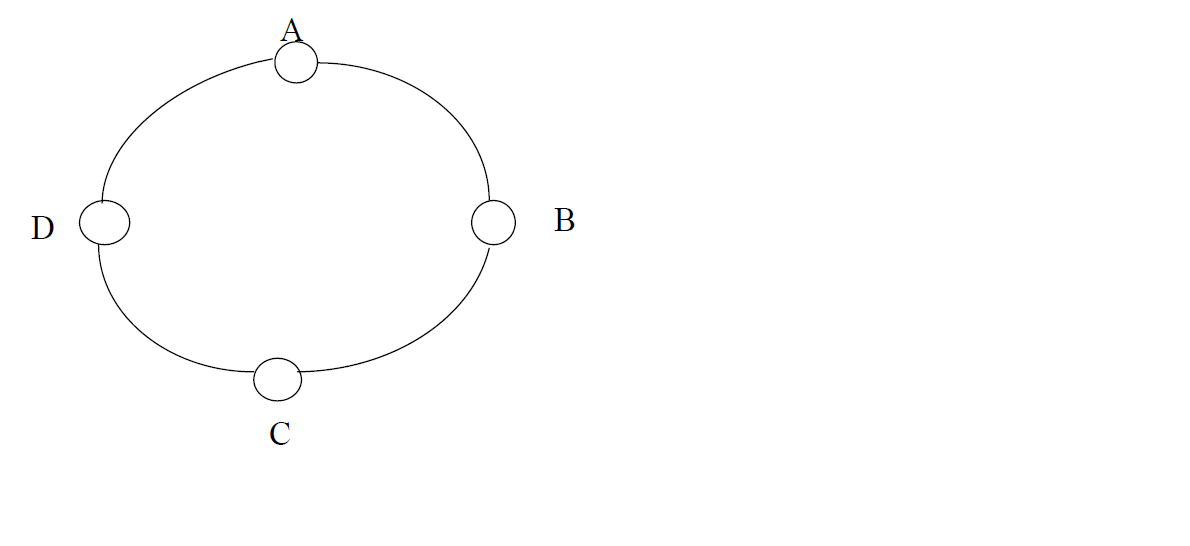}
\caption{\small A cyclic Markov Random field.}
\label{fig:cycle_mrf}
\vspace{-0.4cm}
\end{center}
\end{minipage}\hfill
\begin{minipage}{0.49\linewidth}
\begin{center}
\includegraphics[width=90mm,height=40mm]{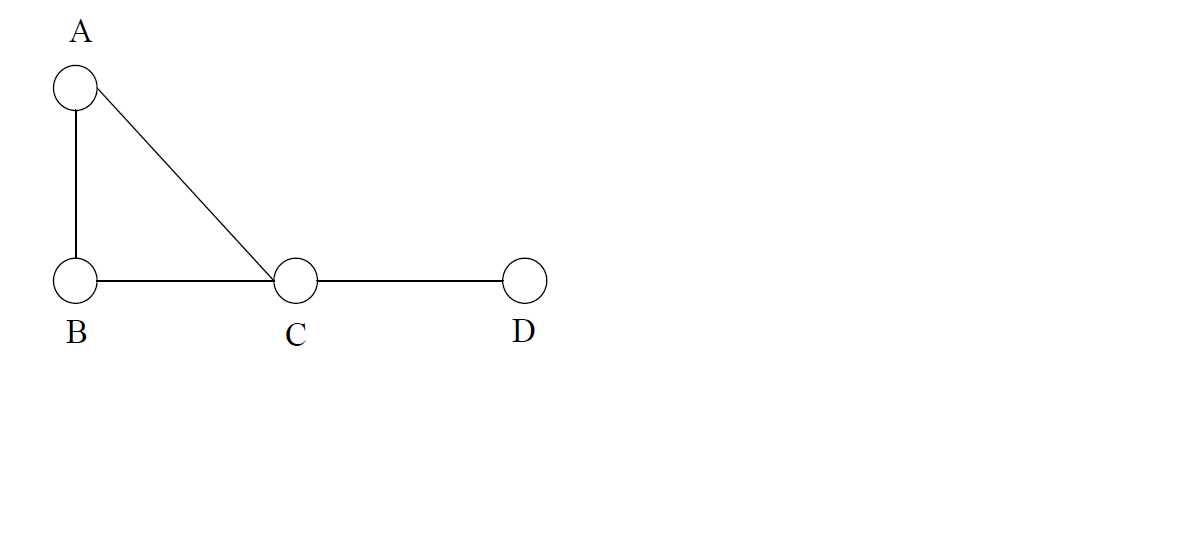}
\caption{\small The figure shows a Markov Random Field. Here the Maximal cliques are $(ABC, CD)$.}
\label{fig:asymmetric_mrf}
\vspace{-0.4cm}
\end{center}
\end{minipage}
\end{figure*}
The channel states in a conflict graph are random variables whose values are either $0$ or $1$. Since the channel states at adjacent locations are likely to be correlated, we model the correlation amongst the adjacent locations in the conflict graph using the Markov Random field where the nodes in the Markov random field represent the channel states of the corresponding nodes of conflict graph $G$.  Figure~\ref{fig:cycle_mrf} represents a Markov random field when the conflict graph is a cyclic graph with $4$ nodes and the values of the random variables $A, B, C, D \in \{0,1\}$ represent the channel states at nodes $A ,B, C, D$ of conflict graph $G$ respectively. 

We now discuss the joint probability distribution in the Markov random field. Markov random fields provide a compact representation of the joint probability distribution in terms of product of {\em potential functions}. Potential functions are defined on the set of maximal cliques, $\mathcal{C}$, in the graphical representation of the Markov random field $H$. A potential function $\zeta_C(\cdot)$ represent the values of the random variable of the maximal clique $C\in \mathcal{C}$. For example, in figure~\ref{fig:cycle_mrf} the set $\{AB\}$ is a maximal clique, thus, $\zeta_{AB}(a,b)$ denote the value of the potential function when the random variables $A=a$ and $B=b$,  $a,b\in \{0,1\}$. Note that $\zeta_{C}(\cdot)$ is defined on the vector $\mathbf{c}$ which represents the values of the random variables represented by nodes in the clique $C$. Formally, the probability of the channel state $J$ is given by:
\begin{align}
q_J=\dfrac{1}{Z}\prod_{C\in \mathcal{C}}\zeta_C(c_J)
\end{align}
where $Z$ is a normalization factor and $c_J$ denote the channel states in clique $C$ when the overall channel state vector is $J$.

For example, in figure~\ref{fig:cycle_mrf} the set of maximal cliques $\mathcal{C}$ is ${AB, BC, CD, DA}$.   The joint probability distribution is given by
\begin{equation}
P_{A,B,C,D}(a,b,c,d)=\dfrac{1}{Z}\zeta_{AB}(a,b)\zeta_{BC}(b,c)\zeta_{CD}(c,d)\zeta_{DA}(d,a)\nonumber
\end{equation}
Since $A,B, C, D$ only take values in $\{0,1\}$, we can represent $\zeta$ as a matrix where $\zeta_{AB}(a,b)$ denote the value of the $(a,b)$th position of the matrix. For example,  $\zeta$ can be the following:
\begin{equation}\label{eq:identicalpotential}
\zeta_{AB}=\zeta_{BC}=\zeta_{CD}=\zeta_{DA}=\begin{bmatrix}
0.8 & 0.2\\ 0.2 & 1
\end{bmatrix}
\end{equation}
In Figure~\ref{fig:asymmetric_mrf}, the maximal cliques are $(ABC, CD)$. Hence, the joint probability distributions are
\begin{align}
P_{A,B,C,D}(a,b,c,d)=\dfrac{1}{Z}\zeta_{ABC}(a,b,c)\zeta_{CD}(c,d)
\end{align}

\begin{defn}
The Markov random field representation of  random variables is symmetric if i) the maximal cliques are of identical sizes and ii) suppose $\mathbf{c}_1$ corresponds to the channel state vector of maximal clique $C_1$ and $\mathbf{c}_2$ corresponds to the channel state vector of maximal clique $C_2$, then
\begin{align}\label{eq:potential_max}
\zeta_{C_1}(\mathbf{c_1})=\zeta_{C_2}(\mathbf{c_2})
\end{align}
for every $\mathbf{c_1}$ and $\mathbf{c}_2$ such that $\mathbf{c}_1$ and $\mathbf{c}_2$ contain the same number of $1$s (and thus, the same number of $0$s since $C_1, C_2$ are of same sizes).
\end{defn}
 (\ref{eq:identicalpotential}) provides an example of potential functions which are symmetric and identical. But potential functions in Fig.~\ref{fig:asymmetric_mrf} can not be symmetric since the sizes of the maximal cliques are different.
%
 
 Now, we are ready to provide an example which satisfies Assumption~\ref{assum:prob}. 
\subsubsection{Result}
\begin{lem}\label{lm:mrf_sameprob}
 The probability distributions on the channel states satisfy Assumption~\ref{assum:prob} if\\
 i) The channel states constitute a Markov random field,\\
 ii) The graphical representation of the Markov random field $H$ is the same as the node symmetric graph $G$,\\
 iii) The Markov random field relation is symmetric\footnote{In a node symmetric graph, the maximal cliques are of the same size}, and\\
 iv) There are fixed integers $r_1, r_2\ldots$ such that  every clique containing $j\geq 1$ number of nodes is a subset of   identical ($r_j$) number of maximal cliques in $G$. 
 \end{lem}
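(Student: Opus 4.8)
The plan is to show that hypotheses (i)--(iv) force $q_J$ to depend on the state vector $J$ only through a vector of ``clique--occupation counts'' that is an isomorphism invariant of $G_J$, from which Assumption~\ref{assum:prob} follows immediately. Write $\mathcal{C}$ for the set of maximal cliques of $G$. By the symmetric Markov--random--field hypotheses (i), (iii) (and the footnoted fact that in a node symmetric graph all maximal cliques have the same size) every $C\in\mathcal{C}$ has a common size $\kappa$, and the potential functions satisfy $\zeta_{C}(\mathbf{c})=\zeta_{C'}(\mathbf{c}')$ whenever $|C|=|C'|$ and $\mathbf{c},\mathbf{c}'$ have the same number of $1$'s; since all $|C|=\kappa$ this means there are constants $\theta_0,\dots,\theta_\kappa$ with $\zeta_C(\mathbf{c})=\theta_t$ whenever $\mathbf{c}$ has exactly $t$ entries equal to $1$. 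For a state vector $J$ let $N_t(J)$ denote the number of maximal cliques $C\in\mathcal{C}$ having exactly $t$ of their nodes in $V_J$ (i.e. available). Then
\begin{align}
q_J=\frac{1}{Z}\prod_{C\in\mathcal{C}}\zeta_C(c_J)=\frac{1}{Z}\prod_{t=0}^{\kappa}\theta_t^{\,N_t(J)},\nonumber
\end{align}
so it suffices to prove that $N_t(J)=N_t(K)$ for every $t$ whenever $G_J$ and $G_K$ are isomorphic; recall $Z$ is a single constant, independent of the state vector.

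Next I would recover the vector $(N_t(J))_t$ from the clique structure of $G_J$ using hypothesis (iv). Let $c_t(G_J)$ be the number of cliques of size exactly $t$ in $G_J$; since $G_J$ is the subgraph of $G$ induced on $V_J$, a $t$--clique of $G_J$ is precisely a $t$--clique of $G$ lying inside $V_J$, hence (for $t\ge1$) is contained in exactly $r_t$ maximal cliques of $G$. Double counting the pairs $(D,C)$ where $D$ is a $t$--clique of $G$ contained in $V_J$ and $C\in\mathcal{C}$ with $D\subseteq C$ gives, for $t=1,\dots,\kappa$,
\begin{align}
r_t\,c_t(G_J)=\sum_{C\in\mathcal{C}}\binom{|C\cap V_J|}{t}=\sum_{s=t}^{\kappa}\binom{s}{t}\,N_s(J),\nonumber
\end{align}
together with the trivial identity $\sum_{s=0}^{\kappa}N_s(J)=|\mathcal{C}|$. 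This is a unitriangular linear system for $\bigl(N_\kappa(J),N_{\kappa-1}(J),\dots,N_1(J),N_0(J)\bigr)$: the $t=\kappa$ equation gives $N_\kappa(J)=r_\kappa c_\kappa(G_J)$, then descending in $t$ each equation determines $N_t(J)$ in terms of the already determined $N_s(J)$ with $s>t$, and the total-count identity determines $N_0(J)$. Hence $(N_t(J))_t$ is a fixed function of the numbers $c_1(G_J),\dots,c_\kappa(G_J)$ and of the constants $r_1,\dots,r_\kappa,|\mathcal{C}|$ alone.

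Finally I would invoke the elementary fact that isomorphic graphs have the same number of cliques of each size: if $G_J\cong G_K$ then $c_t(G_J)=c_t(G_K)$ for all $t$, so the two triangular systems coincide and $N_t(J)=N_t(K)$ for all $t$, whence $q_J=q_K$. This is exactly the conclusion of Assumption~\ref{assum:prob}, and combined with Lemma~\ref{lm:identicalprob} it shows the stated Markov--random--field models are covered by the analysis of Section~\ref{sec:diff_channel_states}.

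The main obstacle I expect is the combinatorial bookkeeping in the middle step: one must keep straight that ``clique'' in hypothesis (iv) refers to complete subgraphs of the \emph{full} graph $G$ rather than maximal ones, that a $t$--clique of $G_J$ coincides with a $t$--clique of $G$ contained in $V_J$ precisely because $G_J$ is an induced subgraph, and that the matrix $\bigl(\binom{s}{t}\bigr)_{0\le t\le s\le \kappa}$ is invertible so the occupation counts are genuinely pinned down. The reduction of $q_J$ to the $\theta_t$'s via symmetry of the field, and the isomorphism invariance of clique counts, are then routine.
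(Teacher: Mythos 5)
Your proposal is correct, and its overall skeleton matches the paper's: both arguments reduce $q_J$ (via the symmetry of the potentials) to a function of the occupation counts $N_t(J)$ of the maximal cliques, and then show these counts are invariant under isomorphism of $G_J$ (the paper's Observation~\ref{obs:equality_inclique}). Where you genuinely diverge is in how that invariance is established. The paper argues directly with the isomorphism $F_1:V(G_J)\to V(G_K)$: it observes that $F_1$ carries $j$-subsets of maximal cliques to $j$-subsets of maximal cliques and then derives a contradiction from condition~(iv) if the counts of maximal cliques meeting $V_J$ and $V_K$ in exactly $j$ nodes were to differ. That contradiction step is stated rather loosely — a discrepancy in the aggregate counts $N_j(J)\neq N_j(K)$ is not literally the same as the existence of a single $j$-set whose two membership numbers differ, which is what the paper invokes against condition~(iv). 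Your route supplies exactly the missing bookkeeping: the incidence identity $r_t\,c_t(G_J)=\sum_{s\ge t}\binom{s}{t}N_s(J)$ expresses the clique counts of the induced subgraph in terms of the $N_s$, and the unitriangular inversion pins the $N_s$ down as functions of the isomorphism-invariant quantities $c_1(G_J),\dots,c_\kappa(G_J)$ and the fixed constants $r_t$, $|\mathcal{C}|$. This makes the role of hypothesis~(iv) completely transparent and replaces the paper's informal contradiction with a clean double count; the paper's version, when it works, is shorter because it never needs the binomial system. Both proofs are otherwise equivalent in strength, and your reduction of $q_J$ to $\prod_t\theta_t^{N_t(J)}/Z$ is exactly the paper's final step.
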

First, it is easy to discern that the condition (iv) is satisfied by a large class of node symmetric conflict graphs including cyclic graph, infinite linear graph (Fig.~\ref{fig:linear_infinity}), infinite square graph(Fig.~\ref{fig:random_square}), infinite grid graph (Fig.~\ref{fig:random_grid}), infinite triangular graph (Fig.~\ref{fig:random_hexacell}). For example, in the infinite triangular graph (Fig.~\ref{fig:clique_hexacell}), a clique containing $3$ nodes is a maximal clique and hence, it is a part of only $1$ maximal clique; any clique containing $2$ nodes is a subset of $2$ maximal cliques; a single node is a part of $6$ maximal cliques. 
\begin{figure}
\includegraphics[width=120mm,height=40mm]{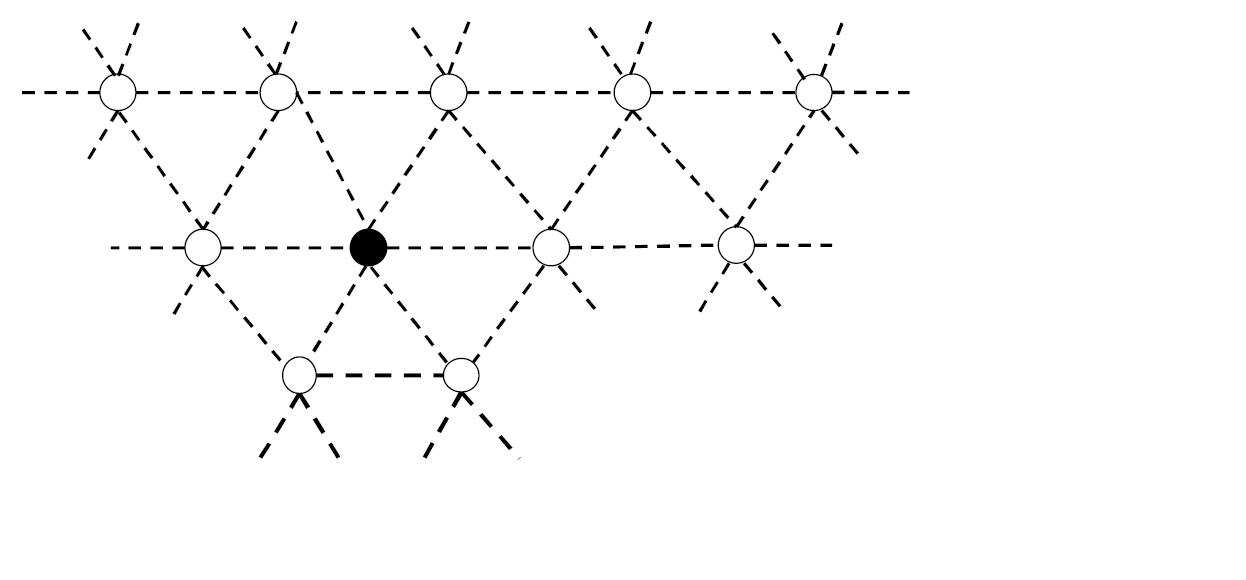}
\caption{\small Infinite triangular graphs: the black colored node is a part of $6$ maximal cliques. Any clique containing two nodes is a part of $2$ maximal clique. A clique containing $3$ nodes is a maximal clique in this graph.}
\label{fig:clique_hexacell}
\vspace{-0.5cm}
\end{figure}

In order to prove the above lemma, we first show the following for any node symmetric graph $G$ which satisfies condition (iv):
\begin{obs}\label{obs:equality_inclique}
Let $n_j$ be the number of maximal cliques in $G$ which contains exactly $j$ nodes of $G_{J}$, then there are exactly $n_j$ number of maximal cliques in $G$ which contains exactly $j$ nodes of $G_{K}$, when $G_{K}$ is isomorphic to $G_{J}$. 
\end{obs}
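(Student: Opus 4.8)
The plan is to exploit the isomorphism $F: V(G_J) \to V(G_K)$ directly and lift it, using node symmetry, to a bijection on the maximal cliques of $G$ that preserves the relevant intersection numbers. First I would recall (from the proof of Observation~\ref{obs:sameisomorphicpairs} and the fact that $G$ is node symmetric) that when $G_J$ and $G_K$ are isomorphic, there is in fact an automorphism $\Phi$ of the whole graph $G$ whose restriction to $V(G_J)$ is an isomorphism onto $V(G_K)$; this holds because $G_K$ is obtained by deleting the nodes where the channel is at state $0$, and an isomorphism of the induced subgraphs extends to an automorphism of $G$ by node symmetry (one applies the automorphism carrying a fixed node of $V(G_J)$ to the corresponding node of $V(G_K)$, and checks, exactly as in the cited proof, that it carries the available set to the available set and preserves adjacency). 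Since $\Phi$ is an automorphism of $G$, it permutes the maximal cliques of $G$: if $C$ is a maximal clique then $\Phi(C)$ is a clique of the same size, and maximality is preserved because $\Phi$ is adjacency-preserving and bijective on $V(G)$.

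Next I would observe that $\Phi$ carries $V(G_J)$ onto $V(G_K)$ bijectively. Hence for any maximal clique $C$ of $G$ we have $|\Phi(C) \cap V(G_K)| = |\Phi(C \cap V(G_J))| = |C \cap V(G_J)|$, the middle equality because $\Phi$ is injective and $\Phi(C \cap V(G_J)) = \Phi(C) \cap \Phi(V(G_J)) = \Phi(C) \cap V(G_K)$. Therefore the map $C \mapsto \Phi(C)$ is a bijection from the set of maximal cliques of $G$ to itself that sends cliques meeting $G_J$ in exactly $j$ nodes to cliques meeting $G_K$ in exactly $j$ nodes. Consequently the number of maximal cliques of $G$ containing exactly $j$ nodes of $G_J$ equals the number containing exactly $j$ nodes of $G_K$; calling this common value $n_j$ gives the statement.

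I would then close by noting where condition (iv) and the symmetry of the Markov random field enter in the subsequent use of this observation (for Lemma~\ref{lm:mrf_sameprob}): condition (iv) guarantees that counting by maximal cliques containing a given sub-clique is well defined and uniform, and the symmetry of the potential functions (equation~(\ref{eq:potential_max})) means that $\prod_{C \in \mathcal{C}} \zeta_C(c_J)$ depends only on, for each clique size, how many maximal cliques of that size see how many $1$'s — data that Observation~\ref{obs:equality_inclique} shows is identical for $J$ and $K$. The main obstacle is the first step: carefully justifying that the subgraph isomorphism $F$ genuinely extends to an automorphism $\Phi$ of $G$ rather than merely to some adjacency-preserving map, i.e.\ that node symmetry is strong enough here; once that is in hand the clique-counting argument is routine bijective bookkeeping.
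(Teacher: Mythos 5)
There is a genuine gap at the step you yourself flag as the main obstacle: the claim that an isomorphism $F:V(G_J)\to V(G_K)$ between the two induced subgraphs extends to an automorphism $\Phi$ of the whole graph $G$. Node symmetry only gives an automorphism carrying any \emph{single} vertex to any other; it does not let you extend an arbitrary isomorphism of induced subgraphs. Concretely, take $G$ to be the cycle on six nodes $1,\dots,6$, with $V(G_J)=\{1,3\}$ and $V(G_K)=\{1,4\}$. Both induced subgraphs are two isolated vertices, hence isomorphic, but every automorphism of the cycle preserves graph distance, and $d(1,3)=2\neq 3=d(1,4)$, so no automorphism of $G$ maps $\{1,3\}$ onto $\{1,4\}$. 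Your appeal to the proof of Observation~\ref{obs:sameisomorphicpairs} does not rescue this: there the channel state vector $K$ is \emph{constructed} as the image of $J$ under a chosen automorphism, so the extension is built in by fiat, whereas in Observation~\ref{obs:equality_inclique} the pair $J,K$ is arbitrary subject only to $G_J\cong G_K$. Once $\Phi$ is unavailable, your bijection $C\mapsto\Phi(C)$ on maximal cliques, and hence the whole counting argument, collapses.

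The tell is that your proof of the observation never uses condition (iv), while the paper's proof relies on it essentially. The paper argues directly from the subgraph isomorphism $F_1$: any $j$ nodes of $G_J$ lying in a common maximal clique of $G$ form a clique, $F_1$ carries them to a $j$-node clique among the nodes of $G_K$, and condition (iv) forces both $j$-cliques to lie in the \emph{same} number $r_j$ of maximal cliques of $G$; if the counts in the statement differed one could exhibit a $j$-clique violating (iv), and symmetry of the roles of $J$ and $K$ finishes the argument. In the $C_6$ example this works (every edge lies in exactly one maximal clique, every vertex in exactly two), even though no global automorphism exists. So the fix is not to strengthen node symmetry but to replace the automorphism-lifting step by the clique-counting argument through condition (iv).
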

\begin{proof}
Let $G_{J}$ and $G_{K}$ be isomorphic (Definition~\ref{defn:isomorphism}) to each other,  where $G_{J}$ and $G_{K}$  are the conflict graphs corresponding to the channel state vectors $J$ and  $K$ respectively. Since $G_{J}$ and $G_{K}$ are isomorphic, there is an isomorphic function $F_1(\cdot)$ between the nodes of $G_{J}$ and $G_{K}$. 

Suppose that there is a maximal clique $C$ which contains $j$ nodes of $G_J$. Thus, this set of $j$ nodes is a subset of a maximal clique.  The isomorphic function $F_1(\cdot)$ maps those $j$ nodes into $j$ different nodes of $G_{K}$. Also note that since these $j$ nodes of $G_{J}$ belong to a clique in the original graph $G$, hence they are adjacent to each other, since $F_1(\cdot)$ is isomorphic, thus, the mapped $j$ nodes must also be adjacent to each other, hence that set of  mapped $j$ nodes is also a subset of a maximal clique  in the original graph $G$. 

Suppose the statement in the result is false. Thus, there must exist a set $V_j$ of $j$ nodes of $G_{J}$ which is a subset of $r_1$ number of maximal cliques in the original graph, however the mapped set of nodes $F_1(V_j)$ of $G_{K}$ is only a subset of  $r_2$ number of maximal cliques in the original graph where $r_2<r_1$. Thus, this violates the condition (iv). Hence, $r_2\geq r_1$. By symmetry, we can also show that the situation where  $r_2>r_1$ can not arise. Hence, the result follows.
\end{proof}
Now, we are ready to prove Lemma~\ref{lm:mrf_sameprob}.
\begin{proof}
Let $G_{J}$ and $G_{K}$ be isomorphic to each other,  where $G_{J}$ and $G_{K}$  are the conflict graphs corresponding to the channel state vectors $J$ and  $K$ respectively. We have to show that $q_J=q_K$.
 Let $c_{J}$ be the channel state vector at the nodes of $C$ when the channel state vector is $J$. 
 
Now, at channel state vector $J$, the potential function value at maximal clique is $\zeta_{C}(c_{J})$. Thus, the channel state vector $q_J$and $q_K$ are given by
 \begin{align}\label{eq:mrf_same}
 q_J=\prod_{C\in \mathcal{C}}\dfrac{1}{Z}\zeta_C(c_J)\nonumber\\
 q_K=\prod_{C\in \mathcal{C}}\dfrac{1}{Z}\zeta_{C}(c_{K})
 \end{align}  
By Observation~\ref{obs:equality_inclique}, the number of maximal cliques which contain $j$ number of nodes of $G_{J}$ and $G_{K}$ are identical. Note that at channel state vectors $J$ and $K$, the nodes where the channel state is $1$ are only the nodes of $G_{J}$ and $G_{K}$ respectively. Hence, the number of maximal cliques which contain exactly $j$ number of $1$s are the same (and thus, the number of $0$s since in the node symmetric graph, the size of maximal cliques are the same) in the channel state vectors $J$ and $K$.   Hence, $q_K=q_J$ from (\ref{eq:potential_max}) and (\ref{eq:mrf_same}).
\end{proof}
\end{document}